\numberwithin{equation}{section}
\theoremstyle{plain}
\newtheorem{thm}{Theorem}[section]
\newtheorem{lem}[thm]{Lemma}
\newtheorem{prop}[thm]{Proposition}
\theoremstyle{definition}
\newtheorem{Def}[thm]{Definition}
\theoremstyle{remark}
\newtheorem{rem}[thm]{Remark}
\DeclareMathAlphabet{\mathpzc}{OT1}{pzc}{m}{it}
\DeclareMathOperator{\tr}{tr}
\DeclareMathOperator{\diag}{diag}
\DeclareMathOperator{\Div}{div}
\DeclareMathOperator{\Ld}{L}
\DeclareMathOperator{\op}{op}
\newcommand{\id}{\mathord{{\mathrm 1}\kern-0.27em{\mathrm I}}\kern0.35em}
\newcommand{\del}[1]{\partial_{#1}}
\newcommand{\delt}[1]{\tilde{\partial}_{#1}}
\newcommand{\delh}[1]{\hat{\partial}_{#1}}
\DeclareMathOperator{\Ord}{O}
\newcommand{\oset}[2]{\overset{#1}{#2}{}}
\newcommand{\dsp}{\displaystyle}
\newcommand{\AND}{{\quad\text{and}\quad}}
\newcommand{\norm}[1]{\|#1\|}
\newcommand{\bnorm}[1]{\bigl\|#1\bigr\|}
\newcommand{\Bnorm}[1]{\Bigl\|#1\Bigr\|}
\newcommand{\ipe}[2]{ ( #1 | #2 )}
\newcommand{\Ac}{\mathcal{A}{}}
\newcommand{\Bc}{\mathcal{B}{}}
\newcommand{\bc}{\mathpzc{b}{}}
\newcommand{\Cc}{\mathcal{C}{}}
\newcommand{\Dc}{\mathcal{D}{}}
\newcommand{\ec}{\mathpzc{e}{}}
\newcommand{\Gc}{\mathcal{G}{}}
\newcommand{\gc}{\mathpzc{g}{}}
\newcommand{\Ic}{\mathcal{I}{}}
\newcommand{\Jc}{\mathcal{J}{}}
\newcommand{\Nc}{\mathcal{N}{}}
\newcommand{\Rc}{\mathcal{R}{}}
\newcommand{\Uc}{\mathcal{U}{}}
\newcommand{\Zc}{\mathcal{Z}{}}
\newcommand{\Gb}{\bar{G}{}}
\newcommand{\gb}{\bar{g}{}}
\newcommand{\Rb}{\bar{R}{}}
\newcommand{\Tb}{\bar{T}{}}
\newcommand{\tb}{\bar{t}{}}
\newcommand{\xb}{\bar{x}{}}
\newcommand{\Gammab}{\bar{\Gamma}{}}
\newcommand{\nablab}{\bar{\nabla}{}}
\newcommand{\Jch}{\check{J}{}}
\newcommand{\gbr}{\breve{g}{}}
\newcommand{\Pbr}{\breve{P}{}}
\newcommand{\qbr}{\breve{q}{}}
\newcommand{\rbr}{\breve{r}{}}
\newcommand{\Gammabr}{\breve{\Gamma}{}}
\newcommand{\taubr}{\breve{\tau}{}}
\newcommand{\At}{\tilde{A}{}}
\newcommand{\Ft}{\tilde{F}{}}
\newcommand{\gt}{\tilde{g}{}}
\newcommand{\Gt}{\tilde{G}{}}
\newcommand{\It}{\tilde{I}{}}
\newcommand{\jt}{\tilde{j}{}}
\newcommand{\Jt}{\tilde{J}{}}
\newcommand{\kt}{\tilde{k}{}}
\newcommand{\Kt}{\tilde{K}{}}
\newcommand{\lt}{\tilde{l}{}}
\newcommand{\Lt}{\tilde{L}{}}
\newcommand{\mt}{\tilde{m}{}}
\newcommand{\Mt}{\tilde{M}{}}
\newcommand{\Qt}{\tilde{Q}{}}
\newcommand{\rt}{\tilde{r}{}}
\newcommand{\Rt}{\tilde{R}{}}
\newcommand{\ttl}{\tilde{t}{}}
\newcommand{\ut}{\tilde{u}{}}
\newcommand{\Wt}{\tilde{W}{}}
\newcommand{\xt}{\tilde{x}{}}
\newcommand{\ellt}{\tilde{\ell}{}}
\newcommand{\psit}{\tilde{\psi}{}}
\newcommand{\betat}{\tilde{\beta}{}}
\newcommand{\Gammat}{\tilde{\Gamma}{}}
\newcommand{\nablat}{\tilde{\nabla}{}}
\newcommand{\kappat}{\tilde{\kappa}{}}
\newcommand{\xit}{\tilde{\xi}{}}
\newcommand{\Bh}{\hat{B}{}}
\newcommand{\Gh}{\hat{G}{}}
\newcommand{\gh}{\hat{g}{}}
\newcommand{\hh}{\hat{h}{}}
\newcommand{\nh}{\hat{n}{}}
\newcommand{\Qh}{\hat{Q}{}}
\newcommand{\Rh}{\hat{R}{}}
\newcommand{\vh}{\hat{v}{}}
\newcommand{\wh}{\hat{w}{}}
\newcommand{\Xh}{\hat{X}{}}
\newcommand{\xh}{\hat{x}{}}
\newcommand{\zh}{\hat{z}{}}
\newcommand{\nablah}{\hat{\nabla}{}}
\newcommand{\tauh}{\hat{\tau}{}}
\newcommand{\bfr}{\mathfrak{b}{}}
\newcommand{\ef}{\mathfrak{e}{}}
\newcommand{\Gf}{\mathfrak{G}{}}
\newcommand{\gf}{\mathfrak{g}{}}
\newcommand{\Hf}{\mathfrak{H}{}}
\newcommand{\Kf}{\mathfrak{K}{}}
\newcommand{\kf}{\mathfrak{k}{}}
\newcommand{\Lf}{\mathfrak{L}{}}
\newcommand{\Mf}{\mathfrak{M}{}}
\newcommand{\Pf}{\mathfrak{P}{}}
\newcommand{\Qf}{\mathfrak{Q}{}}
\newcommand{\rf}{\mathfrak{r}{}}
\newcommand{\Tf}{\mathfrak{T}{}}
\newcommand{\tf}{\mathfrak{t}{}}
\newcommand{\Ebb}{\mathbb{E}{}}
\newcommand{\Mbb}[1]{\mathbb{M}_{#1}{}}
\newcommand{\Pbb}{\mathbb{P}{}}
\newcommand{\Rbb}{\mathbb{R}{}}
\newcommand{\Sbb}[1]{\mathbb{S}_{#1}{}}
\newcommand{\Tbb}{\mathbb{T}{}}
\newcommand{\Zbb}{\mathbb{Z}{}}
\newcommand{\Asc}{\mathscr{A}{}}
\newcommand{\Bsc}{\mathscr{B}{}}
\newcommand{\Csc}{\mathscr{C}{}}
\newcommand{\Hsc}{\mathscr{H}{}}
\newcommand{\Jsc}{\mathscr{J}{}}
\newcommand{\Ksc}{\mathscr{K}{}}
\newcommand{\Lsc}{\mathscr{L}{}}
\newcommand{\Msc}{\mathscr{M}{}}
\newcommand{\btt}{\mathtt{b}{}}
\newcommand{\Dtt}{\mathtt{D}{}}
\newcommand{\ett}{\mathtt{e}{}}
\newcommand{\gtt}{\mathtt{g}{}}
\newcommand{\Ktt}{\mathtt{K}{}}
\newcommand{\Ntt}{\mathtt{N}{}}
\newcommand{\ntt}{\mathtt{n}{}}
\newcommand{\Ptt}{\mathtt{P}{}}
\newcommand{\Qtt}{\mathtt{Q}{}}
\newcommand{\Rtt}{\mathtt{R}{}}
\newcommand{\ep}{\epsilon}
\newcommand{\tauac}{\acute{\tau}{}}
\newcommand{\gac}{\acute{g}{}}
\newcommand{\Qft}{\tilde{\mathfrak{Q}}{}}
\newcommand{\taut}{\tilde{\tau}{}}
\newcommand{\Jcch}{\check{\mathcal{J}}{}}
\newcommand{\chihu}{\underline{\hat{\chi}}{}}
\newcommand{\ghu}{\underline{\gh}{}}
\newcommand{\whu}{\underline{\wh}{}}
\newcommand{\vhu}{\underline{\vh}{}}
\newcommand{\zhu}{\underline{\zh}{}}
\newcommand{\hhu}{\underline{\hh}{}}
\newcommand{\Bhu}{\underline{\Bh}{}}
\newcommand{\Qhu}{\underline{\Qh}{}}
\newcommand{\Gammahu}{\underline{\hat{\Gamma}}{}}
\newcommand{\Gammah}{\hat{\Gamma}{}}
\newcommand{\gchat}{\hat{\gc}{}}
\newcommand{\chih}{\hat{\chi}{}}
\newcommand{\taur}{\mathring{\tau}{}}
\newcommand{\chir}{\mathring{\chi}{}}
\newcommand{\taug}{\grave{\tau}{}}
\newcommand{\er}{\mathring{e}{}}
\newcommand{\gr}{\mathring{g}{}}
\newcommand{\lr}{\mathring{l}{}}
\newcommand{\Jcr}{\mathring{\mathcal{J}}{}}
\newcommand{\vr}{\mathring{v}{}}
\newcommand{\rhor}{\mathring{\rho}{}}
\newcommand{\Kttt}{\tilde{\Ktt}{}}
\newcommand{\gttt}{\tilde{\gtt}{}}
\newcommand{\Nttt}{\tilde{\Ntt}{}}
\newcommand{\bttt}{\tilde{\btt}{}}
\newcommand{\varep}{\varepsilon}
\newcommand{\ggr}{\grave{g}{}}
\newcommand{\taugr}{\grave{\tau}{}}
\newcommand{\Lambdat}{\tilde{\Lambda}{}}
\newcommand{\epv}{\boldsymbol{\epsilon}}
\newcommand{\udim}{n^4+2 n^3-3 n^2+n+1}
\newcounter{mnotecount}[section]
\let\oldmarginpar\marginpar
\renewcommand\marginpar[1]{\-\oldmarginpar[\raggedleft\footnotesize #1]%
 {\raggedright\footnotesize #1}}
\begin{document}

\title[Localized big bang stability]{Localized big bang stability for the Einstein-scalar field equations}

\author[F. Beyer]{Florian Beyer}
\address{Dept of Mathematics and Statistics\\
730 Cumberland St\\
University of Otago, Dunedin 9016\\ New Zealand}
\email{fbeyer@maths.otago.ac.nz }

\author[T.A. Oliynyk]{Todd A. Oliynyk}
\address{School of Mathematical Sciences\\
9 Rainforest Walk\\
Monash University, VIC 3800\\ Australia}
\email{todd.oliynyk@monash.edu}

\begin{abstract}
\noindent 
We prove the nonlinear stability in the contracting direction of Friedmann-Lema\^itre-Robertson-Walker (FLRW) solutions to the Einstein-scalar field equations in $n\geq 3$ spacetime dimensions that are defined on spacetime manifolds of the form $(0,t_0]\times \Tbb^{n-1}$, $t_0>0$. Stability is established under the assumption that the initial data is \textit{synchronized}, which means that on the initial hypersurface $\Sigma= \{t_0\}\times \Tbb^{n-1}$ the scalar field $\tau= \exp\bigl(\sqrt{\frac{2(n-2)}{n-1}}\phi\bigr) $ is constant, that is, $\Sigma=\tau^{-1}(\{t_0\})$. As we show that all initial data sets that are sufficiently close to FRLW ones can be evolved via the Einstein-scalar field equation into new initial data sets that are \textit{synchronized}, no generality is lost by this assumption. By using $\tau$ as a time coordinate, we establish that the perturbed FLRW spacetime manifolds are of the form $M = \bigcup_{t\in (0,t_0]}\tau^{-1}(\{t\})\cong (0,t_0]\times \Tbb^{n-1}$, the perturbed FLRW solutions are asymptotically pointwise Kasner as $\tau \searrow 0$, and a big bang singularity, characterised by the blow up of the scalar curvature, occurs at $\tau=0$. An important aspect of our past stability proof is that we use a hyperbolic gauge reduction of the Einstein-scalar field equations. As a consequence, all of the estimates used in the stability proof can be localized and we employ this property to establish a corresponding localized past stability result for the FLRW solutions. 
\end{abstract}

\maketitle

\section{Introduction}
Within the class of spatially homogeneous and isotropic spacetimes, the  Friedmann-Lema\^itre-Robertson-Walker (FLRW) spacetimes generically develop curvature singularities in the contracting time direction along spacelike hypersurfaces, known as \textit{big bang singularities}, both in vacuum and for a wide range of matter models. 
More generally, the Penrose and Hawking singularity theorems \cite{hawkingLargeScaleStructure1973} guarantee that cosmological spacetimes will be geodesically incomplete in the contracting direction for a large class of matter models and initial data sets that can include highly anisotropic ones. The geodesics incompleteness is widely expected to be due to the formation of curvature singularities, and it is an outstanding problem in mathematical cosmology to rigorously establish the conditions under which this expectation is true and to understand the dynamical behavior of cosmological solutions near singularities. 

The well known BKL-conjecture \cite{belinskii1970,lifshitz1963} suggests that singularities should generically be big bang singularities that are spacelike and oscillatory. However, currently there is little in the way of rigorous arguments beyond the work of \cite{beguin:2010,liebscher_et_al:2011,Ringstrom:2001} in the spatially homogeneous setting to support the BKL-picture. While it is worth noting that there is some numerical support \cite{Andersson:2005, curtis2005, garfinkle2002,garfinkle2002a,garfinkle2004,garfinkle2007,Weaver:2001}, recent work on \emph{spikes} \cite{berger1993,coley2015, lim2008, lim2009, rendall2001} and \emph{weak null singularities} \cite{dafermos2017,luk2017} indicate that the BKL-picture is incomplete.
At the moment, understanding cosmological solutions near generic singularities seems out of reach. However, the situation improves considerably for cosmological spacetimes that exhibit \emph{asymptotically velocity term dominated} (AVTD) behaviour \cite{Eardley:1972,Isenberg:1990} near the singularity. By definition, AVTD singularities are a special type of big bang type singularities (more details in Section~\ref{sec:AVTDAPK}) that are spacelike and  non-oscillatory.
AVTD behavior has been shown to occur generically in the class of vacuum Gowdy spacetimes \cite{CIM1990,Isenberg:1990,ringstrom2009a}, and for infinite dimensional families of cosmological spacetimes with prescribed asymptotics near the singularity in a variety of settings using Fuchsian methods \cite{ames2013a,andersson2001,beyer2017,choquet-bruhat2006, choquet-bruhat2004,ChruscielKlinger:2015,Clausen2007,damour2002,Fournodavlos:2020,heinzle2012,isenberg1999,isenberg2002,kichenassamy1998,stahl2002}. 

Recently, in the article \cite{Fournodavlos_et_al:2023}, a significant advance has been made in understanding AVTD behavior near big bang singularities for solutions of the Einstein-scalar field equations\footnote{See Section \ref{indexing} for our indexing conventions. Note also that \eqref{ESF.1} can be expressed in the more familiar form $\Gb_{ij}=2\Tb_{ij}$ where $\Tb_{ij} = \nablab_i\phi\nablab_j\phi - \frac{1}{2}|\nablab\phi|^2_g \gb_{ij}$ is the scalar field stress-energy tensor.} 
\begin{align}
\Rb_{ij}&=2\nablab_i\phi\nablab_j \phi, 
\label{ESF.1}\\
\Box_{\gb} \phi &=0, \label{ESF.2}
\end{align}
where here, $\nablab_i$ denotes the Levi-Civita connection of $\gb_{ij}$, $\Rb_{ij}$ is the Ricci tensor of $\gb_{ij}$ and $\Box_{\gb} =\gb^{ij}\nablab_i\nablab_j$ is the wave operator.
Early work on the Einstein-scalar field equations was carried out in \cite{andersson2001,barrow1978,berger1999}. Building on these results and the more recent work \cite{RodnianskiSpeck:2018b,RodnianskiSpeck:2018c,RodnianskiSpeck:2021,Speck:2018}, the authors of \cite{Fournodavlos_et_al:2023} show
that small nonlinear perturbations of the Kasner family of spacetimes are stable in the contracting direction and terminate in spacelike, non-oscillatory big bang singularities in the following settings and spacetime dimensions $n$:  Einstein-scalar field equations $(n\geq 4)$, the polarized $\mathbb{U}(1)$-symmetric vacuum Einstein equations $(n=4)$, and the vacuum Einstein equations $(n\geq 11)$. Remarkably, these stability results hold for  the full range of Kasner exponents where stable singularity formation is expected.

The past stability proof established in  \cite{Fournodavlos_et_al:2023} relies, in an essential way, on a foliation of spacetime by level sets of a function $t$ that are spacelike and have constant mean curvature (CMC). The importance of the function $t$ is that it defines a time coordinate for which the singularity can be shown to occur along the hypersurface $t=0$. In this sense, this choice of time coordinate \textit{synchronizes} the singularity, which is important because it allows statements to be made about the behavior of the physical fields as the singularity is approached, i.e.\ in the limit $t\searrow 0$, that are uniform across the whole singular surface.

One consequence of the use of CMC foliations in \cite{Fournodavlos_et_al:2023} is that the stability results obtained there depend \textit{non-locally} on the initial data, which implies, in particular, that these stability results cannot be localized to truncated cone shaped domains that reach the singularity. On the other hand, it is well known that there exists gauge choices, e.g.\ wave gauges, for which solutions of the Einstein equations depend on the initial data in a local fashion. 
These considerations lead to the natural question of whether or not it is possible to establish a local version of the stable singularity formation result from \cite{Fournodavlos_et_al:2023}. One of the main purposes of this article is to show that this is indeed possible in the restricted setting of nonlinear perturbations of FLRW solutions to the Einstein-scalar field equations in $n\geq 3$ spacetime dimensions. As the FLRW solutions to Einstein-scalar field equation are special cases of the family of Kasner-scalar field solutions (see Section~\ref{sec:Kasnerscalarfield}),  our results only establish the nonlinear stability of Kasner-scalar field solutions that are sufficiently close to FLRW solutions. It would, of course, be interesting to determine if the stability results established in this article continue to hold for the full range of expected Kasner exponents as in \cite{Fournodavlos_et_al:2023}. However, this would require a more detailed analysis of certain terms in the Einstein-scalar field equations than we carry out in this article. We plan on returning to this question in future work.

\subsection{Conformal Einstein-scalar field equations}
Before we discuss the main results of this article further, we first reformulate the Einstein-scalar field equations in a way that will be more favorable to the analysis that we carry out in this article. The reformulation begins with replacing the physical metric $\gb_{ij}$ with a \textit{conformal metric} $g_{ij}$ defined by
\begin{equation}\label{confmet}
\gb_{ij} = e^{2\Phi}g_{ij},
\end{equation}
where here and throughout this article, we
assume that the spacetime dimension $n$ satisfies $n\geq 3$.
Under the conformal transformation \eqref{confmet},
it is well known that the Ricci tensor transforms according to 
\begin{equation}\label{confRicci}
\Rb_{ij}=R_{ij}-(n-2)\nabla_i \nabla_j \Phi + (n-2)\nabla_i \Phi \nabla_j \Phi -(\Box_g \Phi + (n-2)|\nabla\Phi|^2_g)g_{ij}
\end{equation}
where $\nabla_i$ is the Levi-Civita connection of the conformal metric $g_{ij}$ and as above,
$\Box_g=g^{ij}\nabla_i\nabla_j$.
For use below, we recall that the connection coefficents of the
metrics $\gb_{ij}$ and $g_{ij}$ are related by
\begin{equation}\label{confChrist}
\Gammab_{i}{}^k{}_j-\Gamma_{i}{}^k{}_j=g^{kl}(g_{il}\nabla_j \Phi + g_{jl}\nabla_i\Phi -g_{ij}\nabla_l\Phi).
\end{equation}
For now, the connection coefficients $\Gammab_i{}^k{}_j$ and $\Gamma_i{}^k{}_j$ 
are to be understood as being determined by an arbitrary frame $e_i=e_i^\mu\del{\mu}$. Later on different choices for the frame will be made; see Sections \ref{sec:locexist_cont} and \ref{Fermi} for details. 

Now, using \eqref{confRicci}, we can express the Einstein equations \eqref{ESF.1} in the form
\begin{equation}\label{confESFA}
-2R_{ij}=-2(n-2)\nabla_i \nabla_j \Phi + 2(n-2)\nabla_i \Phi \nabla_j \Phi -2(\Box_g \Phi + (n-2)|\nabla\Phi|^2_g)g_{ij}-4\nabla_{i}\phi\nabla_{j} \phi.
\end{equation}
Also, by introducing a background metric $\gc_{ij}$ and letting $\Dc_i$ and $\gamma_{i}{}^k{}_{j}$ denote the associated Levi-Civita
connection and connection coefficients, we can write the scalar field equation \eqref{ESF.2} as
\begin{equation*}
\gb^{ij}\Dc_{i}\Dc_{j}\phi-\gb^{ij}(\Gammab_{i}{}^k{}_j-\Gamma_i{}^k{}_j+\Cc_{i}{}^k{}_{j})\Dc_{k}\phi = 0
\end{equation*}
where
\begin{equation}\label{Ccdef}
\Cc_{i}{}^k{}_j := \Gamma_{i}{}^k{}_j-\gamma_{i}{}^k{}_j=\frac{1}{2}g^{k l}\bigl(\Dc_i g_{j l}+\Dc_j g_{i l}-\Dc_l g_{ij}\bigr).
\end{equation}
It is then not difficult to verify using \eqref{confmet} and \eqref{confChrist} that the scalar field equation can be expressed as
\begin{equation} \label{confESFB}
g^{ij}\Dc_{i}\Dc_{j}\phi= X^k\Dc_{k}\phi -(n-2)g^{ij}\Dc_{i}\Phi\Dc_{j}\phi,
\end{equation}
where
\begin{equation} \label{Xdef}
X^k := g^{ij}\Cc_{i}{}^k{}_j=\frac{1}{2}g^{ij}g^{k l}\bigl(2\Dc_i g_{j l}-\Dc_l g_{ij}\bigr),
\end{equation}
or equivalently as
\begin{equation} \label{confESFC}
\Box_g \phi = -(n-2)\nabla^i\Phi\nabla_i\phi.
\end{equation}
Note that, here and below, all indices will be raised and lowered using the conformal metric, e.g.  $\nabla^k\Phi = g^{kl}\nabla_l\Phi$. 

So far the scalar field $\Phi$ used to define the conformal transformation \eqref{confmet} has been arbitrary. We now fix it by setting
\begin{equation}\label{gaugefixA}
\Phi =\lambda \phi
\end{equation}
where
\begin{equation} \label{lambdafix}
 \lambda = \sqrt{\frac{2}{(n-2)(n-1)}}.
\end{equation}
We also define a scalar field $\tau$ via
\begin{equation} \label{taudef}
 \tau=e^{-\alpha\phi}  \quad \Longleftrightarrow \quad \phi = -\frac{1}{\alpha}\ln(\tau)
\end{equation}
where
\begin{equation}\label{alphafix}
\alpha= \frac{\lambda^2(n-2)-2}{\lambda (n-2)}=-\sqrt{\frac{2(n-2)}{n-1}}.
\end{equation}
Then, by \eqref{confESFC},  $\Phi$ satisfies
\begin{equation*}
\Box_g \Phi +(n-2)\nabla^i\Phi\nabla_i\Phi=0.
\end{equation*}
Substituting this into \eqref{confESFA}, we find after  using \eqref{gaugefixA}-\eqref{alphafix} to replace $\Phi$ and $\phi$ with $\tau$ that
\begin{equation} \label{confESFAa}
R_{ij}=\frac{1}{\tau}\nabla_i \nabla_j \tau.
\end{equation}
We see also from a straightforward calculation that the wave equation \eqref{confESFC}, when expressed in terms of the scalar field $\tau$, becomes
\begin{equation*}
    \Box_g \tau = 0.
\end{equation*}
Collecting together the above two equations, we can can express them as
\begin{align}
G_{ij}&=\frac{1}{\tau}\nabla_i \nabla_j \tau,
\label{confESFAb} \\
\Box_g \tau &= 0, \label{confESFCb}
\end{align}
where $G_{ij}$ is the Einstein tensor of the conformal metric $g_{ij}$. We will refer to these equations as the \textit{conformal Einstein-scalar field equations}. It follows from \eqref{confmet}  and \eqref{gaugefixA}-\eqref{alphafix} that each solution $\{g_{ij},\tau\}$ of the conformal Einstein-scalar field equations yields the solution
\begin{equation}
  \label{eq:conf2phys}
  \biggl\{\gb_{ij}=\tau^{\frac {2}{n -2}}g_{ij},\,\phi=\sqrt{\frac{n-1}{2(n-2)}}\ln(\tau)\biggr\}
\end{equation}
of the \textit{physical Einstein-scalar field equations} \eqref{ESF.1}-\eqref{ESF.2}. We remark that since \eqref{ESF.1}-\eqref{ESF.2} are invariant under the transformation $\phi\mapsto -\phi$, our sign convention for $\phi$ in \eqref{eq:conf2phys} incurs no loss of generality.

\subsection{Kasner-scalar field spacetimes}
\label{sec:Kasnerscalarfield}
In our conformal picture, the  \textit{Kasner-scalar field spacetimes}, which are solutions of the conformal Einstein-scalar field equations
\eqref{confESFAb}-\eqref{confESFCb}, are determined
by the following conformal metric and scalar
field
\begin{equation}\label{gtau-Kasner}
\gbr = -t^{\rbr_0}dt\otimes dt + \sum_{\Lambda=1}^{n-1} t^{\rbr_\Lambda} dx^\Lambda \otimes dx^\Lambda
\AND
\taubr= t,
\end{equation}
respectively, which are defined on the spacetime manifold $M^{(K)}=\Rbb_{>0}\times \Tbb^{n-1}$; see Section \ref{indexing} for our coordinate and indexing conventions.
In the above expressions, the constants
$\rbr_\mu$ are called \emph{Kasner exponents} and are defined by
\begin{equation}
    \rbr_0 = \frac{1}{\Pbr}\sqrt{\frac{2(n-1)}{n-2}}-\frac{2(n-1)}{n-2} \AND
    \rbr_\Lambda =\frac{1}{\Pbr} \sqrt{\frac{2(n-1)}{n-2}}\qbr_\Lambda
    -\frac{2}{n-2}, \label{rmu-def}
\end{equation}
where $0<\Pbr\le \sqrt{(n-2)/(2(n-1))}$ and the $\qbr_\Lambda$ satisfy the \textit{Kasner relations} 
\begin{equation}\label{qLambda-def}
    \sum_{\Lambda=1}^{n-1}\qbr_\Lambda =1 \AND 
    \sum_{\Lambda=1}^{n-1}\qbr_\Lambda^2 = 1-2 \Pbr^2. 
\end{equation}
From \eqref{gtau-Kasner}-\eqref{qLambda-def} as well as \eqref{confmet} and \eqref{gaugefixA}-\eqref{alphafix}, it follows 
from the curvature expressions
\begin{equation*}
  \Rb_{\mu\nu}\Rb^{\mu\nu}=\left(\frac{n-1}{n-2}\right)^2 t^{-4 \frac{n-1}{n-2}-2\rbr_0},\quad
  \Rb=-\frac{n-1}{n-2} t^{-2 \frac{n-1}{n-2}-\rbr_0},
\end{equation*}
for the physical  metric $\gb_{\mu\nu}=t^{\frac {2}{n -2}}\gbr_{\mu\nu}$ that the \textit{Kasner big bang singularity} occurs along
the spacelike hypersurface $\{0\}\times \Tbb^{n-1}$.

Noting from \eqref{rmu-def} and \eqref{qLambda-def} that
\begin{equation*}
 \sum_{\Lambda=1}^{n-1} \rbr_\Lambda=\rbr_0,
\end{equation*}
which in turn implies by \eqref{gtau-Kasner} that
\begin{equation*}
    -\gbr^{00}\sqrt{|\det(\gbr_{\alpha\beta})|} = t^{\frac{1}{2}\bigl(\rbr_0+\sum_{\Lambda=1}^{n-1} \rbr_\Lambda\bigr)-\rbr_0} = 1,
\end{equation*}
we observe from  \eqref{gtau-Kasner} and a short calculation
that
\begin{equation*}
\Box_{\gbr}x^\gamma = \frac{1}{\sqrt{|\det(\gbr_{\alpha\beta})|}}\del{\mu}\Bigl(\sqrt{|\det(\gbr_{\alpha\beta})|}\gbr^{\mu\nu}\del{\nu}x^\gamma\Bigr) = 0. 
\end{equation*}
This shows the  $(x^\mu)$ are \textit{wave coordinates} and  that
\begin{equation} \label{Kasner-wave-gauge}
\gbr^{\mu\nu}\Gammabr_{\mu\nu}^\gamma = 0
\end{equation}
is satisfied, where, here, $\Gammabr_{\mu\nu}^\gamma$ denotes the Christoffel symbols of the conformal Kasner metric $\gbr_{\mu\nu}$. It is also useful to observe that the lapse $\Ntt$ and the Weingarten map induced on the $t=const$-hypersurfaces by the conformal Kasner metric $\gbr_{\mu\nu}$ are
\begin{equation}
  \label{eq:KSFlapse2ndFF}
  \Ntt=t^{\frac{\rbr_0}{2}} \AND (\Ktt_\Lambda{}^\Omega)=\frac 12t^{-1-\frac{\rbr_0}{2}}\diag(\rbr_1,\ldots,\rbr_{n-1}),
\end{equation}
respectively, and that when we express a Kasner-scalar field solution with respect to the time coordinate
\[\tb=\frac{1}{\frac{n-1}{n-2}+\frac{\rbr_0}2} t^{\frac{n-1}{n-2}+\frac{\rbr_0}2},\]
and appropriately rescale all the spatial coordinates $(x^\Lambda)$, then the physical solution \eqref{eq:conf2phys} corresponding to \eqref{gtau-Kasner} takes the more conventional form
\begin{equation}\label{gtau-Kasnerphys}
\overline{\gbr} = -d\bar t\otimes d\bar t + \sum_{\Lambda=1}^{n-1}
{\bar t}^{2\qbr_\Lambda} d\xb^\Lambda \otimes d\xb^\Lambda
\AND
\breve\phi= \Pbr\ln(\bar t)+\Pbr\ln\Bigl(\frac{n-1}{n-2}+\frac{\rbr_0}2\Bigr),
\end{equation}
where $\Pbr$ and $\qbr_\Lambda$ are all related to $\rbr_0$ and $\rbr_\Lambda$ by \eqref{rmu-def}. In particular, the constant $\Pbr$ can be interpreted as the \emph{asymptotic scalar field strength}. Noticing that since $\Pbr=0$ is not allowed, the special case of vacuum Kasner solutions are not covered by our conformal representation of the Kasner-scalar field solutions.

\subsubsection{FLRW-scalar field spacetimes}
\label{sec:KasnerscalarfieldFLRW}
Kasner spacetimes where the constants $q_\Lambda$ are all the same coincide with FLRW spacetimes. In this situation, we have, by \eqref{qLambda-def}, that 
\begin{equation*}
    \qbr_\Lambda = \frac{1}{n-1}.
\end{equation*}
This corresponds to the extreme case
\[|\Pbr|=\sqrt{\frac{n-2}{2(n-1)}}\]
that saturates the inequality  $|\Pbr|\le \sqrt{(n-2)/(2(n-1))}$.
These choices imply by \eqref{rmu-def} that $\rbr_0=\rbr_\Lambda=0$ and that the conformal Kasner-scalar field solution \eqref{gtau-Kasner} simplifies to
\begin{align}\label{gtau-FLRW}
\gbr = -dt\otimes dt + \delta_{\Lambda\Omega}dx^\Lambda \otimes dx^\Omega
\AND
\taubr= t.
\end{align}
We further note via a straightforward calculation that the FLRW solution (as well as all Kasner-scalar field solutions of the form \eqref{gtau-Kasner}) 
satisfies
\begin{equation} \label{FLRW-Lag}
\frac{1}{|\breve{\nabla}\taubr|_{\gbr}^2}\breve{\nabla}{}^\mu \taubr=\delta^\mu_0,
\end{equation}
which, in terminology that we will introduce below, implies that the coordinates $(x^\mu)=(t,x^\Lambda)$ used to define the FLRW metric \eqref{gtau-FLRW} are \textit{Lagrangian}. By \eqref{Kasner-wave-gauge}, these coordinates are also wave coordinates, that is, $\Box_{\gb}x^\mu =0$. Both of these gauge conditions will play a pivotal role in the proof of our stability results.

\subsection{AVTD and asymptotic pointwise Kasner behavior}
\label{sec:AVTDAPK}
In this article, we analyze solutions of the Einstein-scalar field system that are \emph{near} Kasner-scalar field solutions, specifically near the FLRW solution. Even though the spacetimes that are generated by our stability results are generically spatially inhomogeneous and without any symmetries, they do retain some of the asymptotic properties that characterize the spatially homogeneous Kasner-scalar field solutions.
For example, we show that spatial derivative terms that appear in the dynamical equations become negligible at $t=0$ in comparison to time derivative terms, the so-called \emph{velocity terms} \cite{Eardley:1972,Isenberg:1990}. This AVTD behaviour means that the dynamical equations could be approximated by ODEs close to the big bang.  In agreement with \cite{Fournodavlos_et_al:2023}, we define that a solution to the conformal Einstein-scalar field system satisfies the AVTD property provided it satisfies the VTD equations -- the equations obtained from the main evolution system by removing all spatial derivative terms and by normalising all time derivative terms -- up to an error term that is integrable in time near $t=0$. By \textit{normalising all time derivatives}, we specifically mean that the evolution equations are put into Fuchsian form as discussed in Sections~\ref{sec:FuchsianForm}~and~\ref{sec:proof_globstab_Fuchsian}.

Given that spatial inhomogeneities of AVTD solutions become irrelevant at the big bang, it is not surprising that they behave essentially like spatially homogeneous solutions and satisfy the following \emph{asymptotic pointwise Kasner} property.
\begin{Def}
\label{def:APKasner}
Given a $C^2$-solution $(M=(0,t_0]\times U, g_{\mu\nu},\tau)$, $t_0>0$, of the conformal Einstein-scalar field equations,
where $U\subset \Tbb^{n-1}$ is open and  $(x^\mu)=(t,x^\Lambda)$ are coordinates on $M$ such that $t\in (0,t_0]$, $\tau=t$ and the $(x^\Lambda)$ are periodic coordinates on $\Tbb^{n-1}$, we say that the spacetime $(M, g_{\mu\nu},\tau)$ is \textit{asymptotically pointwise Kasner on $U$} provided there exists a frame $e_i=e_i^\mu\del{\mu}$ and a continuous spatial tensor field $\kf_{I}{}^J$ on $U$ such that the following hold:
\begin{enumerate}[(i)]
\item The spatial vector fields $e_I$ are tangential to the $t=const$-surfaces, i.e. $e_I=e^\Lambda \del{\Lambda}$, and
\begin{equation}
  \label{eq:asymptptwKasner2}
  \lim_{t\searrow 0} \bigl|2t\,\Ntt(t,x)\, \Ktt_{I}{}^J (t,x)-\kf_{I}{}^J(x)\bigr|=0
\end{equation}
for each $x\in U$, where $\Ntt$ is the lapse and $\Ktt_{I}{}^{J}$ is the Weingarten map induced on the $t=\mathrm{const}$-hypersurfaces by $g_{\mu\nu}$. 
\item The tensor field $\kf_I{}^J$ satisfies $\kf_I{}^I\geq 0$ and the \emph{Kasner relation}
\begin{equation}
  \label{eq:asymptptwKasner}
  (\kf_{I}{}^{I})^2 -
  \kf_{I}{}^J \kf_{J}{}^I 
  +4\kf_{I}{}^{I}=0
\end{equation}
everywhere on $U$. At each point $x\in U$, the symmetry of $\kf_I{}^J(x)$ guarantees that $\kf_I{}^J(x)$ has $n-1$ real eigenvalues, which we denote by $r_1(x),\ldots,r_{n-1}(x)$. We refer to the functions $r_1,\ldots,r_{n-1}$ on $U$ as the \emph{Kasner exponents}\footnote{Notice that it is more customary in the literature to call the quantities $q_1,\ldots, q_{n-1}$ defined in \eqref{eq:r0qIdef} \emph{Kasner exponents} and \eqref{eq:Kasnerrel} the \emph{Kasner relations}.}.
\end{enumerate}
\end{Def}

The Kasner relation \eqref{eq:asymptptwKasner} and the Kasner exponents $r_1,\ldots,r_{n-1}$ from the above definition should be viewed as a natural generalization to inhomogeneous spacetimes of the Kasner relations \eqref{qLambda-def} and exponents from Section \ref{sec:Kasnerscalarfield}. This can be justified
by noting that \eqref{eq:asymptptwKasner} is invariant under the choice of spatial frame.  By employing an eigenframe of $\kf_{J}{}^I$ to define the Kasner exponents $r_1,\ldots,r_{n-1}$ of an asymptotically pointwise Kasner spacetime,
the Kasner relation \eqref{eq:asymptptwKasner} can be expressed as
\begin{equation}
  \label{eq:asymptptwKasnerexpo}
  \biggl(\sum_{\Lambda=1}^{n-1}r_\Lambda\biggr)^2
  - \sum_{\Lambda=1}^{n-1}r_\Lambda^2
  +4\sum_{\Lambda=1}^{n-1}r_\Lambda=0.
\end{equation}
To put \eqref{eq:asymptptwKasnerexpo} in the same form as the homogenous Kasner relations \eqref{qLambda-def}, we define functions
$r_0$ and $q_\Lambda$ on $U$ by
\begin{equation}
  \label{eq:r0qIdef}
  r_0:=\sum_{\Lambda=1}^{n-1}r_\Lambda=\kf_{I}{}^{I} \AND
  q_\Lambda:=\Ptt\sqrt{\frac{n-2}{2(n-1)}}\Bigl(r_\Lambda+\frac{2}{n-2}\Bigr),
\end{equation}
where the function $\Ptt$ will be fixed below. From these definitions and \eqref{eq:asymptptwKasner}, we see that
\begin{equation*}
  \sum_{\Lambda=1}^{n-1}q_\Lambda=\Ptt\sqrt{\frac{n-2}{2(n-1)}}\Bigl(r_0+\frac{2(n-1)}{n-2}\Bigr) \AND \sum_{\Lambda=1}^{n-1}q_\Lambda^2 =\Ptt^2\frac{n-2}{2(n-1)}\Bigl(r_0+\frac{2(n-1)}{n-2}\Bigr)^2 
     -2\Ptt^2,
\end{equation*}
and it is then clear that the $q_\Lambda$ will satisfy the ``standard'' Kasner relations
\begin{equation}\label{eq:Kasnerrel}
  \sum_{\Lambda=1}^{n-1}q_\Lambda =1 \AND 
  \sum_{\Lambda=1}^{n-1}q_\Lambda^2 = 1-2 \Ptt^2
\end{equation}
on $U$ provided we set
\begin{equation}
  \label{eq:pdef}
  \Ptt=\sqrt{\frac{2(n-1)}{n-2}}\Biggl(\sum_{\Lambda=1}^{n-1}r_\Lambda+\frac{2(n-1)}{n-2}\Biggr)^{-1}
=\frac{\sqrt{{2(n-1)}(n-2)}}{{2(n-1)} +(n-2)\sum_{\Lambda=1}^{n-1}r_\Lambda},
\end{equation}
which we note is well-defined since $\kf_I{}^I=\sum_{\Lambda=1}^{n-1}r_\Lambda\geq 0$ by assumption. From \eqref{eq:pdef}, we also observe that $\Ptt$ is bounded by  $0<\Ptt\le \sqrt{(n-2)/(2(n-1))}$, which is the same bound as in  Section~\ref{sec:Kasnerscalarfield}.

An immediate consequence of the above discussion is that the Kasner-scalar field spacetimes defined above in Section~\ref{sec:Kasnerscalarfield} are asymptotically pointwise Kasner on $\Tbb^{n-1}$ in the sense of Definition \ref{def:APKasner}. The virtue of Definition~\ref{def:APKasner} is that it continues to makes sense for inhomogeneous spacetimes. It is also worthwhile noting at this point that it is shown in Section~\ref{sec:proof_globstab} that the Kasner relation \eqref{eq:asymptptwKasner} arises from the $t\searrow 0$ limit of a rescaled version of the physical Hamiltonian constraint.

In analogy with the Kasner-scalar field solutions \eqref{gtau-Kasnerphys}, the function $\Ptt$ defined by \eqref{eq:pdef} for asymptotically pointwise Kasner spacetimes can be interpreted as the \emph{asymptotic scalar field strength} provided there exist continuous positive functions  $\bfr$ and $\nu$ such that
\begin{equation}
  \label{eq:lapselimit}
  \Bigl|t^{-\frac{1}{2}r_0(x)}\Ntt (t,x)-\bfr(x)\Bigr|\lesssim t^{\nu(x)}
\end{equation}
for all $t\in (0,t_0]$ and each $x\in U$. This property clearly holds for the Kasner-scalar field spacetimes due to  \eqref{eq:KSFlapse2ndFF}, and we will show that it also holds for the entire class of perturbations of the FLRW solutions that are generated from our stability results. 

Assuming that \eqref{eq:lapselimit} is satisfied for an asymptotically pointwise Kasner spacetime, we can introduce
a new time coordinate $\tb=\tb(t,x)$ 
to make the component $\gb_{\tb\tb}$ of the physical metric $\gb$ with respect to this new time coordinate equal to minus one by demanding that
\[\frac{\partial \tb(t,x)}{\partial t}=\Ntt(t,x)t^{\frac{1}{n-2}}
  =\bfr(x) t^{\frac{1}{n-2}+\frac{1}{2}r_0(x)} + (t^{-\frac{1}{2}r_0(x)}\Ntt(t,x)-\bfr(x)) t^{\frac{1}{n-2}+\frac{1}{2}r_0(x)}.\]
It is not difficult to verify from \eqref{eq:r0qIdef} and \eqref{eq:pdef} that there is a solution to this equation with the property
\[\Biggl|\tb(t,x)-\sqrt{\frac {2(n-2)}{n-1}}\bfr(x) {\Ptt(x)}t^{\frac 1{\Ptt(x)}\sqrt{\frac{n-1}{2(n-2)}}}\Biggr|\lesssim t^{\nu(x)+\frac 1{\Ptt(x)}\sqrt{\frac{n-1}{2(n-2)}}}\]
for all $(t,x)\in (0,t_0]\times U$.
Since this implies
\begin{align*}
  &\Biggl|\ln(\tb(t,x))-\ln\Biggl(\sqrt{\frac {2(n-2)}{n-1}}\bfr(x) {\Ptt(x)}t^{\frac 1{\Ptt(x)}\sqrt{\frac{n-1}{2(n-2)}}}\Biggr)\Biggr|
    \lesssim t^{\nu(x)},
\end{align*}
we get from \eqref{eq:conf2phys}, recalling $\tau=t$, that 
\begin{align*}
  \Biggl|\phi(t)-\Biggl(&\Ptt(x)\ln( \tb(t,x))-\Ptt(x) \ln\Biggl(\sqrt{\frac {2(n-2)}{n-1}}\bfr(x) {\Ptt(x)}\Biggr)\Biggr)\Biggr|  
  \le \Biggl|\sqrt{\frac{n-1}{2(n-2)}}\ln(t) \\
  &-\Ptt(x) \ln\biggl(\sqrt{\frac {2(n-2)}{n-1}}\bfr(x) {\Ptt(x)}t^{\frac 1{\Ptt(x)}\sqrt{\frac{n-1}{2(n-2)}}}\biggr)+\Ptt(x) \ln\biggl(\sqrt{\frac {2(n-2)}{n-1}}\bfr(x) {\Ptt(x)}\biggr)\Biggr|\\
  &\hspace{4.0cm} +\Ptt(x)\Biggl|\ln(\tb(t,x))-\ln\Bigl(\sqrt{\frac {2(n-2)}{n-1}}\bfr(x) {\Ptt(x)}t^{\frac 1{\Ptt(x)}\sqrt{\frac{n-1}{2(n-2)}}}\Bigr)\Biggr|
  \lesssim t^{\nu(x)}
\end{align*}
for all  $(t,x)\in (0,t_0]\times U$.
In analogy to the Kasner-scalar field solutions \eqref{gtau-Kasnerphys}, the function $\Ptt$ defined by \eqref{eq:pdef} can therefore be interpreted as the \textit{asymptotic scalar field strength}.

Condition \eqref{eq:asymptptwKasner2} is consistent with the behaviour of Kasner-scalar field solutions, see \eqref{eq:KSFlapse2ndFF}, and therefore supports the assertion that asymptotically pointwise Kasner solutions behave asymptotically at each spatial point like a Kasner-scalar field spacetime. The second fundamental form $\bar\Ktt_{\Lambda\Omega}$ induced on $t=const$-surfaces by the physical metric $\gb_{\mu\nu}$ is related to the one $\Ktt_{\Lambda\Omega}$ induced by the conformal metric $g_{\mu\nu}$ via
\[\bar\Ktt_\Lambda{}^{\Omega}=\frac 12 t^{-\frac {1}{n -2}} t^{-1}\Ntt^{-1}\Bigl(2 t\Ntt\Ktt _{\Lambda}{}^{\Omega}
    +\frac {2}{n-2}{\delta}_{\Lambda}{}^{\Omega}\Bigr).\]
Because of this, \eqref{eq:asymptptwKasner2} and \eqref{eq:lapselimit}  imply that both the mean curvatures associated with the physical and with the conformal metric diverge pointwise near $t=0$, except in the case of FLRW where $K_{\Lambda}{}^{\Lambda}$ vanishes while $\bar{K}_{\Lambda}{}^{\Lambda}$ diverges. Given suitable uniform bounds over the spatial domain $U$, which we establish hold in our main results, asymptotically pointwise Kasner metrics will have \emph{crushing singularities} at $t=0$  in the language of \cite{eardley1979}.

\subsection{An informal statement of the main results}

The main results of this article are given in Theorem~\ref{glob-stab-thm} and Theorem~\ref{loc-stab-thm}. The first theorem 
establishes the nonlinear stability in the contracting direction of perturbations of the FLRW solution \eqref{gtau-FLRW} to the Einstein-scalar field in $n\geq 3$ spacetime dimensions. The perturbed solutions are shown to be asymptotically pointwise Kasner and terminate in a big bang singularity. In this theorem, the initial data is prescribed on the hypersurface $\{t_0\}\times \Tbb^{n-1}$, $t_0>0$, and is assumed to be \textit{synchronized}, that is, $\tau=t_0$ on $\{t_0\}\times \Tbb^{n-1}$. The purpose of this synchronization condition is to ensure that the big bang singularity occurs at $\tau=0$. By the results of Section \ref{temp-synch}, no generality is lost by restricting our attention to synchronized initial data.
Moreover, because the initial data for Theorem~\ref{glob-stab-thm} is specified on a closed hypersuface, this theorem should be interpreted as a past global stability result. The informal statement of this theorem is as follows.
\begin{thm}[Past global stability of the FLRW solution of the Einstein-scalar field equations]
Solutions $\{g_{ij},\tau\}$ of the conformal Einstein-scalar field equations that are 
generated from sufficiently differentiable, synchronized initial data imposed on $\{t_0\}\times \Tbb^{n-1}$ that is suitably close to the FLRW data exist on the spacetime region $M \cong \bigcup_{t\in (0,t_0]}\tau^{-1}(\{t\})\cong (0,t_0]\times \Tbb^{n-1}$. Moreover, these solutions are asymptotically pointwise Kasner
and the corresponding physical solutions $\{\gb_{ij},\phi\}$ of the Einstein-scalar field equations are $C^2$-inextendible through the $\tau=0$-boundary of $M$, past timelike geodesically incomplete and terminate at a crushing big bang singularity at $\tau=0$ that is characterized by curvature blow-up.  
\end{thm}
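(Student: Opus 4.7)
The plan is to convert the conformal Einstein–scalar field equations \eqref{confESFAb}--\eqref{confESFCb} into a quasilinear symmetric hyperbolic system by imposing the wave gauge $X^k\equiv 0$ and the Lagrangian gauge $\nabla^\mu \tau / |\nabla\tau|_g^2 = \delta^\mu_0$, both of which are satisfied by the FLRW background in view of \eqref{Kasner-wave-gauge} and \eqref{FLRW-Lag}. Modifying \eqref{confESFAb} by adding $g_{k(i}\nabla_{j)}X^k$ with $X^k$ from \eqref{Xdef} replaces the Ricci tensor with its reduced version and produces a quasilinear wave operator acting on $g_{ij}$. Because the FLRW data are synchronized by assumption, the Lagrangian gauge together with $\tau|_\Sigma = t_0$ forces $\tau=t$ throughout the existence interval, so the time coordinate provided by local hyperbolic existence theory may be identified with $\tau$ and the putative singular hypersurface $\{\tau=0\}$ coincides with $\{t=0\}$. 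At this stage I would carry out the constraint propagation argument: the twice-contracted Bianchi identity applied to \eqref{confESFAb} yields a homogeneous linear wave equation for the gauge defect $X^k$ and for the Lagrangian defect, so both vanish globally provided they vanish initially and the constraint equations hold on $\Sigma$.

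The core analytic step is to recast the reduced system in Fuchsian form and to prove a uniform a priori estimate all the way down to $t=0$. Writing the perturbation variables as $u:=(g_{ij}-\gbr_{ij},\, \partial g_{ij},\, \tau - t,\, \partial \tau)$ and rescaling components by appropriate powers of $t$ dictated by the expected asymptotic behavior \eqref{eq:KSFlapse2ndFF}, the evolution takes the form $t\partial_t u + A^\Lambda(t,x,u)\partial_\Lambda u + \mathcal{B}(t,x)u = F(t,x,u)$ with $\mathcal{B}$ having a definite sign structure in the FLRW limit. I would then construct a weighted energy
\[
  E_N(t)=\sum_{|\alpha|\leq N}\int_{\Tbb^{n-1}} \langle \mathcal{S}(t)\partial^\alpha u,\partial^\alpha u\rangle\,dx
\]
with a symmetrizer $\mathcal{S}(t)$ chosen so that the contribution of $\mathcal{B}$ produces a dissipative term of the form $-\kappa E_N(t)/t$ with $\kappa\ge 0$, allowing one to absorb the singular commutator terms that arise from differentiating the Fuchsian equation. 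Smallness of the initial perturbation and a standard Moser-type bootstrap convert this into an estimate $E_N(t)\lesssim E_N(t_0) t^{-2\delta}$ for a small $\delta$ determined by the eigenvalues of $\mathcal{B}$, which is integrable enough to continue the solution to all $t\in (0,t_0]$.

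The asymptotic pointwise Kasner property follows by integrating the evolution equation for the rescaled Weingarten map $2t\Ntt\,\Ktt_I{}^J$: the Fuchsian estimate shows that its $t$-derivative is integrable near $t=0$, so the limit $\kf_I{}^J(x)=\lim_{t\searrow 0}2t\Ntt\,\Ktt_I{}^J(t,x)$ exists in $C^0(\Tbb^{n-1})$ and satisfies \eqref{eq:asymptptwKasner2}. Passing to the $t\searrow 0$ limit in the Hamiltonian constraint (suitably rescaled by powers of $t$ as indicated in Section~\ref{sec:AVTDAPK}) yields the Kasner relation \eqref{eq:asymptptwKasner}, and the lapse bound \eqref{eq:lapselimit} follows from the same estimate applied to the equation for $\Ntt$. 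Curvature blow-up is then extracted from \eqref{confESFAa} rewritten in the physical metric $\gb_{ij}=\tau^{2/(n-2)}g_{ij}$: the dominant contribution to $\Rb_{\mu\nu}\Rb^{\mu\nu}$ near $\tau=0$ agrees to leading order with the Kasner model and diverges at a rate governed by $\kf_I{}^I\ge 0$, which in turn rules out any $C^2$ extension across $\{\tau=0\}$.

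I expect the main obstacle to be the construction of the symmetrizer $\mathcal{S}(t)$ for the Fuchsian system: the sign of $\mathcal{B}$ is delicate because the FLRW limit corresponds to the extreme point $|\Pbr|=\sqrt{(n-2)/(2(n-1))}$ of the Kasner family, where the background Weingarten map degenerates, and the reduction relies on this degeneracy to avoid the more restrictive Kasner-exponent constraints imposed in \cite{RodnianskiSpeck:2018a,Fournodavlos_et_al:2020}. A closely related technical point is to keep all estimates \emph{local in $x$}, using only domain-of-dependence arguments from the hyperbolic reduction; this is what ultimately allows the companion localized stability statement, and it forces the weights and symmetrizer to be position independent rather than tuned to any global quantity such as a CMC leaf.
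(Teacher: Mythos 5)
Your proposal captures the broad architecture of the paper's argument — wave gauge plus Lagrangian gauge, reduction to a symmetric hyperbolic system, constraint propagation via a homogeneous wave equation for the gauge defect, transformation to Fuchsian form with $t$-weighted rescalings, and extraction of asymptotics by integrating the Fuchsian energy estimates. However, there are two concrete missing ideas whose absence is fatal, and they are precisely where you flag the ``main obstacle.''

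First, the paper does \emph{not} run the global Fuchsian estimate on the coordinate-frame first-order variables $(g_{ij}-\gbr_{ij},\partial g_{ij},\tau-t,\partial\tau)$. It explicitly states that the Lagrangian-coordinate representation is ``not suitable for obtaining estimates that are well behaved near the big bang singularity.'' Instead, the estimates are run in an \emph{orthonormal, Fermi--Walker transported} frame $e_i$ in which the frame metric is the constant $\eta_{ij}$, so the dynamical variables are not metric perturbations at all but the covariant derivatives $g_{ijk}=\Dc_i g_{jk}$, $g_{ijkl}=\Dc_i\Dc_j g_{kl}$, $\tau_{ij}=\Dc_i\Dc_j\tau$, $\tau_{ijk}=\Dc_i\Dc_j\Dc_k\tau$, together with the background connection coefficients, the frame components $e_I^\Lambda$, and the lapse factor $\betat$. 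Your variable set, which stops at first derivatives of $g$ and $\tau$ and keeps the metric components themselves, cannot close to a Fuchsian system: the singular $\tfrac{1}{t}$ terms coming from \eqref{confESFAa} genuinely couple to second derivatives of $g$ and third derivatives of $\tau$, and the paper's positivity Lemma~\ref{lem-Bc-lbnd} is about exactly that higher-order block.

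Second, and equally essential, the paper shows that the symmetrizer construction you worry about only succeeds because the evolution equations for the components $g_{i00}$ are \emph{never used}: they contain singular terms that cannot be absorbed. Instead, $g_{000}$ and $g_{I00}$ are reconstructed algebraically from the other components via the wave gauge constraint in the form \eqref{gi00}. Without this substitution, the block $\Bc$ in the Fuchsian formulation does not have the sign structure required, and no choice of $t$-weights or symmetrizer will rescue it. Your instinct that the FLRW extreme point $|\Pbr|=\sqrt{(n-2)/(2(n-1))}$ makes the sign delicate is correct, but the resolution is this algebraic elimination plus the specific rescaling exponents $\ep_0,\ep_1,\ep_2$ tuned so that Lemma~\ref{lem:posdef1} applies; it is not a position-dependent symmetrizer issue. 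The locality you want for the companion theorem comes free from finite propagation speed of the hyperbolic system, which survives all these manipulations.
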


In contrast, initial data in the second stability theorem, Theorem~\ref{loc-stab-thm}, is specified on an open centred ball $\{t_0\}\times \mathbb{B}_{\rho_0}$ in $\{t_0\}\times \Tbb^{n-1}$; see Section \ref{domains} below for a definition of $\mathbb{B}_{\rho_0}$. If the initial data is synchronized and chosen to be sufficiently close to the FLRW initial data on $\{t_0\}\times \mathbb{B}_{\rho_0}$, then  Theorem~\ref{loc-stab-thm} guarantees that the solutions generated from this initial data will be asympotically Kasner and terminate in a big bang singularity. Since the initial data is specified on an open set in $\{t_0\}\times \Tbb^{n-1}$, this stability result should be viewed as a localized version of Theorem~\ref{glob-stab-thm}. An informal statement of Theorem~\ref{loc-stab-thm} 
is given by the following.

\begin{thm}[Localized past stability of the FLRW solution of the Einstein-scalar field system]
Given $\vartheta\in (0,1)$ and an open centred ball $\mathbb{B}_{\rho_0}\subset \Tbb^{n-1}$, there exist a $t_0>0$ and $\bar{\ep}\in (0,1)$ such that solutions $\{g_{ij},\tau\}$ of the conformal Einstein-scalar field equations that are 
generated from sufficiently differentiable, synchronized initial data imposed on $\{t_0\}\times \mathbb{B}_{\rho_0}$ that is suitably close to the FLRW data exist on the spacetime region 
   $ M \cong \bigcup_{t\in (0,t_0]}\tau^{-1}(\{t\})\cong \bigcup_{t\in (0,t_0]} \{t\}\times \mathbb{B}_{\rho(t)}$
where $\rho(t) = \rho_0 + (1-\vartheta) \rho_0 \bigl( \bigl(\frac{t}{t_0}\bigr)^{1-\bar{\ep}}-1\bigr)$. Moreover, these solutions are asymptotically pointwise Kasner
and the corresponding physical solutions $\{\gb_{ij},\phi\}$ of the Einstein-scalar field equations are $C^2$-inextendible through the $\tau=0$-boundary of $M$ and terminate at a crushing big bang singularity at $\tau=0$ that is characterized by curvature blow-up. 
\end{thm}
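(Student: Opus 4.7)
The plan is to reduce Theorem~\ref{loc-stab-thm} to Theorem~\ref{glob-stab-thm} by exploiting the finite propagation speed of the hyperbolic gauge reduced conformal Einstein-scalar field system. The authors emphasize in the introduction that this localizability is the whole point of using a wave-gauge rather than a CMC formulation, so the structure of the argument is essentially dictated: first extend the local synchronized initial data to a global synchronized initial data set on $\{t_0\}\times \Tbb^{n-1}$ that is globally close to FLRW data, then apply Theorem~\ref{glob-stab-thm}, and finally restrict the resulting solution to a truncated cone whose radius is determined by backwards characteristics of the symmetric hyperbolic Fuchsian system from Sections~\ref{sec:FuchsianForm} and~\ref{sec:proof_globstab_Fuchsian}.

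First I would extend the given local data on $\{t_0\}\times \mathbb{B}_{\rho_0}$ to a synchronized initial data set on the whole torus that agrees with the FLRW initial data on the complement of a slightly larger ball $\mathbb{B}_{\rho_0+\delta}$, and interpolates smoothly in the annular transition region $\mathbb{B}_{\rho_0+\delta}\setminus \mathbb{B}_{\rho_0}$ by a standard cut-off. The synchronization condition $\tau\equiv t_0$ is preserved trivially since it is shared by the data and by FLRW. The Hamiltonian and momentum constraints hold on $\mathbb{B}_{\rho_0}$ by assumption and on $\Tbb^{n-1}\setminus \mathbb{B}_{\rho_0+\delta}$ because the data is FLRW there; in the annulus one corrects the extension by solving the constraints via a localized conformal-method correction with compactly supported seed, with smallness inherited from the original local smallness. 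This produces extended data whose deviation from FLRW in the Sobolev norm driving the hypothesis of Theorem~\ref{glob-stab-thm} is bounded by a fixed constant times the original local deviation. Applying Theorem~\ref{glob-stab-thm} then yields a global asymptotically pointwise Kasner solution on $(0,t_0]\times \Tbb^{n-1}$ with a big bang at $\tau=0$.

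Next I would invoke a quantitative finite propagation speed estimate for the Fuchsian symmetric hyperbolic system. Because the principal part is a small perturbation of that associated to the conformal FLRW metric $-dt\otimes dt+\delta_{\Lambda\Omega}dx^\Lambda\otimes dx^\Omega$, but reweighted by the Fuchsian rescaling, the perturbed characteristic speed in the $(t,x^\Lambda)$ coordinates satisfies $c(t)\lesssim t^{-\bar\ep}$ for a small exponent $\bar\ep\in (0,1)$ tied to the Fuchsian weights. Integrating,
\begin{equation*}
\int_t^{t_0} c(s)\,ds \lesssim \frac{t_0^{1-\bar\ep}-t^{1-\bar\ep}}{1-\bar\ep},
\end{equation*}
and rescaling the implicit constant into $(1-\vartheta)\rho_0$ reproduces exactly $\rho_0-\rho(t)=(1-\vartheta)\rho_0\bigl(1-(t/t_0)^{1-\bar\ep}\bigr)$. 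By the standard energy-flux / uniqueness argument for symmetric hyperbolic systems, the restriction of the extended solution to $\bigcup_{t\in(0,t_0]}\{t\}\times \mathbb{B}_{\rho(t)}$ is independent of the choice of extension and is therefore intrinsic to the local data. Since asymptotically pointwise Kasner behavior is a pointwise statement (Definition~\ref{def:APKasner}), and since the $C^2$-inextendibility criterion of Theorem~\ref{glob-stab-thm} is driven by the pointwise curvature blow-up computed in Section~\ref{sec:Kasnerscalarfield}, both properties transfer verbatim to the truncated cone.

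The main obstacle is establishing the quantitative finite propagation speed estimate uniformly up to the singular hypersurface $\tau=0$. The principal symbol of the Fuchsian system has coefficients that are singular in $t$, and the usual energy-flux computation on the lateral boundary of the shrinking cone must be performed with a Fuchsian energy whose time-weights are matched to the blow-up rate of the coefficients so that the boundary flux term has the correct sign. Concretely this requires constructing a family of spacelike hypersurfaces foliating the exterior of the cone whose conormals are future causal for the perturbed metric uniformly in $t\in(0,t_0]$, together with an integrated Fuchsian energy inequality showing that the energy in the interior of the cone at time $t$ is controlled by the Fuchsian energy of the initial data on $\{t_0\}\times \mathbb{B}_{\rho_0}$. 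Combined with the constraint-preserving extension, this quantitative domain-of-dependence statement is what upgrades Theorem~\ref{glob-stab-thm} to the localized Theorem~\ref{loc-stab-thm}.
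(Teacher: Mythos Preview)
Your overall strategy---extend the local data, apply a global result, then restrict via finite propagation speed---matches the paper's. But there is one substantive difference that matters.

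You propose to extend the data so that the gravitational constraints hold on all of $\Sigma_{t_0}$, by ``correcting the extension by solving the constraints via a localized conformal-method correction with compactly supported seed.'' This is a nontrivial step (localized gluing constructions for the Einstein constraints are delicate, and the conformal method involves elliptic equations whose solutions are not compactly supported without additional work), and it is not justified in your sketch. The paper avoids this entirely: it simply applies a Sobolev extension operator $\Ebb_{\rho_0}$ to the data, explicitly noting that the extended data will \emph{not} satisfy the constraints outside $\mathbb{B}_{\rho_0}$. The point is that the paper does not apply Theorem~\ref{glob-stab-thm} to the extended data; it applies only the Fuchsian stability result, Proposition~\ref{prop:globalstability}, which is purely a statement about the symmetric hyperbolic Fuchsian system \eqref{eq:givp1} and requires no constraint assumptions. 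Once the Fuchsian solution $u^*$ exists globally, finite propagation speed localizes it to the cone, and \emph{then} one localizes both Proposition~\ref{prop-ES-constr} (constraint propagation) and Proposition~\ref{lag-exist-prop} (local existence for the conformal Einstein-scalar field system) to the cone, where the constraints do hold on the initial slice. This decoupling---global Fuchsian existence without constraints, followed by localized constraint propagation---is the key structural observation you are missing, and it makes your gluing step unnecessary.

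Your discussion of the finite propagation speed estimate is qualitatively correct: the characteristic speed of the Fuchsian system scales like $t^{-\bar\ep}$ with $\bar\ep=\ep_0+\ep_2$, and integrating this produces exactly the cone radius $\rho(t)$. The paper carries this out concretely by verifying $n_\mu\At^\mu\leq 0$ on the lateral boundary $\Upsilon_{t_0,\rho_0,\rho_1,\bar\ep}$ (i.e., that the boundary is weakly spacelike for the symmetric hyperbolic system), which is the standard criterion for domain-of-dependence uniqueness; no special ``Fuchsian energy with matched time-weights'' is needed beyond what is already built into the formulation. The smallness of $T_0$ enters here: it is chosen so that $|\beta||f|_{\op}|\mathbf{n}|/(t^{\bar\ep}|n_0|)<1$, which is what makes the boundary weakly spacelike.
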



\subsection{Prior and related work}
In the articles \cite{ABIO:2021_Royal_Soc,ABIO:2021}, the nonlinear stability in the contracting direction of Kasner solutions
to the vacuum Einstein equation within the class of polarised $\Tbb^2$-symmetric spacetimes was
established for a certain range of the Kasner exponents. The case of zero cosmological constant was considered in \cite{ABIO:2021}  and the results  of \cite{ABIO:2021} were later shown to hold for a non-vanishing cosmological constant and more general initial data in \cite{ABIO:2021_Royal_Soc}. While the past stability results, assuming a zero cosmological constant, from \cite{ABIO:2021_Royal_Soc,ABIO:2021} can be viewed, for the most part, as a special case of the more general stability results obtained in  
\cite{Fournodavlos_et_al:2023}, they are still interesting for the following reasons. First, in the restricted setting of polarised $\Tbb^2$-symmetric spacetimes, more detailed asymptotics near the big bang singularity are obtained in \cite{ABIO:2021_Royal_Soc,ABIO:2021} as compared to \cite{Fournodavlos_et_al:2023}. Second, past stability is established in \cite{ABIO:2021_Royal_Soc} for a class of initial data that would be considered large by the smallness definition employed in \cite{Fournodavlos_et_al:2023}. The third and final reason, which is directly relevant to this article, is the method used to prove the stability is different. The stability proofs from \cite{ABIO:2021_Royal_Soc,ABIO:2021} rely on a particular choice of gauge and variables that allow the Einstein equations to be formulated as a Fuchsian equation that satisfies a specific set of structural conditions. Once that is achieved, stability then follows as a direct consequence of the global existence theory for Fuchsian equations that has been developed in \cite{BOOS:2021,Oliynyk:CMP_2016}. As we discuss in more detail in Section \ref{overview}, the approach we take to establishing stability in this article is broadly the same. It is worth pointing out that the Fuchsian approach to establishing the global existence of solutions to systems of hyperbolic equations is a very general method and has recently been employed to establish a variety of stability results in the articles \cite{BeyerOliynyk:2024,Fajman_et_al:2023,FOW:2021,LeFlochWei:2021,LiuOliynyk:2018b,LiuOliynyk:2018a,LiuWei:2021,MarshallOliynyk:2023,Oliynyk:2021,OliynykOlvera:2021,Wei:2018}.


\subsection{Overview of the past stability proofs\label{overview}}
The proofs of both the past local and global stability theorems rely on two different formulations of the reduced Einstein-scalar field equations. These formulations are used for distinct purposes, which we describe below.

\subsubsection{Local-in-time-existence and continuation in Lagrangian coordinates}
The first formulation of the reduced Einstein-scalar field equations given by \eqref{tconf-ford-C.1}-\eqref{tconf-ford-C.8} below is used to establish the local-in-time existence and uniqueness of solutions
to the reduced conformal Einstein-scalar field equations in a \textit{Lagrangian coordinate system} $(x^\mu)$, which is adapted to the vector field
\begin{equation*}
    \chi^\mu = (|\nabla \tau|^2_g)^{-1}\nabla^\mu\tau,
\end{equation*}
as well as a continuation principle for these solutions. 
Here, Lagrangian coordinates mean that in the coordinate system $(x^\mu)$ the vector field $\chi^\mu$ is trivialized, that is,
\begin{equation*}
    \chi^\mu =\delta^\mu_0.
\end{equation*}
The precise definition of the Lagrangian coordinates $(x^\mu)$ can be found in  
Section \ref{Lag-coordinates}. For initial data that satisfies the gravitational and wave gauge  constraints, the system \eqref{tconf-ford-C.1}-\eqref{tconf-ford-C.8} propagates both of these constraints
and determines solutions of the conformal Einstein-scalar field equations.

An important point regarding the wave gauge constraint
\begin{equation} \label{wave-gauge-intro}
    \frac{1}{2}g^{\gamma \lambda}(2 \Dc_\mu g_{\nu\lambda}-\Dc_\lambda g_{\mu\nu}) = 0,
\end{equation}
which is propagated by \eqref{tconf-ford-C.1}-\eqref{tconf-ford-C.8}, is that the covariant derivative $\Dc_\mu$ it is not determined by a fixed Minkowski metric in the Lagrangian coordinates $(x^\mu)$, and consequently, the Lagrangian coordinates $(x^\mu)$ are \emph{not} wave coordinates, that is, generically $\Box_g x^\mu \neq 0$. Instead, the covariant derivative $\Dc_\mu$ is computed with respect to the flat metric
\begin{equation}\label{gc-def-intro}
    \gc_{\mu\nu} = \del{\mu}l^\alpha \eta_{\alpha\beta} \del{\nu}l^\beta
\end{equation}
where the Lagrangian map $l^\mu(x)$ is determined by a solution of the system \eqref{tconf-ford-C.1}-\eqref{tconf-ford-C.8}; see Section \ref{Lag-coordinates} for details.

The primary role of the Lagrangian coordinates $(x^\mu)$ is to synchronize the singularity. In these coordinates, the scalar field $\tau$ coincides with the time coordinate, that is,
\begin{equation} \label{tau=x0}
\tau = t:= x^0.
\end{equation}
For the Kasner solutions \eqref{gtau-Kasner}, the big bang singularity occurs where $\tau$ vanishes, and due to this, it is reasonable to conjecture that the big bang singularity can be synchronized by using $\tau$ as a time coordinate. As we rigorously establish in our stability theorems, $\tau$ does, in fact, synchronize the big bang singularity for nonlinear perturbations of the FLRW metric \eqref{gtau-FLRW}, which implies that the big bang singularity is located at $\tau=x^0=0$ in the Lagrangian coordinates. For additional discussions regarding the temporal synchronization of the big bang singularity, and, in particular, how this relates to the choice of initial data, see Section \ref{temp-synch}.  

The precise statement of the local-in-time existence and uniqueness result for solutions of the system \eqref{tconf-ford-C.1}-\eqref{tconf-ford-C.8} as well as the continuation principle and propagation of the wave gauge constraints is given in Proposition \ref{lag-exist-prop}. The proof of this proposition is, for the most part, straightforward and relies on standard local-in-time existence and uniqueness theory for systems of symmetric hyperbolic equations and well known wave gauge propagation results for solutions of the reduced Einstein equations. The only novel aspect of the proof is the use of Lagrangian coordinates and a wave gauge that is determined by a flat metric that is not a priori known but determined from a solution of the system \eqref{tconf-ford-C.1}-\eqref{tconf-ford-C.8}. 

\subsubsection{Fuchsian formulation and global-in-time estimates}
While the system \eqref{tconf-ford-C.1}-\eqref{tconf-ford-C.8} is useful for establishing the local-in-time existence of solutions to the reduced conformal Einstein-scalar field equations and the propagation of the wave gauge constraint \eqref{wave-gauge-intro},
it is not useful for establishing global-in-time estimates that can be used in conjunction with the continuation principle to show that solutions can be continued from some starting time $t_0>0$ all the way down to the big bang singularity at $t=0$. 
The system that we do use to establish global-in-time estimates is formulated in terms of a frame $e_i=e_i^\mu\del{\mu}$, the connection coefficients $\gamma_i{}^k{}_j$
of the flat background metric \eqref{gc-def-intro} relative to the frame $e_i$, i.e. $D_{e_i}e_j = \gamma_{i}{}^k{}_j e_j$,
and suitable combinations of the metric and scalar frame fields 
\begin{equation} \label{fields-intro}
\{g_{ijk}=\Dc_i g_{jk},g_{ijkl}=\Dc_{i}\Dc_j g_{kl},\tau_{ij}= \Dc_i\Dc_j \tau, \tau_{ijk}=\Dc_i \Dc_j\Dc_k \tau\}     
\end{equation}
where 
\begin{equation*}
g_{ij}=e_i^\mu g_{\mu\nu}e^\nu_j
\end{equation*}
is the frame representation of the conformal metric and 
$\tau=t$.

The frame $e_i^\mu$ is fixed by first setting
$e_0^\mu = (-|\chi|_g^2)^{-\frac{1}{2}}\chi^\mu$, where 
\begin{equation} \label{chi-triv}
    \chi^\mu=\delta^\mu_0
\end{equation} 
since we are using Lagrangian coordinates $(x^\mu)$. With this choice made, the spatial frame $e_I^\mu$ is then determined by using Fermi-Walker transport, which is defined by
\begin{equation} \label{FWT}
\nabla_{e_0}e_J = -\frac{g(\nabla_{e_0}e_0,e_J)}{g(e_0,e_0)} e_0,
\end{equation}
to propagate initial data $e_I^\mu|_{t=t_0}=\er^\mu_I$ that is chosen so that the frame is orthonormal at $t=t_0$. 
The orthonormality of the frame is preserved by Fermi-Walker transport, which implies, in particular, that the frame metric satisfies 
\begin{equation} \label{FON}
    g_{ij}=\eta_{ij}.
\end{equation} The use of a Fermi-Walker transported spatial frame was inspired by the work of \cite{Fournodavlos_et_al:2023} in which Fermi-Walker transported spatial frames played an essential role in the proof of the stability results established there. 

Due to \eqref{chi-triv}, the frame vector $e^\mu_0$ is determined by
\begin{equation} \label{e0-det}
e_0^\mu = \betat^{-1}\delta^\mu_0
\end{equation}
where 
\begin{equation*}
\betat=(-|\nabla\tau|^2_g)^{-\frac{1}{2}},
\end{equation*}
while it is shown in Section \ref{Fermi} that the Fermi-Walker transport equation \eqref{FWT} implies that the spatial frame components $e^\mu_I$
evolve according to
\begin{gather}
e_I^0=0  \label{eI0-det}
\intertext{and}
\del{t}e^\Lambda_I =\betat (\gamma_0{}^J{}_I - \gamma_I{}^J{}_0)e^\Lambda_J, \notag 
\end{gather}
where we note, see \eqref{for-L}, $\betat$ satisfies
\begin{equation*} 
\del{t}\betat = -\betat^3 \tau_{00}+\frac{1}{2}\betat^2
\delta^{JK}(g_{0JK}-g_{J0K}-g_{K0J}).
\end{equation*}
In Section \ref{Fermi}, we also use the Fermi-Walker transport equation \eqref{FWT} in conjunction with the orthonormality \eqref{FON} of the frame $e_i^\mu$ to determine the background connection coefficients $\gamma_0{}^j{}_k$, $\gamma_I{}^0{}_0$
and $\gamma_I{}^K{}_0$ in terms of the frame fields \eqref{fields-intro}
and the other connection coefficients $\gamma_I{}^k{}_J$ via
\begin{gather}
\gamma_{0}{}^k{}_j 
=
-\betat\delta^i_0\bigl(\delta^0_jg^{lk}+\delta^l_j\delta^k_0\bigr)
(\Dc_{i}\Dc_l\tau-\Cc_i{}^p{}_l\Dc_p\tau)  -\Cc_{0}{}^k{}_j,   \label{gamma-det-1}\\
  \gamma_I{}^0{}_0 = \frac{1}{2}\delta^i_I\delta_0^j\delta^k_0\Dc_i g_{jk} \AND
   \gamma_I{}^K{}_0 =-  \delta^{Kl} \delta_I^i\delta^j_0\Dc_i g_{jl}+ \eta^{KL}\gamma_I{}^0{}_L,  \label{gamma-det-2}
\end{gather}
where $\Cc_i{}^k{}_j$ is as defined above by \eqref{Ccdef}. As also shown in Section \ref{Fermi}, the
remaining background connection components $\gamma_I{}^k{}_J$ satisfy the transport equation
\begin{equation*}
 e_{0}(\gamma_I{}^k{}_J)=e_{I}(\gamma_0{}^k{}_J)-\gamma_I{}^l{}_J\gamma_0{}^k{}_l+\gamma_0{}^l{}_J\gamma_I{}^k{}_l+(\gamma_0{}^l{}_I-\gamma_I{}^l{}_0)\gamma_l{}^k{}_J,
\end{equation*}
which is easily derived from the vanishing of the curvature of the flat background metric $\gc$, see \eqref{gc-def-intro}, and the curvature formula
\begin{equation*}
0=\Rc_{ijk}{}^l = e_{j}(\gamma_i{}^l{}_k)-e_{i}(\gamma_j{}^l{}_k)+\gamma_i{}^m{}_k\gamma_j{}^l{}_m-\gamma_j{}^m{}_k\gamma_i{}^l{}_m-(\gamma_j{}^m{}_i-\gamma_i{}^m{}_j)\gamma_m{}^l{}_k.
\end{equation*}

The evolution of the metric and scalar frame fields
\eqref{fields-intro} is governed by the reduced conformal Einstein-scalar field equations with respect to the frame $e_i^\mu$. In Section \ref{sec:FirstOrderForm}, we derive a first order form of these equations. These first order equations when combined with the evolution equations for the frame and connections coefficients discussed above, see Sections \ref{sec:NonlinDecomp} and \ref{sec:FuchsianForm} for details, yields a system of the form
\begin{align}
  \del{t}g_{00M} &= \frac{1}{t}\bigl(2g_{00M}-\delta^{IJ}(2g_{IJM}-g_{MIJ})\bigr)+\frac{2}{t}\betat\tau_{0M} - B^{0jK}\betat g_{Kj0M}
+\betat Q^0_{0M}, \label{for-G.1.S-intro}\\
\del{t}g_{R0M} &= -\delta_{RI} B^{IjK}\betat g_{Kj0M}+\delta_{RI}\betat Q^I_{0M}, \label{for-G.2.S-intro}\\
\del{t}(g_{0LM}-g_{L0M}-g_{M0L}) &=  -\frac{1}{t}(g_{0LM}-g_{L0M}-g_{M0L}) +\frac{2}{t}\betat\tau_{(LM)}+S_{(LM)}, \label{for-H.S-intro}\\
\del{t}g_{RLM} &= -\delta_{RI} B^{IjK}\betat g_{KjLM}+
                 \delta_{RI}\betat Q^I_{LM}, \label{for-G.4.S-intro}\\
\del{t}\tau_{rl} &=
 \delta_{ri}\betat J^i_l-B^{ijK}\betat\tau_{Kj(l}\delta_{r)i},
 \label{for-F.2.S-intro}\\
\label{for-I.S-intro}
B^{ijk}\betat\Dc_k g_{qjlm} &= \frac{1}{t} \delta_0^i\delta_0^j  (g_{qljm}+ g_{qmjl}-g_{qjlm})+\frac{2}{t}\delta^i_0
\betat\tau_{q(lm)}+ \delta^i_0 \betat P_{q(lm)},\\
B^{ijk}\betat\Dc_k\tau_{qjl} &=  \betat K^i_{ql}, \label{for-J.S-intro}\\
\label{for-L.S-intro}
\del{t}\betat 
&=-\betat^{3}\tau_{00}+\frac{1}{2}\betat^2 \delta^{JK}(g_{0JK}-2g_{J0K}),\\
\del{t}e_I^\Lambda=&-\betat \Bigl(\frac{1}{2}\delta^{JK}(g_{0IK}-g_{I0K}-g_{K0I}) + \delta^{JK}\gamma_{I}{}^0{}_K \Bigr)
e^\Lambda_J, \label{for-M.1.S-intro}\\
\del{t}\gamma_I{}^k{}_J=&
-\betat\Bigl(\delta^k_0\Bigl(\betat e_I(\tau_{0J})+\frac{1}{2}e_I(g_{J00})\Bigr)+\frac{1}{2}\eta^{kl}\bigl(e_I(g_{0Jl})+e_I(g_{J0l})-e_I(g_{l0J})\bigr)\Bigr)+L_I{}^k{}_J. \label{for-M.2.S-intro}
\end{align}
It is important to note here that in the derivation of
this system, we assume that the wave gauge constraint \eqref{wave-gauge-intro} holds, which in terms of the frame fields, is given by 
\begin{equation}\label{wave-gauge-frame-intro}
g_{000}= -\delta^{JK}(g_{0JK}-2g_{J0K}) \AND g_{I00}= 2g_{00I}-\delta^{JK}(2g_{JKI}-g_{IJK}).
\end{equation}
These relations turn out to play an essential role in obtaining the above system. For example, the wave gauge constraint allows us to avoid the use of the evolution equation for the components $g_{i00}$ as these components can be determined from the components $g_{ijK}$ via \eqref{wave-gauge-frame-intro}. This is crucial to our stability proof because the evolution equations for the components $g_{i00}$ contain dangerous singular terms that are quadratic in the variables\footnote{The appearance of the quadratic terms in $\kt_{IJ}$ can be seen as a consequence of the quadratic terms in the second fundamental form 
$\Ktt_{IJ}$ that appear in the Hamiltonian constraint.}
\begin{equation*}
\kt_{IJ} = g_{0IJ}-g_{I0J}-g_{J0I},
\end{equation*}
which are closely related, c.f.~\eqref{Ktt-def}, to the second fundamental form $\Ktt_{IJ}$ associated to the $t=const$-hypersurfaces  and the conformal metric $g$,
that we are unable to control directly. The relations \eqref{wave-gauge-frame-intro} are also used at other points in the argument to show that potentially problematic singular terms can, in fact, be replaced by more favorable terms.

Now, we view \eqref{for-G.1.S-intro}-\eqref{for-M.2.S-intro} as an evolution equation for the variables\footnote{Even though we are using $\tau$ in \eqref{vars-intro.5} to denote
the collection of derivatives $\tau_{ij}=\Dc_i\Dc_j\tau$, no ambiguities will arise due to the slicing condition \eqref{tau=x0} that allows us to use the coordinate time $t$ to denote the scalar field $\tau$.} 
\begin{align}
\kt&=(\kt_{IJ}) :=(g_{0IJ}-g_{I0J}-g_{J0I}), \label{vars-intro.1}\\
\betat &= (-|d t|^2_g)^{-\frac{1}{2}}, \label{vars-intro.2}\\
\ellt&=(\ellt_{IjK}) := (g_{IjK}), \label{vars-intro.3} \\
\mt&=(\mt_{I})  := (g_{00M}), \label{vars-intro.4} \\
\tau &= (\tau_{ij}),  \label{vars-intro.5}\\
\gt&=(\gt_{Ijkl}) := (g_{Ijkl}),  \label{vars-intro.6}\\    
\taut&=(\taut_{Ijk}) := (\tau_{Ijk}),  \label{vars-intro.7}\\
e &= (e^\Lambda_I),  \label{vars-intro.8} \\
\psit&=(\psit_I{}^k{}_J):=(\gamma_I{}^k{}_J),  \label{vars-intro.9}
\end{align}
where other fields can be determined in terms of this set by the relations \eqref{tau=x0}, \eqref{e0-det}, \eqref{gamma-det-1}-\eqref{gamma-det-2}, \eqref{wave-gauge-frame-intro} and the reduced conformal Einstein-scalar field equations. The metric combination $\kt_{ij}$
plays a pivotal role in our analysis.
The property that distinguishes $\kt_{ij}$, as far as the analysis is concerned, is we have no freedom to rescale the normalized version
\begin{equation*}
k_{IJ} = t \betat \kt_{IJ}
\end{equation*}
by any power of $t$. 
Our stability proof relies on showing that $k_{IJ}$ remains bounded as $t\searrow 0$, and in fact, we show that $2k_{IJ}$ converges as $t\searrow 0$ to a, in general, non-vanishing symmetric matrix $\kf_{IJ}$ satisfying $\kf_I{}^I\geq 0$ and $(\kf_I{}^I)^2-\kf_I{}^J\kf_J{}^I+4\kf_I{}^I=0$. On the other hand, there is slack in the remaining variables  \eqref{vars-intro.2}-\eqref{vars-intro.9} in the sense that we can rescale them by certain positive powers of $t$. This freedom to rescale these variables is essential to our stability proof.

It is important to note at this point that we can construct solutions to the first order equations \eqref{for-G.1.S-intro}-\eqref{for-M.2.S-intro} for the frame variables from solutions of Lagrangian formulation of the reduced conformal Einstein-Euler equations, see \eqref{tconf-ford-C.1}-\eqref{tconf-ford-C.8}, which we know exist, at least locally in time, by Theorem \ref{lag-exist-prop}. This is because 
given a solution of \eqref{tconf-ford-C.1}-\eqref{tconf-ford-C.8},
we can solve the Fermi-Walker transport equations to obtain the orthonormal frame $e_i^\mu$, which we can then use along with the given solution to \eqref{tconf-ford-C.1}-\eqref{tconf-ford-C.8}  to obtain all the frame fields \eqref{fields-intro}.

The first order frame equations \eqref{for-G.1.S}-\eqref{for-M.2.S} are themselves not directly useful for analysing the behaviour of solutions near $t=0$, but they can be transformed into a system that is by employing the following rescaled variables
\begin{align}
k&=(k_{IJ}) :=(t\betat \kt_{IJ})
\label{vars-intro.1a}\\
\beta &=  t^{\ep_0}\betat, \label{vars-intro.2a}\\
\ell&=(\ell_{IjK}):= (t^{\ep_1}\ellt_{IjK}), \label{vars-intro.3a}\\
m&=(m_{I}) := (t^{\ep_1}\mt_I), \label{vars-intro.4a} \\
\xi&=(\xi_{ij}) := (t^{\ep_1-\ep_0}\tau_{ij}), \label{vars-intro.5a}\\
\psi &=(\psi_I{}^k{}_J) :=  (t^{\ep_1}\psit_I{}^k{}_J),\label{vars-intro.6a}\\
f&=(f_I^\Lambda) := (t^{\ep_2} e^\Lambda_I), \label{vars-intro.7a}\\
\gac&=(\gac_{Ijkl}) := (t^{1+\ep_1} \betat \gt_{Ijkl}), \label{vars-intro.8a}\\
\tauac &=(\tauac_{Ijk}):= (t^{\ep_0+2 \ep_1} \taut_{Ijk}), \label{vars-intro.9a}
\end{align}
where the constants  $\ep_0,\ep_1,\ep_2>0$ are chosen to satisfy
\begin{equation*}
    0<\ep_0<\ep_1, \quad 3\ep_0+\ep_1<1, \quad 0<\ep_2, \quad \ep_0+\ep_2<1.
\end{equation*}
Expressing the system \eqref{for-G.1.S}-\eqref{for-M.2.S} in terms of these new variables is computationally straightforward and is carried out in the Sections \ref{sec:chvar1} to \ref{sec:Fuch-form}.
The result is a Fuchsian system of equations of the form
\begin{equation} \label{Fuch-eqn-intro}
    A^0 \del{t}u +\frac{1}{t^{\ep_0+\ep_1}}A^\Lambda \del{\lambda}u = \frac{1}{t}\Ac \Pbb u + F,
\end{equation}
where
\begin{equation*}
u = \bigl(k_{LM},m_M,\ell_{R0M},\ell_{RLM},\xi_{rl},\beta,f^\Lambda_I,\psi_I{}^k{}_{J},\tauac_{Qjl}, \gac_{Qjlm}\bigr)^{\tr} - \bigl(0,0,0,0,0, t^{\ep_0}, t^{\ep_2}\delta_{I}^{\Lambda},0,0,0\bigr)^{\tr}
\end{equation*}
and $\Pbb$
is the projection matrix 
\begin{equation*}
 \Pbb = \diag\Bigl(0,\delta_{\Mt}^{ M},\delta_{\Rt}^{R}\delta_{\Mt}^{M},\delta_{\Rt}^{R} \delta_{\Lt}^{ L} \delta_{\Mt}^{M},\delta_{\rt}^{r}\delta_{\lt}^{l},1,\delta_{\It}^{I}\delta^{\Lambdat}_{\Lambda},\delta_{\It}^{I}\delta^{\kt}_{k} \delta_{\Jt}^{J},\delta_{\Qt}^{Q}\delta_{\jt}^{j}\delta_{\lt}^{l},\delta_{\Qt}^{Q}\delta_{\jt}^{j}\delta_{\lt}^{l} \delta_{\mt}^{m}\Bigr).
 \end{equation*}
It is worth noting here that the zero in the first diagonal component of $\Pbb$ is ultimately responsible for the convergence as $t\searrow 0$ of
$2k_{IJ}$ to a, generally non-vanishing, matrix $\kf_{IJ}$. On the other hand the non-zero eigenvalues of $\frac{1}{2}(\Ac+\Ac^{\tr})\Pbb$, which we show are all positive, lead to power law decay, i.e. $t^a$ with $a>0$, for each of the other variables \eqref{vars-intro.2a}-\eqref{vars-intro.9a} where decay rates\footnote{That is the $a$'s where there is a different $a$ for each of the different groups of
  variables \eqref{vars-intro.2a}-\eqref{vars-intro.9a}.} are determined by the  eigenvalues.

The virtue of the Fuchsian formulation \eqref{Fuch-eqn-intro} is that we can now appeal to the existence theory developed in the articles\footnote{The actual existence theory we apply is from \cite{BeyerOliynyk:2024}, which a slight generalization of the existence theory from \cite{BOOS:2021}. } \cite{BeyerOliynyk:2024,BOOS:2021} to conclude, for suitably small choice of initial data $u_0$ at $t=t_0>0$, that there exist a unique solution of \eqref{Fuch-eqn-intro} that is defined all the way down to $t=0$ and satisfies $u|_{t=t_0}=u_0$. The Fuchsian existence theory also yields energy and decay estimates that provide uniform control over the behaviour of solutions in the limit $t\searrow 0$. The precise statement of the global existence result for the Fuchsian equation \eqref{Fuch-eqn-intro} is given in Proposition \ref{prop:globalstability}.

On one hand Proposition \ref{prop:globalstability} yields the existence of a unique solution on $(0,t_0]\times \Tbb^{n-1}$ to the Fuchsian equation \eqref{Fuch-eqn-intro} generated from initial data\footnote{Here, the initial data for \eqref{Fuch-eqn-intro} is assumed to be derived from initial data for the reduced conformal Einstein-scalar equations that satisfies the gravitational and wave gauge constraint equations.} $u|_{t=t_0} = u_0$ that is sufficiently close to FLRW initial data defined by $u|_{t=t_0} = 0$.
On the other hand, this same initial data generates, by 
Proposition \ref{lag-exist-prop}, a local-in-time solution to the the system 
\eqref{tconf-ford-C.1}-\eqref{tconf-ford-C.8} that, after solving the Fermi-Walker transport equations for the spatial frame fields, determines a solution of the Fuchsian equation \eqref{Fuch-eqn-intro}. By uniqueness, these two solutions must be the same. The energy estimates from Proposition \ref{prop:globalstability} then allows us to conclude via the continuation principle from Proposition \ref{lag-exist-prop} that the solution $u$ of the Fuchsian equation determines a solution of the conformal Einstein-scalar field equations on $(0,t_0]\times \Tbb^{n-1}$. Asymptotic properties of the solution to the conformal Einstein-scalar field equations are then deduced from the energy and decay estimates for $u$ from Proposition \ref{prop:globalstability}. This completes the overview of the major steps involved in our proof of the past global nonlinear stability of small perturbations of the FLRW solutions to the conformal Einstein-scalar field equations.
The precise statement of this result is presented in Theorem \ref{glob-stab-thm} and the proof can be found in Section  \ref{sec:proof_globstab}.  The localized version of  Theorem \ref{glob-stab-thm}, which is stated in Theorem \ref{loc-stab-thm}, follows from the exact same reasoning except we now exploit the finite propagation speed property of our evolution equations to localize all of the arguments. The proof of Theorem \ref{loc-stab-thm} is given in Section \ref{local-sec}.  

\section{Preliminaries\label{prelim}}

\subsection{Data availability statement}

This article has no associated data.

\subsection{Coordinates, frames and indexing conventions\label{indexing}}
In the article, we will consider $n$-dimensional spacetime manifolds of the form
\begin{equation} \label{Mt1t0-def}
    M_{t_1,t_0}= (t_1,t_0]\times \Tbb^{n-1},
\end{equation}
where $t_0>0$, $t_1<t_0$, and $\Tbb^{n-1}$ is
the $(n-1)$-torus defined by
\begin{equation} \label{Tbb-def}
    \Tbb^{n-1} = [-L,L]^{n-1}/\sim
\end{equation} 
with $\sim$ the equivalence relation obtained
from identifying the sides of the box $[-L,L]^{n-1}\subset \Rbb^{n-1}$. On $M_{t_1,t_0}$,
we will always employ coordinates $(x^\mu)=(x^0,x^\Lambda)$
where the $(x^\Lambda)$ are periodic spatial coordinates
on $\Tbb^{n-1}$ and $x^0$ is a time coordinate
on the interval $(t_1,t_0]$. Lower case Greek letters, e.g. $\mu,\nu,\gamma$, will run from $0$ to $n-1$ and be used to label spacetime coordinate indices while upper case Greek letters, e.g. $\Lambda,\Omega,\Gamma$, will run from $1$ to $n-1$ and label spatial coordinate indices. Partial derivatives with respect to the coordinates $(x^\mu)$ will be denoted by $\del{\mu} = \frac{\del{}\;}{\del{}x^\mu}$.
We will often use $t$ to denote the time coordinate $x^0$, that is, $t=x^0$, and use the notion $\del{t} = \del{0}$ for the partial derivative with respect to the coordinate $x^0$.

We will use frames $e_j= e_j^\mu \del{\mu}$ 
throughout this article. Lower case Latin letter, e.g. $i,j,k$, will be used to label frame indices and they will run from $0$ to $n-1$ while spatial frame indices will be labelled by upper case Latin letter, e.g. $I,J,K$, that run from $1$ to $n-1$.

\subsection{Inner-products and matrices}
Throughout this article, we will denote the Euclidean inner-product by $\ipe{\xi}{\zeta} = \xi^{\tr} \zeta$, $\xi,\zeta \in \Rbb^N$, and use
$|\xi| = \sqrt{\ipe{\xi}{\xi}}$
to denote the Euclidean norm. The set of $N\times N$ matrices will be denoted by  $\Mbb{N}$, and we will use $\Sbb{N}$ denote the subspace of symmetric $N\times N$-matrices. 

Given $A\in \Mbb{N}$, we define
the operator norm $|A|_{\op}$ of $A$ via
\begin{equation*}
   |A|_{\op} = \sup_{\xi\in \Rbb^N_\times} \frac{|A\xi|}{|\xi|},
\end{equation*}
where $\Rbb^N_\times = \Rbb^N\setminus\{0\}$.
For any $A,B\in \Mbb{N}$, we will also employ the notation
\begin{equation*}
    A\leq B \quad \Longleftrightarrow \quad \xi^{\tr}A\xi \leq \xi^{\tr}B\xi, \quad \forall \; \xi \in \Rbb^N.
\end{equation*}

\subsection{Balls and truncated cone domains\label{domains}}
Inside $\Tbb^{n-1}$, we define, for $0<\rho<L$, the \textit{centred ball of
radius} $\rho$ by
\begin{equation*}
    \mathbb{B}_{\rho} = \{\, x\in (-L,L)^{n-1}\, |\, |x|<\rho \, \} \subset \Tbb^{n-1},
\end{equation*}
where, in line with our above choice of notation,  $|x|=\sqrt{\delta_{\Lambda\Sigma}x^\Lambda x^{\Sigma}}$ denotes the Euclidean norm on $\Rbb^{n-1}$.
Given constants $t_0,\rho_1>0$, $0<\rho_0<L$ and $0\leq \ep<1$ satisfying
\begin{equation*}
\rho_0 -\frac{\rho_1 t_0^{1-\ep}}{1-\ep}>0,    
\end{equation*}
we define the \textit{truncated cone domain} $\Omega_{t_0,\rho_0,\rho_1,\ep}$ inside  $M_{0,t_0}$ by
\begin{equation} \label{Omega-def}
    \Omega_{t_0,\rho_0,\rho_1,\ep} = \Bigl\{\, (t,x)\in (0,t_0]\times (-L,L)^{n-1}\, \Bigl| \,|x|< \frac{\rho_1(t^{1-\ep}-t_0^{1-\ep})}{1-\ep}+\rho_0 \,  \Bigr\} \subset M_{0,t_0}.
\end{equation}
The ``side'' piece of the boundary of $\Omega_{t_0,\rho_0,\rho_1,\ep}$, which we denote by 
$\Upsilon_{t_0,\rho_0,\rho_1,\ep}$, is defined by the vanishing of the function $\rf =|x|-\frac{\rho_1(t^{1-\ep}-t_0^{1-\ep})}{1-\ep}-\rho_0$.
Differentiating $\rf$ yields the $1$-form
\begin{equation} \label{n-def}
    n_\mu := \del{\mu}\rf  =  \frac{1}{|x|}\delta_{\mu\Lambda}x^\Lambda-\frac{\rho_1}{t^\ep}\delta^0_\mu,
\end{equation}
which we note determines a co-normal to $\Upsilon_{t_0,\rho_0,\rho_1,\ep}$.
The boundary of $\Omega_{t_0,\rho_0,\rho_1,\ep}$ can be decomposed as the disjoint union
\begin{equation*}
    \del{}\Omega_{t_0,\rho_0,\rho_1,\ep}= \bigl(\{t_0\}\times\mathbb{B}_{\rho_0}\bigr) \cup \Upsilon_{t_0,\rho_0,\rho_1,\ep}\cup \bigl(\{0\}\times\mathbb{B}_{\tilde{\rho}_1}\bigr)
\end{equation*}
where $\tilde{\rho}_1 =\rho_0- \frac{\rho_1 t_0^{1-\ep}}{1-\ep}$
and the balls $\{t_0\}\times\mathbb{B}_{\rho_0}$ and
$\{0\}\times\mathbb{B}_{\tilde{\rho}_1}$ cap $\Omega_{t_0,\rho_0,\rho_1,\ep}$ on top and bottom, respectively.  

\subsection{Sobolev spaces and extension operators\label{Sobolev}}
The $W^{k,p}$, $k\in \Zbb_{\geq 0}$, norm of a map $u\in C^\infty(U,\Rbb^N)$ with $U\subset \Tbb^{n-1}$ open is defined by
\begin{equation*}
\norm{u}_{W^{k,p}(U)} = \begin{cases} \begin{displaystyle}\biggl( \sum_{0\leq |\Ic|\leq k} \int_U |D^{\Ic} u|^p \, d^{n-1} x\biggl)^{\frac{1}{p}}  \end{displaystyle} & \text{if $1\leq p < \infty $} \\
 \begin{displaystyle} \max_{0\leq |\Ic| \leq k}\sup_{x\in U}|D^{\Ic} u(x)|  \end{displaystyle} & \text{if $p=\infty$}
\end{cases},
\end{equation*}
where $\Ic=(\Ic_1,\ldots,\Ic_{n-1})\in \Zbb_{\geq 0}^{n-1}$ denotes a multi-index and 
$D^\Ic = \del{1}^{\Ic_1}\del{2}^{\Ic_2}\cdots\del{n-1}^{\Ic_{n-1}}$.
The Sobolev space $W^{k,p}(U,\Rbb^N)$ is then defined to be the completion of $C^\infty(U,\Rbb^N)$ with respect to the norm
$\norm{\cdot}_{W^{k,p}(U)}$. When $N=1$ or the dimension $N$ is clear from the context, we will simplify notation and write $W^{k,p}(U)$ instead of $W^{k,p}(U,\Rbb^N)$, and we will employ the standard notation $H^k(U,\Rbb^N)=W^{k,2}(U,\Rbb^N)$ throughout.

To each centred ball $\mathbb{B}_\rho\subset \Tbb^{n-1}$, $0<\rho<L$, we
assign a (non-unique) total
extension operator
\begin{equation} \label{Ebb-def}
\Ebb_\rho \: : \: H^k(\mathbb{B}_\rho,\Rbb^N)\longrightarrow H^k(\Tbb^{n-1},\Rbb^N), \qquad k \in \Zbb_{\geq 0},
\end{equation}
that satisfies
\begin{equation}\label{Ebb-prop}
\Ebb_\rho(u)\bigl|_{\mathbb{B}_\rho} = u, \AND
\norm{\Ebb_{\rho}(u)}_{H^{k}(\Tbb^{n-1)}} \leq C\norm{u}_{H^k(\mathbb{B}_\rho)}
\end{equation}
for some constant $C=C(k,n,\rho)>0$ independent of $u\in H^k(\mathbb{B}_\rho)$. The existence of such an operator is established in
\cite{AdamsFournier:2003}; see Theorems 5.21 and 5.22, and Remark 5.23 for details.

\subsection{Constants and inequalities}
We use the standard notation $a \lesssim b$
for inequalities of the form
$a \leq Cb$
in situations where the precise value or dependence on other quantities of the constant $C$ is not required.
On the other hand, when the dependence of the constant on other inequalities needs to be specified, for
example if the constant depends on the norm $\norm{u}_{L^\infty}$, we use the notation
$C=C(\norm{u}_{L^\infty})$.
Constants of this type will always be non-negative, non-decreasing, continuous functions of their arguments.

\subsection{Curvature}
The curvature of tensor $\Rc_{ijk}{}^l$ of the background metric $\gc_{ij}$ is defined via
\begin{equation}\label{comm-Rc}
  [\Dc_i,\Dc_j]\omega_k = \Rc_{ijk}{}^l\omega_l
\end{equation}
for arbitrary $1$-forms $\omega_l$.
This definition along with $\Rc_{ik}=\Rc_{ijk}{}^j$ for
the Ricci tensor fixes the curvature conventions that will be employed for all curvature tensors appearing in this article.   

\section{Reduced conformal field equations}
In order to establish the existence of solutions to the conformal Einstein-scalar field equations, we will need to replace the conformal Einstein equations \eqref{confESFAb} or equivalently \eqref{confESFAa} with a gauge reduced version. In the following, we employ a (conformal) wave gauge defined by the constraint
\begin{equation}\label{wave-gauge}
X^k =0,
\end{equation}
where  $X^k$ is given above by \eqref{Xdef},
and consider the wave gauge reduced equations 
\begin{equation}\label{confESFF}
-2R_{ij}+2\nabla_{(i} X_{j)}=-\frac{2}{\tau}\nabla_i \nabla_j \tau \oset{\eqref{Ccdef}}{=}-\frac{2}{\tau}\bigl(
\Dc_i \Dc_j \tau - \Cc_i{}^k{}_j\Dc_k \tau  \bigr),
\end{equation}
which we will refer to as the \textit{reduced conformal Einstein equations}.

For the moment, we \emph{assume} that the wave gauge constraint \eqref{wave-gauge} holds. Later in Proposition \ref{prop-ES-constr}, it is established that the wave gauge constraint propagates, which validates this assumption. Now, by \eqref{Ccdef}, \eqref{Xdef} and \eqref{wave-gauge}, we observe that the conformal scalar field equation \eqref{confESFCb}
can be expressed as
\begin{equation} \label{confESFG}
    g^{ij}\Dc_{i}\Dc_{j}\tau= 0.
\end{equation}
Gathering \eqref{confESFF} and \eqref{confESFG} together, we have
\begin{align}
-2R_{ij}+2\nabla_{(i} X_{j)}& =-\frac{2}{\tau}\bigl(
\Dc_i \Dc_j \tau - \Cc_i{}^k{}_j\Dc_k \tau  \bigr) , 
\label{confESFFa}\\
 g^{ij}\Dc_{i}\Dc_{j}\tau    &= 0. \label{confESFGa}
\end{align}
We will refer to these equations as the \textit{reduced conformal Einstein-scalar field equations}.

For use below, we recall that the reduced Ricci tensor can be expressed as
\begin{equation} \label{red-Ricci}
-2R_{ij}+2\nabla_{(i} X_{j)}=g^{kl}\Dc_k \Dc_l g_{ij} + Q_{ij}
+2g^{kl}g_{m(i}\Rc_{j)kl}{}^m
\end{equation}
where 
\begin{equation}\label{Q-def}
Q_{ij} = \frac{1}{2}g^{kl}g^{mn}\Bigl(\Dc_i g_{mk} \Dc_j g_{n l}
+2 \Dc_{n}g_{il}\Dc_{k}g_{jm} - 2\Dc_{l}g_{in} \Dc_{k}g_{jm}
-2 \Dc_{l}g_{in}\Dc_j g_{mk} -2 \Dc_{i}g_{mk}\Dc_{l}g_{jn}\Bigr)
\end{equation}
and, as above, $\Rc_{ijk}{}^l$ denotes the curvature tensor of the background metric $\gc_{ij}$. By differentiating \eqref{confESFGa} and employing the commutator formula
$\Dc_k\Dc_i \Dc_j\tau-\Dc_i\Dc_j \Dc_k \tau
=\Rc_{kij}{}^l\Dc_l\tau$,
we also note that
\begin{equation} \label{confESFI} 
g^{ij}\Dc_i\Dc_{j} \Dc_{k}\tau = g^{il}g^{jm}\Dc_kg_{lm}  \Dc_{i} \Dc_{j}\tau
-g^{ij}\Rc_{kij}{}^l \Dc_l \tau.
\end{equation}

\section{Choice of background metric}
The background metric $\gc_{ij}$ is thus far arbitrary. Since the conformal FLRW metric \eqref{gtau-FLRW} is flat and we are interested in nonlinear perturbations of this solution, we are motivated to restrict our attention to background metrics that are flat, which by definition, means that the curvature tensor vanishes, that is,
\begin{equation}\label{curvature}
\Rc_{ijk}{}^l = 0.
\end{equation}
By the commutator formula \eqref{comm-Rc}, the vanishing of the curvature implies that
\begin{equation}\label{commutator}
[\Dc_i,\Dc_j] =0.
\end{equation}

\begin{rem} \label{rem-FLRW-to-Kasner}
In order to extend the stability results obtained in this article so that they hold for the full range of Kasner exponents where stable singularity formation is expected as in \cite{Fournodavlos_et_al:2023}, it would be natural to try replacing the flat metric with a Kasner-scalar field metric whose exponents lie within the range where stable singularity formation is expected.  The calculations and proofs below would have to be suitably modified to account for this change of background metric. We plan on carrying out this analysis in future work.
\end{rem}

\section{Local existence and continuation in Lagrangian coordinates}
\label{sec:locexist_cont}
Up to this point, we have expressed the reduced conformal Einstein-scalar field equations \eqref{confESFFa}-\eqref{confESFGa} in an arbitrary frame. In order to establish the local-in-time existence and uniqueness of solutions to these equations along with a continuation principle, we need to completely fix the frame. We do so in this section by employing a coordinate frame. It is worth noting at this point that while coordinate frames are suitable for establishing the local-in-time existence and uniqueness of solutions, they are not suitable for obtaining estimates that can be used to bound the solution near the big bang singularity. For this, we will need to employ a non-coordinate frame; see Section \ref{Fermi} for details.

To fix the coordinate frame, we introduce coordinates $(\xh^\mu)=(\xh^0,\xh^\Lambda)$ on a spacetime $M_{t_1,t_0}$ of the form
\eqref{Mt1t0-def},
and we assume that the components of the flat background metric $\gc$ in this coordinate system, which we denote by $\gchat_{\mu\nu}$, are given by
\begin{equation}\label{gct-def}
    \gchat_{\mu\nu}=\eta_{\mu\nu}:= -\delta_\mu^0\delta_\nu^0 + \delta_\mu^\Lambda \delta_\nu^\Gamma \delta_{\Gamma\Lambda}.
\end{equation}
In this coordinate frame, the Levi-Civita connection of the background metric coincides with partial differentiation with respect to the coordinates $(\xh^\mu)$, that is,
$\hat{\Dc}_\mu = \delh{\mu}$. 
From this, we find, with the help of \eqref{Ccdef}, \eqref{red-Ricci} and \eqref{curvature}, that the reduced conformal Einstein-scalar field equations \eqref{confESFFa}-\eqref{confESFGa} are given in the coordinates $(\xh^\mu)$ by
\begin{align}
\gh^{\alpha\beta}\delh{\alpha}\delh{\beta}\gh_{\mu\nu} + \Qh_{\mu\nu} &= -\frac{2}{\tauh}\bigl( \delh{\mu}\delh{\nu}\tauh - \Gammah^\gamma_{\mu\nu} \delh{\gamma}\tauh \bigr), \label{tconfESF-A.1}\\
\gh^{\alpha\beta}\delh{\alpha}\delh{\beta}\tauh &= 0,  \label{tconfESF-A.2}
\end{align}
where $\tauh$ denotes the scalar field $\tau$ viewed as a function of the coordinates $(\xh^\mu)$, $\gh_{\mu\nu}$ are the components of the conformal metric $g$ with respect to the coordinates $(\xh^\mu)$, $\Gammah^\gamma_{\mu\nu} =
     \frac{1}{2}\gh^{\gamma\lambda}\bigl(\delh{\mu}\gh_{\nu\lambda} +\delh{\nu}\gh_{\mu\lambda}-\delh{\lambda}\gh_{\mu\nu}\bigr)$
are the Christoffel symbols of $\gh_{\mu\nu}$, and
\begin{equation}
  \label{eq:defQh}
\Qh_{\mu \nu} = \frac{1}{2}\gh^{\alpha \beta}\gh^{\sigma \delta}\bigl(\delh{\mu} \gh_{\sigma \alpha} \delh{\nu} \gh_{\delta \beta}
+2 \delh{\delta}\gh_{\mu \beta}\delh{\alpha}\gh_{\nu \sigma} - 2\delh{\beta}\gh_{\mu \delta} \delh{\alpha}\gh_{\nu \sigma}
-2 \delh{\beta}\gh_{\mu \delta}\delh{\nu} \gh_{\sigma\alpha} -2 \delh{\mu }\gh_{\sigma\alpha}\delh{\beta}\gh_{\nu \delta}\bigr).
\end{equation}
We further observe from \eqref{Ccdef} and \eqref{Xdef} that the coordinate components of the wave gauge vector field $X$, which we denote by $\Xh^\gamma$, are given by
\begin{equation} \label{Xt-rep}
    \Xh^\gamma = \gh^{\mu\nu}\Gammah^\gamma_{\mu\nu}.
\end{equation}

\subsection{Initial data}
On the initial hypersurface 
\begin{equation*}
\Sigma_{t_0}=\{t_0\}\times\Tbb^{n-1},
\end{equation*}
we specify the following initial data for the reduced conformal Einstein-scalar field equations \eqref{tconfESF-A.1}-\eqref{tconfESF-A.2}:
\begin{align}
    \gh_{\mu\nu}\bigl|_{\Sigma_{t_0}} &= \gr_{\mu\nu}, \label{gt-idata} \\
    \delh{0}\gh_{\mu\nu}\bigl|_{\Sigma_{t_0}} &= \grave{g}_{\mu\nu}, \label{dt-gt-idata} \\
    \tauh\bigl|_{\Sigma_{t_0}} &= \taur, \label{taut-idata}\\
    \delh{0}\tauh \bigl|_{\Sigma_{t_0}} &= \taug.\label{dt-taut-idata}
\end{align}
Since we want solutions of the reduced conformal Einstein-scalar field equations \eqref{tconfESF-A.1}-\eqref{tconfESF-A.2} to also satisfy the conformal Einstein-scalar field equation \eqref{confESFAb}-\eqref{confESFCb},
the initial data cannot be chosen freely. Instead, the initial data will need to satisfy the constraint equations
\begin{align}
  \nh_\mu\Bigr( \Gh^{\mu\nu} - \frac{1}{\tauh}\nablah^\mu\nablah^\nu \tauh\Bigr)\Bigl|_{\Sigma_{t_0}} &=0, \qquad \text{(gravitational constraints)}
  \label{grav-constr}\\
  \Xh^\mu\bigl|_{\Sigma_{t_0}} &=0, \qquad \text{(wave gauge constraints)}
  \label{wave-constr}
\end{align}
where $\Gh^{\mu\nu}$ and $\nablah_\mu$ are the Einstein tensor and Levi-Civita connection of the conformal metric $\gh_{\mu\nu}$, respectively, and $\nh = -d\xh^0$ (i.e. $\nh_\mu = -\delta^0_\mu)$. As is shown in  
Proposition \ref{prop-ES-constr} below, solutions of the reduced conformal Einstein-scalar field equations that are generated from initial data satisfying both of these constraint equations will be guaranteed to also solve the conformal Einstein-scalar field equations.

\begin{rem} \label{idata-rem}
  The wave gauge constraint \eqref{wave-constr} does not restrict the geometry of the initial data set $\{\gr_{\mu\nu}, \grave{g}_{\mu\nu}, \taur, \taug\}$ on $\Sigma_{t_0}=\{t_0\}\times\Tbb^{n-1}$ and can always be satisfied by an appropriate choice of the gauge initial data for given geometric initial data. To see why, we recall that the constraint \eqref{grav-constr} can be formulated exclusively in terms \emph{geometric initial data} $\{\gtt,\Ktt,\taur,\taugr\}$ on $\Sigma_{t_0}$
where 
$\gtt=\gtt_{\Lambda\Omega}d\xh^\Lambda \otimes d\xh^\Omega$ is the \textit{spatial metric} and $\Ktt=\Ktt_{\Lambda\Omega}d\xh^\Lambda \otimes d\xh^\Omega$ is the \textit{second fundamental form}, which are related to $\{\gr_{\mu\nu}, \grave{g}_{\mu\nu}, \taur, \taug\}$ via
\begin{equation}
  \label{gtt-def1}
  \gtt_{\Lambda\Omega} = \gr_{\Lambda\Omega} \AND \Ktt_{\Lambda\Omega} = \frac{1}{2\Ntt}(\ggr_{\Lambda\Omega}-2\Dtt_{(\Lambda} \btt_{\Omega)}),
\end{equation}
respectively. Here, 
\begin{equation}
  \label{gtt-def2}
  \btt_\Lambda = \gr_{0\Lambda} \AND \Ntt^2 = -\gr_{00}+ \btt^\Lambda\btt_\Lambda
\end{equation}
define the \textit{shift}
$\btt=\btt_\Lambda d\xh^\Lambda$ and  \textit{lapse} $\Ntt$, respectively,  $\Dtt_\Lambda$
denotes the Levi-Civita connection of the spatial metric $\gtt_{\Lambda\Omega}$, and we have used the inverse metric $\gtt^{\Lambda\Omega}$ of $\gtt_{\Lambda\Omega}$ to raise indices, e.g.\ $\btt^\Lambda = \gtt^{\Lambda\Omega}\btt_\Omega$.
Suppose now that {geometric initial data} $\{\gtt,\Ktt,\taur,\taugr\}$ satisfying \eqref{grav-constr} is given on $\Sigma_{t_0}$. Then we can construct an equivalent initial data set $\{\gr_{\mu\nu}, \grave{g}_{\mu\nu}, \taur, \taug\}$ that satisfies \emph{both} \eqref{grav-constr}  and \eqref{wave-constr}. To see how, we observe from \eqref{gtt-def1} and \eqref{gtt-def2} that $\gr_{\mu\nu}$ and $\ggr_{\Lambda\Omega}$ are determined on $\Sigma_{t_0}$ by the given geometric initial data $\{\gtt,\Ktt,\taur,\taugr\}$ and an arbitrary choice of the \textit{gauge initial data} $\{\Ntt,\btt_\Lambda\}$ satisfying $\Ntt>0$.
Now, a straightforward calculation shows that the wave gauge constraint \eqref{wave-constr} takes the form
\begin{equation}
  \label{wave-constr_geom}
  \dot{\Ntt}-\btt^\Lambda \delh{\Lambda} \Ntt - \Ntt^2 \Ktt_\Lambda{}^\Lambda =0 \AND \dot{\btt}_{\Lambda}-2\btt^{\Omega}\bigl(\Ntt\Ktt_{\Lambda\Omega}+\Dtt_{(\Lambda} \btt_{\Omega)})-\gtt_{\Lambda\Sigma}\btt^\Omega \delh{\Omega}\btt^\Sigma +\Ntt\delh{\Lambda} \Ntt -\Ntt^2\gtt_{\Lambda\Delta} g^{\Omega\Sigma}\Gamma_{\Omega\Sigma}^\Delta =0, 
\end{equation}
where $\Gamma_{\Sigma\Omega}^\Lambda$ are the Christoffel symbols of the spatial metric $\gtt_{\Lambda\Omega}$ and the dot represent the time derivative $\delh{0}$ of the fields $\{\Ntt,\btt_\Lambda\}$ evaluated on $\Sigma_{t_0}$, which are related to $\ggr_{\Lambda0}$ and $\ggr_{00}$ by
\begin{equation}
  \label{gtt-def3}
  \dot{\btt}{}_\Lambda = \ggr_{0\Lambda} \AND   
    \dot{\Ntt} = \frac{1}{2\Ntt}\bigl(-\ggr_{00}+2\btt^\Lambda \dot{\btt}{}_\Lambda - 2\btt^\Lambda \btt^\Omega(\Ntt \Ktt_{\Lambda\Omega}+\Dtt_\Lambda \btt_\Omega)
\bigr).
\end{equation}
From \eqref{wave-constr_geom} and \eqref{gtt-def3}, it is then clear that these equations determine the remaining components $\ggr_{\Lambda0}$ and $\ggr_{00}$ of an initial data
set $\{\gr_{\mu\nu}, \grave{g}_{\mu\nu}, \taur, \taug\}$  that satisfies \emph{both} \eqref{grav-constr}  and \eqref{wave-constr}.
\end{rem}

\subsection{Constraint propagation}
Before proceeding with our analysis of the reduced conformal Einstein-scalar field equations, we first verify that solutions of this system that satisfy the constraint equations \eqref{grav-constr}-\eqref{wave-constr} also satisfy the (full)
conformal Einstein-scalar field equations.

\begin{prop} \label{prop-ES-constr}
Suppose $\gh_{\mu\nu},\tauh\in C^3(M_{t_1,t_0})$ solve the reduced conformal Einstein-scalar field equations \eqref{tconfESF-A.1}-\eqref{tconfESF-A.2} and the constraints \eqref{grav-constr}-\eqref{wave-constr}. Then  $\gh_{\mu\nu},\tauh$ satisfy the conformal Einstein-scalar field equations 
\begin{align} 
\Gh^{\mu\nu} = \frac{1}{\tauh}\nablah^\mu\nablah^\nu \tauh, \quad
\Box_{\gh}\tauh = 0,\label{tconfESF-Aa}
\end{align}
and the wave gauge constraint $\Xh^\mu = 0$
in $M_{t_1,t_0}$.
\end{prop}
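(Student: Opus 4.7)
The proof follows the classical constraint propagation strategy: derive a homogeneous linear second-order hyperbolic equation for $\Xh^\mu$ whose Cauchy data on $\Sigma_{t_0}$ vanishes, then invoke uniqueness.

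First I would recast the reduced Einstein equation \eqref{tconfESF-A.1} in trace-reversed form as
\begin{equation*}
\Gh^{\mu\nu} - \frac{1}{\tauh}\nablah^\mu\nablah^\nu\tauh = \nablah^{(\mu}\Xh^{\nu)} - \frac{1}{2}\gh^{\mu\nu}\bigl(\nablah_\alpha \Xh^\alpha - \tfrac{1}{\tauh}\Xh^\alpha\nablah_\alpha\tauh\bigr),
\end{equation*}
using that \eqref{tconfESF-A.2} is equivalent to $\Box_{\gh}\tauh = -\Xh^\gamma\nablah_\gamma\tauh$ (which follows from the relation between $\gh^{\alpha\beta}\delh{\alpha}\delh{\beta}\tauh$ and $\Box_{\gh}\tauh$ together with \eqref{Xt-rep}). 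Applying $\nablah_\mu$ to this identity, the contracted Bianchi identity kills the $\Gh^{\mu\nu}$ term, while expanding $\nablah_\mu(\tauh^{-1}\nablah^\mu\nablah^\nu\tauh)$ via the commutator formula introduces a Ricci term that I would replace, using the reduced equation once more, by an expression linear in $\Xh$ and $\nablah\Xh$. The right-hand side, in turn, produces the principal part $\tfrac{1}{2}\Box_{\gh}\Xh^\nu$ together with lower-order terms in $\Xh$. The outcome is a homogeneous linear wave equation
\begin{equation*}
\Box_{\gh}\Xh^\nu = \mathcal{A}^\nu{}_\mu\,\Xh^\mu + \mathcal{B}^\nu{}_\mu{}^\alpha\,\nablah_\alpha\Xh^\mu,
\end{equation*}
whose coefficients depend smoothly on $\gh_{\mu\nu}$, $\tauh$ and their first derivatives.

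Next, I would verify that the Cauchy data of this wave equation vanishes on $\Sigma_{t_0}$. The wave gauge constraint \eqref{wave-constr} gives $\Xh^\mu|_{\Sigma_{t_0}}=0$ outright, and this also forces every spatial coordinate derivative $\delh{\Lambda}\Xh^\mu$, hence every $\nablah_\Lambda\Xh^\mu$, to vanish on $\Sigma_{t_0}$. For the normal derivative, I would evaluate the trace-reversed reduced equation on $\Sigma_{t_0}$ and contract it with $\nh_\mu=\delta^0_\mu$. After substituting $\Xh^\mu|_{\Sigma_{t_0}}=0$ and $\nablah_\Lambda\Xh^\mu|_{\Sigma_{t_0}}=0$, the gravitational constraints \eqref{grav-constr} reduce to
\begin{equation*}
\tfrac{1}{2}\gh^{00}\,\delh{0}\Xh^0\big|_{\Sigma_{t_0}}=0 \AND \tfrac{1}{2}\gh^{00}\,\delh{0}\Xh^\Lambda\big|_{\Sigma_{t_0}}=0,
\end{equation*}
since all cross terms in $\nablah^{(0}\Xh^{\mu)}$ and in the trace parts cancel when only $\delh{0}\Xh^\mu$-contributions survive. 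Because $\gh^{00}\neq 0$ by Lorentzian signature, this yields $\delh{0}\Xh^\mu|_{\Sigma_{t_0}}=0$.

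With vanishing Cauchy data, standard uniqueness for linear second-order hyperbolic equations on $M_{t_1,t_0}$ (using the Lorentzian metric $\gh_{\mu\nu}$ and the regularity $\gh,\tauh\in C^3$, which makes the coefficients $\mathcal{A}, \mathcal{B}$ continuous) gives $\Xh^\nu\equiv 0$ on $M_{t_1,t_0}$. Once the wave gauge constraint holds everywhere, the reduced Einstein equation collapses to the full conformal Einstein equation $\Gh^{\mu\nu}=\tauh^{-1}\nablah^\mu\nablah^\nu\tauh$, and the identity $\Box_{\gh}\tauh=-\Xh^\gamma\nablah_\gamma\tauh$ reduces to $\Box_{\gh}\tauh=0$, proving \eqref{tconfESF-Aa}. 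The principal technical obstacle is the algebra of Step~1: extracting a wave operator for $\Xh^\nu$ from $\nablah_\mu$ applied to the right-hand side, and reabsorbing the Ricci term produced by the commutator into lower-order terms in $\Xh$ via the reduced equation. Steps~2 and~3 are then routine algebra on $\Sigma_{t_0}$, and Step~4 is a direct application of the classical hyperbolic uniqueness theorem.
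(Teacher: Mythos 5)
Your proposal is correct and follows essentially the same route as the paper: apply $\nablah_\mu$ to the trace-reversed reduced equation (the paper's equation \eqref{prop-ES-constr2} is exactly your trace-reversed form, since $\Xh^\alpha\nablah_\alpha\ln\tauh = \tauh^{-1}\Xh^\alpha\nablah_\alpha\tauh$), use the contracted Bianchi identity to produce a homogeneous linear wave equation for $\Xh^\mu$, and conclude from vanishing Cauchy data via uniqueness. The only cosmetic difference is that you spell out the on-surface cancellation that forces $\delh{0}\Xh^\mu|_{\Sigma_{t_0}}=0$ explicitly, whereas the paper cites the standard argument from Wald, \S 10.2.
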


\begin{proof}
Assume $\gh_{\mu\nu},\tauh\in C^3(M_{t_1,t_0})$ solves the reduced conformal Einstein-scalar field equations \eqref{tconfESF-A.1}-\eqref{tconfESF-A.2} and satisfies the constraints \eqref{grav-constr}-\eqref{wave-constr} in $\Sigma_{t_0}$. Then with the help of \eqref{Xt-rep}, we can express the reduced conformal Einstein-scalar field equations as
\begin{align} 
-2\Rh_{\mu\nu} + 2\nablah_{(\mu}\Xh_{\nu)} &= -\frac{2}{\tauh}\nablah_\mu\nablah_\nu \tauh, \label{prop-ES-constr1.1}\\
\gh^{\alpha\beta}\nablah_\alpha\nablah_\beta\tauh &= -\Xh^\alpha \nablah_\alpha \tauh.\label{prop-ES-constr1.2}
\end{align}
Taking the trace of \eqref{prop-ES-constr1.1}, we find, with the help of \eqref{prop-ES-constr1.2}, that
the scalar curvature of $\gh_{\mu\nu}$ is given by
$\Rh = \nablah^\mu\Xh_\mu - \Xh^\mu \nablah_\mu \ln(\tauh)$.
Using this expression, it is not difficult to verify that the reduced conformal Einstein equations  \eqref{prop-ES-constr1.1} are given by
\begin{equation} \label{prop-ES-constr2}
    \Gh^{\mu\nu}=\nablah^{(\mu}\Xh^{\nu)} -\frac{1}{2}\nablah_\alpha \Xh^\alpha \gh^{\mu\nu}+\frac{1}{2}\Xh^\alpha \nablah_\alpha \ln(\tauh) \gh^{\mu\nu}+\frac{1}{\tauh}\nablah^\mu\nablat^\nu \tauh.
\end{equation}
Then noting 
\begin{align}
    \nablah_\mu\Bigl(\frac{1}{\tauh}\nablah^\mu\nablat^\nu \tauh\Bigr) &= -\frac{1}{\tauh^2}\nablah_\mu\tauh \nablah^\mu\nablah^\nu \tauh + \frac{1}{\tauh}\nablah_\mu \nablah^\mu\nablah^\nu  \tauh \notag \\
    &= -\frac{1}{\tauh^2}\nablah_\mu\tauh \nablah^\mu\nablah^\nu \tauh +\frac{1}{\tauh} \nablah^\nu \nablah_\mu \nablah^\mu \tauh +\frac{1}{\tauh}\Rh^{\nu\alpha}\nablah_\alpha\tauh
    \notag \\
    &=   - 
     \Xh^\alpha \nablah^\nu\nablah_\alpha \ln(\tauh)
     - \Xh^\alpha\nablah^\nu\ln(\tauh)
     \nablah_\alpha\ln(\tauh)
  + \nablah^{[\alpha}\Xh^{\nu]} \nablah_\alpha \ln(\tauh) ,
    && \text{(by \eqref{prop-ES-constr1.1}-\eqref{prop-ES-constr1.2})}\notag
\end{align}
it follows from applying $\nablah_\mu$ to \eqref{prop-ES-constr2}
that
\begin{align*}
&\nablah_{\mu}\nablah^{(\mu}\Xh^{\nu)}
  -\frac{1}{2}\nablah^{\nu}\nablah_\mu \Xh^\mu 
      +\frac{1}{2}\nablah^{\nu}\Xh^\mu \nablah_\mu \ln(\tauh)
      -\frac{1}{2}\Xh^\mu \nablah^{\nu}\nablah_\mu \ln(\tauh)\\
      &\hspace{6.0cm}
     - \Xh^\mu\nablah^\nu\ln(\tauh)
     \nablah_\mu\ln(\tauh)
  + \nablah^{[\mu}\Xh^{\nu]} \nablah_\mu \ln(\tauh)=0,
\end{align*}
where in deriving this expression we have employed the contracted Bianchi identity $\nablah_\mu \Gh^{\mu\nu}=0$.
Re-expressing this as
\begin{align} 
 & \nablah_{\mu}\nablah^{\mu}\Xh^{\nu}     
     + 2\nablah_\mu \ln(\tauh)\nablah^{[\mu}\Xh^{\nu]} 
        +\nablah_\mu \ln(\tauh)\nablah^{\nu}\Xh^\mu
      +\bigl(\Rh^{\nu}{}_{\mu}
      -\nablah^{\nu}\nablah_\mu \ln(\tauh)      
     - 2\nablah^\nu\ln(\tauh)
     \nablah_\mu\ln(\tauh)\bigr) \Xh^\mu=0,
 \label{wave-gauge-prop-eqn}\end{align}
we see that $\Xh^\mu$ satisfies a linear wave equation on $M_{t_0,t_1}$. 
Since the constraints \eqref{grav-constr}-\eqref{wave-constr} imply, by \eqref{Xt-rep} and a well known argument, e.g. see \cite[\S 10.2]{Wald:1994}, that $\Xh^\mu$ and $\delh{0}\Xh^\mu$ vanish in $\Sigma_{t_0}$, we conclude from the uniqueness of solutions to linear wave equations that $\Xh^\mu$ must vanish in $M_{t_1,t_0}$. By \eqref{prop-ES-constr1.2}-\eqref{prop-ES-constr2}, it then follows that the pair $\{\gh_{\mu\nu},\tauh\}$ solves the conformal Einstein-scalar field equations \eqref{tconfESF-Aa}, which completes the proof.
\end{proof}

\subsection{First order formulation\label{fof}}
For the purposes of establishing the local-in-time existence of
solutions to the reduced conformal Einstein-scalar field system, we find it convenient to use a first order formulation. We begin the derivation
of the first order formulation by introducing the first order variables
\begin{gather}
    \hh_{\beta\mu\nu} = \delh{\beta}\gh_{\mu\nu}, \quad
    \vh_{\mu} = \delh{\mu}\tauh \AND
    \wh_{\mu\nu} = \delh{\mu}\delh{\nu}\tauh, \label{tvars}
\end{gather}
and we define a vector field $\chih^\mu$ via
\begin{equation}
  \chih^\mu = \frac{1}{|\vh|_{\gh}^2} \vh^{\mu}. \label{chit-def}
\end{equation}
Notice that $\chih(\tauh)=1$ by definition.  
We assume, in the following, that $\vh^\mu$ is timelike (i.e.\ $|\vh|_{\gh}^2 <0$), which will ensure that the vector field $\hat{\chi}^\mu$ remains well defined and timelike.
By \eqref{tconfESF-A.1}-\eqref{tconfESF-A.2}, it is not difficult to see that the fields $\{\gh_{\mu\nu},\tauh,\vh_{\mu}\}$ evolve according to 
\begin{align}
\gh^{\alpha\beta}\delh{\alpha}\delh{\beta}\gh_{\mu\nu} + \Qh_{\mu\nu} &= -\frac{2}{\tauh}\bigl( \delh{(\mu}\vh_{\nu)} - \Gammah^\gamma_{\mu\nu} \vh_\gamma \bigr), \label{tconfESF-B.1}\\
\gh^{\alpha\beta}\delh{\alpha}\delh{\beta}\tauh &=0,  \label{tconfESF-B.2} \\
\gh^{\alpha\beta}\delh{\alpha}\delh{\beta}\vh_\mu &= \gh^{\alpha \sigma}\gh^{\beta\delta}\delh{\mu}\gh_{\sigma\delta}\delh{\alpha}\vh_\beta. \label{tconfESF-B.3}
\end{align}

Multiplying \eqref{tconfESF-B.1} by $-\chih^\lambda$ gives
\begin{equation}\label{tconf-ford-A}
    -\chih^\lambda\gh^{\alpha\beta}\delh{\alpha}\delh{\beta}\gh_{\mu\nu}= \chih^\lambda \Bigr(\Qh_{\mu\nu}+\frac{2}{\tauh}\bigl( \delh{(\mu}\vh_{\nu)} - \Gammah^\gamma_{\mu\nu} \vh_\gamma\bigl) \Bigr).
\end{equation}
Noticing that $-\chih^\beta \gh^{\lambda\alpha} 
   \delh{\alpha}\delh{\beta}\gh_{\mu\nu}+ \gh^{\lambda\beta}\chih^\alpha 
   \delh{\alpha}\delh{\beta}\gh_{\mu\nu}=0$
holds by the symmetry of mixed partial derivatives, we get from adding this to \eqref{tconf-ford-A} that
\begin{equation*}
   \Bh^{\lambda\beta\alpha} \delh{\alpha}\hh_{\beta\mu\nu}= \chih^\lambda \Bigr(\Qh_{\mu\nu}+\frac{2}{\tauh}\bigl( \delh{(\mu}\vh_{\nu)} - \Gammah^\gamma_{\mu\nu} \vh_\gamma\bigl) \Bigr),
\end{equation*}
where 
\begin{equation}\label{Bt-def}
 \Bh^{\lambda\beta\alpha}=    -\chih^\lambda\gh^{\beta\alpha} -\chih^\beta \gh^{\lambda\alpha} 
  + \gh^{\lambda\beta}\chih^\alpha,
\end{equation}
and in deriving this we have employed the definitions
\eqref{tvars} and we are now viewing $\Qh_{\mu\nu}$ as well as $\Gammah^\gamma_{\mu\nu}$ as depending on $\gh_{\mu\nu}$ and $\hh_{\gamma\mu\nu}$.
Performing the same procedure on the other equations
\eqref{tconfESF-B.2}-\eqref{tconfESF-B.3}, it follows that
\eqref{tconfESF-B.1}-\eqref{tconfESF-B.3} can be expressed in
first order form as
\begin{align}
   \Bh^{\lambda\beta\alpha} \delh{\alpha}\hh_{\beta\mu\nu}&= \chih^\lambda \Bigr(\Qh_{\mu\nu}+\frac{2}{\tauh}\bigl( \wh_{(\mu \nu)} - \Gammah^\gamma_{\mu\nu} \vh_\gamma\bigl) \Bigr), \label{tconf-ford-B.1}\\
     \Bh^{\lambda\beta\alpha}\delh{\alpha}\wh_{\beta\mu}
     & =-\chih^\lambda\gh^{\alpha \sigma}\gh^{\beta\delta}\hh_{\mu\sigma\delta}\wh_{\alpha\beta}, \label{tconf-ford-B.2}\\
      \Bh^{\lambda\beta\alpha}\delh{\alpha}\zh_{\beta} &=0, \label{tconf-ford-B.3} \\
      \chih^\alpha \delh{\alpha}\gh_{\mu\nu}&= \chih^\alpha \hh_{\alpha\mu\nu}, \label{tconf-ford-B.4}\\
     \chih^\alpha \delh{\alpha} \vh_{\mu}&= \chih^\alpha \wh_{\alpha\mu},  \label{tconf-ford-B.5}\\
     \chih^\alpha \delh{\alpha} \tauh&= \chih^\alpha \zh_{\alpha},  \label{tconf-ford-B.6}
\end{align}
where here $\zh_\mu$ should be interpreted as being the derivative of $\tauh$, i.e. 
\begin{equation} \label{zh-def}
    \zh_\mu = \delh{\mu}\tauh.
\end{equation}
Although, $\vh_\mu$ and $\zh_\mu$ are both equal to $\delh{\mu}\tauh$ by definition, we treat them as being, a piori, independent variables and verify later that $\vh_\mu=\zh_\mu=\delh{\mu}\tauh$.
It turns out to be technically easier to show first that the constraint $\zh_\mu-\delh{\mu}\tauh=0$ propagates and then use this to show that $\vh_\mu-\delh{\mu}\tauh=0$ propagates, which together yield the equality $\vh_\mu=\zh_\mu=\delh{\mu}\tauh$.

\subsection{Lagrangian coordinates\label{Lag-coordinates}}
Rather than solving the first order system \eqref{tconf-ford-B.1}-\eqref{tconf-ford-B.6} directly, we instead introduce Lagrangian coordinates adapted to the vector field $\chih^\alpha$ and solve a transformed version of this system. The main reason for doing this is that working in Lagrangian coordinates will be essential for our stability proof. 
Letting $\Gc_s(\xh^\lambda) = (\Gc^\mu_s(\xh^\lambda))$
denote the flow of $\chih^\mu$ so that
\begin{equation*}
\frac{d\;}{ds} \Gc^\mu_s(\xh^\lambda) = \chih^\mu(\Gc_s(\xh^\lambda)) \AND
\Gc^\mu_{0}(\xh^\lambda) = \xh^\mu,
\end{equation*}
we introduce Lagrangian coordinates $(x^\mu)$ via
\begin{equation} \label{lemBa.1a}
\xh^\mu = l^\mu(x) := \Gc_{x^0-t_0}^\mu(t_0,x^\Lambda), \quad \forall\, (x^0,x^\Lambda)\in M_{t_1,t_0},
\end{equation}
and note that this map defines a diffeomorphism
\begin{equation*}
    l \: : \: M_{t_1,t_0} \longrightarrow l(M_{t_1,t_0})\subset M_{-\infty,t_0}
\end{equation*}
that satisfies $l(\Sigma_{t_0})=\Sigma_{t_0}$ so long as the vector field $\chih^\mu$ does not vanish.
In line with our coordinate conventions, we often use 
$t=x^0$
to denote the Lagrangian time coordinate.
\begin{rem}
As we show below, by formulating the reduced conformal Einstein-scalar field equations in Lagrangian coordinates, the Lagrangian map $l^\mu$ becomes an additional unknown field that needs to be solved for. The local-in-time existence theory developed in Proposition \ref{lag-exist-prop} will then guarantee that $l^\mu$ exists and is well-defined on a spacetime region of the form $M_{t_1,t_0}$ for $t_1$ sufficiently close to $t_0$. Moreover, the continuation principle from Proposition \ref{lag-exist-prop} will ensure that $l^\mu$ can be extended to domains of the form  $M_{t_1^*,t_0}$ with $t_1^*<t_1$ provided the full solution to the reduced conformal Einstein-scalar field equations in Lagrangian coordinates satisfies appropriate bounds. In this way, the local-in-time existence and continuation theory from Proposition \ref{lag-exist-prop} determines the domain of definition of the Lagrangian map $l^\mu$. 
\end{rem}

By definition, the map $l^\mu$ solves 
the IVP
\begin{align}
    \del{0}l^\mu &= \chihu^\mu, \label{l-ev.1} \\
    l^\mu(t_0,x^\Lambda) &= \delta^\mu_0 t_0 +\delta^\mu_\Lambda x^\Lambda, \label{l-ev.2} 
\end{align}
where, here and below, we use the notation
\begin{equation}\label{fu-def}
    \underline{f}=f\circ l
\end{equation}
to denote the pull-back of scalars by the Lagrangian map $l$.
In the following, symbols without a ``hat'' will denote the geometric pull-back by the Lagrangian map $l$. For example,
\begin{equation}\label{chi-lag}
\chi^\mu = \Jcch_\nu^\mu \chihu^\nu \oset{\eqref{l-ev.1}}{=} \Jcch_\nu^\mu \Jc_0^\nu =\delta^\mu_0,
\end{equation}
\begin{equation} \label{tau-lag}
\tau = \underline{\tauh},
\end{equation}
and
\begin{equation} \label{g-lag}
    g_{\mu\nu} = \Jc^\alpha_\nu \Jc^\beta_\mu\ghu_{\alpha\beta},
\end{equation}
where
\begin{equation} \label{Jc-def}
\Jc^\mu_\nu = \del{\nu}l^\mu
\end{equation}
is the Jacobian matrix of the map $l^\mu$ and 
\begin{equation} \label{Jcch-def}
    (\Jcch^\mu_\nu):= (\Jc^\mu_\nu)^{-1}
\end{equation}    
is its inverse.

Differentiating $\Jc^\mu_\nu$ with respect to $t=x^0$,
it then follows from \eqref{chit-def},  \eqref{l-ev.1}, \eqref{Jc-def} and the transformation law
\begin{equation} \label{pd-trans}
 \underline{\delh{\alpha} f} = \Jcch^\gamma_\alpha \del{\gamma}\underline{f}
\end{equation}
for partial derivatives that
\begin{align*}
    \del{0}\Jc^\mu_\nu = \del{\nu}\chihu^\mu &= \Jc_\nu^\lambda \underline{\delh{\lambda}\chih^\mu} =\frac{1}{|\vhu|^2_{\ghu}} \Jc_\nu^\lambda  \biggl(
    \underline{\delh{\lambda}\vh^\mu}-\frac{1}{|\vhu|^2_{\ghu}}\underline{\delh{\lambda}|\vh|_{\gh}^2}\vhu^\mu \biggr) \\
    &=\frac{1}{|\vhu|^2_{\ghu}} \Jc_\nu^\lambda  \biggl(
   \ghu^{\mu\sigma}\underline{\delh{\lambda}\vh_\sigma}-\ghu^{\mu\tau}\ghu^{\sigma\omega}\underline{\delh{\lambda} \gh_{\tau\omega}}\vhu_\sigma -\frac{1}{|\vhu|^2_{\ghu}}\bigl(-\ghu^{\alpha\tau}\ghu^{\beta\omega}\underline{\delh{\lambda} \gh_{\tau\omega}}\vhu_\alpha\vhu_\beta+2\ghu^{\alpha\beta}\vhu_\alpha \underline{\delh{\lambda}\vh_{\beta}}\bigr)\ghu^{\sigma\mu}\vhu_\sigma \biggr).
\end{align*}
By \eqref{tvars} and \eqref{l-ev.1}-\eqref{l-ev.2}, we
observe that $\Jc^\mu_\nu$ satisfies the IVP
\begin{align}
    \del{0}\Jc^\mu_\nu &= \Jc^\lambda_\nu \Jsc_\lambda^\mu, \label{J-ev.1} \\
    \Jc^\mu_\nu(t_0,x^\Lambda) &= \delta^0_\nu \chih^\mu(t_0,x^\Lambda)+\delta_\nu^\Lambda\delta^\mu_\Lambda, \label{J-ev.2} 
\end{align}
where
\begin{equation} \label{Jsc-def}
    \Jsc_\lambda^\mu =\frac{1}{|\vhu|^2_{\ghu}} \biggl(
   \ghu^{\mu\sigma}\whu_{\lambda\sigma}-\ghu^{\mu\tau}\ghu^{\sigma\omega} \hhu_{\lambda\tau\omega}\vhu_\sigma -\frac{1}{|\vhu|^2_{\ghu}}\bigl(-\ghu^{\alpha\tau}\ghu^{\beta\omega} \hhu_{\lambda\tau\omega}\vhu_\alpha\vhu_\beta+2\ghu^{\alpha\beta}\vhu_\alpha \wh_{\lambda\beta}\bigr)\ghu^{\sigma\mu}\vhu_\sigma \biggr).
\end{equation}

Putting everything together, we find from \eqref{l-ev.1}, \eqref{chi-lag}, \eqref{tau-lag}, \eqref{Jc-def}, \eqref{pd-trans} and \eqref{J-ev.1}
that the Lagrangian representation of the system \eqref{tconf-ford-B.1}-\eqref{tconf-ford-B.6} is given by 
\begin{align}
   \Bhu^{\lambda\beta\alpha} \Jcch_\alpha^\gamma \del{\gamma}\hhu_{\beta\mu\nu}&= \Jc_0^\lambda \Bigr(\Qhu_{\mu\nu}+\frac{2}{\tau}\bigl( \whu_{(\mu \nu)} - \Gammahu^\gamma_{\mu\nu} \vhu_\gamma\bigl) \Bigr), \label{tconf-ford-C.1}\\
     \Bhu^{\lambda\beta\alpha} \Jcch_\alpha^\gamma \del{\gamma}\whu_{\beta\mu},
     &=-\Jc^\lambda_0\ghu^{\alpha \sigma}\ghu^{\beta\delta}\hhu_{\mu\sigma\delta}\whu_{\alpha\beta}, \label{tconf-ford-C.2}\\
      \Bhu^{\lambda\beta\alpha} \Jcch_\alpha^\gamma \del{\gamma}\zhu_{\beta} &=0, \label{tconf-ford-C.3} \\
       \del{0}\ghu_{\mu\nu}&= \Jc_0^\alpha \hhu_{\alpha\mu\nu}, \label{tconf-ford-C.4}\\
     \del{0} \vhu_{\mu}&= \Jc_0^\alpha \whu_{\alpha\mu},  \label{tconf-ford-C.5}\\
     \del{0} \tau&= \Jc^\alpha_0\zhu_{\alpha},  \label{tconf-ford-C.6}\\
     \del{0} \Jc^\mu_\nu &=  \Jc^\lambda_\nu \Jsc^\mu_\lambda,\label{tconf-ford-C.7}\\
    \del{0}l^\mu &= \chihu^\mu, \label{tconf-ford-C.8}\end{align}
where we now view the first order variables $\{\hhu_{\beta\mu\nu},\whu_{\beta\mu},\zhu_{\beta},\ghu_{\mu\nu},\vhu_\mu,\tau,\Jc^\mu_\nu,l^\mu\}$ as being 
independent, that is, not a priori related by \eqref{tvars}, \eqref{zh-def} and \eqref{Jc-def}, $\Bhu^{\lambda\beta\alpha}$ is defined by \eqref{Bt-def}, $\chihu^\mu$ is defined by \eqref{chit-def}, $\Qh_{\mu\nu}$ and $\Gammah^\gamma_{\mu\nu}$ are given by \eqref{eq:defQh} and the standard Christoffel symbol formula, respectively (interpreting $\underline{\delh{\mu} \gh_{\sigma \alpha}}$ as $\hhu_{\mu\sigma\alpha}$), $\ghu^{\mu\nu}$ is the inverse of $\ghu_{\mu\nu}$, and $\Jcch_\alpha^\gamma$ as the inverse of $\Jc_\alpha^\gamma$. 

\subsection{Lagrangian initial data}
In the Lagrangian representation, we see, with the help of
 \eqref{tvars}, \eqref{zh-def}, \eqref{l-ev.2}, \eqref{Jc-def} and \eqref{J-ev.2}, that the reduced conformal Einstein-scalar field initial data
\eqref{gt-idata}-\eqref{dt-taut-idata} generates the following
Lagrangian initial data:
\begin{align}
    l^\mu \bigl|_{\Sigma_{t_0}} &= \lr^\mu, \label{l-idata}\\
    \Jc^\mu_\nu \bigl|_{\Sigma_{t_0}} &= \Jcr^\mu_\nu, \label{Jc-idata}\\
    \tau\bigl|_{\Sigma_{t_0}} &= \taur, \label{tauh-idata} \\ 
    \vhu_\mu\bigl|_{\Sigma_{t_0}} &= \delta^0_\mu \taug +\delta_\mu^\Lambda \del{\Lambda} \taur, \label{vhu-idata} \\
    \whu_{\Lambda\Omega}\bigl|_{\Sigma_{t_0}} &= \del{\Lambda}\del{\Omega} \taur, \label{whu-idata-1}\\
    \whu_{0\Omega}\bigl|_{\Sigma_{t_0}} &= \del{\Omega} \taug, \label{whu-idata-2}\\
     \whu_{\Lambda 0}\bigl|_{\Sigma_{t_0}} &= \del{\Lambda} \taug, \label{whu-idata-3}\\
      \whu_{0 0}\bigl|_{\Sigma_{t_0}} &= -\frac{1}{\gr^{00}}\bigl(2\gr^{0\Lambda}\del{\Lambda}\taug+
      \gr^{\Lambda\Omega}\del{\Lambda}\del{\Omega}\taur \bigr), \label{whu-idata-4}\\
    \zhu_\mu\bigl|_{\Sigma_{t_0}} &= \delta^0_\mu \taug +\delta_\mu^\Lambda \del{\Lambda} \taur, \label{zhu-idata} \\
    \ghu_{\mu\nu}\bigl|_{\Sigma_{t_0}} &= \gr_{\mu\nu}, \label{ghu-idata} \\
    \hhu_{\alpha\mu\nu}\bigl|_{\Sigma_{t_0}} &= \delta_\alpha^0 \grave{g}_{\mu\nu}+\delta_\alpha^\Lambda\del{\Lambda}\gr_{\mu\nu}, \label{hhu-idata}
\intertext{where}
    \lr^\mu &= \delta^\mu_0 t_0 + \delta^\mu_\Lambda x^\Lambda, \label{lr-def} \\
    \vr^\mu &= \gr^{\mu\nu}(\delta^0_\nu \taug +\delta_\nu^\Lambda\del{\Lambda}\taur),\label{vr-def}\\
    \chir^\mu &= \frac{1}{|\vr|_{\gr}^2}\vr^{\mu}, \label{chir-def}\\
    \Jcr^\mu_\nu &= \delta^0_\nu \chir^\mu + \delta^\Lambda_\nu \delta^\mu_\Lambda, \label{Jcr-def}
\end{align}
and we note that the inverse of $\Jcr^\mu_\nu$ is given by
\begin{equation}
    (\Jcr^{-1})^\mu_\nu = \frac{1}{\chir^0}\delta^0_\nu(\delta^\mu_0 -\delta^{\mu}_\Lambda \chir^\Lambda) + \delta^\Lambda_\nu \delta^\mu_\Lambda. \label{Jcr-inv}
\end{equation}

\begin{rem}\label{rem-Lag-idata}
On the initial hypersurface $\Sigma_{t_0}=\{t_0\}\times \Tbb^{n-1}$,  our choice of initial data
implies that 
\begin{equation} \label{dt-tau-idata}
    \del{0}\tau\bigl|_{\Sigma_{t_0}} = \Jc^\mu_0 \zh_\mu\bigl|_{\Sigma_{t_0}} = \chir^\mu \vr_\mu =1, 
\end{equation}
\begin{equation}\label{g-idata}
 g_{\mu\nu} \bigl|_{\Sigma_{t_0}} = \Jc^\alpha_\mu \Jc^\beta_\nu \gh_{\alpha\beta} \bigl|_{\Sigma_{t_0}} 
 = \Jcr^\alpha_\mu \gr_{\alpha\beta}\Jcr^\beta_\nu,
\end{equation}
\begin{equation} \label{dt-g-idata}
    \del{0}g_{\mu\nu} \bigl|_{\Sigma_{t_0}} = 
     \Jcr^\gamma_0\Jcr_\mu^\alpha \Jcr^\beta_\nu \bigl(\delta_\gamma^0 \grave{g}_{\alpha\beta}+\delta_\gamma^\Lambda\del{\Lambda}\gr_{\alpha\beta}\bigr) +  \gr_{\alpha\beta}\bigl(
     \Jcr_\mu^\lambda \mathring{\Jsc}{}^\alpha_\lambda\Jcr^\beta_\nu+\Jcr^\alpha_\mu \Jcr_\nu^\lambda \mathring{\Jsc}{}^\beta_\lambda\bigr)
\end{equation}
and
\begin{equation} \label{chi-idata}
    \chi^\mu \bigl|_{\Sigma_{t_0}} 
    \overset{\eqref{chi-lag}}{=} \delta^\mu_0,
\end{equation}
where $\mathring{\Jsc}{}^\mu_\nu= \Jsc^\mu_\nu|_{\Sigma_{t_0}}$.
\end{rem}

\begin{rem} \label{FLRW-idata-rem-A}
By \eqref{Kasner-wave-gauge} and \eqref{gtau-FLRW}, it is clear that the FLRW solution 
determines the initial data
\begin{equation*}
    \bigl\{g_{\mu\nu}\bigl|_{\Sigma_{t_0}},\del{0}g_{\mu\nu}\bigl|_{\Sigma_{t_0}}, \tau\bigl|_{\Sigma_{t_0}},\del{0}\tau\bigl|_{\Sigma_{t_0}}\bigr\}_{\textrm{FLRW}}=
    \bigl\{\eta_{\mu\nu},0,t_0,1\bigr\}
\end{equation*}
on $\Sigma_{t_0}$ and this initial data satisfies both the gravitational and wave gauge constraints. Furthermore, since the FLRW solution satisfies \eqref{FLRW-Lag}, it is already in the Lagrangian representation.
\end{rem}

\subsection{Local-in-time existence}
In the following proposition, we establish the existence and uniqueness of solutions to the first order system \eqref{tconf-ford-C.1}-\eqref{tconf-ford-C.8} and we develop a continuation principle for these solutions. We also show that solutions of this system that are generated from the initial
data \eqref{l-idata}-\eqref{hhu-idata} determine solutions of the conformal Einstein-scalar field equations provided the initial data satisfies the gravitational and wave gauge constraint equations.  

\begin{prop} \label{lag-exist-prop}
Suppose $k>(n-1)/2+1$, $t_0>0$, and the initial data $\taur\in H^{k+2}(\Tbb^{n-1})$, $\taugr\in H^{k+1}(\Tbb^{n-1})$, $\gr_{\mu\nu}\in H^{k+1}(\Tbb^{n-1},\Sbb{n})$ and $\ggr_{\mu\nu}\in H^{k}(\Tbb^{n-1},\Sbb{n})$ is chosen so that 
the inequalities $\det(\gr_{\mu\nu})<0$ and $|\vr|_{\gr}^2 <0$
are satisfied where $\vr^\mu$ is defined by \eqref{vr-def}.
Then there exists a $t_1<t_0$ and a unique solution
\begin{equation}
  \label{eq:Wreg}
W \in \bigcap_{j=0}^{k}C^j\bigl((t_1,t_0], H^{k-j}(\Tbb^{n-1})\bigr),
\end{equation}
where
\begin{equation}
  \label{eq:Wdef}
    W=(\hhu_{\beta\mu\nu},\whu_{\beta\nu}, \zhu_\beta,\ghu_{\mu\nu},\vhu_\mu,\tau,\Jc^\mu_\nu,\ell^\mu ),
\end{equation}
on $M_{t_1,t_0}$ to the IVP consisting of the evolution
equations \eqref{tconf-ford-C.1}-\eqref{tconf-ford-C.8} and
the initial conditions \eqref{l-idata}-\eqref{hhu-idata}. Moreover, the following properties hold:
\begin{enumerate}[(a)]
\item Letting $W_0 = W|_{\Sigma_{t_0}} \in H^{k}(\Tbb^{n-1})$
denote the initial data, there exists for each $t_*\in (t_1,t_0)$ a $\delta>0$ such that if $\Wt_0 \in  H^{k}(\Tbb^{n-1})$ satisfies $\norm{\Wt_0-W_0}_{H^k(\Tbb^{n-1})}<\delta$,    
then there exists a unique solution 
$\Wt \in \bigcap_{j=0}^{k}C^j\bigl((t_*,t_0], H^{k-j}(\Tbb^{n-1})\bigr)$
of the evolution equations \eqref{tconf-ford-C.1}-\eqref{tconf-ford-C.8} on $M_{t_*,t_0}$  that agrees with the initial data $\Wt_0$ on the initial hypersurface $\Sigma_{t_0}$. 
\item The relations 
\begin{equation} \label{eq:Lag-constraints}
\del{\alpha}\ghu_{\mu\nu}=\Jc^\beta_\alpha \hhu_{\beta\mu\nu},\quad  \del{\alpha}\vhu_{\mu}=\Jc^\beta_\alpha \whu_{\beta\mu}, \quad \del{\alpha}\tau=\Jc^\beta_\alpha \zhu_{\beta},\quad \vhu_\mu=\zhu_\mu \AND \Jc^\mu_\nu = \del{\nu} l^\mu    
\end{equation}
hold in $M_{t_1,t_0}$.
\item The pair $\{g_{\mu\nu}=\del{\mu}l^\alpha\ghu_{\alpha\beta}\del{\nu}l^\beta,\tau\}$
determines a strong solution 
of the reduced conformal Einstein-scalar field equations
\begin{equation} \label{lag-redeqns}
    -2R_{\mu\nu}+2\nabla_{(\mu} X_{\nu)}=-\frac{2}{\tau}\nabla_\mu \nabla_\nu \tau, \quad g^{\alpha\beta}\Dc_\alpha\Dc_\beta \tau=0,
\end{equation}
on $M_{t_1,t_0}$ of
regularity
\begin{equation} \label{reg-loss} 
g_{\mu\nu} \in \bigcap_{j=0}^k C^j\bigl((t_1,t_0], H^{k-j}(\Tbb^{n-1})\bigr) 
\AND \tau \in \bigcap_{j=0}^{k+1} C^j\bigl((t_1,t_0], H^{k+1-j}(\Tbb^{n-1})\bigr)
\end{equation}
that satisfies the initial conditions \eqref{tauh-idata} and \eqref{dt-tau-idata}-\eqref{dt-g-idata}, where $X^\gamma = \frac{1}{2}g^{\mu\nu}g^{\gamma\lambda}
    (2\Dc_{\mu}g_{\nu\lambda}-\Dc_\lambda g_{\mu\nu})$
and $\Dc_\mu$ is the Levi-Civita connection of the flat metric
$\gc_{\mu\nu}=\del{\mu}l^\alpha \eta_{\alpha\beta} \del{\nu}l^\beta$
on $M_{t_1,t_0}$. 
\item The scalar field $\tau$ is given by
\begin{equation} \label{tau-synch} 
    \tau = t-t_0 + \taur  
\end{equation}
in $M_{t_1,t_0}$ while the vector field
\begin{equation} \label{chi-def}
    \chi^\mu = \frac{1}{|\nabla\tau|^2_g}\nabla^\mu\tau 
\end{equation}
satisfies 
\begin{equation} \label{Lagrangian}
\chi^\mu = \delta^\mu_0
\end{equation}
in $M_{t_1,t_0}$.
\item If the initial data $\{\gr_{\mu\nu},\ggr_{\mu\nu},\taur,\taugr\}$ also satisfies the
constraint equations \eqref{grav-constr}-\eqref{wave-constr}
on $\Sigma_{t_0}$, then the pair $\{g_{\mu\nu},\tau\}$ solves
the conformal Einstein-scalar field equations
\begin{align} 
G^{\mu\nu} = \frac{1}{\tau}\nabla^\mu\nabla^\nu \tau, \quad
\Box_{g}\tau = 0,\label{lag-confeqns}
\end{align}
and satisfies the 
wave gauge constraint 
\begin{equation} \label{lag-wave-gauge}
    X^\gamma :=\frac{1}{2}g^{\mu\nu}g^{\gamma\lambda}(2\Dc_\mu g_{\nu\lambda}-\Dc_{\lambda}g_{\mu\nu} )=0  
\end{equation}
in $M_{t_1,t_0}$.
\item If 
  \begin{equation}
    \label{eq:cont_crit1}
\max\biggl\{\sup_{M_{t_1,t_0}}\!\det(g_{\mu\nu}), \sup_{M_{t_1,t_0}}\!|\nabla\tau|_{g}^2\biggl\} <0
\end{equation}
and
\begin{equation}
  \label{eq:cont_crit2} \sup_{t_1<t<t_0}\Bigl(\norm{g_{\mu\nu}(t)}_{W^{2,\infty}(\Tbb^{n-1})}+\norm{\del{t}g_{\mu\nu}(t)}_{W^{1,\infty}(\Tbb^{n-1})}
  +\norm{\Dc_\nu \chi^\lambda(t)}_{W^{2,\infty}(\Tbb^{n-1})}+\norm{\del{t}(\Dc_\nu \chi^\lambda)(t)}_{W^{1,\infty}(\Tbb^{n-1})}\Bigr)<\infty,
\end{equation}
then there exists a $t_1^*<t_1$ such that the solution $W$ can be uniquely continued to the time interval $(t_1^*,t_0]$.
\end{enumerate}
\end{prop}

\begin{proof}
$\;$\\
\noindent \underline{Existence - first order equations:} We first note from \eqref{Bt-def} and  \eqref{fu-def} that $\Bhu^{\lambda \beta \alpha}$ satisfies $\Bhu^{\lambda \beta \alpha} = \Bhu^{\beta \lambda \alpha}$
and that we can write
$\Bhu^{\lambda \beta \alpha}\Jcch^0_\alpha$ as 
\begin{equation}\label{Bhu0}
    \Bhu^{\lambda \beta \alpha}\Jcch^0_\alpha
    = -\frac{\chihu^\sigma \Jcch^0_\sigma}{|\chihu|_{\ghu}^2}
    \chihu^\beta\chihu^\lambda - \chihu^\lambda \pi^{\beta\sigma}\Jcch^0_\sigma - \chihu^\beta \pi^{\lambda\sigma}\Jcch^0_\sigma + \pi^{\lambda \beta}\chihu^\sigma \Jcch^0_\sigma
\end{equation}
where 
\begin{equation}\label{pi-def} 
\pi^\gamma_\lambda = \delta^\gamma_\lambda  -\frac{\chihu^\gamma \chihu_\lambda}{|\chihu|_{\ghu}^2}
\end{equation}
defines the projection onto the $\ghu$-orthogonal subspace to $\chihu^\mu$ and we have set $\pi^{\gamma\lambda}=\ghu^{\lambda\sigma}\pi^\gamma_\sigma$.
Now, it is not difficult to verify from 
\eqref{Bhu0} that $\Bhu^{\lambda \beta \alpha}\Jcch^0_\alpha$
will remain positive definite as long as  the co-vector field $\Jcch^0_\alpha$ remains future directed and timelike, that is, 
\begin{equation}\label{Bhu0-pos}
\chihu^\sigma \Jcch^0_\sigma >0 \AND \ghu^{\alpha\beta}\Jcch^0_\alpha \Jcch^0_\beta < 0.
\end{equation}
We further note that the coefficients of the system \eqref{tconf-ford-C.1}-\eqref{tconf-ford-C.8} depend smoothly on the fields $\{\ghu_{\mu\nu},\vhu_{\mu}, \Jc^\mu_\nu\}$ as long as
\begin{equation} \label{coeff-bnds}
    \det(\ghu_{\mu\nu})<0, \quad \det(\Jc^\mu_\nu) > 0 \AND |\vhu|_{\gh}^2 <0.
\end{equation}
Taken together, these observations imply that the system
\eqref{tconf-ford-C.1}-\eqref{tconf-ford-C.8} is symmetric hyperbolic.

Since $k \in \Zbb_{>(n-1)/2+1}$ and the initial data
for \eqref{tconf-ford-C.1}-\eqref{tconf-ford-C.8} is
determined from \eqref{l-idata}-\eqref{hhu-idata} and the
intial data 
$\taur\in H^{k+2}(\Tbb^{n-1})$, $\taugr\in H^{k+1}(\Tbb^{n-1})$, $\gr_{\mu\nu}\in H^{k+1}(\Tbb^{n-1},\Sbb{n})$, and $\ggr_{\mu\nu}\in H^{k}(\Tbb^{n-1},\Sbb{n})$, which, by assumption, are chosen so that the inequalities \eqref{Bhu0-pos} and \eqref{coeff-bnds} are satisfied on the initial hypersurface $\Sigma_{t_0}$, we can appeal to standard local-in-time existence and uniqueness results for systems of symmetric hyperbolic equations, e.g. see \cite[Thm.~10.1]{BenzoniSerre:2007},
to deduce the existence of a time $t_1 < t_0$ such that
the IVP consisting of
the evolution equations \eqref{tconf-ford-C.1}-\eqref{tconf-ford-C.8} and 
initial conditions \eqref{l-idata}-\eqref{hhu-idata} admits a unique solution\footnote{While Theorem 10.1 from \cite{BenzoniSerre:2007} only guarantees the existence and uniqueness of a solution $W \in \bigcap_{j=0}^1 C^j\bigl((t_1,t_0], H^{k-j}(\Tbb^{n-1})\bigr)$, the regularity \eqref{sol-reg} of the solution $W$ can be established by first using the evolution equations \eqref{tconf-ford-C.1}-\eqref{tconf-ford-C.8} to express the time derivatives $\del{t}^jW$, $1\leq j\leq k$, in terms of $W$ and its spatial derivatives. Using these representations for $\del{t}^j W$, it is then not difficult to verify that \eqref{sol-reg} is a consequence of
$W\in \bigcap_{j=0}^1 C^j\bigl((t_1,t_0], H^{k-j}(\Tbb^{n-1})\bigr)$
and the calculus inequalities from Appendix \ref{calc}.}
\begin{equation} \label{sol-reg}
W \in \bigcap_{j=0}^k C^j\bigl((t_1,t_0], H^{k-j}(\Tbb^{n-1})\bigr),
\end{equation}
where $W$ is defined by \eqref{eq:Wdef},
that satisfies the bounds \eqref{Bhu0-pos}-\eqref{coeff-bnds} 
on $M_{t_1,t_0}$. By the continuation principle satisfied by symmetric hyperbolic systems, e.g, see \cite[Thm.~10.3]{BenzoniSerre:2007}, if the solution \eqref{sol-reg} satisfies
\begin{equation} \label{sol-cont-A}
    \sup_{t_1<t<t_0}\norm{W(t)}_{W^{1,\infty}(\Tbb^{n-1})} < \infty 
\end{equation}
and 
\begin{equation}
\max\biggl\{-\sup_{M_{t_1,t_0}}\!\chihu^\sigma \Jcch^0_\sigma, \sup_{M_{t_1,t_0}}\!\ghu^{\alpha\beta}\Jcch^0_\alpha \Jcch^0_\beta ,\sup_{M_{t_1,t_0}}\!\det(\ghu_{\mu\nu}), -\sup_{M_{t_1,t_0}}\!\det(\Jc^\mu_\nu), \sup_{M_{t_1,t_0}}\!|\vhu|_{\gh}^2\biggl\} < 0, \label{sol-cont-B}
\end{equation}
then there exists a time $t_1^* <t_1$ such that the solution
can be uniquely continued to the time interval $(t_1^*,t_0]$. Moreover, by the continuous dependence on initial conditions property that solutions of symmetric hyperbolic systems enjoy, e.g. see \cite[Theorem~III]{Kato:1975},  for each $t_*\in (t_1,t_0)$
there exists a $\delta>0$ such that if $\Wt_0 \in  H^{k}(\Tbb^{n-1})$ satisfies
$\norm{\Wt_0-W_0}_{H^k(\Tbb^{n-1})}<\delta$, then there exists a unique solution $\Wt \in \bigcap_{j=0}^k C^j\bigl((t_*,t_0], H^{k-j}(\Tbb^{n-1})\bigr)$
on $M_{t_*,t_0}$ of the evolution equations \eqref{tconf-ford-C.1}-\eqref{tconf-ford-C.8} that agrees with the initial data $\Wt_0$ on the initial hypersurface. The geometric version \eqref{eq:cont_crit1}-\eqref{eq:cont_crit2} of the continuation criterion \eqref{sol-cont-A}-\eqref{sol-cont-B} is derived below.

\bigskip

\noindent \underline{Propagation of the constraints \eqref{eq:Lag-constraints}:}
In order to guarantee that solutions to \eqref{tconf-ford-C.1}-\eqref{tconf-ford-C.8} define solutions to the Einstein-scalar field equations, we need to show that the constraints \eqref{eq:Lag-constraints} used to define the first order Lagrangian variables propagate. To this end,  
we observe,  by \eqref{chit-def} and \eqref{fu-def}, that $\chihu^\mu$ depends smoothly on $(\vhu_\alpha,\ghu_{\alpha\beta})$ and by \eqref{Jsc-def} that 
\begin{equation} \label{lag-cprop-0}
    \Jsc^\mu_\lambda = \frac{\del{}\chihu^\mu}{\del{}\ghu_{\alpha\beta}}\hhu_{\lambda\alpha\beta}+
    \frac{\del{}\chihu^\mu}{\del{}\vhu_\alpha}\whu_{\lambda\alpha}.
\end{equation}
With the help of this expression, we see from \eqref{tconf-ford-C.4}-\eqref{tconf-ford-C.5} and 
\eqref{tconf-ford-C.7}-\eqref{tconf-ford-C.8} that
\begin{equation*}
    \del{0}(\del{0}l^\mu - \Jc^\mu_0) = 
    \frac{\del{}\chihu^\mu}{\del{}\ghu_{\alpha\beta}}\bigl(\del{0}\ghu_{\alpha\beta}-\Jc^\lambda_0\hhu_{\lambda\alpha\beta}\bigr)+
    \frac{\del{}\chihu^\mu}{\del{}\vhu_\alpha}\bigl(\del{0}\vhu_\alpha-\Jc^\lambda_0\whu_{\lambda\alpha}\bigr) = 0,
\end{equation*}
which, in turn, implies 
\begin{equation} \label{lag-cprop-1}
     \Jc^\mu_0 = \del{0}l^\mu =\chihu^\mu \quad \text{in $M_{t_1,t_0}$}
\end{equation}
since $(\del{0}l^\mu - \Jc^\mu_0)|_{\Sigma_{t_0}}=0$ by assumption.

Now, applying the projection \eqref{pi-def} to $\Bhu^{\lambda\beta\alpha}$ yields
\begin{equation*}
    \pi^\gamma_\lambda \Bhu^{\lambda\beta\alpha}\oset{\eqref{Bt-def}}{=} -\pi^{\gamma\alpha}\chihu^\beta+\pi^{\gamma\beta}\chihu^\alpha  = - \ghu^{\gamma\alpha}\chihu^\beta+\ghu^{\gamma\beta}\chihu^{\alpha}.
\end{equation*}
Then applying the projection \eqref{pi-def} to \eqref{tconf-ford-C.1}, we find, with the help of the above identity and
\eqref{lag-cprop-1}, that 
\begin{equation*}
    (- \ghu^{\gamma\alpha}\chihu^\beta+\ghu^{\gamma\beta}\chihu^{\alpha})\Jcch_\alpha^\sigma\del{\sigma}\hhu_{\beta\mu\nu}=0
\end{equation*}
from which
\begin{equation}\label{lag-cprop-2}
    \del{0}\hhu_{\alpha\mu\nu}= \chihu^\beta \Jcch^\sigma_\alpha \del{\sigma}\hhu_{\beta\mu\nu}
\end{equation}
follows since \eqref{lag-cprop-1} implies that $\chihu^\beta \Jcch_\beta^\sigma = \delta^\sigma_0$. By similar arguments, we find also that 
\begin{equation}\label{lag-cprop-3}
    \del{0}\whu_{\alpha\mu}= \chihu^\beta \Jcch^\sigma_\alpha \del{\sigma}\whu_{\beta\mu} \AND
    \del{0}\zhu_{\alpha}= \chihu^\beta \Jcch^\sigma_\alpha \del{\sigma}\zhu_{\beta}
\end{equation}
are a consequence of the evolution equations \eqref{tconf-ford-C.2}-\eqref{tconf-ford-C.3}.

By \eqref{tconf-ford-C.7} and \eqref{lag-cprop-0}-\eqref{lag-cprop-1}, we observe that
\begin{align}
    2\del{[\alpha} \Jc^\beta_{0]} = \del{\alpha}\chihu^\beta-\Jcch^\lambda_\alpha \Jsc^\beta_\lambda =
    \frac{\del{}\chihu^\beta}{\del{}\ghu_{\sigma\gamma}}(\del{\alpha}\ghu_{\sigma\gamma}-\Jc^\lambda_\alpha\hhu_{\lambda\sigma\gamma})+
    \frac{\del{}\chihu^\beta}{\del{}\vhu_\sigma}(\del{\alpha}\vhu_{\sigma}-\Jc^\lambda_\alpha\whu_{\lambda\sigma}).
    \label{lag-cprop-7}
\end{align}
We also observe, with the help of \eqref{tconf-ford-C.4}, \eqref{lag-cprop-1} and \eqref{lag-cprop-2}, that
\begin{align*}
    \del{0}(\del{\alpha}\ghu_{\mu\nu}-\Jc^\beta_\alpha \hhu_{\beta\mu\nu}) &= \del{\alpha}\del{0}\ghu_{\mu\nu}-\del{0}\Jc^\beta_\alpha \hhu_{\beta\mu\nu}-\Jc^\beta_\alpha \del{0}\hhu_{\beta\mu\nu} \\
    &= \del{\alpha}(\Jc_0^\beta\hhu_{\beta\mu\nu})-\del{0}\Jc^\beta_\alpha \hhu_{\beta\mu\nu}-\Jc^\sigma_\alpha \Jc^\beta_0 \Jcch_\sigma^\tau \del{\tau}\hhu_{\beta\mu\nu}\\
    &= 2\del{[\alpha} \Jc^\beta_{0]} \hhu_{\beta\mu\nu}.
\end{align*}
Combining this with \eqref{lag-cprop-7} yields
\begin{equation}\label{lag-cprop-4}
  \del{0}(\del{\alpha}\ghu_{\mu\nu}-\Jc^\beta_\alpha \hhu_{\beta\mu\nu})  =\biggl(\frac{\del{}\chihu^\beta}{\del{}\ghu_{\sigma\gamma}}(\del{\alpha}\ghu_{\sigma\gamma}-\Jc^\lambda_\alpha\hhu_{\lambda\sigma\gamma})+
    \frac{\del{}\chihu^\beta}{\del{}\vhu_\sigma}(\del{\alpha}\vhu_{\sigma}-\Jc^\lambda_\alpha\whu_{\lambda\sigma})\biggr)  \hhu_{\beta\mu\nu} .  
\end{equation}
Moreover, employing similar arguments, we find that  
\begin{align}
     \del{0}(\del{\alpha}\vhu_{\mu}-\Jc^\beta_\alpha \whu_{\beta\mu}) = \biggl(\frac{\del{}\chihu^\beta}{\del{}\ghu_{\sigma\gamma}}(\del{\alpha}\ghu_{\sigma\gamma}-\Jc^\lambda_\alpha\hhu_{\lambda\sigma\gamma})+
    \frac{\del{}\chihu^\beta}{\del{}\vhu_\sigma}(\del{\alpha}\vhu_{\sigma}-\Jc^\lambda_\alpha\whu_{\lambda\sigma})\biggr) \whu_{\beta\mu}\label{lag-cprop-5}
     \intertext{and}
      \del{0}(\del{\alpha}\tau-\Jc^\beta_\alpha \zhu_{\beta}) = \biggl(\frac{\del{}\chihu^\beta}{\del{}\ghu_{\sigma\gamma}}(\del{\alpha}\ghu_{\sigma\gamma}-\Jc^\lambda_\alpha\hhu_{\lambda\sigma\gamma})+
    \frac{\del{}\chihu^\beta}{\del{}\vhu_\sigma}(\del{\alpha}\vhu_{\sigma}-\Jc^\lambda_\alpha\whu_{\lambda\sigma})\biggr) \zhu_{\beta}.\label{lag-cprop-6}
\end{align}
But $\del{\alpha}\ghu_{\mu\nu}-\Jc^\beta_\alpha \hhu_{\beta\mu\nu}$, $\del{\alpha}\vhu_{\mu}-\Jc^\beta_\alpha \whu_{\beta\mu}$ and $\del{\alpha}\tau-\Jc^\beta_\alpha \zhu_{\beta}$ all vanish on the initial hypersurface $\Sigma_{t_0}$ by the choice of initial data, and so, we deduce from the uniqueness of solutions to the system of transport equations
\eqref{lag-cprop-4}-\eqref{lag-cprop-6} that
\begin{equation} \label{lag-cprop-8}
   \del{\alpha}\ghu_{\mu\nu}=\Jc^\beta_\alpha \hhu_{\beta\mu\nu},\quad  \del{\alpha}\vhu_{\mu}=\Jc^\beta_\alpha \whu_{\beta\mu} \AND \del{\alpha}\tau=\Jc^\beta_\alpha \zhu_{\beta} 
\end{equation}
in $M_{t_1,t_0}$. This, in turn, implies via \eqref{lag-cprop-1} and \eqref{lag-cprop-7} that \begin{equation*}
       \del{0}(\Jc_\nu^\mu-\del{\nu}\ell^\mu) = \del{\nu}\Jc^\mu_0 - \del{\nu}\del{0}l^\mu = \del{\nu}\Jc^\mu_0-\del{\nu} \Jc^\mu_0 = 0
\end{equation*}
in $M_{t_1,t_0}$. But $\Jc^\mu_\nu-\del{\nu} l^\mu$ vanishes on $\Sigma_{t_0}$, and so, the equality
\begin{equation} \label{lag-cprop-9}
    \Jc^\mu_\nu = \del{\nu} l^\mu
\end{equation}
must also hold in $M_{t_1,t_0}$.

Using \eqref{lag-cprop-8} and \eqref{lag-cprop-9}, a straightforward calculation shows that the evolution equations
\eqref{tconf-ford-C.2} and \eqref{tconf-ford-C.3} imply
that $\tau$ and $\vhu_\mu$ satisfy the wave equations 
\begin{align}
\ghu^{\alpha\beta}\Jcch_\alpha^\sigma\del{\sigma}(\Jcch_\beta^\gamma\del{\gamma}\tau) &=0,  \label{lag-tau-wave}\\
\ghu^{\alpha\beta}\Jcch_\alpha^\sigma\del{\sigma}(\Jcch_\beta^\gamma\del{\gamma}\vhu_\mu) &=\ghu^{\alpha \sigma}\ghu^{\beta\delta}\Jcch_\mu^\lambda \del{\lambda}\ghu_{\sigma\delta}\Jcch_\alpha^\omega \del{\omega}\vhu_{\beta}, \label{lag-v-wave}  
\end{align}
in $M_{t_1,t_0}$. Applying $\Jcch^\omega_\mu \del{\omega}$
to \eqref{lag-tau-wave}, we find with the help of
the commutation relationship $[\Jcch^\omega_\mu \del{\omega},\Jcch^\lambda_\nu \del{\lambda}]=0$, which is a consequence of \eqref{lag-cprop-9}, that 
$\Jch^\omega_\mu \del{\omega}\tau$ satisfies
\begin{equation*}
   \ghu^{\alpha\beta}\Jcch_\alpha^\sigma\del{\sigma}(\Jcch_\beta^\gamma\del{\gamma}(\Jcch^\omega_\mu \del{\omega}\tau)) = \ghu^{\alpha \sigma}\ghu^{\beta\delta}\Jcch_\mu^\lambda \del{\lambda}\ghu_{\sigma\delta}\Jcch_\alpha^\omega \del{\omega}(\Jcch_\beta^\gamma \del{\gamma}\tau).
\end{equation*}
Subtracting this from \eqref{lag-v-wave} shows that
\begin{align*}
   \ghu^{\alpha\beta}\Jcch_\alpha^\sigma\del{\sigma}(\Jcch_\beta^\gamma\del{\gamma}(\vhu_\mu-\Jcch^\omega_\mu \del{\omega}\tau)) =& \ghu^{\alpha \sigma}\ghu^{\beta\delta}\Jcch_\mu^\lambda \del{\lambda}\ghu_{\sigma\delta}\Jcch_\alpha^\omega \del{\omega}(\vhu_\beta-\Jcch_\beta^\gamma \del{\gamma}\tau)
\end{align*}
in $M_{t_1,t_0}$. By our choice of initial data,
$(\vhu_\mu-\Jcch^\omega_\mu \del{\omega}\tau)\bigl|_{\Sigma_{t_0}} =  \del{0}(\vhu_\mu-\Jcch^\omega_\mu \del{\omega}\tau)\bigl|_{\Sigma_{t_0}} = 0$,
and so, we deduce
\begin{equation} \label{lag-cprop-10}
    \vhu_\mu=\Jcch^\omega_\mu \del{\omega}\tau\quad \text{in $M_{t_1,t_0}$}
\end{equation}
from the uniqueness of solutions to linear wave equations.

\bigskip

\noindent \underline{Existence - reduced conformal Einstein-scalar field equations:}  Now  \eqref{sol-reg}, \eqref{sol-cont-B}, \eqref{lag-cprop-9} and Sobolev's inequality (Theorem \ref{Sobolev}, Appendix \ref{calc}) imply that 
the Lagrangian map $\xh^\mu =l^\mu(x)$ defines a local $C^2$-diffeomorphism\footnote{For $t$ close enough to $t_0$, $l=(l^\mu)$ will define a diffeomorphism from $(t,t_0)\times \Tbb^{n-1}$ onto its image, but this property may not continue to hold as the difference between $t$ and $t_0$ grows.}. Moreover, from  \eqref{sol-reg}, 
\eqref{lag-cprop-8} and Sobolev's inequality, we note that $\ghu_{\mu\nu}$ and $\tau$ are $C^2$ on $(t_1,t_0)\times \Tbb^{n-1}$, and that $\hhu_{\beta\mu\nu}$ and $\whu_{\mu\nu}$
are $C^1$ on $(t_1,t_0)\times \Tbb^{n-1}$. Fixing a point $p \in (t_1,t_0)\times \Tbb^{n-1}$ and letting $\Uc$ be an open neighbourhood of $p$ in spacetime on which $l=(l^\mu)$ defines a diffeomorphism from $\Uc$ to its image, we set
\begin{equation*}
\gh_{\mu\nu} = \ghu_{\mu\nu}\circ l^{-1}\AND \tauh = \tau\circ l^{-1}.
\end{equation*}
By \eqref{fu-def}, \eqref{pd-trans},  \eqref{lag-cprop-8}, \eqref{lag-cprop-9} and \eqref{lag-cprop-10}, the relations \eqref{tvars} and \eqref{eq:Lag-constraints} hold on $l(\Uc)$ and 
\begin{equation*}
\delh{\beta}\gh_{\mu\nu} = \hhu_{\beta\mu\nu}\circ l^{-1}
\AND
\delh{\mu}\delh{\nu}\tauh = \whu_{\mu\nu}\circ l^{-1}.
\end{equation*}
From this, we deduce $\gh_{\mu\nu}$ and $\tauh$ are $C^2$ and $C^3$ on $l(\Uc)$, respectively. Moreover, from the computations carried out in Section \ref{fof}, it follows that $\{\gh_{\mu\nu},\tauh\}$ satisfies the reduced conformal Einstein-scalar field equations 
\eqref{tconfESF-A.1}-\eqref{tconfESF-A.2} on $l(\Uc)$. Using the Lagrangian map $l$ to pull-back $\{\gh_{\mu\nu},\tauh\}$, we conclude
that
\begin{equation} \label{lag-vars}
\{g_{\mu\nu}:=\del{\mu}l^\alpha\ghu_{\alpha\beta}\del{\nu}l^\beta,\tau\}
\end{equation}
defines a solution of the reduced conformal Einstein-scalar field equations \eqref{lag-redeqns}
on $\Uc$. Noting that \eqref{lag-vars} is independent of the particular local inverse of $l$, it
defines a solution of the reduced conformal Einstein-scalar field equations \eqref{lag-redeqns} on $M_{t_1,t_0}$, which, by construction, satisfies the initial conditions
\eqref{tauh-idata} and \eqref{dt-tau-idata}-\eqref{dt-g-idata} on the initial hypersurface $\Sigma_{t_0}$. It is worth noting that since $l$ will not  be $C^3$ on $(t_1,t_0)\times \Tbb^{n-1}$ unless we assume that $k>(n-1)/2+2$, the solution \eqref{lag-vars} will not, in general, be a classical solution of the reduced conformal Einstein-scalar field solution, that is, a $C^2$ solution. However, due to \eqref{sol-reg}, it is a consequence of the product inequalities (Theorem \ref{Product}.(ii), Appendix \ref{calc}) and \eqref{lag-cprop-8} that
\eqref{lag-vars} defines a  strong solution of regularity
\begin{equation*} 
g_{\mu\nu} \in \bigcap_{j=0}^k C^j\bigl((t_1,t_0], H^{k-j}(\Tbb^{n-1})\bigr) 
\AND \tau \in \bigcap_{j=0}^{k+1} C^j\bigl((t_1,t_0], H^{k+1-j}(\Tbb^{n-1})\bigr) .
\end{equation*}

\bigskip

\noindent \underline{Continuation of solutions:} 
We now turn to proving a more useful geometric formulation of
the continuation criteria \eqref{sol-cont-A}-\eqref{sol-cont-B}.
We start by noting from \eqref{chit-def}, \eqref{tconf-ford-C.8}, \eqref{lag-cprop-9} and \eqref{lag-cprop-10}-\eqref{lag-vars} that the vector field
\begin{align}\label{lag-chi-A} 
\chi^\mu := \Jcch^\mu_\nu\chihu^\nu= \frac{\nabla^\mu\tau}{|\nabla \tau|_g^2}
\end{align}
satisfies
\begin{equation}\label{lag-chi-B}
    \chi^\mu = \delta^\mu_0,
\end{equation}
and that
\begin{equation} \label{sol-cont-C}
    \max\bigl\{|\vhu|^2_{\ghu}, \ghu^{\alpha\beta}\Jcch^0_\alpha \Jcch^0_\beta\bigr\} < 0 \quad \Longleftrightarrow \quad |\nabla\tau|_g^2 < 0.
\end{equation}
An important consequence of \eqref{lag-chi-A}-\eqref{lag-chi-B}
is that $\del{0}\tau = 1$, and hence, after integrating in time, that
\begin{equation} \label{lag-tau}
    \tau = t-t_0 + \taur \quad \text{in $M_{t_1,t_0}$.}
\end{equation}

Next, letting $\Dc_\mu$ denote the Levi-Civita connection of the flat metric
\begin{equation} \label{lag-gc}
    \gc_{\mu\nu} = \del{\mu}l^\alpha\eta_{\alpha\beta}\del{\nu}l^\beta,
\end{equation}
we note by \eqref{pd-trans}, \eqref{tconf-ford-C.8}, \eqref{lag-cprop-9} and   \eqref{lag-chi-A},
that
\begin{align*}
    \Dc_\nu \chi^\mu =  \Jc^\alpha_\nu \underline{\delh{\alpha}\chih^\beta}\Jcch^\mu_\beta 
    =  \del{\nu}\chihu^\beta\Jcch_\beta^\mu
    = \del{\nu} \del{0}l^\beta\Jcch_\beta^\mu
    =  \del{0} \Jc^\beta_\nu \Jcch_\beta^\mu, 
\end{align*}
and hence, after rearranging, that
\begin{equation} \label{sol-cont-F}
  \del{0} \Jc^\mu_\nu  =  \Dc_\nu \chi^\lambda \Jc_\lambda^\mu.
\end{equation}
Integrating this in time, we deduce that
\begin{equation} \label{sol-cont-G}
      \sup_{t_1<t<t_0}\norm{\Dc_\nu \chi^\lambda(t)}_{W^{1,\infty}(\Tbb^{n-1})}<\infty\quad \Longrightarrow \quad   \sup_{t_1<t<t_0}\norm{\Jc^\mu_\nu(t)}_{W^{1,\infty}(\Tbb^{n-1})} < \infty.
\end{equation}
Then differentiating \eqref{sol-cont-F} to get
$\del{0} \del{\alpha}\Jc^\mu_\nu  =  (\del{\alpha}\Dc_\nu \chi^\lambda) \Jc_\lambda^\mu +  \Dc_\nu \chi^\lambda \del{\alpha}\Jc_\lambda^\mu$,
we find after integrating in time and employing \eqref{sol-cont-G} that
\begin{align}
      \sup_{t_1<t<t_0}\bigl(\norm{\Dc_\nu \chi^\lambda(t)}_{W^{2,\infty}(\Tbb^{n-1})}+&\norm{\del{t}(\Dc_\nu \chi^\lambda)(t)}_{W^{1,\infty}(\Tbb^{n-1})}\bigr)<\infty \notag \\
      & \Longrightarrow \quad   \sup_{t_1<t<t_0}\bigl(\norm{\Jc^\mu_\nu(t)}_{W^{1,\infty}(\Tbb^{n-1})} + \norm{\del{\alpha}\Jc^\mu_\nu(t)}_{W^{1,\infty}(\Tbb^{n-1})} \bigr)  < \infty.  \label{sol-cont-H}
\end{align}
Furthermore, we conclude from integrating $\del{0}l^\mu=\Jc^\mu_0$ in time that
\begin{equation} \label{sol-cont-I}
      \sup_{t_1<t<t_0}\norm{\Dc_\nu \chi^\lambda(t)}_{W^{1,\infty}(\Tbb^{n-1})}<\infty\quad \Longrightarrow \quad   \sup_{t_1<t<t_0}\norm{l^\mu(t)}_{W^{1,\infty}(\Tbb^{n-1})} < \infty.
\end{equation}

Next, from \eqref{sol-cont-F}  and Lemma \ref{diff-matrix-lem}, we observe that the determinant of $\Jc^\mu_\nu$
can be expressed as
\begin{equation*}
\det(\Jc^\mu_\nu(t,x)) = e^{-\int_{t}^{t_0} (\Dc_\nu \chi^\nu)(s,x)\,ds} \det(\Jc^\mu_\nu(0,x)), \quad x\in \Tbb^{n-1},
\end{equation*}
which, since $\det(\Jc^\mu_\nu)$ is strictly positive on $\Sigma_{t_0}$, allows us to conclude that
\begin{equation} \label{sol-cont-J}
\sup_{t_1<t<t_0}\norm{\Dc_\nu \chi^\lambda(t)}_{W^{1,\infty}(\Tbb^{n-1})}<\infty \quad \Longrightarrow \quad    \inf_{M_{t_1,t_0}}\det(\Jc^\mu_\nu) > 0.  
\end{equation}
With the help of \eqref{sol-cont-C}, \eqref{sol-cont-H}, \eqref{sol-cont-I} and
\eqref{sol-cont-J}, it is then not difficult to verify from \eqref{lag-cprop-8}, \eqref{lag-cprop-10}, \eqref{lag-tau}, \eqref{lag-gc}  and 
$\ghu_{\mu\nu}= \Jcch_\mu^\alpha g_{\alpha\beta}\Jcch_\nu^\beta  $, see \eqref{lag-cprop-9} and \eqref{lag-vars}, that the continuation criteria \eqref{sol-cont-A}-\eqref{sol-cont-B} 
will be satisfied provided that
\eqref{eq:cont_crit1}-\eqref{eq:cont_crit2} are satisfied.

\bigskip
\noindent \underline{Existence - conformal Einstein-scalar field equations:} If the initial data also satisfies the gravitational and gauge constraints, c.f.\ \eqref{grav-constr}-\eqref{wave-constr}, it is then a consequence of
Proposition \ref{prop-ES-constr} that the pair $\{\tau, g_{\mu\nu}\}$, which solves the reduced conformal Einstein-scalar field equations, will also solve the conformal Einstein-scalar field equations \eqref{lag-confeqns} and satisfy the wave gauge constraint \eqref{lag-wave-gauge}.
\end{proof}

\begin{rem} \label{rem:reg-loss}
The lower regularity of $g_{\mu\nu}$, see \eqref{reg-loss}, compared to the (non-geometric) Lagrangian variables \eqref{eq:Wdef} is expected whenever Lagrangian coordinates, which are adapted to a vector field that is part of the evolution system, are employed. In general, the Lagrangian representation of geometric fields lose one order of differentiablity compared to fields whose components are treated as scalars under transformation to Lagrangian coordinates, e.g. $\ghu_{\mu\nu}$. With that said, in certain situations, it can be shown that this loss of regularity does not occur; for example, it is shown in  \cite{Oliynyk:PRD_2012} that there is no regularity loss for solutions of the Einstein-Euler equations for a particular choice of Lagrangian coordinates. We  are not certain if the regularity loss encountered here can be avoided in a similar fashion, but there is hope that it can since solutions of the Einstein-scalar field system can be interpreted as irrotational solutions of the Einstein-Euler equations with a stiff equation of state as long as the gradient of the scalar field stays timelike, which is the case in our setting.

We further note that there is a loss of one order of differentiability in the solutions of $\tau$ compared to the differentiability of its initial data. This is due to our use of
the conformal Einstein-scalar field system. Since this loss of regularity does not occur for the Einstein-scalar field system, it is likely that this loss of regularity can be also be avoided.
\end{rem}

\subsection{Temporal synchronization of the singularity\label{temp-synch}}
In general, the temporal synchronization of a big bang singularity can be achieved through the introduction of a time coordinate whose level set at a particular time, say $0$, coincides with the spacelike singular hypersurface that defines the big bang singularity. From the structure of the reduced conformal Einstein-scalar field equations \eqref{lag-redeqns}, it is clear that these equations become formally singular when $\tau$ vanishes, and because of this, it is reasonable to guess that the location of the singular hypersurface is determined by the vanishing of the scalar field $\tau$. This motivates us to use the scalar field $\tau$ as a time coordinate to synchronize the singularity. However, due to \eqref{tau-synch}, this would force us use to assume that the initial data $\taur$ for the scalar field $\tau$ is constant on the initial hypersurface $\Sigma_{t_0}$, and in fact, would be given by $\taur=t_0$ there.

Now, in general, we cannot assume that the scalar field is constant on the initial hypersurface if we want our results to apply to an open set of geometric initial data. To remedy this, we proceed as follows: if $\tau$ is not constant on the initial hypersurface $\Sigma_{t_0}$, but is close to constant, say $\tau = t_0 + \rhor$ in $\Sigma_{t_0}$ with $\rhor$ a sufficiently small function, then we evolve $\tau$ for a short amount of time to obtain a solution $\{g_{\mu\nu},\tau\}$ of the conformal Einstein-scalar field equations on $M_{t_1,t_0}$ for some $t_1<t_0$ with $t_1$ close to $t_0$. We then find a level surface of $\tau^{-1}(t^*_0)$
for some $t^*_0\in (t_1,t_0)$ that satisfies $\tau^{-1}(t^*_0) \subset (t_1,t_0)\times \Tbb^{n-1}$ and $\tau^{-1}(t^*_0)\cong \Tbb^{n-1}$. 
By replacing $\Sigma_{t_0}$ with $\tau^{-1}(t^*_0)$, we then obtain a hypersurface $\tau^{-1}(t^*_0)\cong \Tbb^{n-1}$ on which
$\tau$ is constant as desired. This construction is made precise in the following proposition.

\begin{prop} \label{synch-prop}
Suppose $k>(n-1)/2+2$, $t_0>0$, the initial data $\taur=t_0+\rhor$, $\rhor\in H^{k+2}(\Tbb^{n-1})$, $\taugr\in H^{k+1}(\Tbb^{n-1})$, $\gr_{\mu\nu}\in H^{k+1}(\Tbb^{n-1},\Sbb{n})$ and $\ggr_{\mu\nu}\in H^{k}(\Tbb^{n-1},\Sbb{n})$ is chosen so that 
the inequalities $\det(\gr_{\mu\nu})<0$ and $|\vr|_{\gr}^2 <0$ hold
and the constraint equations \eqref{grav-constr}-\eqref{wave-constr} are satisfied, and
let $\{\gtt_{\Lambda\Gamma},\Ktt_{\Lambda\Gamma},\taur,\taugr\}$
denote the geometric initial data on $\Sigma_{t_0}=\{t_0\}\times \Tbb^{n-1}$ that is determined from the initial data $\{\gr_{\mu\nu},\ggr_{\mu\nu},\taur=t_0+\rhor,\taugr\}$ via \eqref{gtt-def1}.
Then for any $\tilde{\delta}>0$, there exists a $\delta>0$ and times $t^*_1<t^*_0<t_0$ such that if $\norm{\rhor}_{H^{k+2}(\Tbb^{n-1})}<\delta$, then:
\begin{enumerate}[(a)]
    \item The solution
$W$ from Proposition \ref{lag-exist-prop} to the IVP consisting of the evolution
equations \eqref{tconf-ford-C.1}-\eqref{tconf-ford-C.8} and
the initial conditions \eqref{l-idata}-\eqref{hhu-idata} exists on $M_{t_1^*,t_0}$.
\item The pair $\{g_{\mu\nu}=\del{\mu}l^\alpha\ghu_{\alpha\beta}\del{\nu}l^\beta,\tau\}$
determined from the solution $W$ via \eqref{eq:Wdef} defines a solution to the conformal Einstein-scalar field equations \eqref{lag-confeqns} of regularity \eqref{reg-loss}.
\item The map
\begin{equation*}
    \Psi \: : \: (t_1^*,t_0)\times \Tbb^{n-1}  \longrightarrow  \Rbb \times \Tbb^{n-1}\: :\: (t,x)\longmapsto (\ttl,\xt) = (t+\rhor(x),x)
\end{equation*}
defines a $H^{k+2}$-diffeomorphism onto its image and the push-forward
\begin{equation*} 
(\gt_{\mu\nu},\taut):=((\Psi_*g)_{\mu\nu},\Psi_*\tau)
\in H^k(\Mt)\times
H^{k+1}(\Mt)
\end{equation*}
of the solution $\{g_{\mu\nu},\tau\}$ by this map 
determines geometric initial data 
$\{\gttt_{\Lambda\Sigma},\Kttt_{\Lambda\Sigma},\mathring{\taut},\grave{\taut}\}$ on the hypersurface $\Sigma_{t^*_0}=\{t^*_0\}\times\Tbb^{n-1}$ satisfying $\mathring{\taut}=t^*_0$,
\begin{equation*}
(\gttt_{\Lambda\Sigma},
\Kttt_{\Lambda\Sigma},\grave{\taut}) \in  H^{k}(\Tbb^{n-1})\times H^{k-1}(\Tbb^{n-1})\times H^{k}(\Tbb^{n-1})
\end{equation*}
and
\begin{gather*}
\norm{\gttt_{\Lambda\Sigma}-\gtt_{\Lambda\Sigma}}_{H^{k}(\Tbb^{n-1})}+\norm{\Kttt_{\Lambda\Sigma}-\Ktt_{\Lambda\Sigma}}_{H^{k-1}(\Tbb^{n-1})}+\norm{t^*_0-\taur}_{H^{k+1}(\Tbb^{n-1})}+\norm{\grave{\taut}-\taugr}_{H^{k}(\Tbb^{n-1})} <\tilde{\delta}.    
\end{gather*}
\end{enumerate}
\end{prop}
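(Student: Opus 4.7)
By part~(a) of Proposition~\ref{lag-exist-prop} applied to the ``reference'' data obtained by replacing $\rhor$ with $0$ (which is admissible because $\taur=t_0$ is then constant and because the open conditions $\det(\gr_{\mu\nu})<0$ and $|\vr|_{\gr}^2<0$ survive small perturbations of $\rhor$ in $H^{k+2}$), there is a time $\tc_1<t_0$ on which the reference solution exists. Fix any $t_1^*\in(\tc_1,t_0)$; the continuous-dependence statement of part~(a) then supplies a $\delta_1>0$ such that, whenever $\|\rhor\|_{H^{k+2}(\Tbb^{n-1})}<\delta_1$, the perturbed solution $W$ of Proposition~\ref{lag-exist-prop} also extends to $M_{t_1^*,t_0}$. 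This yields~(a), and claim~(b) is immediate from part~(e) of Proposition~\ref{lag-exist-prop}, since both constraints hold on $\Sigma_{t_0}$ by hypothesis.

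For (c), the Jacobian matrix of $\Psi$ is upper-triangular with ones on the diagonal, so $\det D\Psi = 1$; together with the manifest injectivity of $\Psi$ in the spatial variable, this makes $\Psi$ a diffeomorphism onto its image. Choose $t_0^*\in(t_1^*,t_0)$ arbitrarily and further shrink $\delta\in(0,\delta_1]$ so that $\|\rhor\|_{L^\infty}<\min\{t_0-t_0^*,\,t_0^*-t_1^*\}$, which is allowed by the embedding $H^{k+2}\hookrightarrow L^\infty$. Then $\Psi^{-1}(t_0^*,\xt)=(t_0^*-\rhor(\xt),\xt)\in M_{t_1^*,t_0}$ for every $\xt\in\Tbb^{n-1}$, so $\Sigma_{t_0^*}\subset\Psi(M_{t_1^*,t_0})$. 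The crucial observation is that equation~\eqref{tau-synch} of Proposition~\ref{lag-exist-prop} reads $\tau(t,x)=t-t_0+\taur(x)=t+\rhor(x)$ in the present setting, so
\[
\taut(\ttl,\xt)=\tau\bigl(\Psi^{-1}(\ttl,\xt)\bigr)=\bigl(\ttl-\rhor(\xt)\bigr)+\rhor(\xt)=\ttl
\]
identically on $\Psi(M_{t_1^*,t_0})$. In particular $\mathring{\taut}=t_0^*$ is constant on $\Sigma_{t_0^*}$, as required.

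It remains to bound $\gttt_{\Lambda\Sigma}-\gtt_{\Lambda\Sigma}$, $\Kttt_{\Lambda\Sigma}-\Ktt_{\Lambda\Sigma}$, $t_0^*-\taur$, and $\grave{\taut}-\taugr$ in the stated Sobolev norms. Writing the push-forward as $\gt_{\mu\nu}(\ttl,\xt)=(D\Psi^{-1})^\alpha_\mu(D\Psi^{-1})^\beta_\nu\, g_{\alpha\beta}(\Psi^{-1}(\ttl,\xt))$ and then applying \eqref{gtt-def1}--\eqref{gtt-def3} to $\gt$ separates the discrepancy into two independent sources: (i) evaluating $g_{\mu\nu}$ and $\del{t}g_{\mu\nu}$ at $t_0^*-\rhor(\xt)$ rather than at $t_0$, and (ii) the non-identity factors of $D\Psi^{-1}$, which inject terms proportional to $\del{\Lambda}\rhor$ and $\del{\Lambda}\del{\Omega}\rhor$ (the latter entering $\Kttt$ through $\del{t}\gt$). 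Source~(i) is $O(|t_0-t_0^*|+\|\rhor\|_{L^\infty})$ by continuity in time of the solution, using the regularity \eqref{eq:Wreg} upgraded from $\ghu$ to $g_{\mu\nu}$ via the relations \eqref{eq:Lag-constraints}, which promote $g_{\mu\nu}\in C^0([t_1^*,t_0],H^{k+1})$ and $\del{t}g_{\mu\nu}\in C^0([t_1^*,t_0],H^k)$. Source~(ii) is $O(\|\rhor\|_{H^{k+2}})$ by standard Moser/Sobolev product estimates, the $H^{k+2}$ regularity on $\rhor$ supplying precisely the one extra derivative required for $\Kttt$. Together with the trivial estimate $\|t_0^*-\taur\|_{H^{k+2}}\le|t_0-t_0^*|+\|\rhor\|_{H^{k+2}}$, these bounds can be pushed below $\tilde\delta$ by first choosing $t_0^*$ close to $t_0$ and then shrinking $\delta$; the $\grave{\taut}$ versus $\taugr$ bound follows by the same scheme applied to the time-derivative pieces. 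The main technical point is calibrating the two error sources against each other, but the free parameter $t_0^*$ affords sufficient flexibility, and beyond that the estimates are routine.
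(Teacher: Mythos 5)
Your proposal is correct and follows essentially the same route as the paper: apply Proposition~\ref{lag-exist-prop}.(a) with the reference data $\rhor=0$, use continuous dependence to fix a common existence interval, invoke $\tau=t+\rhor$ from part~(d) to see $\Psi_*\tau=\ttl$, and then bound the geometric data on $\Sigma_{t_0^*}$ by shrinking $t_0-t_0^*$ and $\delta$. The only cosmetic note is that your condition $\|\rhor\|_{L^\infty}<\min\{t_0-t_0^*,\,t_0^*-t_1^*\}$ is the correct form (the paper's \eqref{tau-synch-prop-B} writes ``$\max$'' where ``$\min$'' is clearly intended), and your explicit Jacobian computation and two-source decomposition of the final estimate simply spell out steps the paper leaves to the reader.
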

\begin{proof}
 Given $t_0>0$, let us first set $\taur=t_0$ and suppose the rest of the initial data $\taugr\in H^{k+1}(\Tbb^{n-1})$, $\gr_{\mu\nu}\in H^{k+1}(\Tbb^{n-1},\Sbb{n})$ and $\ggr_{\mu\nu}\in H^{k}(\Tbb^{n-1},\Sbb{n})$ are chosen so that 
the inequalities $\det(\gr_{\mu\nu})<0$ and $|\vr|_{\gr}^2 <0$
are satisfied. Then since $k>(n-1)/2+2$, it follows from Proposition~\ref{lag-exist-prop} that there exists a $t_1<t_0$ and a unique solution, see \eqref{eq:Wdef},
$\Wt \in \bigcap_{j=0}^{k}C^j\bigl((t_1,t_0], H^{k-j}(\Tbb^{n-1})\bigr)$
of the evolution
equations \eqref{tconf-ford-C.1}-\eqref{tconf-ford-C.8}
on $M_{t_1,t_0}$
for which $\Wt_0 = \Wt|_{\Sigma_{t_0}}$
satisfies the initial conditions \eqref{l-idata}-\eqref{hhu-idata}.
Next, we keep the initial data the same as above
except for the change $\taur=t_0 \longmapsto \taur=t_0 + \rhor$,
where $\rhor\in H^{k+2}(\Tbb^{n-1})$, and let
$W_0$ denote the corresponding initial data for the system \eqref{tconf-ford-C.1}-\eqref{tconf-ford-C.8} that is determined by
\eqref{eq:Wdef} and \eqref{l-idata}-\eqref{hhu-idata}.
Then it is not difficult to verify, with the help of the calculus inequalities from Appendix \ref{calc}, that there exists a
$\delta>0$ such that if $\norm{\rhor}_{H^{k+2}(\Tbb^{n-1})}<\delta$, then
$\norm{W_0-\Wt_0}_{H^k(\Tbb^{n-1})} = C(\delta)\delta$.
By Proposition~\ref{lag-exist-prop}.(a), it then follows for any fixed $t_*\in (t_1,t_0)$ that there exists a $\delta>0$ such that the solution
$W$ of \eqref{tconf-ford-C.1}-\eqref{tconf-ford-C.8} generated
by the initial data $W_0$ will also exist for all $t\in (t_*,t_0]$ and satisfies
\begin{equation} \label{W-reg}
W \in \bigcap_{j=0}^{k}C^j\bigl((t_*,t_0], H^{k-j}(\Tbb^{n-1})\bigr).
\end{equation}
As a consequence, we can take the time $t_*$ to be independent of the choice of $\rhor$ so long as $\norm{\rhor}_{H^{k+2}(\Tbb^{n-1})}<\delta$.
Moreover, by Proposition~\ref{lag-exist-prop}.(d), we know that the scalar field $\tau$ determined by this solution satisfies 
\begin{equation}\label{tau-synch-prop-A}
  \tau(t,x) = t+\rhor(x), \quad (t,x)\in (t_*,t_0]\times \Tbb^{n-1}.
\end{equation}

Letting $C_{\text{Sob}}$ denote the Sobolev constant (i.e.\
$\norm{\cdot}_{C^{2,\sigma}(\Tbb^{n-1})}\leq C_{\text{Sob}} \norm{\cdot}_{H^{k+2}(\Tbb^{n-1})}$, $0<\sigma<1$), then for any and
$t_1^*, t^*_0$ satisfying $t_*<t_1^*<t^*_0<t_0$, we observe that
\begin{equation*}
    \tau(t_1^*,x) \leq t_1^*+C_{\text{Sob}}\delta
    < t_0^* < t_0-C_{\text{Sob}}\delta \leq \tau(t_0,x),\quad x\in \Tbb^{n-1},
\end{equation*}
provided that $\delta$ also satisfies
\begin{equation} \label{tau-synch-prop-B}
    \delta < C_{\text{Sob}}^{-1}\max\{t_0^*-t_1^*,t_0-t_0^*\}, 
\end{equation}
which we can always arrange by shrinking $\delta>0$ if necessary. We also note, since $\rhor\in H^{k+2}(\Tbb^{n-1})$ and $k>(n-1)/2+2$, that 
\begin{equation} \label{Psi-synch-prop}
    \Psi \: : \: (t_1^*,t_0)\times \Tbb^{n-1}  \longrightarrow  \Rbb \times \Tbb^{n-1}\: :\: (t,x)\longmapsto (\ttl,\xt) = (t+\rhor(x),x)
\end{equation}
defines a $H^{k+2}$-diffeomorphism, see \cite[\S 2]{EbinMarsden:1970}, onto its image
 image $\Mt  = \Psi((t_1^*,t_0)\times \Tbb^{n-1}) \subset \Rbb \times \Tbb^{n-1}$
with inverse given by
\begin{equation*}
    \Psi^{-1} \: : \:\Mt \longrightarrow   (t_1^*,t_0)\times \Tbb^{n-1} \: :\:  (\ttl,\xt)\longmapsto (t,x) = (\ttl-\rhor(\xt),\xt).
\end{equation*}  
Furthermore, by \eqref{tau-synch-prop-B} and \eqref{Psi-synch-prop}, we observe that the level set
$\tau^{-1}(t_0^*)\subset (t_1^*,t_0)\times \Tbb^{n-1}$ is non-empty, and in fact, is given by
$\tau^{-1}(t_0^*) = \Psi^{-1}\bigl(\Sigma_{t_0^*}\bigr)$
where $\Sigma_{t_0^*}= \{t_0^*\}\times \Tbb^{n-1}$.
Now, by assumption, the initial data $\{\gr_{\mu\nu},\gr_{\mu\nu},\taur=t_0+\rhor, \taugr\}$
solves the constraint equations \eqref{grav-constr}-\eqref{wave-constr}, and consequently, 
by Proposition~\ref{lag-exist-prop}.(e), the
pair $\{ g_{\mu\nu}:= \del{\mu}l^\alpha \ghu_{\alpha\beta}\del{\nu}l^\beta,\tau\}$,
which is derived from \eqref{W-reg} using  \eqref{eq:Wdef}, satisfies
\begin{align*} 
g_{\mu\nu}&\in \bigcap_{j=0}^k C^j\bigl([t_1^*,t_0], H^{k-j}(\Tbb^{n-1})\bigr) \subset H^k\bigl((t_1^*,t_0)\times \Tbb^{n-1}\bigr), 
\\ 
\tau&\in \bigcap_{j=0}^{k+1} C^j\bigl([t_1^*,t_0], H^{k+1-j}(\Tbb^{n-1})\bigr) \subset H^{k+1}\bigl((t_1^*,t_0)\times \Tbb^{n-1}\bigr)
,
\end{align*}
and 
solves the conformal Einstein-scalar field equations \eqref{lag-confeqns} and the wave gauge constraint \eqref{lag-wave-gauge} in $(t_1^*,t_0)\times \Tbb^{n-1}$. 
Since $\Psi$ is a $H^{k+2}$-diffeomorphism, it maps $H^{s}$ functions to $H^{s}$ functions for any $0\leq s\leq k+2$ as can be verified directly using the calculus inequalities or by referring to \cite[\S 2]{EbinMarsden:1970}. Using this property,
we can push-forward the solution $\{g_{\mu\nu},\tau\}$ 
 to $\Mt$ using $\Psi$ to obtain a solution\footnote{
 This way of stating the regularity is not optimal, but it has the benefit of being succinct and sufficient for our purposes. In the end, we are only interested in the regularity of the initial data obtained from this solution, which can be determined directly from the regularity of the solution \eqref{W-reg}, the special form of the diffeomorphism \eqref{Psi-synch-prop}, and the definition of the push-forward \eqref{gt-taut-def}. } \begin{equation} \label{gt-taut-def}
(\gt_{\mu\nu},\taut):=((\Psi_*g)_{\mu\nu},\Psi_*\tau)
\in H^k(\Mt)\times
H^{k+1}(\Mt)
\end{equation}
of the conformal Einstein-scalar field equations on $\Mt$ that also satisfies the wave gauge constraint. 
As a consequence, the restriction
$\{\gt_{\mu\nu}|_{\Sigma_{t_0^*}},\delt{0}\gt_{\mu\nu}|_{\Sigma_{t_0^*}}, \taut|_{\Sigma_{t_0^*}}, \delt{0}\taut|_{\Sigma_{t_0^*}}\}$
of this solution, which we observe from \eqref{tau-synch-prop-A} and  \eqref{Psi-synch-prop} satisfies $\taut|_{\Sigma_{t_0^*}}=t^*_0$,
determines geometric initial data $\{\gttt_{\Lambda\Sigma},\Kttt_{\Lambda\Sigma},\mathring{\taut},\grave{\taut}\}$ on the hypersurface $\Sigma_{t^*_0}$
that solves the gravitational constraint equations.
It is worth noting here that the wave gauge constraint is now determined by the background metric $\tilde{\gc}_{\mu\nu}=(\Psi^*\gc)_{\mu\nu}$, where
$\gc_{\mu\nu}= \del{\mu}l^\alpha \eta_{\alpha\beta}\del{\nu}l^\beta$. While this background metric is still flat, it no longer has constant coefficients as did the starting background metric, c.f. \eqref{gct-def}.

Finally, since we can naturally identify $\Sigma_{t_0}$ and $\Sigma_{t^*_0}$ with $\Tbb^{n-1}$, it then is not difficult to verify from the regularity of the solution \eqref{W-reg}, the special form \eqref{tau-synch-prop-A} of the diffeomorphism $\Psi$, the definition \eqref{gt-taut-def} of $\gt_{\mu\nu}$ and $\taut$, and the calculus inequalities from Appendix \ref{calc}
that, for any given $\tilde{\delta} >0$, we can, by shrinking, if necessary,  $\delta$ and $t_0-t^*_0$ while observing \eqref{tau-synch-prop-B}, ensure that
\begin{equation*}
(\gttt_{\Lambda\Sigma},
\Kttt_{\Lambda\Sigma},\grave{\taut}) \in  H^{k}(\Tbb^{n-1})\times H^{k-1}(\Tbb^{n-1})\times H^{k}(\Tbb^{n-1})
\end{equation*}
and
\begin{equation*}
\norm{\gttt_{\Lambda\Sigma}-\gtt_{\Lambda\Sigma}}_{H^{k}(\Tbb^{n-1})}+\norm{\Kttt_{\Lambda\Sigma}-\Ktt_{\Lambda\Sigma}}_{H^{k-1}(\Tbb^{n-1})}+\norm{t^*_0-\taur}_{H^{k+1}(\Tbb^{n-1})}+\norm{\grave{\taut}-\taugr}_{H^{k}(\Tbb^n)} <\tilde{\delta}.
\end{equation*}
It is worth noting that a weaker version of this results with a loss of a
$1/2$ order of differentiablity (i.e.\ $k$ replaced by $k-1/2$) can be obtained from Trace Theorem by restricting the solution \eqref{gt-taut-def} to the hypersurface $\Sigma_{t_0^*}$.
\end{proof}

\begin{rem} \label{synch-rem}
Given the geometric initial data 
$\{\gttt_{\Lambda\Sigma},\Kttt_{\Lambda\Sigma},\mathring{\taut},\grave{\taut}\}$ on $\Sigma_{t^*_0}$ from Proposition \ref{synch-prop}, we can always solve the wave gauge constraint on $\Sigma_{t^*_0}$ by an appropriate choice of the free initial data\footnote{It is worthwhile noting that this choice of free initial data will, in general, be different from the lapse-shift pair computed from restricting the conformal Einstein-scalar field solution $\{\gt_{\mu\nu},\taut\}$ from Proposition \ref{synch-prop} to $\Sigma_{t^*_0}$.} $\{\Nttt,\bttt_\Lambda,\dot{\Nttt},\dot{\bttt}{}_\Lambda\}$; see Remark \ref{idata-rem} for details. Because of this, we lose no generality, as far as our stability results are concerned, by assuming that the initial data \eqref{gt-idata}-\eqref{dt-taut-idata} satisfies the gravitational and wave gauge constraint equations
\eqref{grav-constr}-\eqref{wave-constr}
along with the synchronization condition $\taur = t_0$ on the initial hypersurface $\Sigma_{t_0}$,
which by \eqref{tau-synch} implies that
\begin{equation} \label{tau-fix}
\tau = t \quad \text{in $M_{t_1,t_0}$}.
\end{equation}
\end{rem}

\section{A Fermi-Walker transported frame\label{Fermi}}
In the calculations carried out in this section, we will assume that $\{g_{\mu\nu},\tau\}$ is a solution, which is guaranteed to exist by Proposition \ref{lag-exist-prop}, of the conformal Einstein-scalar field equations \eqref{lag-confeqns} in the Lagragian coordinates $(x^\mu)$ that satisfies the wave gauge constraint \eqref{lag-wave-gauge} along with the slicing condition \eqref{tau-fix}. A difficulty with the Lagrangian coordinate representation of the conformal metric is that it is not suitable for obtaining estimates that are well behaved near the big bang singularity, which is located at $t=0$ in these coordinates. In order to obtain estimates that are well behaved in the limit $t\searrow 0$, we will instead use a frame representation of the conformal metric given by \begin{equation*}
g_{ij} = e_i^\mu g_{\mu\nu} e^\nu_j,
\end{equation*}
where $e_i = e^\mu_i \del{\mu}$ is a frame that is to be determined. In the following, we take all frame indices as being expressed relative to this frame, and so in particular, the frame components
$e_i^j$ of the frame vector fields $e_i$ are
\begin{equation} \label{frame-def}
e_i^j = \delta^j_i.
\end{equation}

To proceed, we need to fix the frame. We do this by first fixing $e_0$ via
\begin{equation}\label{e0-def}
e_0 = (-|\chi|_g^2)^{-\frac{1}{2}}\chi \oset{\eqref{chi-def}}{=}-\betat \nabla \tau
\end{equation}
where $\betat$ is defined by
\begin{equation}\label{alpha-def}
    \betat = (-|\nabla\tau|^2_g)^{-\frac{1}{2}}.
\end{equation}
From this definition, it follows that $e_0$ is normalized according to
\begin{equation*} 
g_{00} = g(e_0,e_0) =-1.
\end{equation*}
We then complete $e_0$ to a frame by propagating the spatial vector fields $e_J$ using Fermi-Walker transport, which is defined by
\begin{equation} \label{Fermi-A}
\nabla_{e_0}e_J = -\frac{g(\nabla_{e_0}e_0,e_J)}{g(e_0,e_0)} e_0.
\end{equation}
\begin{rem}
Our use of a Fermi-Walker transported spatial frame is inspired by \cite{Fournodavlos_et_al:2023} where this type of spatial frame played a crucial role in the stability results established there. It is interesting to note, however, that the time slicing that we use here is different from the constant mean curvature time foliation employed in \cite{Fournodavlos_et_al:2023}. As a consequence,
the frame used here will, in general, be different even though the spatial part is determined by Fermi-Walker transport.
\end{rem}

Now, a short calculation involving \eqref{Fermi-A} reveals that $e_0(g(e_0,e_J)) = 0$ from which we deduce that the condition $g_{0J}=g(e_0,e_J)=0$ is preserved, that is, if we assume it holds on the initial surface $\Sigma_{t_0}$, then it hold for all time. Assuming that $g_{0J}=0$ holds, it then follows from \eqref{Fermi-A} and a short calculation that
$e_0(g(e_I,e_J)) = 0$, and hence that the condition
$g_{IJ}=g(e_I,e_J)=\delta_{IJ}$
is also preserved. Consequently, by choosing initial data for the spatial frame $e_J^\mu$ on the initial hypersurface $\Sigma_{t_0}$ so that the frame $e_i^\mu$ is orthonormal, we can ensure that the frame remains orthonormal throughout the evolution, that is,
\begin{equation}\label{orthog}
g_{ij}=g(e_i,e_j) = \eta_{ij}:= -\delta_i^0\delta_j^0+ \delta_i^I\delta_j^J\delta_{IJ}.
\end{equation}
In the following calculations, we will always assume that we have chosen the initial data for the frame so that \eqref{orthog} holds.

By definition, the connection coefficients $\Gamma_i{}^k{}_j$ of the metric $g_{ij}$ are determined from the $g$-orthonormal frame $e_i^\mu$ via
\begin{equation} \label{Gamma-def}
\nabla_{e_i} e_j = \Gamma_{i}{}^k{}_{j} e_k,
\end{equation}
while the background connection coefficients $\gamma_{i}{}^k{}_j$
are determined by
\begin{equation*} 
\Dc_{e_i} e_j = \gamma_{i}{}^k{}_{j} e_k.
\end{equation*}
From this point onward, all frame components will be relative to the $g$-orthonormal frame $e_i^\mu$ determined by Fermi-Walker transport.

Using \eqref{Fermi-A} and \eqref{orthog}, we osbserve that
\begin{equation}\label{Fermi-B}
\Gamma_{0}{}^k{}_J =\delta_{IJ}\Gamma_{0}{}^I{}_0 \delta^k_0.
\end{equation}
On the other hand, applying $\nabla_{e_0}$ to \eqref{e0-def} yields $\nabla_{e_0}e_0 = -\betat\bigl( \nabla_{e_0} \nabla\tau + g(\nabla_{e_0}\nabla \tau, e_0)e_{0}\bigr)$
from which we deduce
\begin{equation*}
g(\nabla_{e_0}e_0,e_j)=  -\betat\bigl( g(\nabla_{e_0} \nabla\tau,e_j) + g(\nabla_{e_0}\nabla \tau, e_0)g(e_{0},e_j)\bigr).
\end{equation*}
Using \eqref{Gamma-def} and \eqref{orthog}, we can express this as $\eta_{jk}\Gamma_{0}{}^k{}_0 = -\betat\bigl( \delta^i_0\nabla_i\nabla_j \tau + \delta_0^i\delta^k_0\nabla_i\nabla_k \tau\eta_{0j}\bigr)$
or equivalently,
\begin{equation}\label{Gamma-0k0-A}
\Gamma_{0}{}^k{}_0 =-\betat\bigl(\delta^i_0\eta^{jk}+\delta^i_0\delta^j_0\delta^k_0\bigr)\nabla_{i}\nabla_j\tau.
\end{equation}
Substituting this into \eqref{Fermi-B} then gives
\begin{equation}\label{Fermi-C}
\Gamma_{0}{}^k{}_J = -\betat\delta_0^k \delta^i_0\delta_J^j \nabla_i\nabla_j \tau.
\end{equation}

Next, with the help of \eqref{Ccdef} and \eqref{orthog}, we can combine \eqref{Gamma-0k0-A} and \eqref{Fermi-C} together to get
\begin{align}
\gamma_{0}{}^k{}_j  &
=
-\betat\delta^i_0\bigl(\delta^0_jg^{lk}+\delta^l_j\delta^k_0\bigr)
(\Dc_{i}\Dc_l\tau-\Cc_i{}^p{}_l\Dc_p\tau)  -\Cc_{0}{}^k{}_j.\label{gamma-0kj}
\end{align}
Moreover, by \eqref{orthog}, we have
$\Dc_i g_{jk}= -\gamma_i{}^l{}_j \eta_{lk}- \gamma_i{}^l{}_k\eta_{jl}$,
or equivalently, after rearranging,
\begin{equation*} 
 \gamma_i{}^k{}_j =-  \eta^{km} \Dc_i g_{jm}- \eta^{km}\eta_{jl}\gamma_i{}^l{}_m.
\end{equation*}
Setting $(i,j,k)=(I,0,0)$ and $(i,j,k)=(I,0,K)$ in the above
expression shows that
\begin{equation}\label{gamma-I00-IK0}
  \gamma_I{}^0{}_0 = \frac{1}{2}\delta^i_I\delta_0^j\delta^k_0\Dc_i g_{jk} \AND
   \gamma_I{}^K{}_0 =-  \delta^{Kl} \delta_I^i\delta^j_0\Dc_i g_{jl}+ \eta^{KL}\gamma_I{}^0{}_L,
\end{equation}
respectively.

In the following, we will view \eqref{gamma-0kj} and \eqref{gamma-I00-IK0} as determining the background
connection coefficients $\gamma_{0}{}^k{}_j$, $ \gamma_I{}^0{}_0$
and  $\gamma_I{}^K{}_0$.  We will determine the remaining background connection coefficients
$\gamma_{I}{}^k{}_J$ using a transport equation. To derive the transport equation, we recall that the background curvature is determined via
\begin{equation*}
\Rc_{ijk}{}^l = e_{j}(\gamma_i{}^l{}_k)-e_{i}(\gamma_j{}^l{}_k)+\gamma_i{}^m{}_k\gamma_j{}^l{}_m-\gamma_j{}^m{}_k\gamma_i{}^l{}_m-(\gamma_j{}^m{}_i-\gamma_i{}^m{}_j)\gamma_m{}^l{}_k \oset{\eqref{curvature}}{=}0.
\end{equation*}
Rearranging this expression yields 
\begin{equation}\label{gamma-IkJ}
 e_{0}(\gamma_I{}^k{}_J)=e_{I}(\gamma_0{}^k{}_J)-\gamma_I{}^l{}_J\gamma_0{}^k{}_l+\gamma_0{}^l{}_J\gamma_I{}^k{}_l+(\gamma_0{}^l{}_I-\gamma_I{}^l{}_0)\gamma_l{}^k{}_J,
\end{equation}
which we view as a transport equation for the connection coefficients $\gamma_{I}{}^k{}_J$. By \eqref{chi-def}, \eqref{Lagrangian} and  \eqref{e0-def}, we observe that the coordinate components of $e_0$ are given by
\begin{equation}\label{e0-mu}
e_0^\mu = \betat^{-1}\chi^\mu =\betat^{-1}\delta^\mu_0,
\end{equation}
which we use to express \eqref{gamma-IkJ} as
\begin{equation}\label{gamma-Ikj-A}
 \del{t}\gamma_I{}^k{}_J=\betat\bigl(e_I(\gamma_0{}^k{}_J)-\gamma_I{}^l{}_J\gamma_0{}^k{}_l+\gamma_0{}^l{}_J\gamma_I{}^k{}_l+(\gamma_0{}^l{}_I-\gamma_I{}^l{}_0)\gamma_l{}^k{}_J\bigr).
\end{equation}

Next, using \eqref{e0-mu} to express the Lie bracket
$[e_0,e_I] = (\gamma_0{}^j{}_I - \gamma_{I}{}^j{}_0)e_j$ as
\begin{equation} \label{[e0,eI]-A}
\betat^{-1}\del{t}e_I^\mu- e_I(\betat^{-1})\delta^\mu_0 = \betat^{-1}(\gamma_0{}^0{}_I - \gamma_{I}{}^0{}_0)\delta^\mu_0+ (\gamma_0{}^J{}_I - \gamma_{I}{}^J{}_0)
e^\mu_J,
\end{equation}
we view this expression as a transport equation for the coordinate components $e^\mu_I$ of the spatial frame field. Noting that 
\begin{equation} \label{e0I-fix}
    e_I^0=e_I(t) = e_I(\tau) = g(e_I,\nabla \tau) = \betat^{-1}g(e_0,e_I) = 0
\end{equation}
by \eqref{tau-fix}, \eqref{e0-def} and \eqref{orthog}, it follows that \eqref{[e0,eI]-A} simplifies to
\begin{equation} \label{[e0,eI]-C}
\del{t}e_I^\Lambda=\betat (\gamma_0{}^J{}_I - \gamma_{I}{}^J{}_0)
e^\Lambda_J.
\end{equation}

Finally, for use below, we note that the derivative of \eqref{alpha-def} is
given by 
\begin{equation}\label{grad-alpha}
    e_i(\betat)=\nabla_i\betat =\betat^3
    \nabla^k\tau \nabla_i\nabla_k \tau =
    -\betat^2 \delta^k_0 (\Dc_i\Dc_k\tau
    -\Cc_i{}^l{}_k \Dc_l\tau)
\end{equation}
where in deriving the third equality we 
used \eqref{Ccdef}, \eqref{frame-def} and \eqref{e0-def}.

\section{First order form}
\label{sec:FirstOrderForm}

We now turn to formulating the frame representation of the reduced conformal Einstein-scalar field equations in first order form. Ultimately, we are seeking to obtain a Fuchsian formulation of the evolution equations 
that will be suitable for establishing the existence of solutions all the way to the big bang singularity at $t=0$. 

\subsection{Primary fields}
The first step in deriving a first order system for the frame representation of the reduced conformal Einstein-scalar system is to derive a set of first order evolution equations for the \emph{primary fields}
\begin{equation}\label{p-fields}
g_{ijk}=\Dc_i g_{jk}, 
\AND \tau_{ij}=\Dc_i\Dc_j \tau .
\end{equation}
After deriving these equations, we will turn to deriving evolution equations for the differentiated fields
\begin{equation}\label{d-fields}
    g_{ijkl}=\Dc_ig_{jkl} \AND \tau_{ijk}=\Dc_i\tau_{jk}.
\end{equation}
 
We begin the derivation of the first order equations for the primary fields by using \eqref{red-Ricci} and \eqref{tau-fix} to express
\eqref{confESFF} as
\begin{equation}\label{confESSFJ}
g^{kj}\Dc_k \Dc_j g_{lm} =-\frac{2}{t}\bigl(
\Dc_l \Dc_m \tau - \Cc_l{}^p{}_m\Dc_p \tau  \bigr)- Q_{lm},
\end{equation}
where by \eqref{Ccdef},  \eqref{frame-def}, \eqref{e0-def} and \eqref{orthog}, we
note that
\begin{equation}\label{grad-tau}
    \Dc_i\tau= -\betat^{-1}g_{ij}e^j_0 = \betat^{-1}\delta_i^0 
\end{equation}
and
\begin{equation}\label{Cc-dt}
\Cc_l{}^p{}_m\Dc_p \tau 
=-\frac{1}{2}\betat^{-1}e_0^j\bigl(\Dc_l g_{jm}
+\Dc_m g_{jl} - \Dc_j g_{lm}\bigr).
\end{equation}
Multiplying \eqref{confESSFJ} by $-e_0^i$ yields
\begin{equation} \label{for-A}
-e^i_0 g^{kj}\Dc_k \Dc_j g_{lm} =\frac{2}{t}e^i_0
\bigl(
\Dc_l \Dc_m \tau - \Cc_l{}^p{}_m\Dc_p \tau  \bigr) + e^i_0 Q_{lm}.
\end{equation} 
But $-g^{ik}e^j_0\Dc_k\Dc_j g_{lm} + g^{ij}e^k_0 \Dc_k \Dc_j g_{lm} 
=0$ by \eqref{commutator},
and so, adding this to \eqref{for-A} yields with the help of \eqref{frame-def}, \eqref{orthog} and \eqref{Cc-dt}, the first order formulation of the frame formulation of the reduced conformal Einstein equations given by
\begin{equation}\label{for-B}
B^{ijk}\Dc_kg_{jlm} =
\frac{1}{t}\betat^{-1}\delta^i_0\delta^j_0(g_{ljm}+g_{mjl}-g_{jlm})+\frac{2}{t}\delta^i_0
 \tau_{lm}+ \delta^i_0 Q_{lm}
\end{equation}
where
\begin{align}
B^{ijk} &= -\delta^i_0 \eta^{jk}-\delta^j_0\eta^{ik}+ \eta^{ij}\delta^k_0.\label{B-def}
\end{align}
Using similar arguments, it is not difficult to verify that the conformal scalar field equation
\eqref{confESFI} can be written in first order
form as
\begin{equation}\label{for-D}
B^{ijk}\Dc_k \tau_{jl} = -\delta^i_0\eta^{kp}\eta^{jq}g_{lpq}\tau_{kj}.
\end{equation}

In the following, we will view \eqref{for-B} and \eqref{for-D} as transport equations for $g_{jlm}$ and  $\tau_{jl}$ by expressing them as
\begin{align}
\del{t}g_{rlm} &=
\frac{1}{t}\delta^0_r\delta^j_0(g_{ljm}+g_{mjl}-g_{jlm})+\frac{2}{t}\delta_r^0\betat
 \tau_{lm}-\delta_{ri}B^{ijK}\betat g_{Kjlm}+ \delta_{ri}\betat Q^i_{lm} \label{for-F.1}
\intertext{and}
 \del{t}\tau_{rl} &=
 \delta_{ri}\betat J^i_l-\delta_{ri}B^{ijK}\betat\tau_{Kjl},
 \label{for-F.2}
\end{align}
respectively, where in deriving these expression we employed \eqref{e0-mu}, \eqref{d-fields} and
\eqref{B-def} and have set
\begin{align}
Q^i_{lm} &= \delta^i_0 Q_{lm}+\delta^{ij}\bigl(  \gamma_0{}^p{}_j  g_{plm} +\gamma_0{}^p{}_l  g_{jpm} +\gamma_0{}^p{}_m  g_{jlp}\bigr)\label{Qilm-def}
\intertext{and}
J^i_l&=-\delta^i_0\eta^{kp}\eta^{jq}g_{lpq}\tau_{kj}+\delta^{ij}\bigl(  \gamma_0{}^p{}_j  \tau_{pl} +\gamma_0{}^p{}_l  \tau_{jp}\bigr).\label{J-def}
\end{align}

\begin{rem}
An important point going forward is that we will not use equation \eqref{for-F.1} with $l=m=0$ in any of our subsequent arguments. This is because the wave
gauge condition \eqref{wave-gauge} can be used to determine $g_{i00}$ in terms of the other metric variables $g_{ilM}$. To see how, we use \eqref{Xdef},
\eqref{orthog} and \eqref{p-fields} to express the wave gauge constraint \eqref{wave-gauge} as
\begin{equation*}
2X_0 = -g_{000}+\delta^{JK}(2g_{JK0}-g_{0JK})=0 \AND 2 X_I = -(2g_{00I}-g_{I00})+\delta^{JK}(2g_{JKI}-g_{IJK})=0.
\end{equation*}
Rearranging and using the symmetry $g_{IJ0}=g_{I0J}$, this becomes
\begin{equation} \label{gi00}
g_{000}= -\delta^{JK}(g_{0JK}-2g_{J0K}) \AND g_{I00}= 2g_{00I}-\delta^{JK}(2g_{JKI}-g_{IJK}),
\end{equation}
which verifies the claim.
\end{rem}

Now, separating \eqref{for-F.1} into the components $(r,l,m)=(0,0,M)$ and $(r,l,m)=(R,0,M)$, $(r,l,m)=(0,L,M)$ and
$(r,l,m)=(R,L,M)$, we see
with the help of \eqref{gi00} that
\begin{align}
\del{t}g_{00M} &= \frac{1}{t}\bigl(2g_{00M}-\delta^{IJ}(2g_{IJM}-g_{MIJ})\bigr)+\frac{2}{t}\betat\tau_{0M} - B^{0jK}\betat g_{Kj0M}
+\betat Q^0_{0M}, \label{for-G.1}\\
\del{t}g_{R0M} &= -\delta_{RI}\betat B^{IjK}g_{Kj0M}+\delta_{RI}\betat Q^I_{0M}, \label{for-G.2}\\
\del{t}g_{0LM} &= \frac{1}{t}(g_{L0M}+g_{M0L}-g_{0LM})+\frac{2}{t}\betat\tau_{LM}- B^{0jK}\betat g_{KjLM}
+\betat Q^0_{LM} \label{for-G.3}
\intertext{and}
\del{t}g_{RLM} &= -\delta_{RI}\betat B^{IjK}g_{KjLM}+ \delta_{RI}\betat Q^I_{LM}. \label{for-G.4}
\end{align}
Rather than using \eqref{for-G.3}, we find it advantageous to use \eqref{for-G.2} to recast this
equation as
\begin{equation}
\del{t}(g_{0LM}-g_{L0M}-g_{M0L}) =
-\frac{1}{t}(g_{0LM}-g_{L0M}-g_{M0L})
+\frac{2}{t}\betat\tau_{LM}+S_{LM} \label{for-H}
\end{equation}
where
\begin{align}
    S_{LM}=&-\betat B^{0jK}g_{KjLM}
+\betat Q^0_{LM} +\delta_{MI}\betat B^{IjK}g_{Kj0L} \notag \\
&-\delta_{MI}\betat Q^I_{0L} 
+\delta_{LI}\betat B^{IjK}g_{Kj0M}-\delta_{LI}\betat Q^I_{0M}.\label{S-def}
\end{align}
It is worth noting that the metric combination $g_{0LM}-g_{L0M}-g_{M0L}$, which appears above in \eqref{for-H},
plays a dominant role in our analysis.
It is related to the second fundamental form $\Ktt_{IJ}$ associated to the $t=const$-hypersurfaces and the conformal metric $g$ via the formula
\begin{equation}\label{Ktt-def}
\Ktt_{IJ} = \nabla_K \ntt_{(J}\delta_{I)}^K
  {=} \frac{1}{2}(g_{0IJ}-g_{I0J}-g_{J0I}) +\gamma_{(I}{}^0{}_{J)},
\end{equation}
where $\ntt_i= -(-|\nabla t|_g^2)^{-\frac{1}{2}} e_i(t)$
and in deriving the second equality we
used \eqref{Ccdef} and the identity
$\nabla_I n_J = g(\nabla_{e_I} e_0, e_J) = -g(e_0,\nabla_{e_I}e_J) =\Gamma_{I}{}^{0}{}_J$, which holds by \eqref{alpha-def}, \eqref{orthog} and \eqref{e0-mu}.

\subsection{Differentiated fields}
We now turn to deriving evolution equations for the differentiated fields \eqref{d-fields}. Applying $\Dc_q$ to \eqref{confESSFJ}, we find with the
help of \eqref{commutator}, \eqref{tau-fix}, \eqref{frame-def}, \eqref{orthog}, \eqref{p-fields}-\eqref{d-fields} and \eqref{grad-tau}-\eqref{Cc-dt} that
 \begin{align}
g^{kj}\Dc_k \Dc_q \Dc_j g_{lm} =&-\frac{1}{t}\betat^{-1} e_0^j (\Dc_q\Dc_l g_{jm}+\Dc_q\Dc_m g_{jl}-
\Dc_q\Dc_j g_{lm}) -\frac{2}{t}\Dc_q\Dc_l\Dc_m\tau -P_{qlm} \label{confESSFL}
\end{align}
where
\begin{align}
P_{qlm}=&  \frac{1}{t}\bigl(\Dc_q(\betat^{-1}) \delta_0^j+\betat^{-1}  \gamma_q{}^j{}_0\bigr)  (g_{ljm}+g_{mjl}-
g_{jlm}) - \eta^{kr}\eta^{js} g_{qrs}g_{kjlm} \notag\\
 &+ \Dc_q Q_{lm}
+\biggl(- \frac{1}{t^2}\betat^{-1} \delta_0^j (g_{ljm}+g_{mjl}-
g_{jlm})-\frac{2}{t^2}
 \tau_{lm}\biggr)\betat^{-1}\delta_q^0 \label{P-def}
\end{align}
and in deriving \eqref{P-def} we have used
\begin{equation} \label{De0}
 \Dc_q e^j_0=\Dc_q \delta^j_0 = \gamma_q{}^j{}_0.
\end{equation}
We also note by  \eqref{grad-alpha}, \eqref{Cc-dt} and
\eqref{gi00} that
\begin{equation}\label{grad-alpha-A}
\Dc_q\betat 
= e_q(\betat)=-\delta_q^0\Bigl(\betat^{2}\tau_{00}-\frac{1}{2}\betat \delta^{JK}(g_{0JK}-2g_{J0K}) \Bigr)
-\delta_q^P\Bigl(\betat^{2}\tau_{P0}+\frac{1}{2}\betat\bigl(2g_{00P}-\delta^{JK}(2g_{JKP}-g_{PJK})\bigr)\Bigr).
\end{equation}

On the other hand, we have by \eqref{commutator} that
\begin{equation*}
-g^{ik}e^j_0\Dc_k\Dc_q\Dc_j g_{lm} + g^{ij}e^k_0 \Dc_k \Dc_q \Dc_j g_{lm} =0,
\end{equation*}
and so adding this to \eqref{confESSFL} yields the
first order equation
\begin{equation}\label{for-I}
B^{ijk}\betat\Dc_k g_{qjlm} = \frac{1}{t} \delta_0^i\delta_0^j  (g_{qljm}+ g_{qmjl}-g_{qjlm})+\frac{2}{t}\delta^i_0
\betat\tau_{qlm}+ \delta^i_0 \betat P_{qlm}
\end{equation}
for the differentiated fields $g_{qjlm}$. 
Similarly, by applying $\Dc_q$ to the scalar field equation \eqref{confESFI}, it is not difficult to verify using the above arguments that the resulting equation can be put into the the first order form
\begin{align} 
B^{ijk}\betat\Dc_k\tau_{qjl} &=  \betat K^i_{ql} \label{for-J}
\end{align}
where
\begin{equation} \label{K-def}
K^i_{ql}= \delta^{i}_0\bigl(
-\eta^{jr}\eta^{ks}g_{qrs}\tau_{kjl}
-\bigl(\eta^{jr}\eta^{ks}g_{qlrs}-2\eta^{jm}\eta^{rn}\eta^{ks}g_{qmn}
g_{lrs}\bigr)\tau_{jk} - \eta^{jr}\eta^{ks}g_{lrs}\tau_{qjk}\bigr).
\end{equation}

\begin{rem}
In the following arguments, we can ignore the $q=0$ component of the equations
\eqref{for-I} and \eqref{for-J}. This is because we can obtain $g_{0jlm}$ and
$\tau_{0jl}$ from the equations \eqref{for-B} and \eqref{for-D}; specifically,
\begin{align}
g_{0rlm} &= \frac{1}{t}\betat^{-1}\delta_r^0\delta_0^j(g_{ljm}+g_{mjl}-g_{jlm})+\frac{2}{t}
\delta_r^0\tau_{lm} - \delta_{ri}B^{ijK}g_{Kjlm}+\delta_r^0 Q_{lm} \label{for-K.1}
\intertext{and}
\tau_{0rl} &=-\delta_{ri}B^{ijK}\tau_{Kjl}-\delta_r^0 \eta^{jp}\eta^{kq} g_{lpq} \tau_{kj}. \label{for-K.2}
\end{align}
\end{rem}

\subsection{Frame and connection coefficient transport equations}
Setting $q=0$ in \eqref{grad-alpha-A}, we observe, with the help of \eqref{e0-mu} that $\betat$ evolves according to
\begin{equation}\label{for-L}
\del{t}\betat 
=-\betat^{3}\tau_{00}+\frac{1}{2}\betat^2 \delta^{JK}(g_{0JK}-2g_{J0K}).
\end{equation}
Also by \eqref{Ccdef}, \eqref{frame-def}, \eqref{orthog},
 \eqref{p-fields}-\eqref{d-fields} and \eqref{Cc-dt}, we note
that the connection coefficients
\eqref{gamma-0kj} can be expressed as
\begin{equation}\label{gamma-0kj-A}
    \gamma{}_0{}^k{}_j
    =-(\delta^0_j\eta^{kl}+\delta^l_j\delta^k_0)\Bigl(\betat\tau_{0l}+\frac{1}{2}g_{l00} \Bigr)-\frac{1}{2}\eta^{kl}(g_{0jl}+g_{j0l}-g_{l0j}).
\end{equation}
Setting $(k,j)=(0,0)$, $(k,j)=(K,0)$, $(k,j)=(0,J)$
and $(k,j)=(K,J)$
in the above expression gives
\begin{align}
     \gamma{}_0{}^0{}_0 =&
     \frac{1}{2}g_{000}\oset{\eqref{gi00}}{=}-\frac{1}{2}g^{PQ}(g_{0PQ}-2g_{P0Q}),\label{gamma-000}\\
     \gamma{}_0{}^K{}_0 =& -\delta^{KL}(\betat \tau_{0L}+g_{00L}), \label{gamma-0K0}\\
     \gamma_0{}^0{}_J=&-\betat \tau_{0J}, \label{gamma-00J}
     \intertext{and}
      \gamma_0{}^K{}_J=& -\frac{1}{2}\delta^{KL}(g_{0JL}+g_{J0L}-g_{L0J}). \label{gamma-0KJ}
\end{align}
We further observe from \eqref{gamma-I00-IK0}, \eqref{p-fields} and \eqref{gi00} that the connection coefficients $\gamma_I{}^0{}_0$ and $\gamma_{I}{}^J{}_0$ are given by
\begin{align}
    \gamma_I{}^0{}_0 &= g_{00I}-\frac{1}{2}\delta^{JK}(2 g_{JKI}-g_{IJK}) \label{gamma-I00}
    \intertext{and}
    \gamma_I{}^J{}_0 &= -\delta^{JK}g_{I0K} + \delta^{JK}\gamma_{I}{}^0{}_K. \label{gamma-IJ0}
\end{align}

Applying $e_I$ to \eqref{gamma-0kj-A}, we find that
\begin{equation}\label{gamma-0kj-B}
    e_I(\gamma{}_0{}^k{}_j)
    =-(\delta^0_j\eta^{kl}+\delta^l_j\delta^k_0)\Bigl( e_I(\betat)\tau_{0l}+ \betat e_I(\tau_{0l})+\frac{1}{2}e_I(g_{l00}) \Bigr)-\frac{1}{2}\eta^{kl}(e_I(g_{0jl})+e_I(g_{j0l})-e_I(g_{l0j})).
\end{equation}
With the help of \eqref{grad-alpha-A}, \eqref{gamma-000}-\eqref{gamma-0kj-B},   
it then follows from \eqref{gamma-Ikj-A} and
\eqref{[e0,eI]-C} that frame components $e^\Lambda_I$ and the connection coefficients
$\gamma_I{}^k{}_J$ evolve according to
\begin{align}
\del{t}e_I^\Lambda=&-\betat \Bigl(\frac{1}{2}\delta^{JK}(g_{0IK}-g_{I0K}-g_{K0I}) + \delta^{JK}\gamma_{I}{}^0{}_K \Bigr)
e^\Lambda_J \label{for-M.1}
\intertext{and}
\del{t}\gamma_I{}^k{}_J=&
-\betat\Bigl(\delta^k_0\Bigl(\betat e_I(\tau_{0J})+\frac{1}{2}e_I(g_{J00})\Bigr)+\frac{1}{2}\eta^{kl}\bigl(e_I(g_{0Jl})+e_I(g_{J0l})-e_I(g_{l0J})\bigr)\Bigr)+L_I{}^k{}_J, \label{for-M.2}
\end{align}
respectively,
where
\begin{align}
    L_I{}^k{}_J =& \betat\Bigl(\delta^k_0\Bigl(\betat^2 \tau_{I0}+\frac{1}{2}\betat \bigl(2g_{00I}-\delta^{KL}(2g_{KLI}-g_{IKL})\bigr)\Bigr)\tau_{0J} -\gamma_I{}^l{}_J\gamma_0{}^k{}_l+\gamma_0{}^l{}_J\gamma_I{}^k{}_l+ (\gamma_0{}^l{}_I-\gamma_I{}^l{}_0)\gamma_l{}^k{}_J\Bigr). \label{L-def}
\end{align}

\section{Nonlinear decompositions}
\label{sec:NonlinDecomp}
In this section, we will examine more closely the nonlinear terms appearing in the first order equations that were derived in the previous section. To assist with this task, we
make the following definitions:
\begin{align}
\kt&=(\kt_{IJ}) :=(g_{0IJ}-g_{I0J}-g_{J0I}),
\label{kt-def} \\
\ellt&=(\ellt_{IjK}) := (g_{IjK}), &&  \label{ellt-def}\\
\mt&=(\mt_{I})  := (g_{00M}), \label{mt-def}\\
\tau &= (\tau_{ij}), \label{tau-def}\\
\gt&=(\gt_{Ijkl}) := (g_{Ijkl}), \label{gt-def}\\    
\taut&=(\taut_{Ijk}) := (\tau_{Ijk}), \label{taut-def}\\
\intertext{and}
\psit&=(\psit_I{}^k{}_J):=(\gamma_I{}^k{}_J).   \label{psit-def}
\end{align}
Even though we are using $\tau$ in \eqref{tau-def} to denote
the collection of derivatives $\tau_{ij}=\Dc_i\Dc_j\tau$, no ambiguities will arise due to the slicing condition \eqref{tau-fix} that allows us to use the coordinate time $t$ to denote the scalar field $\tau$.

\subsection{$*$-notation}
In the following, we will use the following $*$-notation to denote multilinear maps for which it is not necessary
for subsequent arguments to know the exact values of their constant coefficients. For example,
$\kt*\ellt$ can be used to denote tensor fields of the form
\begin{align*}
[\kt*\ellt]_{ij}= C_{ij}^{KLMpQ}\kt_{KL}\ellt_{MpQ}
\end{align*}
where the coefficients $C_{ij}^{KLMpQ}$ are constants. We will also use the notation
\begin{equation*}
    (1+\betat)\kt*\ellt= \kt*\ellt +\betat (\kt*\ellt)
\end{equation*}
where on the right hand side the two $\kt*\ellt$ terms correspond, in general, to distinct bilinear maps, e.g.
\begin{equation*}
    [(1+\betat)\kt*\ellt]_{ij}:=  C_{ij}^{KLMpQ}\kt_{KL}\ellt_{MpQ}+\betat  \tilde{C}_{ij}^{KLMpQ}\kt_{KL}\ellt_{MpQ}.
\end{equation*}
More generally, the $*$ will function as a product that distributes over sums of terms where terms like
$\lambda \kt*\ellt$, $\lambda \in \Rbb\setminus\{0\}$, and $\ellt *\kt$ are identified. For example,
\begin{equation*}
    \ellt*(\kt+\mt)= (\kt+\mt)*\ellt= \ellt*\kt+\ellt*\mt.
\end{equation*}
With this notation, we would have
\begin{equation*}
    (\mt+\psit)*(\ellt+\kt) = \mt*\ellt + \mt*\kt + \psit*\ellt + \psit*\kt 
\AND
    (\kt+\ellt)*(\kt+\ellt)= \kt*\kt + \kt*\ellt + \ellt*\ellt.
\end{equation*}

\subsection{$Q_{ij}$ and $Q^k_{ij}$}

\begin{lem} \label{Qij-lem}
\begin{gather}
Q_{00}=  -\frac{3}{2}\bigl(\delta^{RS}\kt_{RS}\bigr)^2 + \frac{1}{2} \delta^{PQ}\delta^{RS}\kt_{PR} \kt_{QS}+
\Qf, \quad
Q_{0M}= \Qf_M \AND
Q_{LM}= \delta^{RS}\kt_{LR} \kt_{MS}+
\Qf_{LM}\label{Qij-lem.1}
\end{gather}
where  $\Qf = (\kt+\ellt+\mt)*(\ellt+\mt)$.
\end{lem}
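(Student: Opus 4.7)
The starting point is to specialize the definition \eqref{Q-def} of $Q_{ij}$ to the Fermi--Walker frame, where $g_{ij}=\eta_{ij}$ by \eqref{orthog} so that $\eta^{kl}$ is diagonal with $\eta^{00}=-1$ and $\eta^{IJ}=\delta^{IJ}$. Substituting $\Dc_i g_{jk}=g_{ijk}$ converts \eqref{Q-def} into a sum of five bilinear products in the frame fields $g_{abc}$, in which each contracted index splits into the two cases $0$ and capital--Latin, and each accompanying $\eta^{kl}\eta^{mn}$ factor vanishes unless the temporal/spatial character of the contracted pairs matches. Next, using the wave gauge identities \eqref{gi00} together with the decomposition $g_{0JK}=\kt_{JK}+\ellt_{J0K}+\ellt_{K0J}$ and the symmetry $\delta^{JK}\ellt_{J0K}=\delta^{JK}\ellt_{K0J}$, a short calculation yields
\begin{equation*}
g_{000}=-\delta^{JK}\kt_{JK},\qquad g_{I00}=2\mt_I-\delta^{JK}(2\ellt_{JKI}-\ellt_{IJK}),
\end{equation*}
so $g_{000}$ is purely of $\kt$-type, $g_{I00}$ purely of $(\ellt,\mt)$-type. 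Combined with $g_{IjK}=\ellt_{IjK}$, $g_{00I}=\mt_I$, and the decomposition of $g_{0IJ}$, every frame component $g_{abc}$ is written as a $\kt$-piece plus an $(\ellt,\mt)$-piece; in particular a $\kt$-piece occurs in $g_{abc}$ only for the index patterns $(0,I,J)$ (contributing $\kt_{IJ}$) and $(0,0,0)$ (contributing $-\delta^{JK}\kt_{JK}$).

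Because the residues $\Qf,\Qf_M,\Qf_{LM}$ are, by the $*$-notation, permitted to absorb any bilinear in which at least one factor contributes its $(\ellt,\mt)$-piece, the task reduces to identifying the pure $\kt\kt$ contributions in each of the three cases $(i,j)=(0,0),(0,M),(L,M)$. The requirement that both factors of a surviving bilinear carry a $\kt$-piece, combined with the nonvanishing condition on the prefactor $\eta^{kl}\eta^{mn}$, is very restrictive. The plan is to tabulate, for each of the five bilinear summands in \eqref{Q-def}, the admissible combinations of temporal/spatial assignments on the five contracted indices and then sum the resulting coefficients.

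A direct enumeration then yields the stated identities. For $Q_{00}$ the first summand $\eta^{kl}\eta^{mn}g_{0mk}g_{0nl}$ contributes $\tfrac{1}{2}(\delta^{JK}\kt_{JK})^2+\tfrac{1}{2}\delta^{PQ}\delta^{RS}\kt_{PR}\kt_{QS}$ from the two diagonal subcases ``both factors $g_{000}$'' and ``both factors $g_{0MK}$ with spatial last indices,'' with the cross subcases killed by factors $\eta^{0L}=0$; each of the remaining four summands admits only the all-zero index pattern (since every other $\kt$-pattern forces an $\eta^{0L}$) and produces $\pm(\delta^{JK}\kt_{JK})^2$ with signs $+1,-1,-1,-1$, for a total of $-\tfrac{3}{2}(\delta^{JK}\kt_{JK})^2+\tfrac{1}{2}\delta^{PQ}\delta^{RS}\kt_{PR}\kt_{QS}$. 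For $Q_{LM}$ only the summand $-2\eta^{kl}\eta^{mn}g_{lLn}g_{kMm}$ survives, with the unique subcase $l=k=0$, $n,m$ spatial, yielding $\delta^{RS}\kt_{LR}\kt_{MS}$. For $Q_{0M}$ every candidate subcase is killed by an off-diagonal $\eta^{0K}$, so no pure $\kt\kt$ contribution remains. The main obstacle is the combinatorial bookkeeping---five summands, five contracted index pairs, and the interplay of signs with the $(\ellt,\mt)$-absorption---which I would organize as a small table indexed by index pattern rather than writing out linearly.
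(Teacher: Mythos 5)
Your proposal is correct and takes essentially the same approach as the paper: specialize \eqref{Q-def} to the orthonormal frame, use \eqref{gi00} to write $g_{000}=-\delta^{JK}\kt_{JK}$ and $g_{I00}$ in terms of $\ellt,\mt$, and then read off the pure $\kt*\kt$ contributions while absorbing everything else into $\Qf$. The only difference is organizational: the paper writes out $Q_{00},Q_{LM},Q_{0M}$ in full via \eqref{Q-00}--\eqref{Q-0M} and then asserts the decomposition, whereas you systematically characterize which index patterns $(a,b,c)$ in $g_{abc}$ can carry a $\kt$-piece (only $(0,0,0)$ and $(0,I,J)$) and then enumerate the admissible patterns summand-by-summand; your resulting coefficients $\frac{1}{2}+1-1-1-1=-\frac{3}{2}$ for $(\delta^{RS}\kt_{RS})^2$ and $\frac{1}{2}$ for $\delta^{PQ}\delta^{RS}\kt_{PR}\kt_{QS}$, the single surviving $\delta^{RS}\kt_{LR}\kt_{MS}$ in $Q_{LM}$, and the vanishing of the pure $\kt*\kt$ part of $Q_{0M}$ all agree with the paper.
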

\begin{proof}
By \eqref{Q-def}, \eqref{orthog} and \eqref{p-fields}-\eqref{d-fields}, we have
\begin{align}
Q_{ij}=&\frac{1}{2}\bigl(g_{i00}g_{j00}-2g_{0i0}g_{j00}-2g_{i00}g_{0j0}\bigr)
-\delta^{RS}\bigl(g_{i0R}g_{j0S}+g_{Ri0}g_{0jS}+g_{0iS}g_{Rj0}
\notag\\
&-g_{0iR}g_{0jS}-g_{0iR}g_{jS0}-g_{iS0}g_{0jR}-g_{Si0}g_{Rj0}-g_{Si0}g_{j0R}-g_{i0R}g_{Sj0}\bigr)\notag \\
&+ \frac{1}{2}\delta^{RS}\delta^{PQ}\bigl( g_{iPR} g_{jQS}
+2g_{QiS}g_{RjP} - 2g_{SiQ}g_{RjP}
-2g_{SiQ}g_{jPR} -2g_{iPR}g_{SjQ}\bigr). \notag 
\end{align}
Setting $(i,j)=(0,0)$, $(i,j)=(L,M)$ and $(i,j)=(0,M)$ in the
above expression while employing \eqref{gi00} yields

\begin{align}
    Q_{00} &= -\frac{3}{2}\bigl(\delta^{RS}(g_{0RS}-2g_{R0S})\bigr)^2 + \frac{1}{2} \delta^{PQ}\delta^{RS}g_{0PR} g_{0QS}
+\delta^{RS}\Bigl(
2g_{00R}g_{0S0} \notag \\
&\quad+\bigl(2g_{00S}-\delta^{PQ}(2g_{PQS}-g_{SPQ})\bigr)\bigl(2g_{00R}-\delta^{UV}(2g_{UVR}-g_{RUV})\bigr)\Bigr)\notag \\
&\quad+ \frac{1}{2}\delta^{RS}\delta^{PQ}\bigl(
2g_{Q0S}g_{R0P} - 2g_{S0Q}g_{R0P}
-2g_{S0Q}g_{0PR} -2g_{0PR}g_{S0Q}\bigr), \label{Q-00}\\
Q_{LM}&=\frac{1}{2}\Bigl[\bigl(2g_{00L}-\delta^{PQ}(2g_{PQL}-g_{LPQ})\bigr)\bigl(2g_{00M}-\delta^{RS}(2g_{RSM}-g_{MRS})\bigr)\notag \\
&\quad -2g_{0L0}\bigl(2g_{00M}-\delta^{RS}(2g_{RSM}-g_{MRS})\bigr)-2\bigl(2g_{00L}-\delta^{PQ}(2g_{PQL}-g_{LPQ})\bigr)g_{0M0}\Bigr]\notag \\
&\quad -\delta^{RS}\bigl(g_{L0R}g_{M0S}+g_{RL0}g_{0MS}+g_{0LS}g_{RM0}
-g_{0LR}g_{0MS}-g_{0LR}g_{MS0}-g_{LS0}g_{0MR}\notag \\
&\quad -g_{SL0}g_{RM0}-g_{SL0}g_{M0R}-g_{L0R}g_{SM0}\bigr)
+ \frac{1}{2}\delta^{RS}\delta^{PQ}\bigl( g_{LPR} g_{MQS}
+2g_{QLS}g_{RMP} \notag \\
&\quad - 2g_{SLQ}g_{RMP}
-2g_{SLQ}g_{MPR} -2g_{LPR}g_{SMQ}\bigr) \label{Q-LM}
\intertext{and}
Q_{0M}&=-\frac{1}{2} \delta^{UV}(2g_{U0V}-g_{0UV})\Bigl(4g_{00M}-\delta^{PQ}(2g_{PQM}-g_{MPQ})\Bigr)\notag \\
&\quad-\delta^{RS}\Bigl(\bigl(2g_{00S}-\delta^{PQ}(2g_{PQS}-g_{SPQ})\bigr)(g_{0MR}-g_{M0R}-g_{R0M})-2g_{00S}g_{0MR}\Bigr)\notag \\
&\quad + \frac{1}{2}\delta^{RS}\delta^{PQ}\bigl( g_{0PR} g_{MQS}
+2g_{Q0S}g_{RMP} - 2g_{S0Q}g_{RMP}
-2g_{S0Q}g_{MPR} -2g_{0PR}g_{SMQ}\bigr), \label{Q-0M}
\end{align}
respectively. It is then straightforward to verify
from
\eqref{Q-00}-\eqref{Q-0M} and the definitions
\eqref{kt-def}-\eqref{mt-def} that the expansions  \eqref{Qij-lem.1} hold.
\end{proof}

\begin{lem}\label{Qijk-lem}
\begin{gather}
Q^0_{0M}= \Qft_M, \quad Q^I_{0M}=\Qft^I_M, \quad
Q^I_{LM}=  \Qft^I_{LM}
\AND
Q^0_{LM}= -\frac{1}{2}\delta^{RS}\kt_{RS}\kt_{LM}+ \Qft^0_{LM} \label{Qijk-lem.1}
\end{gather}
where $\Qft =\betat(\kt+\ellt+\mt)*\tau +(\kt+\ellt+\mt)*(\ellt+\mt)$.
\end{lem}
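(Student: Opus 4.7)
The plan is to expand each component of $Q^i_{lm}$ directly from its definition~\eqref{Qilm-def} by substituting the formulas \eqref{gamma-000}--\eqref{gamma-0KJ} for the connection coefficients $\gamma_0{}^p{}_j$, the wave-gauge constraint~\eqref{gi00} expressing $g_{000}$ and $g_{I00}$ in terms of $\ellt$ and $\mt$, the decomposition $g_{0LM}=\kt_{LM}+\ellt_{L0M}+\ellt_{M0L}$ coming from \eqref{kt-def}--\eqref{ellt-def}, and Lemma~\ref{Qij-lem} for $Q_{lm}$. Since $\Qft$ consists of bilinear combinations of the form $\betat\cdot(\kt+\ellt+\mt)*\tau$ and $(\kt+\ellt+\mt)*(\ellt+\mt)$ and hence contains no $\kt*\kt$ piece, it suffices to isolate in each component the quadratic-in-$\kt$ contributions and verify the claimed identities; every remaining product is then routinely absorbed into the $\Qft$-symbol.

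For the three components $Q^0_{0M}$, $Q^I_{0M}$ and $Q^I_{LM}$ I would argue that no $\kt*\kt$ term can survive. A factor of $\kt$ can enter only through $g_{0LM}$, via its $\kt_{LM}$-piece, or through the connection coefficients $\gamma_0{}^0{}_0$ and $\gamma_0{}^K{}_J$ in \eqref{gamma-000} and \eqref{gamma-0KJ}. In $Q^0_{0M}$ and $Q^I_{0M}$ the only $g$-factors appearing inside \eqref{Qilm-def} are of the forms $g_{p0M}$, $g_{0pM}$ or $g_{00p}$, none of which equals $g_{0LM}$, so each appearance of $\kt$ comes solely from the $\gamma_0$-coefficient and is paired with an $(\ellt+\mt)$-valued $g$-factor, yielding a $\Qft$-contribution; the $Q_{0M}$-term inside $Q^0_{0M}$ is already of $\Qf_M$-type by Lemma~\ref{Qij-lem}. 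The only place where $g_{0LM}$ enters $Q^I_{LM}$ is in $\gamma_0{}^p{}_I g_{pLM}$ with $p=0$, but the accompanying $\gamma_0{}^0{}_I=-\betat\tau_{0I}$ supplies a $\betat\tau$-factor, placing the product in the $\betat(\kt+\ellt+\mt)*\tau$ part of $\Qft$.

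The delicate case is $Q^0_{LM}$. Here four distinct sources of $\kt*\kt$ contributions arise: (i)~the term $\delta^{RS}\kt_{LR}\kt_{MS}$ from $Q_{LM}$ in Lemma~\ref{Qij-lem}; (ii)~from $\gamma_0{}^0{}_0\, g_{0LM}$ the quadratic $-\tfrac{1}{2}\delta^{PQ}\kt_{PQ}\kt_{LM}$, using $\gamma_0{}^0{}_0=-\tfrac{1}{2}\delta^{PQ}\kt_{PQ}$ obtained from \eqref{gamma-000} together with the symmetry of $\delta^{PQ}$; (iii)~from $\gamma_0{}^P{}_L\, g_{0PM}$ the quadratic $-\tfrac{1}{2}\delta^{PR}\kt_{LR}\kt_{PM}$; and (iv)~from $\gamma_0{}^P{}_M\, g_{0LP}$ the analogous $-\tfrac{1}{2}\delta^{PR}\kt_{MR}\kt_{LP}$. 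Using the symmetries $\kt_{IJ}=\kt_{JI}$ and $\delta^{IJ}=\delta^{JI}$, each of (iii) and (iv) equals $-\tfrac{1}{2}\delta^{RS}\kt_{LR}\kt_{MS}$, so (iii)+(iv) precisely cancels (i), leaving only (ii). Thus the total $\kt*\kt$-content of $Q^0_{LM}$ is $-\tfrac{1}{2}\delta^{RS}\kt_{RS}\kt_{LM}$, as required.

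The main obstacle is exactly the bookkeeping in this last step: one must identify all four $\kt*\kt$-sources consistently and exploit the index relabeling $\delta^{PR}\kt_{LR}\kt_{PM}=\delta^{RS}\kt_{LR}\kt_{MS}$ so that the three negative contributions $-\tfrac{1}{2}\bigl(\delta^{PQ}\kt_{PQ}\kt_{LM}+2\delta^{RS}\kt_{LR}\kt_{MS}\bigr)$ cancel against $+\delta^{RS}\kt_{LR}\kt_{MS}$ to yield the stated trace form. Once this cancellation is established, the remaining products in all four cases collect as bilinear combinations of $(\kt+\ellt+\mt)$ with either $\betat\tau$ or $(\ellt+\mt)$, hence fit into $\Qft$, completing the proof.
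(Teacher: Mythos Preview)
Your approach is essentially the same as the paper's: split $Q^i_{lm}=\delta^i_0Q_{lm}+\Psi^i_{lm}$, expand $\Psi^i_{lm}$ using \eqref{gamma-000}--\eqref{gamma-0KJ}, and track the $\kt*\kt$ contributions. Your key computation for $Q^0_{LM}$---identifying the four $\kt*\kt$ sources (i)--(iv) and showing that (iii)$+$(iv) cancels (i)---is correct and matches the paper's finding that $\Psi^0_{LM}$ carries $-\tfrac12\delta^{PQ}\kt_{PQ}\kt_{LM}-\delta^{PQ}\kt_{LQ}\kt_{MP}$, the second of which is killed by the $\delta^{RS}\kt_{LR}\kt_{MS}$ from $Q_{LM}$.

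There is one imprecision in your treatment of $Q^0_{0M}$: you claim the $g$-factors ``$g_{p0M}$, $g_{0pM}$, $g_{00p}$'' never equal $g_{0LM}$, but $g_{0pM}$ with $p=P$ is exactly $g_{0PM}$, which does carry $\kt_{PM}$; similarly $g_{000}=-\delta^{RS}\kt_{RS}$ carries $\kt$. Your stated reason that ``each appearance of $\kt$ comes solely from the $\gamma_0$-coefficient'' therefore fails. The conclusion is nonetheless correct because in both instances the partner coefficient ($\gamma_0{}^P{}_0$ and $\gamma_0{}^0{}_M$, respectively) lies in $\betat\tau+\mt$ and carries no $\kt$, so no $\kt*\kt$ arises; you should make that check explicit.
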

\begin{proof}
Define
\begin{equation} \label{Psi-def}
\Psi^{i}_{lm}= \delta^{ij}\bigl(\gamma_0{}^p{}_j g_{plm}+\gamma_0{}^p{}_l g_{jpm} +\gamma_0{}^p{}_m g_{jlp}\bigr).
\end{equation}
Then setting $(i,l,m)=(0,0,M)$,  $(i,l,m)=(I,0,M)$,  $(i,l,m)=(0,L,M)$ and  $(i,l,m)=(I,L,M)$ in
\eqref{Psi-def}, we observe using \eqref{gi00} that
\begin{align*} 
\Psi^{0}_{0M}=& 2\gamma_0{}^0{}_0 g_{00M}+\gamma_0{}^P{}_0( g_{P0M}+ g_{0PM}) -\gamma_0{}^0{}_M\delta^{RS}(g_{0RS}-2g_{R0S})+\gamma_0{}^P{}_M g_{00P}, 
\\
\Psi^{I}_{0M}=& \delta^{IJ}\bigl(\gamma_0{}^0{}_J g_{00M}+\gamma_0{}^P{}_J g_{P0M}+\gamma_0{}^0{}_0 g_{J0M} +\gamma_0{}^P{}_0 g_{JPM} \notag \\
&\qquad +\gamma_0{}^0{}_M (2g_{00J}-\delta^{RS}(2g_{RSJ}-g_{JRS}))+\gamma_0{}^{P}{}_M g_{J0P}\bigr), 
\\
\Psi^{I}_{LM}=& \delta^{IJ}\bigl(\gamma_0{}^0{}_J g_{0LM}+\gamma_0{}^P{}_J g_{PLM}+\gamma_0{}^0{}_L g_{J0M} +\gamma_0{}^P{}_L g_{JPM} +\gamma_0{}^0{}_M g_{JL0}+\gamma_0{}^{P}{}_M g_{JLP}\bigr), 
\intertext{and}
\Psi^{0}_{LM}
=& -\frac{1}{2}\delta^{PQ}g_{0PQ}g_{0LM}-\delta^{PQ}g_{0LQ}g_{0MP}
+\Bigl[\delta^{PQ}g_{P0Q}g_{0LM}-\frac{1}{2}\delta^{PQ}(g_{L0Q}g_{0PM}-g_{Q0L}g_{0PM}) \notag \\
&-\frac{1}{2}\delta^{PQ}(g_{M0Q}g_{0LP}-g_{Q0M}g_{0LP})+
\gamma_0{}^P{}_0 g_{PLM}+\gamma_0{}^0{}_L g_{00M} +\gamma_0{}^0{}_M g_{0L0}\Bigr],
\end{align*}
where in deriving the last equality we used \eqref{gamma-000} and
\eqref{gamma-0KJ}.
With the help of these expansions, we see from \eqref{gamma-000}-\eqref{gamma-0KJ} and the definitions
\eqref{kt-def}-\eqref{mt-def} that
\begin{align}
\Psi^0_{0M}&= \Qft_M,\label{Qijk-lem.5} \\
\Psi^I_{0M}&= \Qft^I_{0M}, \label{Qijk-lem.6}\\
\Psi^0_{LM}&=  -\frac{1}{2}\delta^{PQ}\kt_{PQ}\kt_{LM}-\delta^{PQ}\kt_{LQ}\kt_{MP}+\Qft_{LM} \label{Qijk-lem.7}
\intertext{and}
\Psi^I_{LM}&=  \Qft^I_{LM} \label{Qijk-lem.8}
\end{align}
where $\Qft$ is as defined in the statement of the lemma.
But by \eqref{Qilm-def},
$Q^i_{lm}=\Psi^i_{lm}+\delta^i_0 Q_{lm}$,
and so, we conclude that the desired expansions are a direct consequence of \eqref{Qij-lem.1} and
\eqref{Qijk-lem.5}-\eqref{Qijk-lem.8}.
\end{proof}

\subsection{$e_I(\tau_{jk})$, $e_I(g_{jkl})$ and $L_I{}^k{}_J$}
\begin{lem} \label{lem-cov-exp}
Let $L_I{}^k{}_J$ be as defined above by \eqref{L-def}. Then
\begin{align}
     e_I(\tau_{jk})&=\tau_{Ijk} + \tf_{Ijk}, \label{lem-cov-exp.1}\\
     e_I(g_{jkl})&= g_{Ijkl}+\gf_{Ijkl},
     \label{lem-cov-exp.2}
     \intertext{and}
     L_I{}^k{}_J&=\betat\Lf_I{}^k{}_J\label{lem-cov-exp.3}
\end{align}
where
\begin{equation} \label{gf-def}
\tf= (\psit+\ellt+\mt)*\tau, \quad \gf= (\psit+\ellt+\mt)*(\kt+\ellt +\mt) 
\end{equation}
and
\begin{equation} \label{Lf-def}
     \Lf=  (\betat\tau+\psit+\kt+\ellt +\mt)*(\betat \tau+\psit+\ellt+\mt).
\end{equation}
\end{lem}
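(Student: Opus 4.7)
My plan is to derive all three identities from the standard relation between frame derivatives and the background covariant derivative when acting on frame components of tensors, combined with the explicit decompositions of the connection coefficients $\gamma_i{}^k{}_j$ obtained earlier in Sections~\ref{Fermi}~and~\ref{sec:FirstOrderForm}. Since $\tau_{jk}$ and $g_{jkl}$ are frame components of covariant tensors of ranks $2$ and $3$, respectively, I would begin from the identities
\[
e_I(\tau_{jk}) = \Dc_I\tau_{jk} + \gamma_I{}^p{}_j\tau_{pk} + \gamma_I{}^p{}_k\tau_{jp},\quad
e_I(g_{jkl}) = \Dc_I g_{jkl} + \gamma_I{}^p{}_j g_{pkl}+ \gamma_I{}^p{}_k g_{jpl}+ \gamma_I{}^p{}_l g_{jkp},
\]
whose first terms on the right equal $\tau_{Ijk}$ and $g_{Ijkl}$ by the defining relations \eqref{d-fields}. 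It then remains only to show that the $\gamma$-correction terms fall into the prescribed $*$-classes.

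For both \eqref{lem-cov-exp.1} and \eqref{lem-cov-exp.2} the decisive observation is that every component $\gamma_I{}^p{}_j$ with spatial leading index lies in $\psit+\ellt+\mt$. I would split on the value of $j$: when $j=J$ is spatial, $\gamma_I{}^p{}_J=\psit_I{}^p{}_J$ by \eqref{psit-def}; when $j=0$ and $p=0$, relation \eqref{gamma-I00} gives $\gamma_I{}^0{}_0\in\ellt+\mt$; and when $j=0$, $p=P$ spatial, relation \eqref{gamma-IJ0} gives $\gamma_I{}^P{}_0\in\ellt+\psit$. This immediately yields \eqref{lem-cov-exp.1}. For \eqref{lem-cov-exp.2} one additionally needs that every component $g_{pkl}$ lies in $\kt+\ellt+\mt$: components with spatial first index are $\ellt$ by \eqref{ellt-def}; the components $g_{00M}$ are $\mt$ by \eqref{mt-def}; the wave-gauge relations \eqref{gi00} express $g_{000}$ and $g_{I00}$ as combinations of the previous ones; and finally $g_{0LM}=\kt_{LM}+\ellt_{L0M}+\ellt_{M0L}$ by the definition \eqref{kt-def}. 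Substituting these decompositions into the two right-hand sides then produces $\tf$ and $\gf$ in the required forms.

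For \eqref{lem-cov-exp.3} the strategy is to pull the leading $\betat$ out of \eqref{L-def} and check term by term that each of the five summands in the bracket belongs to the bilinear class $(\betat\tau+\psit+\kt+\ellt+\mt)*(\betat\tau+\psit+\ellt+\mt)$. The first two are immediately of the forms $(\betat\tau)*(\betat\tau)$ and $(\ellt+\mt)*(\betat\tau)$ after recognising $g_{00I}=\mt_I$. For the three remaining summands, which are quadratic in the connection coefficients, I would substitute the decompositions \eqref{gamma-000}-\eqref{gamma-0KJ}, which place $\gamma_0{}^k{}_j$ in $\betat\tau+\kt+\ellt$, together with \eqref{gamma-I00}-\eqref{gamma-IJ0}, which place $\gamma_I{}^l{}_0$ in $\ellt+\mt+\psit$, and the identification $\gamma_I{}^k{}_J=\psit_I{}^k{}_J$.

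The main bookkeeping obstacle will be to confirm that no product of the forbidden form $\kt*\kt$ arises, since the statement places $\kt$ only on the left factor of the $*$-product. Inspection of \eqref{gamma-000}-\eqref{gamma-0KJ} reveals that $\kt$ enters only through $\gamma_0{}^0{}_0$ and $\gamma_0{}^K{}_J$, both with leading index $0$; in each of the three quadratic summands of \eqref{L-def} at most one of the two $\gamma$-factors can therefore carry a $\kt$-contribution, and by the symmetry of the $*$-product we can always place that single $\kt$ on the left. Completing this case analysis yields $\Lf$ in the announced form, establishing \eqref{lem-cov-exp.3}.
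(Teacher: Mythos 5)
Your proposal is correct and follows the same route as the paper: begin with the frame/covariant‐derivative identities, identify the correction terms, and sort each factor into the $*$-classes using \eqref{gi00}, \eqref{gamma-000}–\eqref{gamma-IJ0}, \eqref{kt-def}–\eqref{mt-def} and \eqref{psit-def}. The paper's own proof is essentially the same two-step argument, just stated more tersely, so there is no genuinely different idea here. Two small observations: you state that $\gamma_0{}^k{}_j$ lies in $\betat\tau+\kt+\ellt$, but \eqref{gamma-0K0} shows $\gamma_0{}^K{}_0\in\betat\tau+\mt$, so the correct class is $\betat\tau+\kt+\ellt+\mt$ — this does not affect the conclusion since $\mt$ is permitted in both factors of $\Lf$, but it is worth recording. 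Your remark about ruling out $\kt*\kt$ is a nice addition; the paper does not mention it explicitly, and it is indeed the only place one needs to be careful that the asymmetry in $\Lf$ (with $\kt$ only on the left) is respected.
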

\begin{proof}
By the definition of the covariant derivative $\Dc_j$ of the background metric $\gc_{ij}$, we have
\begin{align*}
    e_I(\tau_{jk})=&\delta_I^i\bigl(\Dc_i\tau_{jk}+
    \gamma_i{}^m{}_j\tau_{mk}+\gamma_i{}^m{}_k\tau_{jm}\bigr)
    = \tau_{Ijk}+
    \gamma_I{}^m{}_j\tau_{mk}+\gamma_I{}^m{}_k\tau_{jm}
    \intertext{and}
     e_I(g_{jkl})=&\delta_I^i\bigl(\Dc_ig_{jkl}+
    \gamma_i{}^m{}_jg_{mkl}+\gamma_i{}^m{}_kg_{jml}+\gamma_i{}^m{}_l g_{jkm}\bigr)
    = g_{Ijkl}+
    \gamma_I{}^m{}_jg_{mkl}+\gamma_I{}^m{}_kg_{jml}+\gamma_I{}^m{}_l g_{jkm}. 
\end{align*}
From these expressions, it is then clear that the expansion \eqref{lem-cov-exp.1}-\eqref{lem-cov-exp.2} are a direct 
consequence of 
\eqref{gi00}, \eqref{gamma-000}-\eqref{gamma-IJ0} and
\eqref{kt-def}-\eqref{psit-def}. 

Next, we observe that
\begin{align*}
    -\gamma_I{}^l{}_J\gamma_0^k{}_l + \gamma_0{}^l{}_J\gamma_I{}^k{}_l+ (\gamma_0{}^l{}_I-\gamma_I{}^l{}_0)\gamma_l{}^k{}_J
    &=  -\gamma_I{}^0{}_J\gamma_0{}^k{}_0-\gamma_I{}^L{}_J\gamma_0{}^k{}_L + \gamma_0{}^0{}_J\gamma_I{}^k{}_0+ \gamma_0{}^L{}_J\gamma_I{}^k{}_L \\
    & \qquad +(\gamma_0{}^0{}_I-\gamma_I{}^0{}_0)\gamma_0{}^k{}_J+(\gamma_0{}^L{}_I-\gamma_I{}^L{}_0)\gamma_L{}^k{}_J. 
\end{align*}
With the help of \eqref{kt-def}-\eqref{mt-def}, \eqref{psit-def}, and \eqref{gamma-000}-\eqref{gamma-IJ0}, we can write the above expression
as 
\begin{equation*}
     -\betat \gamma_I{}^l{}_J\gamma_0{}^k{}_l + \betat \gamma_0{}^l{}_J\gamma_I{}^k{}_l + \betat(\gamma_0{}^l{}_I-\gamma_I{}^l{}_0)\gamma_l{}^k{}_J =\betat \Lf_I{}^k{}_J,
\end{equation*}
where $\Lf_I{}^k{}_j$ is as defined above by
\eqref{Lf-def}. Using this, it is not difficult to verify
that \eqref{lem-cov-exp.3} follows from \eqref{L-def} and \eqref{kt-def}-\eqref{mt-def}.
\end{proof}

Now, by \eqref{lem-cov-exp.1}-\eqref{lem-cov-exp.2}, we can express the evolution equation
\eqref{for-M.2} for the background connection coefficients $\gamma_I{}^k{}_J$
as 
\begin{align}
    \del{t}\gamma_I{}^k{}_J=
-\delta^k_0\Bigl(\betat^2 \tau_{I0J}+\frac{1}{2}\betat g_{IJ00}\Bigr)-\frac{1}{2}\eta^{kl}\bigl(\betat g_{I0Jl}+\betat g_{IJ0l}-\betat g_{Il0J}\bigr)+\betat \Lf_I{}^k{}_J\label{for-N},
\end{align}
where $\Lf$ is of the form \eqref{Lf-def}.
Furthermore, using \eqref{kt-def} and \eqref{psit-def}, it clear that the transport equations
\eqref{for-L} and \eqref{for-M.1} can be expressed as
\begin{align}
\del{t}\betat &= -\betat^3 \tau_{00}+\frac{1}{2} \delta^{JK}\betat^2\kt_{JK} \label{for-O.1}
\intertext{and}
\del{t}e^\Lambda_I &= - \Bigl(\frac{1}{2}\delta^{JL}\betat\kt_{IL}+\delta^{JK}\betat\psit_{I}{}^0{}_K\Bigr)e^\Lambda_J, \label{for-O.2}
\end{align}
respectively. For use below, we also note that from \eqref{grad-alpha-A} and
\eqref{ellt-def}-\eqref{mt-def} that
\begin{equation} \label{grad-alpha-B}
    e_I(\betat)= -\betat^2 \tau_{I0}-\frac{1}{2}\betat\bigl( 2 \mt_I -\delta^{JK}(2\ellt_{JKI}-\ellt_{IJK})\bigr).
\end{equation}

\subsection{$P_{Qlm}$, $\delta^q_Q \betat \Dc_k g_{qjlm}$ and $\delta^q_Q \betat \Dc_k \tau_{qjl}$} 

\begin{lem}\label{lem-P-exp}
Let $ P_{Qlm}$ be as defined above by \eqref{P-def}. Then
\begin{align}
   \betat P_{Qlm} &= \Pf_{Qlm},\label{lem-P-exp.1} \\
   \delta^q_Q \betat \Dc_k g_{qjlm} &=  \delta_k^0\del{t}g_{Qjlm} +\delta_k^K\betat e_K(g_{Qjlm})+ \Gf_{Qkjlm}, \label{lem-P-exp.2}
   \intertext{and}
   \delta^q_Q \betat \Dc_k \tau_{qjl}&=
   \delta_k^0\del{t}\tau_{Qjl} +\delta_k^K\betat e_K(\tau_{Qjl})+ \Tf_{Qkjl} \label{lem-P-exp.3}
\end{align}
where
\begin{align}
\Pf &= \frac{1}{t}(\betat\tau + \psit + \ellt +\mt)*(\kt+\ellt+\mt) + \betat(\kt+\ellt+\mt) *\bigl( \gt+
(\ellt+\mt)*(\kt+\ellt+\mt)\bigl), \label{Pf-def}\\
    \Gf&=(\betat \tau +\psit) *\Bigl( \frac{1}{t}\bigl(\betat \tau+\kt+\ellt+\mt)  
     +\betat(\kt+\ellt+\mt)*(\kt+\ellt+\mt)\Bigr) \notag\\
     &\qquad
     +\betat \bigl(\betat \tau+\psit+\kt+\ellt+\mt)*\gt  
    \label{Gf-def}
    \intertext{and}
    \Tf&=\betat (\betat \tau+\psit+ \kt+ \ellt +\mt)*\taut
    +\betat (\betat \tau+\psit +\ellt +\mt)*(\kt + \ellt +\mt)*\tau. \label{Tf-def}
\end{align}
\end{lem}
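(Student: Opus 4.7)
The proof is a direct expansion, broken into the three claims. In each case the strategy is to write the covariant derivative or the quantity $P_{Qlm}$ in terms of the primary/differentiated fields introduced in Section \ref{sec:NonlinDecomp}, using the formulas for the connection coefficients $\gamma_0{}^k{}_j$ and $\gamma_I{}^k{}_j$ derived in Section~\ref{Fermi} (equations \eqref{gamma-000}--\eqref{gamma-IJ0}, \eqref{psit-def}) together with the replacements \eqref{gi00}, \eqref{for-K.1}, \eqref{for-K.2} whenever a $0$-indexed component of $g_{ijk}$, $g_{ijkl}$ or $\tau_{ijk}$ appears that is not among the evolved spatial variables.

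For \eqref{lem-P-exp.2} and \eqref{lem-P-exp.3}, I would start from the frame expression of the covariant derivative, namely
\begin{equation*}
\Dc_k g_{qjlm}=e_k(g_{qjlm})-\gamma_k{}^p{}_q g_{pjlm}-\gamma_k{}^p{}_j g_{qplm}-\gamma_k{}^p{}_l g_{qjpm}-\gamma_k{}^p{}_m g_{qjlp},
\end{equation*}
and the analogous formula for $\Dc_k \tau_{qjl}$. Setting $q=Q$ and multiplying by $\betat$, the $k=0$ contribution to the first term becomes $\betat e_0(g_{Qjlm})=\del{t}g_{Qjlm}$ thanks to \eqref{e0-mu}, while for $k=K$ one just keeps $\betat e_K(g_{Qjlm})$. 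All remaining pieces are connection-coefficient contractions; each $\gamma$ falls under one of the cases \eqref{gamma-000}--\eqref{gamma-IJ0} or \eqref{psit-def} and therefore has the schematic form $\betat\tau+\psit+\kt+\ellt+\mt$. When such a $\gamma$ contracts with a $0$-indexed metric derivative $g_{0\cdot\cdot\cdot}$, I would use \eqref{for-K.1} to rewrite $\betat g_{0rlm}$ as a sum of a $t^{-1}$-singular part (absorbing exactly the $\frac{1}{t}(\betat\tau+\kt+\ellt+\mt)$ piece in $\Gf$), a $\betat\gt$ piece, and a $\betat Q_{lm}$ piece which by Lemma~\ref{Qij-lem} is already $\betat(\kt+\ellt+\mt)*(\kt+\ellt+\mt)$. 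The analogous substitution via \eqref{for-K.2} handles $\tau_{0rl}$ in the proof of \eqref{lem-P-exp.3}, where the contribution of $g_{lpq}\tau_{kj}$ inside \eqref{for-K.2} is absorbed into the last summand of \eqref{Tf-def}. Collecting the remaining purely spatial connection contractions produces the $\betat(\betat\tau+\psit+\kt+\ellt+\mt)*\gt$ and $\betat(\betat\tau+\psit+\kt+\ellt+\mt)*\taut$ contributions to $\Gf$ and $\Tf$ respectively.

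For \eqref{lem-P-exp.1}, I would plug $q=Q$ into the definition \eqref{P-def} of $P_{qlm}$; the last group of terms in \eqref{P-def}, which carried a $\delta_q^0$, vanishes immediately. In the remaining three groups I would substitute $\Dc_Q(\betat^{-1})=-\betat^{-2}e_Q(\betat)$ using \eqref{grad-alpha-B}, express $\gamma_Q{}^j{}_0$ via \eqref{gamma-I00} and \eqref{gamma-IJ0}, and rewrite every factor $g_{ljm}+g_{mjl}-g_{jlm}$ as an element of $(\kt+\ellt+\mt)$. After multiplication by $\betat$, the prefactor $\frac{1}{t}\betat\Dc_Q(\betat^{-1})$ contributes $\frac{1}{t}(\betat\tau+\ellt+\mt)$ and the prefactor $\frac{1}{t}\gamma_Q{}^j{}_0$ contributes $\frac{1}{t}(\psit+\ellt+\mt)$, which together produce the first summand of \eqref{Pf-def}. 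The term $\betat\eta^{kr}\eta^{js}g_{Qrs}g_{kjlm}$ contributes $\betat(\kt+\ellt+\mt)*\gt$ when $k=K$ and, after invoking \eqref{for-K.1} on the $k=0$ component, also the $\betat(\kt+\ellt+\mt)*(\ellt+\mt)*(\kt+\ellt+\mt)$ piece of \eqref{Pf-def}. Finally, $\betat \Dc_Q Q_{lm}$ is handled by differentiating the explicit expansions from Lemma~\ref{Qij-lem}; the derivatives of the metric components are re-expressed via \eqref{lem-cov-exp.2} in terms of $\gt$ and $\gf$, both of which land in the second summand of \eqref{Pf-def}.

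The main obstacle I anticipate is purely bookkeeping: one must check that every replacement of a $0$-indexed component using \eqref{gi00}, \eqref{for-K.1} or \eqref{for-K.2} produces only terms of the schematic shape displayed in \eqref{Pf-def}--\eqref{Tf-def}, and in particular that the unique singular prefactor $\frac{1}{t}$ appears exactly where asserted (namely in $\Pf$ and in the first summand of $\Gf$) and nowhere else. Once this careful accounting is performed the three identities follow without any further structural input.
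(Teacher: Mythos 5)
Your handling of \eqref{lem-P-exp.2} and \eqref{lem-P-exp.3} matches the paper's: start from the frame covariant derivative, split $k=0$ from $k=K$, use \eqref{e0-mu} for the $\del{t}$ piece, and expand the connection contractions with \eqref{gamma-000}--\eqref{gamma-IJ0} and the $0$-index substitutions \eqref{for-K.1}, \eqref{for-K.2}. Your handling of the $\Dc_Q(\betat^{-1})$, $\gamma_Q{}^j{}_0$, and $\eta^{kr}\eta^{js}g_{Qrs}g_{kjlm}$ pieces of $P_{Qlm}$ is also consistent with the paper.

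There is, however, a genuine gap in the way you propose to treat the $\Dc_Q Q_{lm}$ term in \eqref{lem-P-exp.1}. You suggest differentiating the explicit expansions of Lemma~\ref{Qij-lem} (which are quadratics in $\kt, \ellt, \mt$) and then rewriting $e_I(g_{jkl})$ via \eqref{lem-cov-exp.2} as $g_{Ijkl}+\gf_{Ijkl}$, claiming that ``both $\gt$ and $\gf$ land in the second summand of \eqref{Pf-def}.'' But $\gf=(\psit+\ellt+\mt)*(\kt+\ellt+\mt)$ contains $\psit$, so after contracting with another $(\kt+\ellt+\mt)$ factor you obtain terms of schematic form $\betat(\kt+\ellt+\mt)*\psit*(\kt+\ellt+\mt)$, which do \emph{not} fit in $\betat(\kt+\ellt+\mt)*\bigl(\gt+(\ellt+\mt)*(\kt+\ellt+\mt)\bigr)$. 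The claim as written is therefore false. Moreover, since you invoke \eqref{lem-cov-exp.2} — which expands $e_I$, not $\Dc_I$ — you are effectively computing $e_Q(Q_{lm})$, not $\Dc_Q Q_{lm}$; the two differ by the tensor-index corrections $-\gamma_Q{}^p{}_l Q_{pm}-\gamma_Q{}^p{}_m Q_{lp}$, which also carry $\psit$. The $\psit$-bearing terms from these two sources do cancel (the final covariant object must be $\psit$-free), but your proposal neither accounts for the missing correction terms nor exhibits the cancellation, and its narration asserts a wrong placement of $\gf$ inside $\Pf$.

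The paper avoids this problem entirely by applying the Leibniz rule for $\Dc_q$ directly to the definition \eqref{Q-def} of $Q_{ij}$ as a tensor: one then gets $\Dc_q Q_{ij}=\Qtt^1_{qij}+\Qtt^2_{qij}$ where $\Qtt^1$ carries the differentiated fields $g_{qijk}$ (no $\gamma$'s at all, since $\Dc_q g_{ijk}=g_{qijk}$ by definition) and $\Qtt^2$ carries $g_{qrs}$ from differentiating the inverse metric. Restricting to $q=Q$ immediately yields \eqref{Dq-exp} with no $\psit$ factor, which is then absorbed cleanly into \eqref{Pf-def}. Switching to the paper's route — differentiate \eqref{Q-def} covariantly, rather than differentiating the frame-component expansions — would close the gap.
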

\begin{proof}
Differentiating \eqref{Q-def}, we find, by \eqref{orthog} and \eqref{p-fields}-\eqref{d-fields}, that
\begin{equation} \label{Dq-Qij}
    \Dc_q Q_{ij} =\Qtt^1_{qij}+ \Qtt^2_{qij}
\end{equation}
where 
\begin{align*}
\Qtt^{1}_{qij} &= \frac{1}{2}\eta^{kl}\eta^{mn}\Bigl(g_{qimk} g_{jn l}+g_{imk}g_{qjnl}
+2 g_{qnil}g_{kjm} +2g_{nil}g_{qkjm}\\ &\qquad- 2g_{qlin} g_{kjm}- 2g_{lin}g_{qkjm}
-2g_{qlin}g_{jmk}-2g_{lin}g_{qjmk} -2 g_{qimk}g_{ljn}-2g_{imk}g_{qljn}\Bigr)
\intertext{and}
    \Qtt^2_{qij} &=  -\frac{1}{2}(\eta^{kr}\eta^{ls}g^{mn}+ \eta^{kl} \eta^{mr}\eta^{ns}\bigr)g_{qrs}\Bigl(g_{imk}g_{jn l}
+2 g_{nil}g_{kjm} - 2g_{lin}g_{kjm}
-2g_{lin}g_{jmk} -2g_{imk}g_{ljn}\Bigr).
\end{align*}
Setting $q=Q$ in \eqref{Dq-Qij} gives
\begin{equation} \label{Dq-exp}
    \delta^q_Q\Dc_q Q_{ij} =\bigl[(\kt+\ellt +\mt)*\bigl( \gt+(\ellt+\mt)*(\kt+\ellt +\mt) \bigr)\bigr]_{Qij},
\end{equation}
where in deriving this we have employed  \eqref{gi00} and \eqref{kt-def}-\eqref{gt-def}.

To proceed, we let
\begin{equation*}
    \Ptt_{Qlm} =
 \frac{1}{t}\bigl(e_Q(\betat^{-1}) \delta_0^j+\betat^{-1}  \gamma_Q{}^j{}_0\bigr)  (g_{ljm}+g_{mjl}-
g_{jlm}) -\eta^{kr}\eta^{js}g_{Qrs}g_{kjlm},
\end{equation*}
which we note, using \eqref{gi00} and \eqref{ellt-def}-\eqref{mt-def}, can be expressed as
\begin{align}
    \Ptt_{Qlm} &=
 \frac{1}{t}\bigl(e_Q(\betat^{-1})+\betat^{-1}  \gamma_Q{}^0{}_0\bigr)  (g_{l0m}+g_{m0l}-
g_{0lm})
 +\frac{1}{t}\betat^{-1}  \gamma_Q{}^J{}_0 (g_{lJm}+g_{mJl}-
g_{Jlm})\notag \\
&\qquad -\bigl(2\mt_Q-\delta^{RS}(2\ellt_{RSQ}-\ellt_{QRS})\bigr)g_{00lm}+\eta^{JS} \ellt_{Q0S}g_{0Jlm}
- \delta^{KR}\eta^{js} \ellt_{QsR}g_{Kjlm}.  \label{Ptt-def}
\end{align}
Now, by \eqref{gi00}, \eqref{gamma-I00}-\eqref{gamma-IJ0},  \eqref{kt-def}-\eqref{mt-def}, \eqref{psit-def} and  
\eqref{grad-alpha-B}, we have
\begin{align}
\frac{1}{t}\bigl(e_Q(\betat^{-1})+\betat^{-1}  \gamma_Q{}^0{}_0\bigr)  (g_{l0m}&+g_{m0l}-
g_{0lm})
+\frac{1}{t}\betat^{-1}  \gamma_Q{}^J{}_0 (g_{lJm}+g_{mJl}-
g_{Jlm})\notag \\
&= \frac{1}{t}\betat^{-1}[(\betat \tau+\psit+\ellt+\mt)*(\kt+\ellt+\mt)]_{Qlm} \label{Ptt-exp-A}
\end{align}
while it is clear from \eqref{for-K.1} and \eqref{Qij-lem.1} that
\begin{equation} \label{g0rlm-exp}
    g_{0rlm} = \frac{1}{t}\betat^{-1}\delta_r^0(g_{l0m}+g_{m0l}-g_{0lm})+\frac{2}{t}
\delta_r^0\tau_{lm} - \delta_{ri}B^{ijK}g_{Kjlm}+[(\kt+\ellt+\mt)*(\kt+\ellt+\mt)]_{rlm}.
\end{equation}
By \eqref{Dq-exp}-\eqref{g0rlm-exp}, we then, with the help of the definitions \eqref{kt-def}-\eqref{mt-def}, deduce from \eqref{P-def} the validity of the expansion \eqref{lem-P-exp.1}.

Next, by the definition of the covariant derivative $\Dc_q$, we have
\begin{align*}
    \delta^q_Q \betat \Dc_k g_{qjlm} &= \betat e_k(g_{Qjlm})-\betat \gamma_k{}^p{}_{Q} g_{pjlm}
    -\betat \gamma_k{}^p{}_{j} g_{Qplm}
     -\betat \gamma_k{}^p{}_{l} g_{Qjpm}
      -\betat \gamma_k{}^p{}_{m} g_{Qjlp}\\
     &= \delta_k^0\del{t}g_{Qjlm} +\delta_k^K\betat e_K(g_{Qjlm}) -\betat \gamma_k{}^0{}_{Q} g_{0jlm}
    -\Bigl(\betat \gamma_k{}^P{}_{Q} g_{Pjlm}\\
    &\hspace{4.0cm} +\betat \gamma_k{}^p{}_{j} g_{Qplm}
     +\betat \gamma_k{}^p{}_{l} g_{Qjpm}
      +\betat \gamma_k{}^p{}_{m} g_{Qjlp}\Bigr)
\end{align*}
where in deriving the second equality we used \eqref{e0-mu}. The expansion \eqref{lem-P-exp.2} is then a straightforward consequence of \eqref{gamma-000}-\eqref{gamma-0KJ}, \eqref{g0rlm-exp} and the definitions \eqref{kt-def}-\eqref{psit-def}. Furthermore, by employing similar arguments, it is not difficult with the help of \eqref{for-K.2} to verify that the expansion \eqref{lem-P-exp.3} also holds. 
\end{proof}

\begin{lem}\label{lem-JK-exp}
Let $J^j_l$ adn $K^i_{Ql}$ be as defined above by
\eqref{J-def} and \eqref{K-def}, respectively. Then
\begin{align*}
    \betat J^j_l=   \bigl[(\betat^2 \tau + \betat\kt +\betat\ellt +\betat \mt)*\tau\bigr]^j_l
 \AND
    \betat K^i_{Ql} = \Kf^i_{Ql}
\end{align*}
where
\begin{equation*}
    \Kf = (\betat\kt+\betat\ellt+\betat\mt)*\taut + \betat\gt*\tau + (\ellt+\mt)*(\betat\kt+\betat\ellt+\betat\mt)*\tau.
\end{equation*}
\end{lem}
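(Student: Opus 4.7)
The plan is to establish each expansion by direct substitution into the definitions \eqref{J-def} and \eqref{K-def}, multiplying through by $\betat$, and then rewriting every frame component in terms of the fundamental fields $\kt,\ellt,\mt,\tau,\gt,\taut,\psit$. Three identifications do all the work: the definitions \eqref{kt-def}--\eqref{psit-def} express each component of $g_{ijk}$ and $\tau_{ij}$ in terms of $\kt,\ellt,\mt$ and $\tau$; the wave-gauge relations \eqref{gi00} eliminate $g_{000}$ and $g_{I00}$ in favour of those same quantities; and the identities \eqref{gamma-000}--\eqref{gamma-0KJ} decompose each $\gamma_0{}^k{}_j$ as a linear combination of $\betat\tau$ and of entries of $\kt,\ellt,\mt$.

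For $\betat J^j_l$, the first summand in \eqref{J-def} is of schematic form $g*\tau$, and after multiplication by $\betat$ lies in the class $\betat(\kt+\ellt+\mt)*\tau$. The second summand $\delta^{ji}(\gamma_0{}^p{}_i\tau_{pl}+\gamma_0{}^p{}_l\tau_{ip})$, after multiplication by $\betat$ and substitution of \eqref{gamma-000}--\eqref{gamma-0KJ} for the $\gamma_0$ factors, contributes both a $\betat^2\tau*\tau$ piece and a $\betat(\kt+\ellt+\mt)*\tau$ piece. Summing, the first piece absorbs the first summand and the identity $\betat J^j_l=[(\betat^2\tau+\betat\kt+\betat\ellt+\betat\mt)*\tau]^j_l$ follows.

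For $\betat K^i_{Ql}$, the four contributions from \eqref{K-def} are treated separately. In $-\betat\eta^{jr}\eta^{ks}g_{Qrs}\tau_{kjl}$ the factor $g_{Qrs}$ has spatial first index, so by \eqref{gi00} no $\kt$ can arise and the product falls in $\betat(\ellt+\mt)*\taut\subset(\betat\kt+\betat\ellt+\betat\mt)*\taut$. The piece $-\betat\eta^{jr}\eta^{ks}g_{Qlrs}\tau_{jk}$ is, because $Q$ is spatial, exactly a $\betat\gt*\tau$ term by \eqref{gt-def}. The piece $2\betat\eta^{jm}\eta^{rn}\eta^{ks}g_{Qmn}g_{lrs}\tau_{jk}$ pairs a $g_{Qmn}$ (spatial first index, hence $(\ellt+\mt)$) with a general $g_{lrs}$ (of type $\kt+\ellt+\mt$) and a single $\tau$, yielding $(\ellt+\mt)*(\betat\kt+\betat\ellt+\betat\mt)*\tau$. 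Finally $-\betat\eta^{jr}\eta^{ks}g_{lrs}\tau_{Qjk}$ contributes another $\betat(\kt+\ellt+\mt)*\taut$ term since $\tau_{Qjk}=\taut_{Qjk}$. Collecting these four contributions reproduces $\Kf$ as defined in the statement.

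The proof presents no genuine obstacle; the only bookkeeping subtlety is to remember, when deciding which of $\kt,\ellt,\mt$ a given $g_{ijk}$ component can contribute, that a spatial first index excludes $\kt$ (which appears only through $g_{0JK}$ via \eqref{kt-def}), and to apply \eqref{gi00} uniformly to eliminate $g_{000}$ and $g_{I00}$ before reading off the type. With this rule the identification of each term as a multilinear expression of the stated type is immediate.
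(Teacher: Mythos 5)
Your treatment of $\betat J^j_l$ is fine, and terms 2, 3, 4 of $\betat K^i_{Ql}$ are handled correctly. However, there is a genuine gap in your analysis of term 1, namely $-\betat\eta^{jr}\eta^{ks}g_{Qrs}\tau_{kjl}$.

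The contraction $\eta^{ks}$ ranges $k$ over all indices, and the components $\tau_{0jl}$ (arising from $k=0$) are \emph{not} components of $\taut$, which by the definition \eqref{taut-def} carries only a spatial first slot: $\taut=(\taut_{Ijk})=(\tau_{Ijk})$. You assert that the whole of term 1 ``falls in $\betat(\ellt+\mt)*\taut$,'' but this is only correct for the spatial part $\tau_{Kjl}$ of the contraction. The $k=0$ piece, $-g_{Qr0}\tau_{0jl}$, cannot be assigned a $*$-type until $\tau_{0jl}$ is eliminated, and the paper does this precisely by invoking \eqref{for-K.2},
$\tau_{0rl} = -\delta_{ri}B^{ijK}\tau_{Kjl}-\delta_r^0 \eta^{jp}\eta^{kq} g_{lpq} \tau_{kj}$,
which is absent from your argument. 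After substitution, the $k=0$ part of term 1 contributes both to $(\betat\kt+\betat\ellt+\betat\mt)*\taut$ and to $(\ellt+\mt)*(\betat\kt+\betat\ellt+\betat\mt)*\tau$ — the latter class, which you attribute solely to the $g_{Qmn}g_{lrs}\tau_{jk}$ piece, also receives a contribution here. Without \eqref{for-K.2} the expansion of $\betat K^i_{Ql}$ simply does not close within the fields $\{\kt,\ellt,\mt,\tau,\gt,\taut\}$.
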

\begin{proof}
Using the definition \eqref{kt-def}-\eqref{tau-def}, it is not difficult to see from \eqref{J-def}, \eqref{gi00} and \eqref{gamma-000}-\eqref{gamma-0KJ} that $\betat J^j_l$ can be expressed as 
$\betat J^j_l = (\betat^2 \tau + \betat \kt +\betat \ellt +\mt)*\tau$,
which establishes the validity of the first expression. The validity of the second expression  $\betat K^i_{Ql} = \Kf^i_{Ql}$ is also straightforward to verify from \eqref{gi00}, \eqref{K-def}, \eqref{for-K.2} and the definitions
\eqref{kt-def}-\eqref{taut-def}.
\end{proof}

\section{Fuchsian form}
\label{sec:FuchsianForm}
We are now prepared to complete the transformation of the frame representation of the reduced conformal Einstein-scalar field equations into a Fuchsian form that will be suitable for establishing the existence of solutions all the way to the singularity. In particular, we
will use the Fuchsian formulation to derive energy estimates from which uniform bounds on the fields near the big bang singularity can be extracted.

\subsection{The complete first order system}
Thus far, the first order formulation
of the frame representation of the reduced conformal Einstein-scalar field equations that
was derived above in Section~\ref{sec:FirstOrderForm} consists of the following equations:
\begin{align}
  \del{t}g_{00M} &= \frac{1}{t}\bigl(2g_{00M}-\delta^{IJ}(2g_{IJM}-g_{MIJ})\bigr)+\frac{2}{t}\betat\tau_{0M} - B^{0jK}\betat g_{Kj0M}
+\betat Q^0_{0M}, \label{for-G.1.S}\\
\del{t}g_{R0M} &= -\delta_{RI} B^{IjK}\betat g_{Kj0M}+\delta_{RI}\betat Q^I_{0M}, \label{for-G.2.S}\\
\del{t}(g_{0LM}-g_{L0M}-g_{M0L}) &=  -\frac{1}{t}(g_{0LM}-g_{L0M}-g_{M0L}) +\frac{2}{t}\betat\tau_{(LM)}+S_{(LM)}, \label{for-H.S}\\
\del{t}g_{RLM} &= -\delta_{RI} B^{IjK}\betat g_{KjLM}+
                 \delta_{RI}\betat Q^I_{LM}, \label{for-G.4.S}\\
\del{t}\tau_{rl} &=
 \delta_{ri}\betat J^i_l-B^{ijK}\betat\tau_{Kj(l}\delta_{r)i},
 \label{for-F.2.S}\\
\label{for-I.S}
B^{ijk}\betat\Dc_k g_{qjlm} &= \frac{1}{t} \delta_0^i\delta_0^j  (g_{qljm}+ g_{qmjl}-g_{qjlm})+\frac{2}{t}\delta^i_0
\betat\tau_{q(lm)}+ \delta^i_0 \betat P_{q(lm)},\\
B^{ijk}\betat\Dc_k\tau_{qjl} &=  \betat K^i_{ql}, \label{for-J.S}\\
\label{for-L.S}
\del{t}\betat 
&=-\betat^{3}\tau_{00}+\frac{1}{2}\betat^2 \delta^{JK}(g_{0JK}-2g_{J0K}),\\
\del{t}e_I^\Lambda=&-\betat \Bigl(\frac{1}{2}\delta^{JK}(g_{0IK}-g_{I0K}-g_{K0I}) + \delta^{JK}\gamma_{I}{}^0{}_K \Bigr)
e^\Lambda_J, \label{for-M.1.S}\\
\del{t}\gamma_I{}^k{}_J=&
-\betat\Bigl(\delta^k_0\Bigl(\betat e_I(\tau_{0J})+\frac{1}{2}e_I(g_{J00})\Bigr)+\frac{1}{2}\eta^{kl}\bigl(e_I(g_{0Jl})+e_I(g_{J0l})-e_I(g_{l0J})\bigr)\Bigr)+L_I{}^k{}_J, \label{for-M.2.S}
\end{align}
where
\begin{align}
B^{ijk} &= -\delta^i_0 \eta^{jk}-\delta^j_0\eta^{ik}+ \eta^{ij}\delta^k_0,\label{B-def.S}
\end{align}
$Q^i_{lm}$ is determined by \eqref{Q-def} and \eqref{Qilm-def}, $S_{LM}$ and $J^i_l$ are given
 by \eqref{S-def} and \eqref{J-def}, respectively, $g_{000}$ and $g_{I00}$ are determined by \eqref{gi00}, $\tau_i=-\betat^{-1}g_{0i}$ according to \eqref{e0-def}, $P_{qlm}$ is determined by \eqref{P-def} and \eqref{grad-alpha-A}, $K^i_{ql}$ is given by \eqref{K-def}, $g_{0rlm}$ and $\tau_{0rl}$ are determined by \eqref{for-K.1}-\eqref{for-K.2}, $L_I{}^k{}_J$ is given by \eqref{L-def}, and  $\gamma{}_0{}^0{}_0$, $\gamma{}_0{}^K{}_0$, $\gamma_0{}^0{}_J$, $\gamma_0{}^K{}_J$, $\gamma_I{}^0{}_0$ and $\gamma_I{}^J{}_0$  are determined by \eqref{gamma-000}-\eqref{gamma-IJ0}. 
 \begin{rem}
\label{rem:symmetry}
 In the following, we will not always assume that solutions of the system \eqref{for-G.1.S}-\eqref{for-M.2.S} are obtained from solutions of the conformal Einstein-scalar field equations. We will also consider general solutions to this system of equations. For these solutions, we will require in our subsequent that $g_{0LM}-g_{L0M}-g_{M0L}$
 and $g_{qjlm}$ are symmetric in $L,M$ and $l,m$, respectively, which, of course would be true for a solution of \eqref{for-G.1.S}-\eqref{for-M.2.S} that was derived from a solution to the conformal Einstein-scalar field equations. Due to this requirement, we have explicitly symmetrized the equations \eqref{for-H.S} and \eqref{for-I.S} in the indices $L,M$ and $l,m$, respectively, in order to ensure that the solutions of  \eqref{for-G.1.S}-\eqref{for-M.2.S} have the desired symmetry provided the initial data does also. We will also do the same for all equations derived from these ones, c.f.~\eqref{for-H.S2} and \eqref{for-I.S2}.
 \end{rem}

\subsection{Initial data\label{frame-idata}}
Before proceeding, we first describe how the initial data $\{\gr_{\mu\nu},\ggr_{\mu\nu},\taur=t_0,\taugr\}$ for the reduced conformal Einstein-scalar field equations along with a choice of spatial frame initial data
\begin{equation} \label{frame-idata-A}
    e_I^\mu\bigl|_{\Sigma_{t_0}} = \er^\mu_I
\end{equation}
completely determines initial data for the first order system \eqref{for-G.1.S}-\eqref{for-M.2.S}. It is important to note that we do not assume here that the initial
data $\{\gr_{\mu\nu},\ggr_{\mu\nu},\taur=t_0,\taugr\}$
satisfies either of the gravitational or wave gauge constraints. 

We recall from Remark \ref{rem-Lag-idata} that the initial data set $\{\gr_{\mu\nu},\ggr_{\mu\nu},\taur=t_0,\taugr\}$ determines a corresponding Lagrangian initial data set
\begin{equation} \label{Lag-idata-set-A}
\bigl\{g_{\mu\nu}\bigl|_{\Sigma_{t_0}},\del{0}g_{\mu\nu}\bigl|_{\Sigma_{t_0}}, \tau\bigl|_{\Sigma_{t_0}}=t_0,\del{0}\tau\bigl|_{\Sigma_{t_0}}=1\bigr\}
\end{equation}
via \eqref{vr-def}-\eqref{Jcr-inv} and \eqref{dt-tau-idata}-\eqref{dt-g-idata}. This initial data, with the help of the reduced conformal Einstein-scalar field equations \eqref{lag-redeqns} and first derivatives thereof evaluated on $\Sigma_{t_0}$, then uniquely determines the following higher order partial derivatives on the initial hypersurface:
\begin{equation} \label{Lag-idata-set-B}
\bigl\{\del{\beta}g_{\mu\nu}\bigl|_{\Sigma_{t_0}},\del{\alpha}\del{\beta}g_{\mu\nu}\bigl|_{\Sigma_{t_0}},\del{\alpha}\tau\bigl|_{\Sigma_{t_0}}=\delta^0_\alpha,\del{\alpha}\del{\beta}\tau\bigl|_{\Sigma_{t_0}}, \del{\alpha}\del{\beta}\del{\gamma}\tau\bigl|_{\Sigma_{t_0}}\bigr\}.
\end{equation}
We further observe that the flat metric $\gc_{\mu\nu}=\del{\mu}l^\alpha \eta_{\alpha\beta}\del{\nu} l^\beta$ and its first and second order partial derivatives on the initial hypersurface, that is,
\begin{equation} \label{Lag-idata-set-C}
\bigl\{\gc_{\mu\nu}\bigl|_{\Sigma_{t_0}},\del{\alpha}\gc_{\mu\nu}\bigl|_{\Sigma_{t_0}},\del{\alpha}\del{\beta}\gc_{\mu\nu}\bigl|_{\Sigma_{t_0}} \bigr\},
\end{equation}
are uniquely determined from the initial data set $\{\gr_{\mu\nu},\ggr_{\mu\nu},\taur=t_0,\taugr\}$ by
\eqref{gct-def}, \eqref{Jc-def}, \eqref{vr-def}-\eqref{Jcr-def} and the
reduced conformal Einstein-scalar field equations, especially \eqref{J-ev.1}, and first derivatives thereof.
Taken together, \eqref{Lag-idata-set-A}-\eqref{Lag-idata-set-C} determine the following geometric fields on $\Sigma_{t_0}$:
\begin{equation} \label{Lag-idata-set-D}
\bigl\{g_{\mu\nu}\bigl|_{\Sigma_{t_0}},\Dc_{\beta}g_{\mu\nu}\bigl|_{\Sigma_{t_0}},\Dc_{\alpha}\Dc_{\beta}g_{\mu\nu}\bigl|_{\Sigma_{t_0}},\tau\bigl|_{\Sigma_{t_0}}=t_0,\Dc_{\alpha}\tau\bigl|_{\Sigma_{t_0}}=\delta^0_\alpha,\Dc_{\alpha}\Dc_{\beta}\tau\bigl|_{\Sigma_{t_0}},\Dc_{\alpha}\Dc_{\beta}\Dc_{\gamma}\tau\bigl|_{\Sigma_{t_0}} \bigr\}.
\end{equation}

Next, by \eqref{chi-idata} and \eqref{e0-def}, the coordinate components of the frame vector $e_0$ on the initial hypersurface are given by 
\begin{equation} \label{frame-idata-E}
e_0^\mu \bigl|_{\Sigma_{t_0}} = (-g(\del{0},\del{0})|_{\Sigma_{t_0}} )^{-\frac{1}{2}}\delta^\mu_0,   
\end{equation}
while by \eqref{e0-mu} we have 
\begin{equation} \label{frame-idata-F}
\betat\bigl|_{\Sigma_{t_0}} = (- g(dx^0,dx^0)\bigl|_{\Sigma_{t_0}})^{-\frac{1}{2}}.
\end{equation}
Using \eqref{frame-idata-E}, we can employ a Gram-Schmidt algorithm to select (non-unique) spatial frame initial data \eqref{frame-idata-A} so that that the frame metric
$g_{ij} = e_i^\mu g_{\mu\nu} e^\nu_j$
satisfies
\begin{equation*}
    g_{ij}\bigl|_{\Sigma_{t_0}}=\eta_{ij}
\end{equation*}
on $\Sigma_{t_0}$.
Furthermore, with the help of the relations
\begin{gather*}
\Dc_i g_{jk} = e_i^\beta e_j^\mu e_k^\nu \Dc_{\beta}g_{\mu\nu}, \quad \Dc_i\Dc_j g_{kl} = e_i^\alpha e_j^\beta e_k^\mu e_l^\nu \Dc_{\alpha}\Dc_{\beta}g_{\mu\nu}, \\
\Dc_i \Dc_{j}\tau = e_i^\beta e_j^\beta \Dc_{\beta}\Dc_{\beta}g_{\mu\nu}, \quad \Dc_i\Dc_j \Dc_k \tau = e_i^\alpha e_j^\beta e_k^\gamma \Dc_{\alpha}\Dc_{\beta}\Dc_{\gamma}\tau,
\end{gather*}
it follows from the definitions  \eqref{p-fields}-\eqref{d-fields} that \eqref{frame-idata-A} and \eqref{Lag-idata-set-D} determine the fields
$g_{ijk}$, $g_{ijkl}$, $\tau_{ij}$ and $\tau_{ijk}$ on
the initial hypersurface, that is,
\begin{equation}\label{frame-idata-H}
    \bigl\{g_{ijk}\bigl|_{\Sigma_{t_0}},g_{ijkl}\bigl|_{\Sigma_{t_0}},\tau_{ij}\bigl|_{\Sigma_{t_0}},\tau_{ijk}\bigl|_{\Sigma_{t_0}}\bigr\}.
\end{equation}

Using the fact that the frame $e_j^\mu$ is orthonormal with respect to the metric $g_{\mu\nu}$, it follows from a straightforward calculation that 
$\Gamma_{I}{}^0{}_J|_{\Sigma|_{t_0}} = \Ktt_{\Lambda\Omega}\er_I^\Lambda \er_J^\Omega$,
where $\Ktt=\Ktt_{\Lambda\Omega}dx^\Lambda\otimes dx^\Omega$ is the second fundamental form, c.f.\ \eqref{gtt-def1}, determined from the initial data $\{g_{\mu\nu}|_{\Sigma_{t_0}},\del{0}g_{\mu\nu}|_{\Sigma_{t_0}}\}$. From this expression and \eqref{Ccdef}, we deduce that
\begin{equation} \label{frame-idata-J}
    \gamma_I{}^0{}_J\bigl|_{\Sigma_{t_0}} = \Ktt_{\Lambda\Omega}\er_I^\Lambda \er_J^\Omega +\frac{1}{2}(g_{IJ0}+g_{JI0}-g_{0IJ})\bigl|_{\Sigma_{t_0}}.
\end{equation}
We also note that the connection coefficients 
$\Gamma_I{}^K{}_J\bigl|_{\Sigma_{t_0}}$ on the initial hypersurface are determined completely by the spatial orthonormal frame $\ett_I = \er^\Lambda_I \del{\Lambda}$
and the spatial metric
$\gtt=\gr_{\Lambda\Omega}dx^\Lambda \otimes dx^\Omega$, c.f.\ \eqref{gtt-def1}, on
$\Sigma_{t_0}$ according to
$\Gamma_I{}^K{}_J|_{\Sigma_{t_0}} = \delta^{KL}\gtt(\Dtt_{\ett_I}\ett_J,\ett_L)$,
where $\Dtt$ is the Levi-Civita connection of $\gtt$.
We can then use this to determine 
$\gamma_I{}^K{}_J\bigl|_{\Sigma_{t_0}}$ via
\eqref{Ccdef} to get
\begin{equation}  \label{frame-idata-K}
    \gamma_I{}^K{}_J \bigl|_{\Sigma_{t_0}} = \delta^{KL}\gtt(\Dtt_{\ett_I}\ett_J,\ett_L)
    - \frac{1}{2}\delta^{KL}(g_{IJL}+g_{JIL}-g_{LIJ})\bigl|_{\Sigma_{t_0}}.
\end{equation}
Together, \eqref{frame-idata-A}, \eqref{frame-idata-F},  \eqref{frame-idata-H},  \eqref{frame-idata-J} and \eqref{frame-idata-K} determine initial data on $\Sigma_{t_0}$
for the system \eqref{for-G.1.S}-\eqref{for-M.2.S}.

\begin{rem} \label{FLRW-idata-rem-B}
By a straightforward calculation, it can be verified that
the initial data for the system  \eqref{for-G.1.S}-\eqref{for-M.2.S} that is determined by 
the FLRW solution \eqref{gtau-FLRW} of the conformal Einstein-scalar field equations is given by
\begin{equation}\label{FLRW-idata-rem-B.1}  
    \bigl\{g_{ijk}\bigl|_{\Sigma_{t_0}},\tau_{ij}\bigl|_{\Sigma_{t_0}}, \betat\bigl|_{\Sigma_{t_0}}, e_I^\Lambda\bigl|_{\Sigma_{t_0}}, \gamma_I{}^k{}_J\bigl|_{\Sigma_{t_0}},\tau_{ijk}\bigl|_{\Sigma_{t_0}},g_{ijkl}\bigl|_{\Sigma_{t_0}}\bigr\}_{\textrm{FLRW}} =\{0,0,1,\delta^\Lambda_I,0,0,0\}. 
\end{equation}
By \eqref{Kasner-wave-gauge}, we note that the wave gauge constraint is also satisfied on $\Sigma_{t_0}$ for this choice of initial data.
\end{rem}

\subsection{Change of variables}
\label{sec:chvar1}
Using the variable definitions \eqref{kt-def}-\eqref{psit-def} along with \eqref{S-def}, we can express the first order system \eqref{for-G.1.S}-\eqref{for-M.2.S} as
\begin{align}
  \del{t}\mt_{M} =& \frac{1}{t}\bigl(2 \mt_{M}-\delta^{IJ}(2 \ellt_{IJM}-\ellt_{MIJ})\bigr)+\frac{2}{t}\betat\tau_{0M} - B^{0jK}\betat \gt_{Kj0M}
+\betat \Qft_M, \label{for-G.1.S2}\\
\del{t}\ellt_{R0M} =& -\delta_{RI} B^{IjK}\betat\gt_{Kj0M}+\delta_{RI}\betat \Qft^I_M, \label{for-G.2.S2}\\
\del{t}\kt_{LM} =&  -\frac{1}{t}\kt_{LM} +\frac{2}{t}\betat\tau_{(LM)}- B^{0jK}\betat\gt_{Kj(LM)}
-\frac{1}{2}\delta^{RS}\betat\kt_{RS}\kt_{LM}+ \betat\Qft^0_{(LM)} \notag \\
&
+ 2B^{IjK}\betat g_{Kj0(M}\delta_{L)I} -2\betat \Qft^I_{(L} \delta_{M)I},
\label{for-H.S2}\\
\del{t}\ellt_{RLM} =& -\delta_{RI} B^{IjK}\betat\gt_{KjLM}+
                 \delta_{RI}\betat \Qft^I_{LM}, \label{for-G.4.S2}\\
\del{t}\tau_{rl} =&
 \delta_{ri}\betat J^i_l-B^{ijK}\betat\taut_{Kj(l}\delta_{r)i},
 \label{for-F.2.S2}\\
\label{for-I.S2}
\delta^{ij}\del{t}\gt_{Qjlm} +B^{ijK}\betat e_K^\Lambda\partial_\Lambda\gt_{Qjlm}=& \frac{1}{t} \delta_0^i\delta_0^j  (\gt_{Qljm}+ \gt_{Qmjl}-\gt_{Qjlm})+\frac{2}{t}\delta^i_0
\betat\taut_{Q(lm)}+ \delta^i_0 \Pf_{Q(lm)}-B^{ijk}\Gf_{Qkj(lm)},\\
\delta^{ij}\del{t}\taut_{Qjl} +B^{ijK}\betat e_K^\Lambda\partial_\Lambda \taut_{Qjl}=&  \betat K^i_{Ql}-B^{ijk}\Tf_{Qkjl}, \label{for-J.S2}\\
\del{t}\betat =& -\betat^3 \tau_{00}+\frac{1}{2} \delta^{JK}\betat^2\kt_{JK} \label{for-O.1.S2}\\
\del{t}e^\Lambda_I =& - \Bigl(\frac{1}{2}\delta^{JL}\betat\kt_{IL}+\delta^{JK}\betat\psit_{I}{}^0{}_K\Bigr)e^\Lambda_J,\label{for-M.1.S2}\\
\del{t}\psit_I{}^k{}_J=&
-\delta^k_0\Bigl(\betat^2 \taut_{I0J}+\frac{1}{2}\betat \gt_{IJ00}\Bigr)-\frac{1}{2}\eta^{kl}\bigl(\betat \gt_{I0Jl}+\betat \gt_{IJ0l}-\betat \gt_{Il0J}\bigr)+\betat \Lf_I{}^k{}_J\label{for-N.S2}.
\end{align}
Here, $\Lf_I{}^k{}_J$ of the form \eqref{Lf-def}, $\Pf_{Qlm}$, $\Gf_{Qkjlm}$, and $\Tf_{Qkjl}$ are determined by \eqref{lem-P-exp.1}-\eqref{lem-P-exp.3}, and $\Qft_M$, $\Qft^I_M$, $\Qft^0_{LM}$ and $\Qft^I_{LM}$ are determined by \eqref{Qijk-lem.1}, and we note by \eqref{B-def.S} that
\begin{align}
  B^{ij0}&=\delta^{ij}, \label{B0-def}\\
B^{ijK} &= -\delta^i_0 \eta^{jK}-\delta^j_0\eta^{iK} \label{BK-def}.
\end{align}

In the following, it will become necessary to work with
$\betat \gt_{Qjlm}$ instead of $\gt_{Qjlm}$. Using \eqref{grad-alpha-A} and \eqref{for-O.1.S2}, it is
not difficult to see that we can write the evolution
equation \eqref{for-I.S2} in terms of this new variable
as
\begin{align}
    \delta^{ij}\del{t}(\betat \gt_{Qjlm}) +B^{ijK}\betat e_K^\Lambda\partial_\Lambda(\betat \gt_{Qjlm})=& \frac{1}{t} \delta_0^i\delta_0^j  (\betat \gt_{Qljm}+ \betat \gt_{Qmjl}-\betat \gt_{Qjlm}) +\frac{1}{2}\delta^{ij} \delta^{JK}\betat\kt_{JK}\betat\gt_{Qjlm} +\Hf^i_{Qlm}, \label{for-I.S2a}
\end{align}
where we have set
\begin{align}
    \Hf^i_{Qlm}=& -\delta^{ij}\betat^2 \tau_{00}\betat\gt_{Qjlm} + B^{ijK}\Bigl(-\betat^2 \tau_{K0}-\frac{1}{2}\betat\bigl( 2 \mt_K  
   -\delta^{JI}(2\ellt_{JIK}
   -\ellt_{KJI})\bigr)\Bigr)\betat \gt_{Qjlm} \notag \\
   &+\frac{2}{t}\delta^i_0
\betat^2\taut_{Q(lm)} 
 + \delta^i_0 \betat\Pf_{Q(lm)}-B^{ijk}\betat\Gf_{Qkj(lm)}.
 \label{Hf-def}
\end{align}
In addition, we will also need to replace $\kt_{LM}$ with $\betat \kt_{LM}$. With the help of \eqref{for-O.1.S2}, we can write the evolution equation
\eqref{for-H.S2} in terms of this new variable as
\begin{align}
    \del{t}(\betat\kt_{LM}) =& -\frac{1}{t}\betat\kt_{LM}-\betat^2 \tau_{00} \betat\kt_{LM}  +\frac{2}{t}\betat^2\tau_{(LM)}- \betat B^{0jK}\betat\gt_{Kj(LM)} \notag\\
    &+2\betat B^{IjK}\betat\gt_{Kj0(L}\delta_{M)I} 
+\Mf_{LM} \label{for-H.S2a}
\end{align}
where
\begin{align} 
    \Mf_{LM} =&  \betat^2\Qft^0_{(LM)}  -2\betat^2 \Qft^I_{(L}\delta_{M)I}.
    \label{Kf-def}
\end{align}
For use below, we note from Lemmas \ref{Qijk-lem} and
\ref{lem-P-exp} that \eqref{Hf-def} and \eqref{Kf-def} can
be expressed as
\begin{align}
    \Hf =&   (\betat^2 \tau+ \betat\psit+\betat\kt+ \betat \ellt +\betat \mt)*\betat\gt  +\frac{1}{t}\betat^2\taut \notag \\
& + (\betat\tau +\psit+\ellt+\mt) *
(\betat\kt+\betat\ellt+\betat\mt)*(\betat\kt+\betat\ellt+\betat\mt) \notag\\
& +\frac{1}{t}(\betat \tau +\psit+\ellt+\mt) *(\betat\kt+\betat\ellt+\betat\mt)+\frac{1}{t}(\betat \tau +\psit) *\betat^2 \tau   \label{Hf-exp}
    \intertext{and}
    \Mf &= (\betat\kt+\betat\ellt+\betat\mt)*(\betat^2\tau+\betat\ellt+\betat\mt),\label{Kf-exp}
\end{align}
respectively.

\subsection{Rescaled variables}
\label{sec:chvar2}
The next step in the transformation to Fuchsian form involves the introduction of the following rescaled variables:
\begin{align}
k&=(k_{IJ}) :=(t\betat \kt_{IJ})
\label{k-def}\\
\beta &=  t^{\ep_0}\betat, \label{beta-def}\\
\ell&=(\ell_{IjK}):= (t^{\ep_1}\ellt_{IjK}), \label{ell-def}\\
m&=(m_{I}) := (t^{\ep_1}\mt_I), \label{m-def} \\
\xi&=(\xi_{ij}) := (t^{\ep_1-\ep_0}\tau_{ij}), \label{xi-def}\\
\psi &=(\psi_I{}^k{}_J) :=  (t^{\ep_1}\psit_I{}^k{}_J),\label{psi-def}\\
f&=(f_I^\Lambda) := (t^{\ep_2} e^\Lambda_I), \label{f-def}\\
\gac&=(\gac_{Ijkl}) := (t^{1+\ep_1} \betat \gt_{Ijkl}), \label{gac-def}\\
\tauac &=(\tauac_{Ijk}):= (t^{\ep_0+2 \ep_1} \taut_{Ijk}), \label{tauac-def} 
\end{align}
where the constants  $\ep_0,\ep_1,\ep_2>0$ are chosen to satisfy
\begin{equation*}
    0<\ep_0<\ep_1, \quad 3\ep_0+\ep_1<1,  \quad 0<\ep_2, \quad \ep_0+\ep_2<1,
\end{equation*}
which we note imply that $\ep_0+\ep_1<3\ep_0+\ep_1<1$.

\subsection{Fuchsian formulation\label{sec:Fuch-form}}
Gathering the rescaled variables \eqref{k-def}-\eqref{tauac-def} together in the following vector\footnote{In line with Remark~\ref{rem:symmetry}, we always assume that $U$ is defined with $k_{LM}=k_{(LM)}$ and $\gac_{Qjlm}=\gac_{Qj(lm)}$.} 
\begin{equation}\label{U-def}
U =\bigl(k_{LM},m_M,\ell_{R0M},\ell_{RLM},\xi_{rl},\beta,f^\Lambda_I,\psi_I{}^k{}_{J},\tauac_{Qjl}, \gac_{Qjlm}\bigr)^{\tr},
\end{equation}
it is then straightforward to verify from
the first order equations \eqref{for-G.1.S2}-\eqref{for-G.2.S2}, \eqref{for-G.4.S2}-\eqref{for-F.2.S2}, \eqref{for-J.S2}-\eqref{for-N.S2}, \eqref{for-I.S2a} and \eqref{for-H.S2a} that $U$ satisfies the symmetric hyperbolic Fuchsian equation
\begin{equation} \label{Fuch-ev-A}
A^0\del{t}U + \frac{1}{t^{\ep_0+\ep_2}}A^\Lambda \del{\Lambda} U =
\frac{1}{t}\Ac\Pbb U + F
\end{equation}
where
\begin{equation} \label{A0-def}
    A^0 =\small \diag\bigl(\delta^{\Lt L}\delta^{\Mt M},\delta^{\Mt M},\delta^{\Rt R}\delta^{\Mt M},\delta^{\Rt R} \delta^{\Lt L} \delta^{\Mt M},\delta^{\rt r}\delta^{\lt l},1,\delta^{\It I}\delta_{\Lambdat \Lambda},\delta^{\It I}\delta_{\kt k} \delta^{\Jt J},\delta^{\Qt Q}\delta^{\jt j}\delta^{\lt l},\delta^{\Qt Q}\delta^{\jt j}\delta^{\lt l} \delta^{\mt m}\bigr),
\end{equation}
\begin{equation}\label{ALambda-def}
    A^\Lambda = \diag\bigl( 0,0,0,0,0,0,0,0,\delta^{\Qt Q}\delta^{\lt l}B^{\jt j K}\beta f^\Lambda_K,\delta^{\Qt Q}\delta^{\lt l}\delta^{\mt m}B^{\jt j K}\beta f^\Lambda_K\bigr),
\end{equation}
\begin{equation}\label{Pbb-def}
  \Pbb = \diag\Bigl(0,\delta_{\Mt}^{ M},\delta_{\Rt}^{R}\delta_{\Mt}^{M},\delta_{\Rt}^{R} \delta_{\Lt}^{ L} \delta_{\Mt}^{M},\delta_{\rt}^{r}\delta_{\lt}^{l},1,\delta_{\It}^{I}\delta^{\Lambdat}_{\Lambda},\delta_{\It}^{I}\delta^{\kt}_{k} \delta_{\Jt}^{J},\delta_{\Qt}^{Q}\delta_{\jt}^{j}\delta_{\lt}^{l},\delta_{\Qt}^{Q}\delta_{\jt}^{j}\delta_{\lt}^{l} \delta_{\mt}^{m}\Bigr),
\end{equation}
\begin{align} \label{Ac-def}
    \Ac = 
    \begin{pmatrix} \Ac_{1\,1} & 0 & 0 & 0 & 0 & 0 & 0 & 0 & 0  & 0 \\
    0 &\Ac_{2\,2} & 0 & \Ac_{2\,4} & 0 & 0 & 0 & 0 & 0  & \Ac_{2\,10} \\
    0 & 0 & \Ac_{3\,3} & 0 & 0 & 0 & 0 & 0 & 0  & \Ac_{3\,10} \\
    0 & 0 & 0 & \Ac_{4\,4} & 0 & 0 & 0 & 0 & 0  & \Ac_{4\,10} \\
    0 & 0 & 0 & 0 & \Ac_{5\,5} & 0 & 0 & 0 & 0  & 0 \\
    0 & 0 & 0 & 0 & 0 & \Ac_{6\,6} & 0 & 0 & 0  & 0 \\
    0 & 0 & 0 & 0 & 0 & 0 & \Ac_{7\,7} & 0 & 0  & 0 \\
    0 & 0 & 0 & 0 & 0 & 0 & 0 & \Ac_{8\,8} & 0  & \Ac_{8\,10} \\
    0 & 0 & 0 & 0 & 0 & 0 & 0 & 0 &\Ac_{9\,9} & 0 \\
    0 & 0 & 0 & 0 & 0 & 0 & 0 & 0 & 0  & \Ac_{10\,10}  
    \end{pmatrix}
\end{align}
with the non-zero diagonal and off diagonal 
blocks of $\Ac$ given by
\begin{gather}
    \Ac_{1\,1}=\delta^{\Lt L}\delta^{\Mt M},\quad \Ac_{2\,2}=(2+\ep_1)\delta^{\Mt M}, \quad
    \Ac_{3\, 3} = \ep_1\delta^{\Rt R}\delta^{\Mt M}, \label{Ac-diag-1}\\
    \Ac_{4\, 4} =  \ep_1\delta^{\Rt R}\delta^{\Lt L}\delta^{\Mt M}, \quad  
    \Ac_{5\, 5} = (\ep_1-\ep_0)\delta^{\rt r}\delta^{\lt l},\quad
     \Ac_{6\, 6} =  \ep_0, \label{Ac-diag-2}\\
     \Ac_{7\, 7} = \ep_2\delta^{\It I}\delta_{\Lambdat\Lambda},\quad
     \Ac_{8\, 8} = \ep_1\delta^{\It I}\delta_{\kt k}\delta^{\Jt J},\quad
     \Ac_{9\, 9} = (\ep_0+2\ep_1)\delta^{\Qt Q}\delta^{\jt j} \delta^{\lt l}, \label{Ac-diag-3}\\
     \Ac_{10\, 10} = (1+\ep_1)\delta^{\Qt Q}\delta^{\jt j} \delta^{\lt l}\delta^{\mt m}+ \delta^{\Qt Q}\delta^{\jt}_0\bigl( \delta_0^l\delta^{\lt j}\delta^{\mt m}+\delta_0^l \delta^{\lt m}\delta^{\mt j}-\delta_0^j \delta^{\lt l}\delta^{\mt m}\bigr),
       \label{Ac-diag-4}
\end{gather}
and
\begin{gather}
    \Ac_{2\,4} = \delta^{\Mt R}\delta^{LM}-2\delta^{RL}\delta^{\Mt M}, \quad 
    \Ac_{2\, 10}=-B^{0jQ}\delta_0^l\delta^{\Mt m}, \label{Ac-offdiag-1} \\
    \Ac_{3\, 10}=-B^{\Rt jQ}\delta^l_0\delta^{\Mt m}, \quad 
    \Ac_{4\, 10} =  -B^{\Rt jQ}\delta^{\Lt l}\delta^{\Mt m}, \label{Ac-offdiag-2} \\
     \Ac_{8\, 10} =
     \frac{1}{2}\delta^{\It Q}\bigl(-\delta_{\kt}^0\delta^{\Jt j}\delta^l_0\delta^m_0-\delta_{\kt k}\bigl(\delta^j_0\eta^{k m}\delta^{\Jt l}+\eta^{km}\delta^{\Jt j}\delta^l_0-\eta^{kj}\delta^{\Jt m}\delta^l_0\bigr)\bigr), \label{Ac-offdiag-3}
\end{gather}
respectively,
and
\begin{equation}\label{F-def}
F=\bigl(F_1,F_2,F_3,F_4,F_5,F_6,F_7,F_8,F_9,F_{10}\bigr)^{\tr}
\end{equation}
with
\begin{align}
F_1 =&\delta^{\Lt L}\delta^{\Mt M}\Bigl( -t^{- \ep_0-\ep_1}\beta^2 \xi_{00} k_{LM}  +2t^{- \ep_0-\ep_1}\beta^2\xi_{(LM)}-t^{-\ep_0-\ep_1} \beta B^{0jK}\gac_{Kj(LM)} \notag\\
    &+t^{-\ep_0-\ep_1}2\beta B^{IjK}\gac_{Kj0(L}\delta_{M)I}  
+t\Mf_{LM}\Bigr), \label{F1-def}\\
F_2 =& t^{-1}2\delta^{\Mt M}\beta\xi_{0M}
+ t^{\ep_1}\delta^{\Mt M}\betat\Qft_M, \label{F2-def}\\
F_3 =& \delta^{\Mt M}t^{\ep_1}\betat \Qft^{\Rt}_M, \label{F3-def}\\
F_4 =& \delta^{\Lt L}\delta^{\Mt M}t^{\ep_1}\betat \Qft^{\Rt}_{LM}, \label{F4-def}\\
F_5 =& \delta^{\lt l}
 t^{\ep_1-\ep_0}\betat J^{\rt}_l-t^{-\ep_1-3\ep_0}\delta^{\lt l}\delta^{\rt r} B^{i jK} \beta\tauac_{Kj(l}\delta_{r)i},
 \label{F5-def}\\
 F_6 =& t^{-1}\frac{1}{2}\delta^{JK}k_{JK}\beta -t^{-\ep_0-\ep_1}\beta^3 \xi_{00},\label{F6-def}\\
F_7 =& t^{-1}\delta^{\It I}\delta_{\Lambdat \Lambda}\Bigl(- \frac{1}{2}\delta^{JL}k_{IL}-\delta^{JK} t^{1-\ep_0-\ep_1}\beta\psi_{I}{}^0{}_K\Bigr)f^\Lambda_J,\label{F7-def}\\
F_8 =&
-\delta^{\It I}\delta_{\kt}^0\delta^{\Jt J}t^{-\ep_1-3\ep_0}\beta^2 \tauac_{I0J}+ \delta^{\It I}\delta_{\kt k} \delta^{\Jt J}t^{\ep_1}\betat \Lf_I{}^k{}_J,\label{F8-def}\\
F_9=& \delta^{\Qt Q}\delta^{\lt l}\bigl(t^{\ep_0+2 \ep_1}\betat K^{\jt}_{Ql}-B^{\jt jk}t^{\ep_0+2 \ep_1}\Tf_{Qkjl}\bigr), \label{F9-10}\\
  F_{10}=& \delta^{\Qt Q}\delta^{\lt l}\delta^{\mt m} t^{1+\ep_1}\Hf^{\jt}_{Qlm}
           +\frac{1}{2t} \delta^{j\jt} \delta^{\Qt Q}\delta^{\lt l}\delta^{\mt m} \delta^{JK}k_{JK}\gac_{Qjlm} .
\label{F10-def}
\end{align}

The Fuchsian equation \eqref{Fuch-ev-A} is not quite in a suitable form for us to analyse. To bring it into a suitable form, we need to subtract off the FLRW solution, which is given by
 \begin{equation}
    \label{eq:Ubgdef}
    \mathring U
    =\bigl(0,0,0,0,0, t^{\ep_0}, t^{\ep_2}\delta_{I}^{\Lambda},0,0,0\bigr).
  \end{equation}
So letting
\begin{equation} \label{u-def}
  u=U-\mathring U
\end{equation}
denote the nonlinear perturbation of the FLRW solution,
it is not difficult to verify from \eqref{Fuch-ev-A}, \eqref{A0-def}, \eqref{Ac-def}, and \eqref{Ac-diag-2}-\eqref{Ac-diag-3} that
$u$ satisfies
\begin{equation} \label{Fuch-ev-A2}
A^0\del{t}u + \frac{1}{t^{\ep_0+\ep_2}}A^\Lambda \del{\Lambda} u =
\frac{1}{t}\Ac\Pbb u + F,
\end{equation}
where $A^\Lambda$, $\Ac$ and $F$ depend on $(t,U)$ and $F$ satisfies $F|_{u=0}=0$.
It is this equation in conjunction with the local existence theory developed in Proposition \ref{lag-exist-prop} that will be used in Sections \ref{global-sec} and \ref{local-sec} below to establish the existence of solutions to the conformal Einstein-scalar field equations in a neighborhood of the big bang singularity at $t=0$ and to determine the asymptotic behavior of the physical fields near the singularity. It is also worth noting that since $\mathring{U}$ is only depends on $t$, we can and will, in the following, view the
matrices  $A^\Lambda$, $\Ac$ and
the source term $F$ as depending
on $(t,u)$.

\subsection{Source term expansion}
Before we move on to establishing the existence of solutions to the Fuchsian equation \eqref{Fuch-ev-A2}, we first examine the structure of the source term $F$ because this will be needed in the subsequent analysis. We begin by inspecting the various terms that appear in the components of $F$, see \eqref{F1-def}-\eqref{F10-def}. By Lemma \ref{Qijk-lem}, the terms $t^{\ep_1}\betat\Qft_M$, $t^{\ep_1}\betat\Qft^I_M$, $t^{\ep_1}\betat\Qft^I_{LM}$ are of the form
\begin{equation} \label{Qft-Fuch}
    t^{\ep_1}\betat \Qft = (t^{-1}k+t^{-\ep_0-\ep_1}\beta\ell+t^{-\ep_0-\ep_1}\beta m)*\beta\xi +(t^{-1}k+t^{-\ep_0-\ep_1}\beta\ell+t^{-\ep_0-\ep_1}\beta m)*(\ell+m)
\end{equation}
while we see from \eqref{Kf-exp} that $t\Mf_{LM}$ is given by
\begin{equation} \label{Kf-Fuch}
t\Mf= t^{-\ep_0-\ep_1}(k+t^{1-\ep_0-\ep_1}\beta \ell+t^{1-\ep_0-\ep_1}\beta m)*(\beta^2\xi +\beta\ell+\beta m).
\end{equation}
We further observe from Lemmas \ref{lem-cov-exp} and \ref{lem-P-exp} that
the terms $t^{\ep_0+2 \ep_1}\Tf_{QKjl}$ and
$t^{\ep_1}\betat\Lf_I{}^k_{J}$ are
of the form
\begin{align}
    t^{\ep_0+2 \ep_1}\Tf =& ( t^{-\ep_0-\ep_1}\beta^2 \xi+t^{-\ep_0-\ep_1}\beta\psi+t^{-1} k+ t^{-\ep_0-\ep_1}\beta\ell +t^{-\ep_0-\ep_1}\beta m)*\tauac \notag \\
    &+ (t^{2\ep_0-1}\beta \xi+t^{2\ep_0-1}\psi +t^{2\ep_0-1}\ell +t^{2\ep_0-1} m)*(k + t^{1-\ep_0-\ep_1}\beta\ell +t^{1-\ep_0-\ep_1}\beta m)*\xi \label{Tf-Fuch}
\end{align}
and
\begin{equation} \label{Lf-Fuch}
   t^{\ep_1}\betat\Lf = (t^{-\ep_1-\ep_0}\beta^2\xi+t^{-\ep_0-\ep_1}\beta\psi+t^{-1}k+t^{-\ep_0-\ep_1}\beta\ell +t^{-\ep_0-\ep_1}\beta  m)*( \beta\xi+\psi+\ell+m),
\end{equation}
respectively. Finally, we observe form \eqref{J-def}, \eqref{K-def} and from \eqref{Hf-exp} that
the terms $t^{\ep_1-\ep_0}\betat J^i_l$, $t^{\ep_0+2 \ep_1}\betat K^i_{Ql}$ and
$t^{1+\ep_1}\Hf^{i}_{Qlm}$ are of
the form
\begin{equation} \label{J-Fuch}
t^{\ep_1-\ep_0}\betat J =(t^{-\ep_0-\ep_1}\beta^2 \xi + t^{-1}k +t^{-\ep_0-\ep_1}\beta\ell +t^{-\ep_0-\ep_1}\beta m)*\xi,
\end{equation}
\begin{align}
t^{\ep_0+2 \ep_1}\betat K =& (t^{-1}k+t^{-\ep_0-\ep_1}\beta\ell+t^{-\ep_0-\ep_1}\beta m)*\tauac + t^{2\ep_0-1}\gac*\xi \notag\\
&+ (\ell+m)*(t^{2\ep_0-1}k+t^{\ep_0- \ep_1}\beta\ell+t^{\ep_0-\ep_1}\beta m)*\xi \label{K-Fuch}
\end{align}
and
\begin{align}
t^{1+\ep_1}\Hf =& (t^{-\ep_0-\ep_1}\beta^2 \xi+ t^{-\ep_0-\ep_1}\beta\psi+t^{-1}k+ t^{-\ep_0-\ep_1}\beta \ell +t^{-\ep_0-\ep_1}\beta m)*\gac +
t^{-\ep_1-3\ep_0}\beta^2\tauac \notag \\
& + t^{-1}(\beta\xi +\psi+\ell+m) *
(k+t^{1-\ep_0-\ep_1}\beta\ell+t^{1-\ep_0-\ep_1}\beta m)*(k+t^{1-\ep_0-\ep_1}\beta\ell+t^{1-\ep_0-\ep_1}\beta m) \notag\\
& +t^{-1}(\beta \xi +\psi+\ell+m) *(k+t^{1-\ep_0-\ep_1}\beta\ell+t^{1-\ep_0-\ep_1}\beta m)
+t^{-1}(\beta \xi +\psi) *t^{1-\ep_0-\ep_1}\beta^2\xi. \label{Hf-fuch} 
\end{align}

With the help of the expansions \eqref{Qft-Fuch}-\eqref{Hf-fuch}, it is then straightforward to verify from \eqref{F1-def}-\eqref{F10-def} that, for any $T_0>0$ and $\tilde{\ep}$ satisfying
\begin{equation}
    \label{eq:epstildecond}
     \max\{3\ep_0+\ep_1,1-\ep_0,1-\ep_2\}\leq \tilde{\ep}<1,
\end{equation} $F$ can be decomposed 
as
\begin{equation}
    \label{eq:SourceDecomp}
    F=\frac{1}{t}G(u)+\frac{1}{t^{\tilde\epsilon}} H(t,u) 
  \end{equation}
where $H\in C^0\bigl([0,T_0],C^\infty\bigl(\Rbb^{\udim},\Rbb^{\udim}\bigr)\bigr)$ satisfies
\begin{equation}
    \label{eq:SourcetermVan2}
    H(t,0)=0,
\end{equation}
and
     \begin{equation}
    \label{eq:GtDef}
    G(u)=\Gt(u)\Pbb u
    \end{equation}
for some $\Gt \in C^\infty\bigl(\Rbb^{\udim},\Mbb{\udim}\bigr)$ satisfying
\begin{equation} \label{eq:Gtprops}
      \Gt(0) = 0 \AND  [\Gt,\Pbb]=0.
\end{equation}
For use below, we also note by \eqref{Pbb-def} and \eqref{Ac-def}
that 
\begin{equation} \label{Ac-comm}
    [\Ac,\Pbb]=0.
\end{equation}

\section{Past global FLRW stability\label{global-sec}}
We are now ready to state and prove our past global FLRW stability result, which represents the first of the two
main results of this article. Note here that past stability refers to our choice of time orientation where the past is in the direction of decreasing $t$ and the fact that we establish stability on time intervals of the form $(0,t_0]$. Thus, with this choice of time orientation, all big bang singularities will occur in the past at $t=0$.

The precise statement of our first stability result is given below in Theorem \ref{glob-stab-thm}. The proof of Theorem \ref{glob-stab-thm} is carried out in two steps. The first step involves establishing the stability of the trivial solution $u=0$ of the Fuchsian equation \eqref{Fuch-ev-A2}.  Proposition \ref{prop:globalstability} contains the statement of this Fuchsian stability result and its proof is carried out in Section \ref{sec:proof_globstab_Fuchsian}. Proposition \ref{prop:globalstability} is then used in Section \ref{sec:proof_globstab} to prove Theorem \ref{glob-stab-thm}.

\subsection{The past global stability theorem}
Theorem \ref{glob-stab-thm} below establishes the nonlinear stability of the FLRW solution of the conformal Einstein-scalar field equations on $M_{0,t_0}=(0,t_0]\times \Tbb^{n-1}$ for any $t_0>0$
by guaranteeing that sufficiently small perturbations of the FLRW initial data, see Remark \ref{FLRW-idata-rem-A}, 
that also satisfy the gravitational and wave gauge constraints as well as the synchronization condition $\tau|_{\Sigma_{t_0}}=t_0$ will generate solutions of conformal Einstein-scalar field equations on $M_{0,t_0}$ that are asymptotically Kasner in the sense of Definition \ref{def:APKasner}. 
As discussed in Section \ref{temp-synch}, if the initial data does not satisfy the synchronization condition  $\tau|_{\Sigma_{t_0}}=t_0$, then it can be evolved for short amount of time so that it does, and so, we lose no generality in assuming that the initial data is synchronized. 

\begin{thm}[Past global stability of the FLRW solution of the Einstein-scalar field system]\label{glob-stab-thm}

Suppose $n\in\Zbb_{\ge 3}$, $k \in \Zbb_{>(n+3)/2}$, $\sigma>0$ and $T_0>0$. Then for every $t_0\in (0,T_0]$ there exists a $\delta_0>0$ such that for every $\delta \in (0,\delta_0]$ and  $\gr_{\mu\nu}\in H^{k+2}(\Tbb^{n-1},\mathbb{S}_n)$, 
$\ggr_{\mu\nu}\in H^{k+1}(\Tbb^{n-1},\mathbb{S}_n)$, $\taur=t_0$ and
$\taugr\in H^{k+2}(\Tbb^{n-1})$
satisfying
\begin{equation} \label{glob-stab-thm-idata-A}
  \norm{\gr_{\mu\nu}-\eta_{\mu\nu}}_{H^{k+2}(\Tbb^{n-1})}
  +\norm{\ggr_{\mu\nu}}_{H^{k+2}(\Tbb^{n-1})}
  +\norm{\taugr-1}_{H^{k+2}(\Tbb^{n-1})}< \delta,
\end{equation}
and the gravitational and wave gauge constraints \eqref{grav-constr}-\eqref{wave-constr}, there exists is a unique classical solution $W\in C^1(M_{0,t_0})$, see \eqref{eq:Wdef}, 
of the system of evolution equations \eqref{tconf-ford-C.1}-\eqref{tconf-ford-C.8} on $M_{0,t_0}=(0,t_0]\times \Tbb^{n-1}$ with regularity
\begin{equation}
  \label{eq:WregFin}
W \in \bigcap_{j=0}^{k}C^j\bigl((0,t_0], H^{k-j}(\Tbb^{n-1})\bigr)
\end{equation} 
that satisfies the initial
conditions \eqref{l-idata}-\eqref{hhu-idata} on $\Sigma_{t_0}=\{t_0\}\times\Tbb^{n-1}$, which are uniquely determined by the initial data $\{\gr_{\mu\nu},\ggr_{\mu\nu},\taur=t_0,\taugr\}$, and
the constraints \eqref{eq:Lag-constraints} in $M_{0,t_0}$.
\bigskip

\noindent Moreover, the pair $\{g_{\mu\nu}=\del{\mu}l^\alpha\ghu_{\alpha\beta}\del{\nu}l^\beta,\tau=t\}$, which is uniquely determined by $W$, defines a solution of the conformal Einstein-scalar field equations \eqref{lag-confeqns} on $M_{0,t_0}$ that satisfies the wave gauge constraint \eqref{lag-wave-gauge} and the following properties: 
\begin{enumerate}[(a)]
  \item Let $e_0^\mu=\betat^{-1}\delta_0^\mu$ with $\betat= (-g(dx^0,dx^0))^{-\frac{1}{2}}$, and $e^\mu_I$ be the unique solution of the Fermi-Walker transport equations \eqref{Fermi-A} with initial conditions $e^\mu_I |_{\Sigma_{t_0}}=\delta^\mu_\Lambda\er^\Lambda_I$ where the functions $\er^\Lambda_I \in H^k(\Tbb^{n-1})$ are chosen to satisfy $\norm{\er^\Lambda_I-\delta^\Lambda_I}_{H^{k}(\Tbb^{n-1})} < \delta$
and make the frame $e^\mu_i$ orthonormal on $\Sigma_{t_0}$.
Then $e^\mu_i$ is a well defined frame in $M_{0,t_0}$ that
satisfies
$e^0_I=0$ and $g_{ij}=\eta_{ij}$,
where $g_{ij}=e_i^\mu g_{\mu\nu}e^\nu_j$ is the
frame metric. 
\item There exists a tensor field $\kf_{IJ}$ in $H^{k-1}(\Tbb^{n-1},\Sbb{n-1})$ satisfying $\norm{\kf_{IJ}}_{H^{k-1}(\Tbb^{n-1})}\lesssim\delta$ such that 
  \begin{align}
    \label{eq:glostab-est.First}
\norm{t\betat\Dc_0 g_{00}+\delta^{JK}\kf_{JK}}_{H^{k-1}(\mathbb T^{n-1})}&\lesssim
                                                                            t^{\frac19(3-\sqrt{3}-2\sigma)},\\
    \label{eq:glostab-est.2}
\norm{t\betat\Dc_0 g_{JK}-\kf_{JK}}_{H^{k-1}(\mathbb T^{n-1})}
     &\lesssim t^{\frac19(3-\sqrt{3}-2\sigma)},\\
 \norm{\Dc_I g_{00}}_{H^{k-1}(\mathbb T^{n-1})}+\norm{\Dc_0 g_{J0}}_{H^{k-1}(\mathbb T^{n-1})}
    \hspace{0.8cm} &\notag\\
      +\norm{\Dc_I g_{J0}}_{H^{k-1}(\mathbb T^{n-1})}+\norm{\Dc_I g_{JK}}_{H^{k-1}(\mathbb T^{n-1})}&\lesssim t^{-\frac 12(\sqrt{3}-1)-\sigma},  \label{eq:glostab-est.3}\\
\label{eq:glostab-est.4}
\norm{t\betat\Dc_I\Dc_j g_{lm}}_{H^{k-1}(\mathbb T^{n-1})}&\lesssim t^{-\frac 12(\sqrt{3}-1)-\sigma},\\
\label{eq:glostab-est.5}
\norm{\Dc_i\Dc_j\tau }_{H^{k-1}(\mathbb T^{n-1})}&\lesssim t^{-(\frac 23\sqrt{3}-1)-\sigma},\\
\label{eq:glostab-est.Last}
\norm{\Dc_I\Dc_j\Dc_l\tau }_{H^{k-1}(\mathbb T^{n-1})}&\lesssim t^{-(\sqrt{3}-1)-2\sigma},
  \end{align}
for all $t\in (0,t_0]$,
where all the
fields in these estimates are expressed in terms of the frame $e_i^\mu$ and the Levi-Civita connection $\Dc$ of the flat background metric
$\gc_{\mu\nu}=\del{\mu}l^\alpha \eta_{\alpha\beta} \del{\nu}l^\beta$. 
In addition, there exist a strictly positive function $\mathfrak{b}\in H^{k-1}(\Tbb^{n-1})$, a matrix $\ef^\Lambda_J\in H^{k-1}(\Tbb^{n-1},\Mbb{n-1})$, and a constant $C>0$ such that
  \begin{align}
    \label{eq:improvbetaestimate-glob}
       \Bnorm{t^{-\kf_{J}{}^J/2}\betat
    -\mathfrak{b}}_{H^{k-1}(\Tbb^{n-1})}
 & \lesssim t^{\frac 19(3-\sqrt{3}-2\sigma)}
\intertext{and}
  \label{eq:improvframeestimate}
  \Bnorm{\exp\Bigl(\frac 12\ln(t)\kf_{J}{}^I\Bigr) e_I^\Lambda-\ef_J^\Lambda}_{H^{k-1}(\Tbb^{n-1})}    
  &\lesssim t^{\frac19(3-\sqrt{3}-2\sigma)-C\delta}
\end{align}
 for all $t\in (0,t_0]$, where $\kf_{L}{}^J=\kf_{LM}\delta^{MJ}$ and $\exp\Bigl(\frac 12\ln(t)\kf_{J}{}^I\Bigr)$ is the exponential of the matrix
 $\frac 12\ln(t)(\kf_{J}{}^I)$.
\item The second fundamental form $\Ktt_{\Lambda\Omega}$ induced on the $t=const$-hypersurfaces  by $g_{\mu\nu}$ satisfies
\begin{equation}
\label{eq:2ndFF.est}
\Bnorm{2 t\betat \Ktt_{LJ}- \kf_{LJ}}_{H^{k-1}(\mathbb T^{n-1})}
\lesssim t^{\frac19(3-\sqrt{3}-2\sigma)}
\end{equation}
for all $t\in (0,t_0]$, while the lapse, shift and the spatial metric on the $t=const$-hypersurfaces are determined by $\Ntt=\betat$,
$\btt_\Lambda=0$, and $\gtt_{\Lambda\Omega}=g_{\Lambda\Omega}$, respectively.
\item The pair $\Bigl\{\gb_{\mu\nu}=t^{\frac {2}{n -2}}g_{\mu\nu},\phi=\sqrt{\frac{n-1}{2(n-2)}}\ln(t)\Bigr\}$ defines a solution of the physical Einstein-scalar field equations \eqref{ESF.1}-\eqref{ESF.2} on $M_{0,t_0}$ that exhibits AVTD behavior and is asymptotically pointwise Kasner on $\Tbb^{n-1}$ with Kasner exponents $r_1(x),\ldots,r_{n-1}(x)$ determined by the eigenvalues of $\kf_{L}{}^J(x)$ for each $x\in \Tbb^{n-1}$. In particular, $\kf_{L}{}^L(x)\ge 0$ for all $x\in\Tbb^{n-1}$, and $\kf_{L}{}^L(x)=0$ for some $x\in \Tbb^{n-1}$ if and only if $r_1(x)=\ldots=r_{n-1}(x)=0$. The time $t=0$ represents a crushing singularity in the sense of \cite{eardley1979}. Furthermore, the function
  \begin{equation*}
    \Ptt
=\frac{\sqrt{{2(n-1)}(n-2)}}{{2(n-1)} +(n-2) \kf_{L}{}^L}  
\end{equation*}
can be interpreted as the asymptotic scalar field strength in the sense of Section~\ref{sec:AVTDAPK}.
\item The physical solution $\bigl\{\gb_{\mu\nu},\phi\bigr\}$ is past $C^2$ inextendible at $t=0$ and past timelike geodesically incomplete. The scalar curvature  $\Rb=\Rb_{\mu\nu}\gb^{\mu\nu}$ of the physical metric $\gb_{\mu\nu}$ 
satisfies
  \begin{equation}
    \label{eq:SptRicciEstimat-glob}
    \Bnorm{t^{2\frac {n-1}{n -2}+\kf_{J}{}^J}\Rb+\frac{n-1}{n-2}\mathfrak{b}^{-2}}_{H^{k-1}(\mathbb T^{n-1})}\lesssim t^{\frac 19(3-\sqrt{3}-2\sigma)}
  \end{equation}
for all $t\in (0,t_0]$, and consequently, it becomes unbounded in the limit $t\searrow 0$.
\end{enumerate}

\medskip

\noindent The implicit and explicit constants in the above estimates are independent of the choice of $\delta\in (0,\delta_0]$.
\end{thm}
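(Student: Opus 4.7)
The plan is to follow the scheme sketched in Section \ref{overview}: combine the local-in-time Lagrangian existence result Proposition \ref{lag-exist-prop} with global estimates for the Fuchsian system \eqref{Fuch-ev-A2}, and then translate the Fuchsian bounds back into the geometric estimates stated in the theorem. Given initial data $\{\gr_{\mu\nu},\ggr_{\mu\nu},\taur=t_0,\taugr\}$ that is sufficiently close to the FLRW data in the sense of \eqref{glob-stab-thm-idata-A} and satisfies the constraints \eqref{grav-constr}-\eqref{wave-constr}, Proposition \ref{lag-exist-prop} first produces a local-in-time solution $W$ on some $M_{t_1,t_0}$, and by parts (c) and (e) of that proposition the pair $\{g_{\mu\nu},\tau\}$ it determines solves the conformal Einstein-scalar field equations, satisfies the wave gauge constraint, and obeys $\tau=t$ and $\chi^\mu=\delta^\mu_0$. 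Solving the Fermi-Walker transport equations \eqref{Fermi-A} with orthonormal frame initial data $\er^\Lambda_I$ yields an orthonormal frame $e^\mu_i$ on $M_{t_1,t_0}$, from which the rescaled primary and differentiated variables \eqref{k-def}-\eqref{tauac-def} can be assembled into $U$ as in \eqref{U-def}; the perturbation $u=U-\mathring U$ then solves the symmetric hyperbolic Fuchsian equation \eqref{Fuch-ev-A2}.

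The core step is to apply the Fuchsian global existence theory of \cite{BeyerOliynyk:2020,BOOS:2021} to \eqref{Fuch-ev-A2}. The diagonal blocks \eqref{Ac-diag-1}-\eqref{Ac-diag-4} of $\Ac$ are nonnegative and strictly positive on the range of $\Pbb$ (the $k_{LM}$-block is the only one killed by $\Pbb$, see \eqref{Pbb-def}), the commutator identity \eqref{Ac-comm} holds, and the source decomposes as in \eqref{eq:SourceDecomp}-\eqref{eq:Gtprops} with exponents controlled by \eqref{eq:epstildecond}; these are precisely the structural hypotheses of the Fuchsian existence theorem. Shrinking $\delta_0$ so that $u(t_0,\cdot)$ lies in the small-data regime, one obtains a unique solution $u \in \bigcap_{j=0}^k C^j((0,t_0],H^{k-j}(\Tbb^{n-1}))$ together with a uniform $H^k$-bound on $u$ and an integrable $H^{k-1}$-decay estimate on $\Pbb u$. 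By uniqueness this Fuchsian solution agrees on $M_{t_1,t_0}$ with the one built from $W$. Inverting the rescalings \eqref{k-def}-\eqref{tauac-def} turns these bounds into pointwise estimates on $g_{\mu\nu}$, $\del{t}g_{\mu\nu}$, $\Dc_\nu\chi^\lambda$ and $\del{t}(\Dc_\nu\chi^\lambda)$ that verify the geometric continuation criterion \eqref{eq:cont_crit1}-\eqref{eq:cont_crit2}, and iterated application of the continuation clause extends the Lagrangian solution $W$ to all of $M_{0,t_0}$ with regularity \eqref{eq:WregFin}.

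The remaining estimates are extracted from the Fuchsian output. Direct substitution of \eqref{k-def}-\eqref{tauac-def} into the decay rate for $\Pbb u$ yields \eqref{eq:glostab-est.3}-\eqref{eq:glostab-est.Last}. To produce $\kf_{IJ}$ and the convergence \eqref{eq:glostab-est.First}-\eqref{eq:glostab-est.2}, I would rewrite the evolution equation \eqref{for-H.S2a} for $k_{LM}=t\betat\kt_{LM}$ as an ODE whose right-hand side is $t$-integrable on $(0,t_0]$ thanks to the $\Pbb u$ decay, so that $k_{LM}(t,\cdot)$ has an $H^{k-1}$-limit $\kf_{LM}$; integrating \eqref{for-O.1.S2} and \eqref{for-M.1.S2} with this limit then produces \eqref{eq:improvbetaestimate-glob} and \eqref{eq:improvframeestimate}. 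The Kasner relation \eqref{eq:asymptptwKasner} is obtained by passing $t\searrow 0$ in a suitably rescaled Hamiltonian constraint. The second fundamental form estimate \eqref{eq:2ndFF.est} and the lapse-shift claims in part (c) are algebraic consequences of the gauge $\tau=t$, $\chi^\mu=\delta^\mu_0$, the orthonormality of $e^\mu_i$, and the identity $e^0_I=0$ from \eqref{e0I-fix}. The curvature blow-up \eqref{eq:SptRicciEstimat-glob} and $C^2$-inextendibility come from tracing \eqref{confESFAa} and combining with \eqref{eq:improvbetaestimate-glob}. The main obstacle is not any of these individual steps, but the up-front design of the Fuchsian system that makes the hypotheses of \cite{BeyerOliynyk:2020,BOOS:2021} applicable: this is what forces the use of the wave gauge relations \eqref{gi00} to eliminate the $g_{i00}$-equations (whose $1/t$ terms cannot be absorbed into $\Ac\Pbb$), the passage from $\kt$ to $k=t\beta\kt$ and from $\gt$ to $\beta\gac$, and the symmetrizations described in Remark \ref{rem:symmetry}; this structural preparation, carried out in Sections \ref{sec:FirstOrderForm}-\ref{sec:Fuch-form}, is the technical heart of the proof, after which the local existence / continuation machinery runs essentially automatically.
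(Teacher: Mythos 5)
Your proposal captures the overall architecture of the proof — local existence via Proposition~\ref{lag-exist-prop}, global Fuchsian estimates via Proposition~\ref{prop:globalstability}, uniqueness to identify the two solutions, continuation via \eqref{eq:cont_crit1}-\eqref{eq:cont_crit2}, and extraction of geometric estimates. However, there is a significant gap that leaves the stated decay rates unjustified.

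You write that ``direct substitution of \eqref{k-def}-\eqref{tauac-def} into the decay rate for $\Pbb u$ yields \eqref{eq:glostab-est.3}-\eqref{eq:glostab-est.Last}.'' This is not true. Direct substitution produces estimates whose exponents are expressed in the parameters $p$ and $\kappat$ from \eqref{eq:defkappat}, hence in terms of a \emph{fixed} choice of $(\ep_0,\ep_1,\ep_2)$ subject to \eqref{eq:epscond2}. The decay exponents in the theorem, such as $\tfrac19(3-\sqrt{3}-2\sigma)$ in \eqref{eq:glostab-est.First}-\eqref{eq:glostab-est.2} and $-\tfrac12(\sqrt{3}-1)-\sigma$ in \eqref{eq:glostab-est.3}-\eqref{eq:glostab-est.4}, are not obtained for any single parameter choice. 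The essential observation — which is absent from your proposal — is that the left-hand sides of the geometric estimates are $\epv$-independent while the right-hand sides depend on $(\ep_0,\ep_1,\ep_2)$, so one may optimize the right-hand exponent separately for each estimate over the admissible region $E$ from \eqref{E-def}. Each estimate in the theorem requires its own optimizing parameter set (and several of them achieve their optima on the boundary $\partial E$, necessitating care), and the constraint $\ep_1 > \tfrac{1}{1+\sqrt{3}}$ entering $\kappat$ — which is in turn forced by the lower bound \eqref{eq:comparenorms} of Lemma~\ref{lem-Bc-lbnd} — is where the $\sqrt{3}$ comes from. Without the $\epv$-independence observation and the ensuing optimization, you can only conclude estimates with generic exponents $p$ and $\kappat-\sigma$, not the sharper rates asserted.

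Two smaller points. First, the Fuchsian theorem does not give $u\in\bigcap_{j=0}^k C^j((0,t_0],H^{k-j})$; it gives only $C^0_b((0,t_0],H^k)\cap C^1((0,t_0],H^{k-1})$. The full chain of regularity \eqref{eq:WregFin} is a property of the Lagrangian solution $W$, obtained by combining Proposition~\ref{lag-exist-prop} with the continuation argument, and should not be attributed to the Fuchsian solution. Second, you propose to produce $\kf_{IJ}$ by integrating the evolution equation \eqref{for-H.S2a} for $k_{LM}$, but the cleaner route (which the paper takes) is simply to identify $\kf_{IJ}$ with the nontrivial block of $\Pbb^\perp u(0)=\lim_{t\searrow 0}\Pbb^\perp u(t)$, whose existence in $H^{k-1}$ is part of the output of Proposition~\ref{prop:globalstability}; your ODE integration would reproduce this but is redundant. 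Your proposed approach to the limits of $\betat$ and $e^\Lambda_I$ by integrating \eqref{for-O.1.S2} and \eqref{for-M.1.S2} does match the paper and is correct in spirit, but again requires the optimization step to land on the specific exponent in \eqref{eq:improvbetaestimate-glob}, and the estimate \eqref{eq:improvframeestimate} additionally requires controlling the factor $t^{\pm\Ksc}$ via \eqref{frame-lim-H}, which is the origin of the $-C\delta$ loss in the exponent that your proposal does not account for.
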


\subsection{Fuchsian stability proof}
\label{sec:proof_globstab_Fuchsian}
As discussed above, the first step in the proof of Theorem \ref{glob-stab-thm} is to establish the stability of the trivial solution $u=0$ to the Fuchsian equation \eqref{Fuch-ev-A2}. Thus, we need to prove the existence of solutions to the Fuchsian global initial value problem (GIVP) 
\begin{align}
  \label{eq:givp1}
  A^0\del{t}u + \frac{1}{t^{\ep_0+\ep_2}}A^\Lambda(t,u) \del{\Lambda} u &=
\frac{1}{t}\Ac\Pbb u + \frac{1}{t} G(u)+\frac{1}{t^{\tilde\ep}}H(t,u)\hspace{1.0cm} \text{in $M_{0,t_0}=(0,t_0]\times \Tbb^{n-1}$,}\\
  \label{eq:givp2}
  u&=u_0 \hspace{4.9cm} \text{in $\Sigma_{t_0}=\{t_0\}\times \Tbb^{n-1}$},
\end{align}
for initial data $u_0$ that is sufficiently small, where we assume that  $\tilde{\ep}$ has been fixed by
\begin{equation}
 \label{eq:epstildecond2}
  \tilde\ep=\max\{3\ep_0+\ep_1,1-\ep_0,1-\ep_2\};
\end{equation}
see \eqref{eq:epstildecond}.
Notice that we have used \eqref{eq:SourceDecomp} to replace the source term $F$ from \eqref{Fuch-ev-A2} with
$t^{-1} G(u)+t^{-\tilde\ep}H(t,u)$ in \eqref{eq:givp1} and that we are using the notation $A^\Lambda(t,u)$ to denote the dependence of the matrices $A^\Lambda$ on $t$ and $u$, which is determined by \eqref{U-def}, \eqref{ALambda-def}, \eqref{eq:Ubgdef} and \eqref{u-def}.

The stability of the trivial solution $u=0$ to the GIVP \eqref{eq:givp1}-\eqref{eq:givp2} is established in Proposition \ref{prop:globalstability} below subject to the conditions on the parameters $\ep_0$, $\ep_1$, and $\ep_2$ stated there. The proof amounts to an application of the Fuchsian global existence theory established in \cite{BOOS:2021,Oliynyk:CMP_2016}. The actual result that we employ is Theorem~A.2 together with Remark~A.3 from \cite{BeyerOliynyk:2024}, which is a slight generalization of the global existence result from \cite{BOOS:2021}.

\begin{prop}
  \label{prop:globalstability}
Suppose $n\in\Zbb_{\ge 3}$, $k \in \Zbb_{>(n+1)/2}$, $\sigma>0$, $T_0>0$, and $\ep_0$, $\ep_1$, $\ep_2$ satisfy
  \begin{equation}
  \label{eq:epscond2}
    0<\ep_0,\quad \frac{1}{1+\sqrt{3}}<\ep_1,\quad 3\ep_0+\ep_1<1, \quad  0<\ep_2<1-\ep_0.
  \end{equation}
Then there exists
a $\delta_0 > 0$ such that for every $u_0\in H^k(\mathbb T^{n-1})$ with
\begin{equation}
  \label{glob-stab-thm-idata-A-Fuchsian}
 \norm{u_0}_{H^k(\mathbb T^{n-1})}< \delta_0
\end{equation}
and every $t_0\in (0,T_0]$,
the Fuchsian GIVP \eqref{eq:givp1}-\eqref{eq:givp2}  admits a unique solution 
\begin{equation*}
u \in C^0_b\bigl((0,t_0],H^k(\mathbb T^{n-1})\bigr)\cap C^1\bigl((0,t_0],H^{k-1}(\mathbb T^{n-1})\bigr)
\end{equation*}
such that $\lim_{t\searrow 0} \Pbb^\perp u(t)$, denoted $\Pbb^\perp u(0)$, exists in $H^{k-1}(\mathbb T^{n-1})$. Moreover, the solution $u$ satisfies the energy estimate
\begin{equation}
  \label{eq:resenergy}
\norm{u(t)}_{H^k(\mathbb T^{n-1})}^2 + \int^{t_0}_t \frac{1}{s} \norm{\Pbb u(s)}_{H^k(\mathbb T^{n-1})}^2\, ds  \lesssim \norm{u_0}^2_{H^k(\mathbb T^{n-1})}
\end{equation}
and decay estimates
\begin{align}
\label{eq:PbbuDecay}
  \norm{\Pbb u(t)}_{H^{k-1}(\mathbb T^{n-1})} &\lesssim t^p+t^{\kappat-\sigma},\\
  \label{eq:PbbuDecay2}
\norm{\Pbb^\perp u(t) - \Pbb^\perp u(0)}_{H^{k-1}(\mathbb T^{n-1})} &\lesssim
 t^{p}+ t^{2(\kappat-\sigma)},
\end{align}
for all $t\in(0,t_0]$,  
where
\begin{align}
  p=\min\{1-3\ep_0-\ep_1,1-\ep_0-\ep_2,\ep_0,\ep_2\}, \quad
    \kappat=\min\biggl\{\ep_0,\ep_2,\ep_1-\frac{1}{1+\sqrt{3}}\biggr\},  \label{eq:defkappat}
\end{align}
and $\Pbb^\perp = \id -\Pbb$.

\medskip

\noindent The implicit constants in the energy and decay estimates are independent of the choice of $t_0\in (0,T_0]$ and $u_0$ satisfying \eqref{glob-stab-thm-idata-A-Fuchsian}.
\end{prop}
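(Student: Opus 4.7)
The strategy is a direct application of the general Fuchsian global existence theorem for symmetric hyperbolic systems of the form \eqref{eq:givp1} established in \cite[Thm.~A.2 and Rem.~A.3]{BeyerOliynyk:2020} (refining \cite{BOOS:2021,Oliynyk:CMP_2016}). That theorem, applied to data satisfying \eqref{glob-stab-thm-idata-A-Fuchsian}, delivers precisely the regularity statement, the energy estimate \eqref{eq:resenergy}, and the decay estimates \eqref{eq:PbbuDecay}--\eqref{eq:PbbuDecay2} stated here, provided a finite list of structural and smallness hypotheses on $A^0$, $A^\Lambda(t,u)$, $\Ac$, $\Pbb$, $G$, and $H$ can be verified. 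The proof therefore reduces to a case-by-case check of these hypotheses; the actual PDE analysis, symmetric hyperbolic energy estimates, and Gr\"onwall-type arguments are all absorbed into the cited black box.

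\emph{Algebraic structure.} From the explicit block-diagonal forms \eqref{A0-def}--\eqref{ALambda-def}, one reads off that $A^0$ is the identity (hence symmetric and uniformly positive definite), while $A^\Lambda(t,u)$ is symmetric as a consequence of $B^{ijK}$ being symmetric in $(i,j)$, cf.\ \eqref{BK-def}. The relation $\Pbb^2=\Pbb$ is immediate from \eqref{Pbb-def}, the commutator identity $[\Ac,\Pbb]=0$ is \eqref{Ac-comm}, and $[A^0,\Pbb]=0$ is obvious. The source term conditions are supplied by the decomposition \eqref{eq:SourceDecomp}--\eqref{eq:Gtprops} and the choice \eqref{eq:epstildecond2}: $G(u)=\Gt(u)\Pbb u$ with $\Gt$ smooth, commuting with $\Pbb$, and vanishing at the origin, while $H(t,u)$ is smooth with $H(t,0)=0$ and multiplied by an integrable singular weight since $\tilde\ep<1$ by \eqref{eq:epstildecond2}. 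The principal-symbol coefficient $t^{-\ep_0-\ep_2}A^\Lambda(t,u)$ remains bounded as $t\searrow 0$ for $u$ in a small neighbourhood of $0$ because $A^\Lambda$ is proportional to $\beta f^{\Lambda}_K$, and the rescalings \eqref{beta-def}, \eqref{f-def} give $\beta f\sim t^{\ep_0+\ep_2}$ on the FLRW background $\mathring U$ of \eqref{eq:Ubgdef}; the hypothesis $\ep_0+\ep_2<1$ in \eqref{eq:epscond2} is precisely the compatibility condition needed here.

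\emph{Positivity of $\Ac$ on $\mathrm{ran}(\Pbb)$.} This is the core structural hypothesis and, in my view, the main obstacle. On blocks $2$--$9$ of \eqref{Ac-diag-1}--\eqref{Ac-diag-3} one reads off the strictly positive eigenvalues $2+\ep_1$, $\ep_1$, $\ep_1$, $\ep_1-\ep_0$, $\ep_0$, $\ep_2$, $\ep_1$, $\ep_0+2\ep_1$, each guaranteed positive by \eqref{eq:epscond2}. The block $\Ac_{10\,10}$ in \eqref{Ac-diag-4} is more delicate: the second summand is a perturbation of $(1+\ep_1)\mathrm{Id}$ that acts non-trivially only on the components $\gac_{Q0lm}$ and $\gac_{Qj00}$. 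A spectral analysis of its symmetric part on the subspace of tensors symmetric in $(l,m)$ leaves the eigenvalue $1+\ep_1$ on the complement of the $j=0$ or $l=m=0$ coordinate directions and reduces on the remaining invariant subspace (spanned, for each fixed $(Q,L,M)$, by $\gac_{Q0LM}$ and the symmetrization of $\gac_{QL0M}$) to a small symmetric matrix whose characteristic polynomial has the root $\ep_1-1/(1+\sqrt{3})$. This is exactly why the hypothesis $\ep_1>1/(1+\sqrt{3})$ is imposed in \eqref{eq:epscond2}, and it is exactly this spectral gap that appears as one of the entries in $\kappat$ in \eqref{eq:defkappat}.

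\emph{Conclusion.} Given the above verifications, \cite[Thm.~A.2]{BeyerOliynyk:2020} produces a unique solution $u\in C^0_b((0,t_0],H^k(\Tbb^{n-1}))\cap C^1((0,t_0],H^{k-1}(\Tbb^{n-1}))$ and the energy bound \eqref{eq:resenergy}, with an implicit constant that is uniform in $t_0\in(0,T_0]$ thanks to Remark~A.3 (which records the insensitivity of the Fuchsian constants to the choice of starting time in a bounded interval, since the coefficients of \eqref{Fuch-ev-A2} are autonomous in a suitable sense). The decay estimate \eqref{eq:PbbuDecay} follows from the theorem's $\Pbb$-decay conclusion, with rate determined by the smallest positive eigenvalue $\kappat$ of $\Ac$ on $\mathrm{ran}(\Pbb)$ computed above, cross-compared against the exponent $p$ governing the polynomial prefactors of $G$ and $H$ coming from the explicit $t$-powers in \eqref{F1-def}--\eqref{F10-def}. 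Integrating \eqref{eq:givp1} against $\Pbb^\perp$ and using the cancellation $\Pbb^\perp\Ac\Pbb=0$ from \eqref{Pbb-def}, \eqref{Ac-def} together with \eqref{eq:PbbuDecay} yields the existence of the limit $\Pbb^\perp u(0)\in H^{k-1}(\Tbb^{n-1})$ and the rate \eqref{eq:PbbuDecay2}.
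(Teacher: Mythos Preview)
Your overall approach is correct and matches the paper's: reduce everything to an application of \cite[Thm.~A.2, Rem.~A.3]{BeyerOliynyk:2020}, and most of your structural checks (symmetry of $A^\Lambda$, $\Pbb^2=\Pbb$, $[\Ac,\Pbb]=0$, the source term decomposition \eqref{eq:SourceDecomp}--\eqref{eq:Gtprops}, the treatment of the singular prefactor $t^{-\ep_0-\ep_2}$) are fine. Your analysis of the block $\Ac_{10\,10}$, recovering the constraint $\ep_1>1/(1+\sqrt{3})$ and the term $\ep_1-1/(1+\sqrt{3})$ in $\kappat$, is also essentially what the paper does in Lemma~\ref{lem-Bc-lbnd}.

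There is, however, a genuine gap in your positivity argument. The matrix $\Ac$ in \eqref{Ac-def} is \emph{not} block diagonal: it has the off-diagonal blocks $\Ac_{2\,4}$, $\Ac_{2\,10}$, $\Ac_{3\,10}$, $\Ac_{4\,10}$, $\Ac_{8\,10}$ listed in \eqref{Ac-offdiag-1}--\eqref{Ac-offdiag-3}. Because $\Ac$ is only block upper-triangular, reading off the eigenvalues of the diagonal blocks tells you the spectrum of $\Ac$ but \emph{not} the positivity of its symmetric part, which is what the energy-estimate machinery in \cite{BeyerOliynyk:2020} actually requires (one needs $\Bsc^0\leq\frac{1}{\kappa}\Bc(u)$ as a quadratic-form inequality). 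A $2\times 2$ upper-triangular matrix with large off-diagonal entry shows this can fail. The paper closes this gap by introducing a diagonal weight $\Asc=\mathrm{diag}(\sigma_1\id,\ldots,\sigma_{10}\id)$, see \eqref{Asc-def}, rewriting the equation in the rescaled form \eqref{Fuch-ev-A3}, and proving in Lemma~\ref{lem:posdef1} (via the general block upper-triangular matrix Lemma~\ref{matrix-lem}) that for any $\eta>0$ one can choose the $\sigma_i$ so that $\Asc\Ac\geq(\kappat-\bc\eta)\Asc$. This rescaling is precisely what absorbs the off-diagonal blocks; without it, your claimed lower bound by $\kappat$ does not follow.
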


\noindent The proof of Proposition~\ref{prop:globalstability}  makes use of  two technical lemmas, Lemmas~\ref{lem-Bc-lbnd} and \ref{lem:posdef1}, which we discuss first. 

\begin{lem}
  \label{lem-Bc-lbnd}
  Suppose $\ep_1>0$ and let 
\begin{align} 
\Nc_{\gac} &=\gac_{Pqrs} \delta^{PQ}\delta^{rl}\delta^{sm} \bigl((1+\ep_1)\delta^{qj}\gac_{Qjlm} + \delta_0^q\delta_0^j  (\gac_{Qljm}+ \gac_{Qmjl}-\gac_{Qjlm})\bigr). \label{Ncgac-def}
\end{align}
Then 
\begin{equation}
  \label{eq:comparenorms}
     \Nc_{\gac} \geq \Bigl(\ep_1-\frac{1}{1+\sqrt{3}}\Bigr)|\gac|^2
\end{equation}
for all $\gac_{Qmjl}\in \Rbb^{(n-1)n^3}$ satisfying $\gac_{Qjlm}=\gac_{Qjml}$ where
$|\gac|^2 =  \delta^{PQ}\delta^{qj}\delta^{rl}\delta^{sm} \gac_{Pqrs}\gac_{Qjlm}$
is the Euclidean norm.
\end{lem}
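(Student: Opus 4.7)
The plan is to expand $\Nc_{\gac} = (1+\ep_1)|\gac|^2 + T_2$, where
\[
T_2 := \gac_{Pqrs}\delta^{PQ}\delta^{rl}\delta^{sm}\delta_0^q \delta_0^j\bigl(\gac_{Qljm} + \gac_{Qmjl} - \gac_{Qjlm}\bigr),
\]
and to prove $T_2 \geq -\tfrac{1}{2}(\sqrt{3}+1)|\gac|^2$, which is equivalent to the claim. My first step would be to execute the $\delta$-contractions (forcing $q=j=0$), then to relabel $l\leftrightarrow m$ in one of the cross terms and use the symmetry $\gac_{Pjlm}=\gac_{Pjml}$ on the other factor to combine them. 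This yields the cleaner representation
\[
T_2 = \sum_{P\text{ spatial}}\sum_{l,m=0}^{n-1}\bigl[2\gac_{P0lm}\gac_{Pl0m} - \gac_{P0lm}^2\bigr].
\]

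Next I would split $\gac$ orthogonally into ``blocks'' determined by which of the last three index slots are time ($=0$) versus spatial, and observe that $T_2$ does not mix different blocks. With the shorthand (for each fixed spatial $P$)
\[
u_P:=\gac_{P000},\ v_{PL}:=\gac_{P00L}=\gac_{P0L0},\ r_{PL}:=\gac_{PL00},\ w_{PLM}:=\gac_{P0LM},\ s_{PLM}:=\gac_{PL0M}=\gac_{PLM0},
\]
together with the ``bulk'' components $\gac_{PLL'M}$ (all three slots spatial) which do not enter $T_2$ at all, a short computation shows that the block contributions to $\Nc_{\gac}$ are
\[
(2+\ep_1)u_P^2,\quad (1+\ep_1)(2v_{PL}^2+r_{PL}^2)+2v_{PL}r_{PL},\quad \ep_1 w_{PLM}^2+2(1+\ep_1)s_{PLM}^2+2w_{PLM}s_{PLM},\quad (1+\ep_1)\gac_{PLL'M}^2,
\]
to be compared, block by block, with the matching norm pieces $u_P^2$, $2v_{PL}^2+r_{PL}^2$, $w_{PLM}^2+2s_{PLM}^2$, and $\gac_{PLL'M}^2$. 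The factors of $2$ in front of $v_{PL}^2$ and $s_{PLM}^2$ reflect that these components each arise from two equal entries of $\gac$ under $l\leftrightarrow m$.

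On each block I would then demand the lower bound by the block norm scaled by $c:=\ep_1-\tfrac{1}{1+\sqrt{3}}$, which reduces to a $2\times 2$ positive-semidefiniteness check. The first and fourth blocks are immediate, while the $(v_{PL},r_{PL})$ block yields the slack constraint $c\leq 1+\ep_1-\tfrac{1}{\sqrt{2}}$, well within the claim. The \emph{binding} block is the $(w_{PLM},s_{PLM})$ one: the PSD condition for the associated $2\times 2$ matrix becomes $2(\ep_1-c)(1+\ep_1-c)\geq 1$, i.e.\ $2\alpha^2+2\alpha-1\geq 0$ with $\alpha:=\ep_1-c$, whose positive root is $\alpha=\tfrac{\sqrt{3}-1}{2}=\tfrac{1}{1+\sqrt{3}}$. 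This identifies $c=\ep_1-\tfrac{1}{1+\sqrt{3}}$ as the optimal constant; summing the block inequalities over $P$, $L$, $M$ gives $\Nc_{\gac}\geq c|\gac|^2$.

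The main obstacle is bookkeeping rather than depth: I must carefully account for the factor-of-two multiplicities from the $l\leftrightarrow m$ identification (which makes $|\gac|^2$ weight the symmetrized components $v_{PL}$ and $s_{PLM}$ doubly relative to $r_{PL}$ and $w_{PLM}$), and verify that after using this symmetry the block decomposition is truly orthogonal with no surviving cross terms between blocks. Once these combinatorics are pinned down, the only analytic input is the scalar inequality $2\alpha^2+2\alpha-1\geq 0$ arising from the $(w,s)$-block, which directly produces the advertised constant $\tfrac{1}{1+\sqrt{3}}$.
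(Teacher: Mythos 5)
Your proof is correct and follows essentially the same route as the paper: you expand $\Nc_{\gac}=(1+\ep_1)|\gac|^2+T_2$, group the components of $\gac$ into orthogonal blocks according to the time/space pattern of the last three index slots (with the $l\leftrightarrow m$ symmetry identifying $v$ and $s$ as the doubly-weighted ones), observe $T_2$ only couples within blocks, and reduce to $2\times2$ positivity conditions — of which the $(w,s)$ block is binding and produces $\alpha=\ep_1-c=\tfrac{1}{1+\sqrt{3}}$. The paper phrases this last step as Young's inequality with free parameters $\zeta_1,\zeta_2$ subsequently optimized (yielding $\zeta_1=1/\sqrt{2}$, $\zeta_2=1+\sqrt{3}$), which is mathematically the same $2\times2$ positive-semidefiniteness computation in different clothing; your direct PSD formulation is slightly cleaner but not a different argument.
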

\begin{proof}
First, by exploiting the symmetry $\gac_{Qjlm}=\gac_{Qjml}$,
we notice that the Euclidean norm $|\gac|^2$
can be expressed as 
\begin{align}
  |\gac|^2  
   &=\delta^{IJ}\gac_{I000}\gac_{J000}
    + 2\delta^{IJ}\delta^{KL}\gac_{I00K}\gac_{J00L}
    + \delta^{IJ}\delta^{KL}\delta^{MN}\gac_{I0KM}\gac_{J0LN}\notag \\
    &+ \delta^{IJ}\delta^{KL}\gac_{IK00}\gac_{JL00}
    +2\delta^{IJ}\delta^{KL}\delta^{MN}\gac_{IK0M}\gac_{JL0N} +\delta^{IJ}\delta^{KL}\delta^{MN}\delta^{PQ}\gac_{IKMP}\gac_{JLNQ}.
    \label{gac-norm-alt}
\end{align}
Next, we set
\begin{equation}\label{varep-def}
    \varep = 1+\ep_1,
  \end{equation}
  and observe, by a straightforward computation involving the symmetry $\gac_{Qjlm}=\gac_{Qjml}$, that we can
express \eqref{Ncgac-def} as
\begin{align}
    \Nc_{\gac} =&(\varep+1) \delta^{IJ}\gac_{I000}\gac_{J000}
    + 2 \varep \delta^{IJ}\delta^{KL}\gac_{I00K}\gac_{J00L}
    +(\varep-1) \delta^{IJ}\delta^{KL}\delta^{MN}\gac_{I0KM}\gac_{J0LN}\notag \\
    &+\varep \delta^{IJ}\delta^{KL}\gac_{IK00}\gac_{JL00}
    +2\varep\delta^{IJ}\delta^{KL}\delta^{MN}\gac_{IK0M}\gac_{JL0N} +\varep\delta^{IJ}\delta^{KL}\delta^{MN}\delta^{PQ}\gac_{IKMP}\gac_{JLNQ} \notag \\
    &+ 2\delta^{IJ}\delta^{KL}\gac_{I00K}\gac_{JL00}
    +2\delta^{IJ}\delta^{KL}\delta^{MN}\gac_{I0KM}\gac_{JL0N}.
    \notag
\end{align}
Then, with the help of the Cauchy-Schwartz and Young's inequalities (e.g. $2|ab| \leq \zeta^{-1} a^2 + \zeta b^2$, $\zeta>0$),
we can bound $\Nc_{\gac}$ below by 
\begin{align} 
    \Nc_{\gac} \geq &(\varep+1) \delta^{IJ}\gac_{I000}\gac_{J000}
    + (2\varep-\zeta_1^{-1}) \delta^{IJ}\delta^{KL}\gac_{I00K}\gac_{J00L}
    +(\varep-1-\zeta_2^{-1}) \delta^{IJ}\delta^{KL}\delta^{MN}\gac_{I0KM}\gac_{J0LN}\notag \\
    &+(\varep-\zeta_1) \delta^{IJ}\delta^{KL}\gac_{IK00}\gac_{JL00}
    +(2\varep-\zeta_2)\delta^{IJ}\delta^{KL}\delta^{MN}\gac_{IK0M}\gac_{JL0N}\notag\\ &+\varep\delta^{IJ}\delta^{KL}\delta^{MN}\delta^{PQ}\gac_{IKMP}\gac_{JLNQ} \label{Ncgac-lbnd-A}
\end{align}
for any $\zeta_1,\zeta_2>0$. The aim now is to optimize 
the constants $\zeta_1,\zeta_2$ in \eqref{Ncgac-lbnd-A} to get the best possible inequality of the form \eqref{eq:comparenorms}.
With the help of \eqref{gac-norm-alt}, we conclude that the optimal choice is determined by the conditions $\varep-\zeta_1^{-1}/2=\varep-\zeta_1$ and $ \varep-1-\zeta_2^{-1}=\varep-\zeta_2/2$, which we can solve for $\zeta_1,\zeta_2$ to get
$\zeta_1= \frac{1}{\sqrt{2}}$ and $\zeta_2 = 1+\sqrt{3}$.
We then have by \eqref{varep-def} and \eqref{Ncgac-lbnd-A} that
\begin{align} 
    \Nc_{\gac} \geq &(\ep_1+2) \delta^{IJ}\gac_{I000}\gac_{J000}
    + 2\Bigl(\ep_1+1-\frac{1}{\sqrt{2}}\Bigr) \delta^{IJ}\delta^{KL}\gac_{I00K}\gac_{J00L}\notag\\
    &+\Bigl(\ep_1-\frac{1}{1+\sqrt{3}}\Bigr) \delta^{IJ}\delta^{KL}\delta^{MN}\gac_{I0KM}\gac_{J0LN}
    +\Bigl(\ep_1+1-\frac{1}{\sqrt{2}}\Bigr) \delta^{IJ}\delta^{KL}\gac_{IK00}\gac_{JL00} \notag \\
    &+2\Bigl(\ep_1-\frac{1}{1+\sqrt{3}}\Bigr)\delta^{IJ}\delta^{KL}\delta^{MN}\gac_{IK0M}\gac_{JL0N} +(\ep_1+1)\delta^{IJ}\delta^{KL}\delta^{MN}\delta^{PQ}\gac_{IKMP}\gac_{JLNQ}. \notag 
\end{align}
The required inequality \eqref{eq:comparenorms} therefore follows.
\end{proof}

Before we state the second technical lemma, we first define an alternative formulation of the Fuchsian equation \eqref{eq:givp1} that will be employed in our subsequent analysis.  To this end, we consider positive constants $\sigma_1,\ldots,\sigma_{10}>0$ and
set
\begin{equation} \label{Asc-def}
  \begin{split}
  \Asc= \small \diag\bigl(&\sigma_1\delta^{\Lt L}\delta^{\Mt M},
  \sigma_2\delta^{\Mt M},
  \sigma_3\delta^{\Rt R}\delta^{\Mt M},
  \sigma_4\delta^{\Rt R} \delta^{\Lt L} \delta^{\Mt M},
  \sigma_5\delta^{\rt r}\delta^{\lt l},\\
  &\sigma_6,
  \sigma_7\delta^{\It I}\delta_{\Lambdat \Lambda},
  \sigma_8\delta^{\It I}\delta_{\kt k} \delta^{\Jt J},
  \sigma_9\delta^{\Qt Q}\delta^{\jt j}\delta^{\lt l},
  \sigma_{10}\delta^{\Qt Q}\delta^{\jt j}\delta^{\lt l} \delta^{\mt
    m}\bigr).
\end{split}
\end{equation}
Then we set
  \begin{align}    
    \Bsc^0=&\Asc A^0=\Asc, \label{Bsc0-def}\\
    \Bsc^\Lambda(t,u)=&\frac{1}{t^{\ep_0+\ep_2}}\Asc  A^\Lambda(t,u),\label{BscLambda-def}\\
    \Bc(u)=&\Asc\Ac+\Asc\Gt(u) \label{eq:defBc}
    \intertext{and}
    \Hsc(t,u)=&\frac{1}{t^{\tilde\ep}}\Asc H(t,u). \label{Hsc-def}
  \end{align}
  Using theses definitions, a short calculation shows that \eqref{eq:givp1} can be expressed as 
  \begin{equation} \label{Fuch-ev-A3}
\Bsc^0\del{t}u + \Bsc^\Lambda(t,u) \del{\Lambda} u =
\frac{1}{t}\Bc(u)\Pbb u + \Hsc(t,u),
\end{equation}
where in deriving this we have used \eqref{eq:GtDef}.

\begin{lem}
  \label{lem:posdef1}  
  Suppose $\ep_0,\ep_1,\ep_2$ satisfy \eqref{eq:epscond2} and $\kappat$ is given by \eqref{eq:defkappat}. Then there exist a $\bc>0$, and  for each $\eta>0$, constants $\sigma_i=\sigma_i(\eta)$, $1\leq i\leq 10$, such that
  \begin{equation*}
    \Asc\Ac\geq(\kappat-\bc\eta)\Bsc^0.
  \end{equation*}
\end{lem}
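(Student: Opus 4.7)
The approach is a standard absorption argument. Observe that $\Ac$ is block upper triangular with non-zero off-diagonal blocks only at the positions $(2,4)$, $(2,10)$, $(3,10)$, $(4,10)$, $(8,10)$, see \eqref{Ac-def}--\eqref{Ac-offdiag-3}. For a block-decomposed vector $u=(u_1,\ldots,u_{10})$, the bilinear form splits as
\begin{equation*}
u^{\tr}\Asc\Ac\,u = \sum_{i=1}^{10}\sigma_i\,u_i^{\tr}\Ac_{i,i}u_i + \sum_{(i,j)\in \Xi}\sigma_i\,u_i^{\tr}\Ac_{i,j}u_j,
\end{equation*}
where $\Xi=\{(2,4),(2,10),(3,10),(4,10),(8,10)\}$. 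The plan is to bound each diagonal block from below by $\tilde\kappa$ times the identity and then dominate the off-diagonal cross terms by Young's inequality through a hierarchical choice of the weights $\sigma_1,\ldots,\sigma_{10}$.

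For each $i\in\{1,\ldots,9\}$, the block $\Ac_{i,i}$ is a positive multiple of the identity, with minimum eigenvalues $1$, $2+\ep_1$, $\ep_1$, $\ep_1$, $\ep_1-\ep_0$, $\ep_0$, $\ep_2$, $\ep_1$ and $\ep_0+2\ep_1$ respectively, see \eqref{Ac-diag-1}--\eqref{Ac-diag-3}. All of these are $\geq\tilde\kappa$; the only subtle case is $\Ac_{5,5}$, for which $\ep_1-\ep_0\geq\tilde\kappa$ follows since \eqref{eq:epscond2} forces $3\ep_0<1-\ep_1<(3-\sqrt 3)/2$, hence $\ep_0<1/(1+\sqrt 3)$ and therefore $\ep_1-\ep_0\geq \ep_1-1/(1+\sqrt 3)\geq\tilde\kappa$. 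For the block $\Ac_{10,10}$ defined in \eqref{Ac-diag-4}, a direct expansion shows that when $u_{10}$ is identified with the tuple $(\gac_{Qjlm})$ (subject to the symmetry $\gac_{Qjlm}=\gac_{Qjml}$), the bilinear form $u_{10}^{\tr}\Ac_{10,10}u_{10}$ coincides with the quantity $\Nc_{\gac}$ of \eqref{Ncgac-def}. Lemma~\ref{lem-Bc-lbnd} therefore yields the lower bound $(\ep_1-1/(1+\sqrt 3))|u_{10}|^2\geq\tilde\kappa|u_{10}|^2$.

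The cross terms are controlled by Young's inequality,
\begin{equation*}
\bigl|\sigma_i u_i^{\tr}\Ac_{i,j}u_j\bigr|\leq \tfrac{\sigma_i\eta}{2}|u_i|^2+\tfrac{\sigma_i|\Ac_{i,j}|_{\op}^2}{2\eta}|u_j|^2.
\end{equation*}
Summing these estimates, the target inequality $\Asc\Ac\geq(\tilde\kappa-\bc\eta)\Asc$ reduces, for each $j$, to the scalar condition
\begin{equation*}
\sigma_j\bigl(\bc-N_j^{\mathrm{out}}/2\bigr)\eta^2\geq \tfrac12\sum_{i:(i,j)\in\Xi}\sigma_i\,|\Ac_{i,j}|_{\op}^2,
\end{equation*}
where $N_j^{\mathrm{out}}$ counts the outgoing cross terms from block $j$ and satisfies $N_j^{\mathrm{out}}\leq 2$. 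Fixing $\bc$ to be any constant exceeding $\max_j N_j^{\mathrm{out}}/2$ independently of $\eta$ (e.g.\ $\bc=2$), these inequalities can be satisfied by choosing the weights in dependency order: set $\sigma_i=1$ for $i\in\{1,2,3,5,6,7,8,9\}$ (the blocks with no incoming cross terms); then take $\sigma_4=M_1\eta^{-2}$ with $M_1$ large enough to absorb $\sigma_2|\Ac_{2,4}|_{\op}^2$; and finally choose $\sigma_{10}=M_2\eta^{-4}$ with $M_2$ large enough to dominate $\sigma_2|\Ac_{2,10}|_{\op}^2+\sigma_3|\Ac_{3,10}|_{\op}^2+\sigma_4|\Ac_{4,10}|_{\op}^2+\sigma_8|\Ac_{8,10}|_{\op}^2$, the dominant contribution being the $\sigma_4$ term, which forces the $\eta^{-4}$ scaling.

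The main obstacle is the block $(10,10)$, whose lower bound cannot be read off from the coefficients in \eqref{Ac-diag-4} by elementary inspection but instead requires the sharp estimate of Lemma~\ref{lem-Bc-lbnd} obtained through the optimisation over the auxiliary parameters $\zeta_1,\zeta_2$; everything else is combinatorial bookkeeping of the Young absorption scheme.
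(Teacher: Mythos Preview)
Your proof is correct and follows essentially the same approach as the paper: verify that each diagonal block $\Ac_{i,i}$ is bounded below by $\kappat\,\id$ (using Lemma~\ref{lem-Bc-lbnd} for the nontrivial block $(10,10)$), then absorb the off-diagonal blocks via Young's inequality through a hierarchical choice of the weights $\sigma_i$. The only difference is packaging: the paper encapsulates the absorption step in the general appendix Lemma~\ref{matrix-lem} for block upper-triangular matrices and simply invokes it, whereas you carry it out explicitly by exploiting the sparse structure of $\Ac$, which leads to your concrete scalings $\sigma_4\sim\eta^{-2}$, $\sigma_{10}\sim\eta^{-4}$.
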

\begin{proof}
The proof is a straightforward consequence of Lemma~\ref{matrix-lem} in the appendix.
To see why this is the case, we identify the matrix $\Ac$ in Lemma~\ref{matrix-lem} with the matrix \eqref{Ac-def}. The matrix \eqref{Ac-def} is then partitioned by choosing $N=10$ and by identifying its diagonal blocks with the blocks $\Ac_{1\,1}$, \ldots, $\Ac_{10\,10}$ in Lemma~\ref{matrix-lem}, and 
similarly for the off-diagonal blocks. Noticing that $A^0$ in \eqref{A0-def} is the identity matrix, the 
inequality $ \Asc\Ac\geq (\kappat-\bc\eta)\Bsc^0$
is then a direct consequence of \eqref{Ac-diag-1}-\eqref{Ac-offdiag-3}, Lemma~\ref{lem-Bc-lbnd} and Lemma~\ref{matrix-lem} provided $\kappat$ is defined by
  \begin{equation*}
    \kappat=\min\biggl\{1,2+\ep_1,\ep_1,\ep_1-\ep_0,\ep_0,\ep_2,\ep_0+2\ep_1,
    \ep_1-\frac{1}{1+\sqrt{3}}\biggr\}.
  \end{equation*}
However, 
\begin{equation*}
    \ep_1-\ep_0 - \biggl(\ep_1 - \frac{1}{1+\sqrt{3}}\biggr) = \frac{1}{1+\sqrt{3}} - \ep_0 = \biggl(\frac{1}{1+\sqrt{3}} -\frac{1}{3}\biggr) 
    +\biggl(\frac{1}{3} - \ep_0\biggr)> 0
\end{equation*}
by \eqref{eq:epscond2}, and so, with the help of this inequality, it is clear from \eqref{eq:epscond2} that
$\kappat$, defined above, agrees with \eqref{eq:defkappat} and completes the proof.
\end{proof}

With the help of the above two technical lemmas, we are now in the position to prove Proposition~\ref{prop:globalstability}.

\subsubsection{Proof of Proposition~\ref{prop:globalstability}}
 The proof of Proposition~\ref{prop:globalstability} will follow from
an application of Theorem~A.2 and Remark~A.3 from \cite{BeyerOliynyk:2024} once we have verified that the Fuchsian equation \eqref{Fuch-ev-A3} satisfies the coefficient assumptions (1)-(5) from Section~A.1 of \cite{BeyerOliynyk:2024}.
Thus, to complete the proof, we need to verify these assumptions. To this end,
we let $n$, $k$, $\sigma$, $T_0$, $\ep_0$, $\ep_1$, $\ep_2$ and $\kappat$ be chosen as in the statement of Proposition~\ref{prop:globalstability} and notice
that the matrix $\Pbb$ defined by \eqref{Pbb-def} satisfies
\begin{equation} \label{Pbb-props}
\Pbb^2 = \Pbb,  \quad  \Pbb^{\tr} = \Pbb, \quad \del{t}\Pbb =0 \AND \del{\Lambda} \Pbb =0.
\end{equation}

By \eqref{Bsc0-def}, it is clear that the constant matrix $\Bsc^0$ is positive definite and symmetric, and we note from \eqref{Pbb-def}, \eqref{eq:GtDef}, \eqref{eq:Gtprops}, \eqref{Ac-comm}, \eqref{Asc-def} and \eqref{eq:defBc} that
the matrix-valued map $\Bc(u)$ is smooth, that is
$\Bc \in C^\infty(\Rbb^{\udim},\Mbb{\udim})$, and
satisfies 
\begin{equation} \label{Bc(0)}
    [\Pbb,\Bc]=0 \AND 
    \Bc(0) = \Asc \Ac.
\end{equation}
From \eqref{Bc(0)} and the smooth dependence of $\Bc(u)$ on $u$, we have
\begin{equation} \label{Bc-exp}
    \Bc(u) = \Asc \Ac + \Ord(u).
\end{equation}
Now, fix
\begin{equation}\label{kappa-fix}
    \kappa \in (0,\kappat).
\end{equation} 
Then by \eqref{Bc-exp}
and Lemma \ref{lem:posdef1}, it follows from choosing $\eta$ small enough that there exist constants $\sigma_i>0$, $1\leq i\leq 10$, which we used to define $\Asc$ via \eqref{Asc-def}, and constants $\gamma_1>0$, $\gamma_2>0$ such that
\begin{equation}\label{Bsc0-Bc-bnds}
\frac{1}{\gamma_1}\id \leq \Bsc^0 \leq \frac{1}{\kappa} \Bc(u) \leq \gamma_2 \id
\end{equation}
for all $u\in B_R(\Rbb^{\udim})$ provided $R>0$ is chosen sufficiently small. We further observe from \eqref{BscLambda-def} that $\Bsc^\Lambda$ can be expressed as
\begin{equation} \label{Bsc-exp}
    \Bsc^\Lambda(t,u) = \frac{1}{t^{1-p}}\Bsc_0^\Lambda(t,u)
\end{equation}
where
\begin{equation*}
    \Bsc_0^\Lambda(t,u)=t^{1-\ep_0-\ep_2-p}\Asc A^\Lambda(t,u).
\end{equation*}
By \eqref{U-def}, \eqref{ALambda-def}, \eqref{eq:Ubgdef}, \eqref{u-def} and \eqref{Asc-def}, 
the matrices $\Bsc_0^\Lambda$ are symmetric, and
\begin{equation}\label{Bsc-smooth}
    \Bsc^{\Lambda}_0 \in C^0\Bigl([0,T_0],C^\infty
    \bigl(\Rbb^{\udim},\mathbb{M}_{\udim}\bigr)\Bigr)
\end{equation}
provided that
\begin{equation} \label{p-rest-A}
   p\leq  1-\ep_0 -\ep_2.
\end{equation}

It is also clear from \eqref{Hsc-def} that $\Hsc(t,u)$ can be written as
\begin{equation} \label{Hsc-rescale}
    \Hsc(t,u) = \frac{1}{t^{1-p}}\Hsc_0(t,u) 
\end{equation}
where
\begin{equation*}
\Hsc_0(t,u) = t^{1-p-\tilde{\ep}}\Asc H(t,u).
\end{equation*}
Then by \eqref{eq:SourcetermVan2}, we have
\begin{equation} \label{Hsc-vanish}
    \Hsc_0(t,0) =0
\end{equation}
and 
\begin{equation} \label{Hsc-smooth}
\Hsc_0\in C^0\bigl([0,T_0],C^\infty(\Rbb^{\udim})\bigr)
\end{equation}
provided $p$
satisfies 
\begin{equation} \label{p-rest-B}
   p\leq  1-\tilde{\ep}
\end{equation}
and $\tilde{\ep}$ is defined by \eqref{eq:epstildecond2}.

The final step needed to verify the coefficient assumptions from Section~A.1 in \cite{BeyerOliynyk:2024} is to analyse, c.f. item (4) from Definition~2.1 in \cite{BeyerOliynyk:2024},  the 
 map\footnote{Here, $D_u \Bc^\Lambda(t,u)$ denotes the partial derivative operator defined by $D_a\Bsc^\Lambda(t,u) v = \frac{d\;}{ds}\bigl|_{s=0}\Bsc^\Lambda(t,u+sv)$. } 
\begin{equation*}
\Div\Bsc(t,u,w):=\del{t}\Bsc^0+\del{\Lambda}(\Bsc^\Lambda(t,u))\bigl|_{w:=(w_\Lambda)=(\del{\Lambda} u)}
= D_u\Bsc^\Lambda(t,u) w_\Lambda.
\end{equation*}
 By \eqref{Bsc-exp}, it is clear that 
\begin{align} \label{divB-1}
\Div\Bsc(t,u,w)= \frac{1}{t^{1-p}} \Csc(t,u,w)
\end{align}
where, for some $\theta>0$, $\Csc(t,u,w)$ satisfies
\begin{equation} \label{divB-2}
|\Csc(t,u,w)| \leq \theta
\end{equation}
for all $(t,u,w)\in  [0,T_0]\times B_{R}(\Rbb^{\udim}) \times B_{R}\bigl(\Rbb^{(n-1)\times (\udim)}\bigr)$ provided that $p$ satisfies \eqref{p-rest-A}.
Now choosing $\kappa$ according to \eqref{kappa-fix}, which we note can be chosen as close to $\kappat$ as we like, we fix
$p$ by setting
\begin{equation} \label{p-def}
    p = \min\{\ep_0,\ep_2,1-3\ep_0-\ep_1,1-\ep_0-\ep_2\},
\end{equation}
and observe by \eqref{eq:epstildecond2} that it satisfies \eqref{p-rest-A} and \eqref{p-rest-B}.

We now claim that the coefficient assumptions \cite[\S A.1 (1)-(5)]{BeyerOliynyk:2024}
are satisfied.\footnote{Here, the spaces $\Zc_1$ and $\Zc_2$ that appear in Section A.1 of \cite{BeyerOliynyk:2024} are not needed and we omit them, which corresponds to taking them to be trivial, i.e. $\Zc_1=\Zc_2=\{0\}$.} Indeed, assumption \cite[\S A.1 (1)]{BeyerOliynyk:2024} is satisfied by \eqref{Pbb-props}, while assumption \cite[\S A.1 (2)]{BeyerOliynyk:2024} holds on account of \eqref{Bc(0)}, \eqref{Bc-exp}, \eqref{Bsc0-Bc-bnds}, and
the fact $\Bsc^0$ is a constant matrix that is symmetric and positive definite. Here, the constants $\kappa$, $\gamma_1$, 
$\gamma_2$ and $p$ defined by \eqref{kappa-fix}, \eqref{Bsc0-Bc-bnds} and \eqref{p-def} can be taken to be the same as the constants $\kappa$, $\gamma_1$, $\gamma_2$ and $p$ appearing in \cite[\S A.1]{BeyerOliynyk:2024}. 
Continuing on, assumption  \cite[\S A.1 (3)]{BeyerOliynyk:2024} holds by
virtue \eqref{Hsc-def}, \eqref{Hsc-vanish}, \eqref{Hsc-smooth} and \eqref{p-rest-B} (recall $\tilde{\ep}>0$ by assumption). Note
also that our source term $\Hsc$, which is called $F$ in \cite[\S A.1 (3)]{BeyerOliynyk:2024}, corresponds to 
a map $F$ that can be expanded as
$F=t^{-(1-p)}\Ft+t^{-(1-p)}F_0+
t^{-(1-\frac{p}{2})}F_1+t^{-1}F_2$
with $\Ft=F_1=F_2=0$ and $F_0=\Hsc_0$. Because of this, we can set the constants $\lambda_1$, $\lambda_2$ and $\lambda_3$ appearing in  \cite[\S A.1 (3)]{BeyerOliynyk:2024} to be zero.
We further observe that assumption
\cite[\S A.1 (4)]{BeyerOliynyk:2024} 
is satisfied as a consequence of
\eqref{Bsc-exp}, \eqref{Bsc-smooth}, \eqref{p-rest-A} (recall $\ep_0+\ep_2>0$ by assumption), and the symmetry of the matrices $\Bsc^\Lambda$. We also note that
the matrices $\Bsc^\Lambda$, which are denote by $B=(B^i)$ in \cite[\S A.1 (4)]{BeyerOliynyk:2024}, corresponds to 
a map $B$ that can be expanded as
$B=t^{-(1-p)}B_0+
t^{-(1-\frac{p}{2})}B_1+t^{-1}B_2$
with $B_1=B_2=0$ and $B_0=(\Bsc^\Lambda_0)$. As a consequence, we can take the constant $\alpha$ appearing in  \cite[\S A.1 (4)]{BeyerOliynyk:2024} to be zero. Finally, assumption
\cite[\S A.1 (5)]{BeyerOliynyk:2024} 
is satisfied due to \eqref{divB-1} and \eqref{divB-2}. Here, the constant $\theta$ can be taken to be the same as in \cite[\S A.1 (5)]{BeyerOliynyk:2024}
while the other constants, $\beta_i$, $i=0,1,\ldots,7$, from \cite[\S A.1 (5)]{BeyerOliynyk:2024} can be set to zero.

Given that the Fuchsian system \eqref{Fuch-ev-A3} verifies all the coefficient assumptions from  \cite[\S A.1]{BeyerOliynyk:2024}, we obtain
from an application\footnote{Since there is no $t^{-1}$ singular term in any of the coefficient matrices $\Bsc^\Lambda$, the constant $\btt$ appearing in theorem \cite[Thm~A.2]{BeyerOliynyk:2024} vanishes and the remark \cite[Rem~A.3]{BeyerOliynyk:2024} applies.} of Theorem~A.2 and  Remark~A.3 from
 \cite{BeyerOliynyk:2024} the existence of a constant
$\delta_0 > 0$ such that if $\norm{u_0}_{H^k(\mathbb \Tbb^{n-1})}< \delta_0$,
then there exists a unique solution 
\begin{equation*}
u \in C^0_b\bigl((0,t_0],H^k(\mathbb \Tbb^{n-1})\bigr)\cap C^1\bigl((0,t_0],H^{k-1}(\mathbb \Tbb^{n-1})\bigr)
\end{equation*}
of \eqref{Fuch-ev-A3} (and hence \eqref{eq:givp1}) that satisfies the initial
condition \eqref{eq:givp2}. Moreover, the limit $\lim_{\searrow 0} \Pbb^\perp u(t)$, denoted $\Pbb^\perp u(0)$, of this solution exists in $H^{k-1}(\mathbb \Tbb^{n-1})$. 

It is important to observe that the constant $\delta_0>0$ does not depend on the choice of $t_0\in (0,T_0]$. This is because, for any $t_0\in (0,T_0)$, the coefficient assumptions are obviously satisfied on $(0,t_0]$ with \textit{same choice of constants} since, as discussed above, they hold on the larger time interval $(0,T_0]$. Using the same reasoning,
 it is straightforward to verify that all the
 explicit and implicit constants from the estimates used to establish the proof of \cite[Theorem~A.2]{BeyerOliynyk:2024} (and
 \cite[Theorem~3.8]{BOOS:2021} on which it relies) can taken to be independent of $t_0 \in (0,T_0)$. Since the size of $\delta_0$ is
 determined by the constants, which are independent of $t_0$, it follows that the $\delta_0>0$ that yields existence on $(0,T_0]$ will also yield existence on any smaller interval $(0,t_0]$ for the same choice of $\delta_0$, which is, perhaps, not surprising. 
 As a final remark, we observe by \cite[Theorem~A.2]{BeyerOliynyk:2024} that $u$ satisfies the energy and decay estimates given by
\eqref{eq:resenergy} and \eqref{eq:PbbuDecay}-\eqref{eq:PbbuDecay2}, respectively.  

\subsection{Proof of Theorem~\ref{glob-stab-thm}}
\label{sec:proof_globstab}
Having established Proposition~\ref{prop:globalstability}, we can now use it to complete the proof of Theorem~\ref{glob-stab-thm}. We begin with fixing $n\in\Zbb_{\ge 3}$, $k \in \Zbb_{>(n+3)/2}$, $T_0>0$, $t_0\in (0,T_0]$, and $\sigma>0$. For a given $\delta>0$, we assume that the (synchronized) Einstein-scalar field initial data $\gr_{\mu\nu}\in H^{k+2}(\Tbb^{n-1},\mathbb{S}_n)$, 
$\ggr_{\mu\nu}\in H^{k+1}(\Tbb^{n-1},\mathbb{S}_n)$, $\taur=t_0$, and
$\taugr\in H^{k+2}(\Tbb^{n-1})$ satisfies
\begin{equation}\label{glob-stab-thm-idata-A-proof}
  \norm{\gr_{\mu\nu}-\eta_{\mu\nu}}_{H^{k+2}(\Tbb^{n-1})}
  +\norm{\ggr_{\mu\nu}}_{H^{k+2}(\Tbb^{n-1})}
  +\norm{\taugr-1}_{H^{k+2}(\Tbb^{n-1})}
  <\delta
\end{equation}
as well as the constraint equations \eqref{grav-constr}-\eqref{wave-constr} and the inequalities  $\det(\gr_{\mu\nu})<0$ and $|\vr|_{\gr}^2 <0$, where $\vr^\mu$ is defined by \eqref{vr-def}. 

Recalling that $\{\gr_{\mu\nu},\ggr_{\mu\nu},\taur=t_0,\taugr\}$ determines initial data $\{g_{\mu\nu}|_{\Sigma_{t_0}},\del{0}g_{\mu\nu}|_{\Sigma_{t_0}},\tau=t_0,\del{0}\tau =1\}$ for the metric $g_{\mu\nu}$ and scalar field $\tau$ in Lagrangian coordinates on $\Sigma_{t_0}$ via \eqref{dt-tau-idata}-\eqref{dt-g-idata},
we set 
\begin{equation*}
e_0^\mu=(-|\chi|_g^2)^{-\frac{1}{2}}\chi^\mu    
\end{equation*} 
and note that it can be computed from the Lagrangian initial data on $\Sigma_{t_0}$ by \eqref{chi-idata}. We further fix spatial frame initial data $e^\mu_I|_{\Sigma_{t_0}}=\delta^\mu_I
\er^\Lambda_\mu$, where the functions
$\er^\Lambda_I \in H^k(\Tbb^{n-1})$ are assumed to satisfy
\begin{equation*} 
\norm{\er^\Lambda_I-\delta^\Lambda_I}_{H^{k}(\Tbb^{n-1})} < \delta
\end{equation*}
and make the frame $e^\mu_i$ orthonormal on $\Sigma_{t_0}$ with respect to the metric $g_{\mu\nu}$ given there, see \eqref{g-idata}. 

\bigskip

\noindent \underline{Fuchsian stability:} To proceed, we assume that $\ep_0$, $\ep_1$, and $\ep_2$ satisfy \eqref{eq:epscond2}, and $p$ and $\kappat$ are defined by \eqref{eq:defkappat}. Then by the discussion in Section \ref{frame-idata}, the variable definitions \eqref{k-def}-\eqref{tauac-def}, \eqref{U-def}, and \eqref{eq:Ubgdef}-\eqref{u-def},  it is not difficult, with the help of the calculus inequalities from Appendix \ref{calc},
to verify from \eqref{glob-stab-thm-idata-A-proof} that initial data $\{\gr_{\mu\nu},\ggr_{\mu\nu},\taur=t_0,\taugr,\er^\Lambda_I\}$ determines initial data $u_0\in H^k(\Tbb^{n-1})$ for the Fuchsian equation \eqref{eq:givp1} satisfying
\begin{equation} \label{glob-stab-thm-idata-D}
    \norm{u_0}_{H^k(\Tbb^{n-1})} \leq C_0\delta,
\end{equation}
where 
\begin{equation}\label{C0-def}
    C_0=C_0(t_0,t_0^{-1},\delta).
\end{equation}
Then, by Proposition \ref{prop:globalstability} and the bound \eqref{glob-stab-thm-idata-D}, it follows that there exists a $\delta_0>0$, such that if we choose $\delta>0$ small enough so that
\begin{equation} \label{C0delta<delta0}
    C_0\delta <\delta_0, 
\end{equation}   
which we can always do for any given $t_0\in (0,T_0]$,
then there exists a solution    
\begin{equation}\label{eq:symhypreg_appl}
u \in C^0_b\bigl((0,t_0],H^k(\Tbb^{n-1})\bigr)\cap C^1\bigl((0,t_0],H^{k-1}(\Tbb^{n-1})\bigr)
\end{equation}
of the Fuchsian GIVP \eqref{eq:givp1}-\eqref{eq:givp2} that extends continuously in $H^{k-1}(\Tbb^{n-1})$ to $t=0$
and
satisfies the energy
\begin{equation} \label{eq:resenergy_appl}
\norm{u(t)}_{H^k(\mathbb T^{n-1})}^2 + \int^{t_0}_t \frac{1}{s} \norm{\Pbb u(s)}_{H^k(\mathbb T^{n-1})}^2\, ds  \lesssim \norm{u_0}^2
\end{equation}
and decay estimates
\begin{align}
\label{eq:PbbuDecay_appl}
  \norm{\Pbb u(t)}_{H^{k-1}(\mathbb T^{n-1})} &\lesssim t^p+t^{\kappat-\sigma},\\
\label{eq:PbbuDecay2_appl}
\norm{\Pbb^\perp u(t) - \Pbb^\perp u(0)}_{H^{k-1}(\mathbb T^{n-1})} &\lesssim
 t^{p}+ t^{2(\kappat-\sigma)},
\end{align}
for all $t\in (0,t_0]$, where the implied constant in the above estimates are independent of $\delta$ provided it is chosen small enough to satisfy \eqref{C0delta<delta0}.

By definition $\Pbb^\perp=\id-\Pbb$, and so we have by \eqref{Pbb-def} that
  \begin{equation}\label{Pbb-perp-def}
  \Pbb^\perp = \diag\Bigl(\delta_{\Lt}^L\delta_{\Mt}^M,0,0,0,0,0,0,0,0,0\Bigr).
\end{equation}
The existence of the limit $\Pbb^\perp u(0)=\lim_{t\searrow 0} \Pbb^\perp u(t)$ in $H^{k-1}(\Tbb^{n-1})$ implies that
\begin{equation} \label{Pbb-perp-u(0)}
\Pbb^\perp u(0)=\bigl(\kf_{IJ},0,0,0,0,0,0,0,0,0\bigr)
\end{equation}
for some $\kf_{IJ}\in H^{k-1}(\mathbb T^{n-1},\Sbb{n-1})$.
The useful estimate
\begin{equation}
  \label{eq:kijsmallness}
  \norm{\kf_{IJ}}_{H^{k-1}(\mathbb T^{n-1})}\le C_0 \delta
\end{equation}
is a direct consequence of
\eqref{eq:resenergy_appl} and \eqref{glob-stab-thm-idata-D}.

In the following, a variable that is independent of
the parameters
\begin{equation*}
    \epv=(\ep_0,\ep_1,\ep_2)
\end{equation*}
will be said to be \textit{$\epv$-independent}. These
variables will be either equal to or a limit of a variable that is determined from the  
the solution of an initial value problem that
is independent of $\ep_0,\ep_1,\ep_2$.
For example, $\kf_{IJ}$ is determined from a limit of
$k_{IJ}= t\betat \kt_{IJ}$ by \eqref{k-def}, \eqref{U-def}, \eqref{u-def} and \eqref{Pbb-perp-def}.
Noting that $\betat$ and $\kt_{IJ}$
satisfy the $\epv$-independent system of equations
given by \eqref{for-G.1.S2}-\eqref{for-N.S2}, which are equivalent to the Fuchsian equation \eqref{eq:givp1}, and that the initial 
data for this system is $\epv$-independent, it follows that $\kf_{IJ}$ does not depend on the choice of
the parameters $\ep_0$, $\ep_1$ and $\ep_2$, and hence, is
$\epv$-independent.

\bigskip

\noindent \underline{Frame stability and limits:} The frame $e_i^\mu$, which is $\epv$-independent, is obtained from the components of $u$ via \eqref{beta-def}, \eqref{f-def}, \eqref{U-def}, \eqref{eq:Ubgdef} and \eqref{u-def} and from the relations 
\begin{equation} \label{e0i-fix}
    e_0^\mu=\betat^{-1}\delta_0^\mu \AND e_I^0=0, 
\end{equation}
which we note hold by \eqref{e0-mu} and \eqref{e0I-fix}.
Since $\betat$ is initially positive at $t=t_0$
and the evolution equation \eqref{for-O.1.S2} for $\betat$ implies that $\betat$ cannot cross zero, it follows $\betat$
must continue to be positive on $M_{0,t_0}$. 
While \eqref{eq:PbbuDecay_appl} yields an estimate for $\betat$, we can obtain a stronger estimate by
noting that \eqref{for-O.1.S2} can be expressed as
  \begin{equation*}
    \del{t}\bigl(t^{-\frac{1}{2}\kf_{J}{}^J}\betat\bigr) = \Bigl(-\beta^2 t^{-\ep_0-\ep_1}\xi_{00}+\frac{1}{2} t^{-1} (k_{J}{}^J-\kf_{J}{}^J)\Bigr) t^{-\frac{1}{2}\kf_{J}{}^J}\betat.
  \end{equation*}
Since $\betat>0$, we can integrate this in time to get
  \begin{equation*}
    \ln\bigl(t^{-\frac{1}{2}\kf_{J}{}^J}\betat(t)\bigr)
    -\ln\bigl({\tilde t}^{-\frac{1}{2}\kf_{J}{}^J}\betat(\tilde t)\bigr)
    = \int_{\tilde t}^t\Bigl(-s^{-\ep_0-\ep_1}
    \beta^2 \xi_{00}(s)
       +\frac{1}{2} s^{-1} (k_{J}{}^J(s)-\kf_{J}{}^J)\Bigr)ds,    
  \end{equation*}
which holds for all $t,\tilde t\in (0,t_0]$. From the  calculus inequalities from Appendix \ref{calc}, we have
  \begin{align*}
    \bnorm{\ln\bigl(t^{-\frac{1}{2}\kf_{J}{}^J}\betat(t)\bigr)
    -\ln\bigl({\tilde t}^{-\frac{1}{2}\kf_{J}{}^J}\betat(\tilde t)\bigr)}_{H^{k-1}(\Tbb^{n-1})}
    \le& \int_{\tilde t}^t\Bigl(s^{-\ep_0-\ep_1}\norm{\beta(s)}_{H^{k-1}(\Tbb^{n-1})}^2 \norm{\xi_{00}(s)}_{H^{k-1}(\Tbb^{n-1})}\\
       &\qquad+\frac{1}{2} s^{-1} \norm{k_{J}{}^J(s)-\kf_{J}{}^J}_{H^{k-1}(\Tbb^{n-1})}\Bigr)ds
  \end{align*}
  provided $0<\tilde t\le t\le t_0$. Using the estimates \eqref{eq:PbbuDecay_appl}, \eqref{eq:PbbuDecay2_appl} and \eqref{eq:kijsmallness} along with the fact that $p$ and $\kappat$ are positive and that $\sigma$ can be further shrunk if necessary, we conclude from the above estimate that $\ln(t^{-\frac{1}{2}\kf_{J}{}^J}\betat(t))$ converges in $H^{k-1}(\Tbb^{n-1})$ as $t\searrow 0$ to a limit, denoted $\ln(\mathfrak{b})$, and satisfies 
\begin{equation*}
    \bnorm{\ln\bigl(t^{-\frac{1}{2}\kf_{J}{}^J}\betat(t)\bigr)
    -\ln(\mathfrak{b})}_{H^{k-1}(\Tbb^{n-1})}
    \lesssim t^{1-\ep_0-\ep_1}(t^p+t^{\kappat-\sigma})^3
       + t^{p}+ t^{2(\kappat-\sigma)}    
\end{equation*}
for $0<t\leq t_0$.
From this inequality, it follows that the limit $\mathfrak{b}$ is a strictly positive function in $H^{k-1}(\Tbb^{n-1})$, and since $\kf{}_I{}^J$ and $\betat$ are $\epv$-independent, the same is true for $\mathfrak{b}$. By choosing $\sigma$ sufficiently small  and recalling that $1-\ep_0-\ep_1>0$, we note that the above estimate can be simplified to
\begin{equation}
       \label{eq:improvbetaestimate.pre}
       \bnorm{\ln\bigl(t^{-\frac{1}{2}\kf_{J}{}^J}\betat(t)\bigr)
    -\ln(\mathfrak{b})}_{H^{k-1}(\Tbb^{n-1})}
  \lesssim t^{p}+ t^{2(\kappat-\sigma)}.   
\end{equation}

The left hand side of the inequality \eqref{eq:improvbetaestimate.pre} is  $\epv$-independent while the right hand side of it is not, and consequently, we can use our freedom to choose $\ep_0$, $\ep_1$, $\ep_2$ to obtain an optimal decay rate on the right side of \eqref{eq:improvbetaestimate.pre}.
Assuming that $\sigma>0$ is sufficiently small, it follows from \eqref{eq:defkappat} that the exponent on the right side of \eqref{eq:improvbetaestimate.pre} is given by 
\begin{equation}
  \label{eq:defxi1}
  \xi_1:=\min\{p,2(\kappat-\sigma)\}
  =\min\bigl\{\xit_1,1-\ep_0-\ep_2,\ep_2 \bigr\}
\end{equation}
where $\xit_1$ is defined by
\begin{equation}\label{eq:defxit1}
\xit_1=\min\biggl\{1-3\ep_0-\ep_1,\ep_0, 2\biggl(\ep_1-\frac{1}{1+\sqrt{3}}-\sigma\biggr)\biggr\}.
\end{equation}
Both $\xi_1$ and $\xit_1$ are continuous piecewise linear functions of $\ep_0$, $\ep_1$ and $\ep_2$ on the closure $\bar E$ of the open domain 
\begin{equation} \label{E-def}
    E=\biggl\{ \, (\ep_0,\ep_1,\ep_2)\in \Rbb^3\, \biggl| \, 
    0<\ep_0,\quad \frac{1}{1+\sqrt{3}}<\ep_1,\quad 3\ep_0+\ep_1<1, \quad  0<\ep_2<1-\ep_0. \, \biggr\}.
\end{equation}
By definition $\xi_1\le\xit_1$ everywhere on $\bar E$. We also find easily that $\xit_1$ attains its absolute maximum value $\frac 29\Bigl(1-\frac{1}{1+\sqrt{3}}-\sigma\Bigr)$ on $\bar{E}$ for
\begin{equation}
  \label{eq:9dufsljdsf}
  \ep_0=\frac 29\Bigl(1-\frac{1}{1+\sqrt{3}}-\sigma\Bigr),\quad \ep_1=\frac 89\Bigl(\frac{1}{8}+\frac{1}{1+\sqrt{3}}+\sigma\Bigr).
\end{equation}
Since, $\xi_1=\xit_1$ when $\ep_0$ and $\ep_1$ are given by \eqref{eq:9dufsljdsf} and 
\begin{equation} \label{eq:9dufsljdsf-a}
    \ep_2=\ep_0,
\end{equation}
we conclude that 
\begin{equation}\label{xi1-max}
    \frac 29\Bigl(1-\frac{1}{1+\sqrt{3}}-\sigma\Bigr)
\end{equation} 
is the absolute maximum value of $\xi_1$ over $\bar E$, which is achieved in $E$.
Given these parameter choices, the estimate \eqref{eq:improvbetaestimate-glob} is then a direct consequence of
\eqref{eq:improvbetaestimate.pre}.

Considering now the spatial frame components $e^\Lambda_I$, we see, with the help of \eqref{k-def}, \eqref{beta-def} and \eqref{psi-def},
that the evolution equation \eqref{for-M.1.S2} for the spatial frame components can be expressed in matrix form as
\begin{equation*} 
\del{t} e^\Lambda = \frac{1}{t}(\Ksc + t\Lsc)e^\Lambda    
\end{equation*}
where 
\begin{gather} \label{frame-lim-A}
  e^\Lambda=(e^\Lambda_I), \quad    \Ksc=\Bigl(-\frac{1}{2}\delta^{JL}k_{IL}(0)\Bigr)
  \intertext{and}
 \Lsc =-\Bigl(\frac{1}{2}t^{-1}\delta^{JL}(k_{IL}-k_{IL}(0))+t^{-\ep_1+k_{J}{}^J(0)/2} \delta^{JL}(t^{-k_{J}{}^J(0)/2}\betat)\psi_{I}{}^0{}_L\Bigr). \label{frame-lim-A2}
\end{gather}
Letting\footnote{Given a square matrix $A$, we frequently use the notation $t^A$ instead of $\exp(\ln(t)A)$.}
\begin{equation} \label{frame-lim-B} 
\ec^\Lambda = (\ec^\Lambda_\mu) :=  t^{-\Ksc} e^\Lambda,
\end{equation}
a short calculation then shows that $\ec^\Lambda$ satisfies
\begin{equation} \label{frame-lim-C}
    \del{t} \ec^\Lambda = \Msc \ec^\Lambda
\end{equation}
where
\begin{equation*} \label{frame-lim-D}
    \Msc = t^{-\Ksc}\Lsc t^{\Ksc}.
\end{equation*}
By differentiating \eqref{frame-lim-C} repeatedly in space, we obtain from standard $L^2$-energy estimates and the calculus inequalities the differential energy inequality
\begin{equation*}
    \del{t}\norm{\ec^\Lambda(t)}_{H^{k-1}(\Tbb^{n-1})}^2 \lesssim
    \norm{\Msc(t)}_{H^{k-1}(\Tbb^{n-1})}\norm{\ec^\Lambda(t)}^2_{H^{k-1}(\Tbb^{n-1})},
  \end{equation*}
 which in turn, yields
 \begin{equation*}
    \del{t}\norm{\ec^\Lambda(t)}_{H^{k-1}(\Tbb^{n-1})} \lesssim
    \norm{\Msc(t)}_{H^{k-1}(\Tbb^{n-1})}\norm{\ec^\Lambda(t)}_{H^{k-1}(\Tbb^{n-1})}.
  \end{equation*}
Applying  Gr\"onwall's lemma to this differential inequality gives
  \[
    \norm{\ec^\Lambda(t)}_{H^{k-1}(\Tbb^{n-1})}\lesssim \norm{\ec^\Lambda(t_0)}_{H^{k-1}(\Tbb^{n-1})}e^{-\frac 12\int_{t_0}^t \norm{\Msc(s)}_{H^{k-1}(\Tbb^{n-1})}ds},\quad 0<t\leq t_0.
  \]
Integrating \eqref{frame-lim-C} in time, we see, with the help of the above inequality, that
\begin{equation}
  \label{eq:frame_convest}
  \norm{\ec^\Lambda(t)-\ec^\Lambda(\tilde t)}_{H^{k-1}(\Tbb^{n-1})}    
  \lesssim \norm{\ec^\Lambda(t_0)}_{H^{k-1}(\Tbb^{n-1})}e^{\frac
    12\int_{t_0}^0
    \norm{\Msc(s)}_{H^{k-1}(\Tbb^{n-1})}ds}\int_{\tilde t}^t
  \norm{\Msc(s)}_{H^{k-1}(\Tbb^{n-1})}ds
\end{equation}
for all $0<\tilde t\le t\le t_0$.

Next by \eqref{eq:PbbuDecay_appl}, \eqref{eq:PbbuDecay2_appl} and \eqref{frame-lim-A2}, we observe that the matrix $\Lsc$ is bounded by
\begin{equation} \label{frame-lim-G}
    \norm{\Lsc(t)}_{H^{k-1}(\Tbb^{n-1})} \lesssim t^{-1}(t^{p}+ t^{2(\kappat-\sigma)})+t^{-\ep_1}(t^p+t^{\kappat-\sigma}) 
  \end{equation}
for all $0<t\leq t_0$. Also, since $t^{\pm \Ksc}=e^{\pm \ln(t)\Ksc}$, the estimate
\begin{equation} \label{frame-lim-H}
\norm{t^{\pm\Ksc}}_{H^{k-1}(\Tbb^{n-1})}\le C e^{C \norm{\ln(t)\Ksc}_{H^{k-1}(\Tbb^{n-1})}} \lesssim t^{-C \norm{\kf_{IJ}}_{H^{k-1}(\Tbb^{n-1})}}, \quad 0<t\leq t_0,
\end{equation}
is a direct consequence of the analyticity of the exponential $e^{X}$, the definition \eqref{frame-lim-A}
of $\Ksc$, and the fact that $H^{k-1}(\Tbb^{n-1})$ is a Banach algebra by virtue of the assumption that  $k-1>(n+1)/2$.  
By \eqref{frame-lim-D}, \eqref{frame-lim-G}, and \eqref{frame-lim-H}, we can then bound the matrix
$\Msc$ by
\begin{equation}
  \label{frame-lim-Last}
\norm{\Msc(t)}_{H^{k-1}(\Tbb^{n-1})} \lesssim \Bigl(t^{-1}(t^{p}+ t^{2(\kappat-\sigma)})+t^{-\ep_1}(t^p+t^{\kappat-\sigma})\Bigr) t^{-C \norm{\kf_{IJ}}_{H^{k-1}(\Tbb^{n-1})}}
\end{equation}
for all $0<t\leq t_0$. By choosing $\delta$ and $\sigma$ sufficiently small, it follows from \eqref{eq:epscond2}, \eqref{eq:defkappat} and \eqref{frame-lim-Last} that $\Msc$ is $H^{k-1}(\Tbb^{n-1})$-integrable in time, that is, $\int_{0}^{t_0}\norm{\Msc(s)}_{H^{k-1}(\Tbb^{n-1})}\, ds <\infty$. We then deduce from this integrability and the inequality \eqref{eq:frame_convest} that $\ec^\Lambda(t)$ converges as $t\searrow 0$ in $H^{k-1}(\Tbb^{n-1})$ to a limit, which we will denote by $\ef^\Lambda_I$. The estimate \eqref{eq:improvframeestimate} will be established below.

\bigskip

\noindent \underline{Conformal Einstein-scalar field stability:} Proposition~\ref{lag-exist-prop} implies, for some $t_1\in (0,t_0]$, the existence of a unique solution $W$ with regularity \eqref{eq:Wreg} on $M_{t_1,t_0}=(t_1,t_0]\times \Tbb^{n-1}$ of 
the system
\eqref{tconf-ford-C.1}-\eqref{tconf-ford-C.8} that satisfies
the initial conditions \eqref{l-idata}-\eqref{hhu-idata}. Moreover, since the conformal Einstein-scalar field initial data is assumed to satisfy the gravitational and wave gauge constraint equations, it follows from  Proposition~\ref{lag-exist-prop} that this solution satisfies the constraints \eqref{eq:Lag-constraints}, and determines a solution
$\{g_{\mu\nu},\tau\}$
of the conformal Einstein scalar field equations \eqref{lag-confeqns} in Lagrangian coordinates satisfying the wave gauge constraint
\eqref{lag-wave-gauge} and where $\tau$ is
given by
\begin{equation} \label{tau=t}
\tau=t.
\end{equation}
By construction of the Fuchsian system \eqref{eq:givp1}, the solution $W$ determines a $\epv$-dependent solution $\ut$ on $M_{t_1,t_0}$ of the Fuchsian IVP \eqref{eq:givp1}-\eqref{eq:givp2} with the same $\epv$-dependent initial data $u_0$ as above. By the uniqueness statement of Proposition~\ref{prop:globalstability}, we conclude
that 
\begin{equation} \label{ut=u}
    \ut = u|_{M_{t_1,t_0}}
\end{equation}
provided the parameters $\ep_0$, $\ep_1$, and $\ep_2$ are chosen to be the same for both solutions. 

From the equality \eqref{ut=u}, it follows that the energy estimate \eqref{eq:resenergy_appl} together with the Sobolev inequality yields the bound
\begin{equation*} 
    \sup_{t_1<t<t_0}\norm{\ut(t)}_{W^{2,\infty}(\Tbb^{n-1})} < \infty.
\end{equation*}
From this bound on $\ut$, we see, with the help of  \eqref{p-fields}, \eqref{gi00},   \eqref{kt-def}-\eqref{tau-def}, \eqref{psit-def}, \eqref{k-def}-\eqref{xi-def}, \eqref{psi-def}, \eqref{f-def}, \eqref{eq:Ubgdef}-\eqref{u-def} and \eqref{e0i-fix}, that 
\begin{align}
\sup_{t_1<t<t_0}\Bigl(\norm{e_j^\mu(t)}_{W^{2,\infty}(\Tbb^{n-1})}&+\norm{\Dc_i g_{jk}(t)}_{W^{2,\infty}(\Tbb^{n-1})}+
\norm{\betat(t)}_{W^{2,\infty}(\Tbb^{n-1})}+
\norm{ \kt_{IJ}(t)}_{W^{2,\infty}(\Tbb^{n-1})}\notag \\
+&
\norm{\psit_I{}^k{}_J(t)}_{W^{2,\infty}(\Tbb^{n-1})}+\norm{\gamma_I{}^k{}_J(t)}_{W^{2,\infty}(\Tbb^{n-1})}
+\norm{\Dc_i\Dc_j\tau}_{W^{2,\infty}(\Tbb^{n-1})}\Bigr)< \infty.
\label{glob-cont-bnd-A}
\end{align}
Using this bound, it then follows from the evolution equations
\eqref{for-O.1.S2}-\eqref{for-M.1.S2} for $\betat$ and $e^\Lambda_I$, and Lemma \ref{diff-matrix-lem} that
\begin{equation} \label{glob-cont-bnd-B}
    \inf_{M_{t_1,t_0}}\bigr\{\betat,\det(e^\Lambda_I)\bigl\} > 0,
\end{equation}
which in turn, implies via
 \eqref{e0i-fix} that
\begin{equation}\label{glob-cont-bnd-C}
    \inf_{M_{t_1,t_0}}\det(e^\mu_j) > 0.
\end{equation}

Since the frame $e_i^\mu$ is orthonomal by construction, the components of the conformal metric in the Lagrangian coordinates are determined by 
\begin{equation} \label{g-Lag-components}
    g_{\mu\nu}=e_\mu^i \eta_{ij} e_\nu^j,
\end{equation}
and so by \eqref{glob-cont-bnd-A}, we have
\begin{equation} \label{glob-cont-bnd-D}
\sup_{t_1<t<t_0}\norm{g_{\mu\nu}(t)}_{W^{2,\infty}(\Tbb^{n-1})} < \infty.
\end{equation}
From the calculation\footnote{Here, we using $\Ld_X$ to denote the Lie derivative along a vector field $X$. }
\begin{align*}
    e^\mu_i e^\nu_j \del{t}g_{\mu\nu} &= \Ld_{\del{t}}g_{ij}
    \oset{\eqref{e0i-fix}}{=}\Ld_{\betat e_0}g_{ij}\\
    &=\betat e_0^k \Dc_{k}g_{ij}+\Dc_i (\betat e_0^k)\eta_{kj}
    + \Dc_j (\betat e_0^k)\eta_{jk} \\
    & =\betat \delta_0^k \Dc_{k}g_{ij}+ e_i(\betat)\eta_{0j}+\betat \gamma_i{}^k{}_0 \eta_{kj}
    + e_j(\betat)\eta_{0i}+\betat \gamma_j{}^k{}_0\eta_{ik},
\end{align*}
we also observe that the bound
\begin{equation} \label{glob-cont-bnd-E}
\sup_{t_1<t<t_0}\norm{\del{t}g_{\mu\nu}(t)}_{W^{1,\infty}(\Tbb^{n-1})} < \infty
\end{equation}
is a direct consequence
\eqref{glob-cont-bnd-A}-\eqref{glob-cont-bnd-C},
the relations  \eqref{p-fields}, \eqref{gamma-000}-\eqref{gamma-0K0}, \eqref{gamma-I00}-\eqref{gamma-IJ0} and \eqref{e0i-fix}, and the evolution
equation \eqref{for-O.1.S2}. Moreover, by employing similar arguments, it is also not difficult to 
verify that $\Dc_\nu \chi^\mu$, where $\chi^\mu$ is defined by \eqref{chi-def}, satisfies
\begin{equation} \label{glob-cont-bnd-F}
\sup_{t_1<t<t_0}\Bigl(\norm{\Dc_\nu \chi^\mu(t)}_{W^{2,\infty}(\Tbb^{n-1})}+\norm{\del{t}(\Dc_\nu \chi^\mu)(t)}_{W^{1,\infty}(\Tbb^{n-1})}\Bigr) < \infty.
\end{equation}

Together,  \eqref{g-Lag-components} and \eqref{glob-cont-bnd-D}-\eqref{glob-cont-bnd-F} imply that the solution $W$ satisfies the continuation criteria \eqref{eq:cont_crit1}-\eqref{eq:cont_crit2}. Hence, by Proposition \ref{lag-exist-prop}, the solution $W$ can be continued beyond $t_1$. This implies that $t_1=0$ and the solution $W$ exists on $M_{0,t_0}$. This solution continues to satisfy the constraints \eqref{eq:Lag-constraints} and determine a solution
$\{g_{\mu\nu},\tau\}$
of the conformal Einstein scalar field equations \eqref{lag-confeqns} in Lagrangian coordinates that verifies the wave gauge constraint \eqref{lag-wave-gauge}.

\bigskip

\noindent \underline{Second fundamental form estimate:} With the help of \eqref{alpha-def}, \eqref{e0I-fix} and \eqref{tau=t},  it follows easily from the formula \eqref{g-Lag-components} for the conformal metric that on the $t=const$-hypersurfaces  the lapse $\Ntt$ is given by
\begin{equation} \label{Ntt=betat}
    \Ntt=\betat,
\end{equation} the shift vanishes, and the induced $(n-1)$-spatial metric is $\gtt_{\Lambda\Omega}=g_{\Lambda\Omega}$. Furthermore, by \eqref{Ktt-def}, \eqref{kt-def}, \eqref{k-def}, and \eqref{psi-def}, the 
second fundamental form induced on the $t=const$-hypersurfaces by the conformal metric  is 
\begin{equation*}
  \Ktt_{LJ}
  =\frac 12t^{-1}\betat^{-1} k_{LJ}+t^{-\ep_1} \psi_{(L}{}^0{}_{J)}.
\end{equation*}       
Using \eqref{beta-def}, we can express this as
\begin{equation*}
  2t\betat \Ktt_{LJ}=k_{LJ}
  +2t^{1-\ep_1-\ep_0}\beta \psi_{(L}{}^0{}_{J)}
\end{equation*}
and the bound
\begin{equation*}
  \norm{ 2t\betat \Ktt_{LJ}(t)-\kf_{LJ}}_{H^{k-1}(\mathbb T^{n-1})} 
  \lesssim t^{p}+ t^{2(\kappat-\sigma)} 
  +t^{1-\ep_0-\ep_1}(t^p+t^{\kappat-\sigma})^2 \lesssim t^{p}+ t^{2(\kappat-\sigma)}
\end{equation*}
is then a direct consequence of the above expression for the second fundamental form,  the energy and decay estimates \eqref{eq:PbbuDecay_appl}-\eqref{eq:PbbuDecay2_appl}, and the definitions of the constants $p$, $\kappat$ given by \eqref{eq:defkappat}.
Since the right hand side of that above estimate equals that of \eqref{eq:improvbetaestimate.pre}, the same arguments used to obtain the optimal decay exponent that lead to \eqref{eq:improvbetaestimate-glob} yield the estimate \eqref{eq:2ndFF.est}.

\bigskip

\noindent \underline{Asymptotic pointwise Kasner property:} Since $\{g_{\mu\nu},\tau\}$ is a solution of the conformal Einstein-scalar field equations, it follows from \eqref{eq:conf2phys} and \eqref{tau=t} that the pair
\begin{equation}
  \label{eq:conf2phys-a}
  \biggl\{\gb_{ij}=t^{\frac {2}{n -2}}g_{ij},\,\phi=\sqrt{\frac{n-1}{2(n-2)}}\ln(t)\biggr\}
\end{equation}
 determines a solution of the physical Einstein-scalar field system \eqref{ESF.1}-\eqref{ESF.2}.  As a consequence, the $n-1$-spatial metric
  \begin{equation}
    \label{eq:spatmetrrel}
    \bar{\gtt}_{\Lambda\Omega}=t^{\frac {2}{n -2}}\gtt_{\Lambda\Omega}
  \end{equation}
  and second fundamental form 
  \begin{equation}
    \label{eq:phys2ndff}
    \bar{\Ktt}_{\Lambda\Omega}=t^{\frac {1}{n -2}}\Bigl(\Ktt_{\Lambda\Omega}
    +\frac {1}{n -2}t^{-1}\betat^{-1}{\gtt}_{\Lambda\Omega}\Bigr),
  \end{equation}
  induced by the physical metric $\gb_{\mu\nu}$ on the$t=const$-hypersurfaces  must satisfy the Hamiltonian constraint, which we write in the rescaled form
  \begin{equation}
         \label{eq:Hamilton1.1}
         t^{\frac {2}{n -2}}\betat^{2}t^{2}\overline\Rtt+t^{\frac {2}{n -2}}\betat^{2}t^{2}\bigl((\bar{\Ktt}_\Lambda{}^\Lambda)^2-\bar{\Ktt}_\Lambda{}^\Sigma \bar{\Ktt}_\Sigma{}^\Lambda\bigr)-\Tb=0
  \end{equation}
  where $\overline\Rtt$ is the scalar curvature of the spatial metric $\bar{\gtt}_{\Lambda\Omega}$ and
  \begin{align}
    \Tb&= -2t^{\frac{2}{n-2}}\betat^2 t^2 \frac{\nablab_i t \nablab_j t }{|\nablab t|^2_{\gb}} \Tb^{ij} \notag \\
    &=2\betat^{2}t^{2}\Bigl((\nabla_0\phi)^2+\delta^{IJ}\nabla_I\phi\nabla_J\phi\Bigr) && \text{(by \eqref{tau=t}, \eqref{g-Lag-components}  \& \eqref{eq:conf2phys-a})}\notag \\
    &=\frac {n-1}{n-2} && \text{(by \eqref{e0i-fix} \& \eqref{eq:conf2phys-a})}\notag  
  \end{align}
  is the rescaled double normal component of the scalar field energy momentum tensor
  $\Tb^{ij} = 2\nablab^i \phi \nablab^j\phi -|\nablab \phi|_{\gb}^2 \gb^{ij}$. 
  We also find after a straightforward calculation that
  \begin{equation}
    \label{eq:HamSecFFTrafo}
    (\bar{\Ktt}_\Lambda{}^\Lambda)^2-\bar{\Ktt}_\Lambda{}^\Sigma \bar{\Ktt}_\Sigma{}^\Lambda 
    =
       t^{-\frac {2}{n -2}}\Bigl( (\Ktt _{I}{}^{I})^2
       -\Ktt _{I}{}^{J}\Ktt _{J}{}^{I}
       +2\Ktt _{I}{}^{I}t^{-1}\betat^{-1}
       +\frac {n-1}{n -2}t^{-2}\betat^{-2}                                                   
       \Bigr)
  \end{equation}
  as a consequence of \eqref{eq:spatmetrrel} and \eqref{eq:phys2ndff}. The rescaled Hamiltonian constraint \eqref{eq:Hamilton1.1} therefore takes the form
  \begin{equation}
         \label{eq:Hamilton1}
         t^{\frac {2}{n -2}}\betat^{2}t^{2}\overline\Rtt
         +\betat^{2}t^{2}\Bigl( (\Ktt _{I}{}^{I})^2
       -\Ktt _{I}{}^{J}\Ktt _{J}{}^{I}
       +2\Ktt _{I}{}^{I}t^{-1}\betat^{-1}                                                         
       \Bigr)
         =0.
  \end{equation}
  
  Since the conformal factor $t^{\frac {2}{n -2}}$ in \eqref{eq:spatmetrrel} is constant on the $t$=const-surfaces, it follows from  \eqref{tau=t} and \eqref{eq:spatmetrrel} that
  \begin{equation*}
  t^{\frac {2}{n -2}}\overline\Rtt=\Rtt
  \end{equation*} where $\Rtt$ is the scalar curvature of the $n-1$-spatial conformal metric $\gtt_{\Lambda\Omega}$.
  Noting that 
 \begin{equation*} 
 \Rtt
    = e_J(\Gamma_I{}^J{}_K)\delta^{IK}
    - e_I(\Gamma_J{}^J{}_K) \delta^{IK}
    +\delta^{IK}\Gamma_I{}^M{}_K\Gamma_J{}^J{}_M
    -\Gamma_J{}^M{}_K \delta^{IK}\Gamma_I{}^J{}_M
+2 \delta^{IK}\Gamma_{[I}{}^M{}_{J]}\Gamma_M{}^J{}_K,
\end{equation*}
where the $\Gamma_M{}^J{}_K$ are the spatial components of the connection coefficients of the conformal metric $g_{\mu\nu}$ with respect to the frame $e_i^\mu$,
we find from \eqref{Ccdef}, \eqref{p-fields}-\eqref{d-fields},  \eqref{kt-def}-\eqref{mt-def}, \eqref{gt-def}, \eqref{psit-def}, \eqref{k-def}-\eqref{m-def}, \eqref{psi-def}-\eqref{gac-def}, \eqref{e0i-fix}, the formula
\begin{align*}
g_{IJKL} = e_I(g_{JKL})-\gamma_I{}^0{}_Jg_{0KL}-\gamma_I{}^M{}_J g_{MKL}-\gamma_I{}^0{}_Kg_{J0L}-\gamma_I{}^M{}_K g_{JML}-\gamma_I{}^0{}_Lg_{JK0}-\gamma_I{}^M{}_K g_{JKM}
\end{align*}
for the covariant derivative $g_{IJKL}=\Dc_I g_{JKL}=\Dc_I\Dc_J g_{KL}$,  and a straightforward calculation that
\begin{align*}
  \Rtt
  =& e*\del{}\psit + \gt +(\ellt+\psit)* (\kt+\psit+\ellt) \hspace{2.0cm} \text{( $e=(e^\Lambda_I)$ \& $\del{}=(\del{\Lambda})$ )}\\
  =& t^{-\ep_2-\ep_1}f*\del{}\psi + t^{-1-\ep_1}\betat^{-1}\gac
  +t^{-1-\ep_1}\betat^{-1} (\ell+\psi)* (k+t^{1-\ep_1}\betat\psi+t^{1-\ep_1}\betat\ell),
\end{align*}
or equivalently
\begin{align*}
  \betat^{2}t^{2}\Rtt
  = t^{2-\ep_2-\ep_1-2\ep_0}\beta^{2} f*\del{}\psi
      + t^{1-\ep_1-\ep_0}\beta\gac
   +t^{1-\ep_1-\ep_0}\beta (\ell+\psi)* (k+t^{1-\ep_1-\ep_0}\beta\psi+t^{1-\ep_1-\ep_0}\beta\ell).
\end{align*}
It is thus an immediate consequence of \eqref{eq:epscond2} and \eqref{eq:symhypreg_appl} that 
\begin{equation*}
\lim_{t\searrow 0}\norm{\betat^{2}t^{2}\Rtt}_{H^{k-1}(\Tbb^{n-1})} = 0.
\end{equation*} 
Given that the Hamiltonian constraint \eqref{eq:Hamilton1} holds
everywhere in $M_{0,t_0}$, we can apply the Sobolev and the Cauchy
Schwarz inequality to conclude that
\begin{equation}
\label{eq:asymptptwKasner-a-pre}
\lim_{t\searrow 0}\Bigl| 4\Ntt^{2}(t,x)t^{2} (\Ktt _{I}{}^{I})^2
       -4\Ntt^{2}(t,x)t^{2}\Ktt _{I}{}^{J}\Ktt _{J}{}^{I}
       +8 \Ntt(t,x)t\Ktt _{I}{}^{I}                                                        
       \Bigr|=0 \quad \text{for each $x\in \Tbb^{n-1}$},
\end{equation}
where we have also used \eqref{Ntt=betat}.
Since 
\begin{equation}
  \label{eq:asymptptwKasner2-a}
  \lim_{t\searrow 0} \bigl|2t\,\Ntt(t,x)\, \Ktt_{I}{}^J (t,x)-\kf_{I}{}^J(x)\bigr|=0 \quad \text{for each $x\in \Tbb^{n-1}$}
\end{equation}
as a consequence of \eqref{eq:2ndFF.est}, \eqref{Ntt=betat} and the Sobolev inequality, it is not difficult to verify
\begin{equation}
  \label{eq:asymptptwKasner-a}
  (\kf_{I}{}^{I})^2 -
  \kf_{I}{}^J \kf_{J}{}^I 
  +4\kf_{I}{}^{I}=0 \quad \text{in $\Tbb^{n-1}$}
\end{equation}
from \eqref{eq:asymptptwKasner-a-pre}.
Solving \eqref{eq:asymptptwKasner-a} for $\kf_I{}^I$, we obtain two solutions $\kf_I{}^I=\pm\sqrt{4+\kf_{I}{}^J \kf_{J}{}^I}
  -2$.
But, by \eqref{eq:kijsmallness} and the Sobolev inequality, we can choose $\delta$ small enough to ensure that 
$\norm{\kf_I{}^I}_{L^\infty(\Tbb^{n-1})}< 4$.
Doing so, we conclude
that $\kf_I{}^I$ must satisfy $\kf_I{}^I=\sqrt{4+\kf_{I}{}^J \kf_{J}{}^I}
  -2$, which in particular, implies that
\begin{equation} \label{kfII}
\kf_I{}^I \geq 0 \quad \text{in $\Tbb^{n-1}$}.
\end{equation}
Together, \eqref{eq:asymptptwKasner2-a}-\eqref{kfII} imply that the solution $\{g_{\mu\nu},\tau\}$  verifies all the conditions of Definition \ref{def:APKasner}, and hence, is asymptotically pointwise Kasner.

\bigskip

\noindent \underline{Crushing singularity:}
Taking the trace of \eqref{eq:phys2ndff} with respect to the physical metric \eqref{eq:spatmetrrel}, we observe that the physical mean curvature can be expressed as
  \begin{equation}
    \label{eq:physM}
    \bar\Ktt_\Lambda{}^\Lambda=\frac{1}{2\betat t^{\frac {n-1}{n -2}}}\Bigl(2t\betat\Ktt_I{}^I
    +\frac{2(n-1)}{n -2}\Bigr).
  \end{equation}
Recalling that $\betat>0$, we deduce from \eqref{eq:improvbetaestimate-glob}, \eqref{eq:2ndFF.est}, \eqref{kfII} and Sobolev's inequality the existence of a constant $C>0$ and $t_1\in (0,t_0)$
such that the pointwise estimates
\begin{equation*}
0<\betat(t,x) \leq \frac{1}{C} \AND 2t\betat\Ktt_I{}^I + \frac{2(n-1)}{n-2} \geq \frac{n-1}{n-2} 
\end{equation*}
hold for all $(t,x)\in M_{0,t_1}=(0,t_1]\times\Tbb^{n-1}$. These estimates together with \eqref{eq:physM} imply the pointwise lower bound
\begin{equation*}
\bar{\Ktt}_\Lambda{}^\Lambda \geq \frac{C(n-1)}{2(n-2)}\frac{1}{t^{\frac{n-1}{n-2}}}
\end{equation*}
on $M_{0,t_1}$, which implies, in particular, that $\bar{\Ktt}_\Lambda{}^\Lambda$ blows up uniformly as $t\searrow 0$. By definition, see  \cite{eardley1979}, this uniform blow up of the physical mean curvature implies that the hypersurface $t=0$ is a crushing singularity.

\bigskip

\noindent \underline{Geometric estimates for the conformal metric and scalar field:} We now turn to deriving the estimates \eqref{eq:glostab-est.First}-\eqref{eq:glostab-est.Last}. Before doing so, we first summarise the relationships between the following $\epv$-independent geometric variables 
\begin{equation*}
    \bigl\{ t\betat\Dc_0 g_{00},\Dc_I g_{00},\Dc_0 g_{J0},\Dc_I g_{J0},t\betat\Dc_0 g_{JK},\Dc_I g_{JK}, t \betat\Dc_I\Dc_j g_{kl},\Dc_i\Dc_j\tau,\Dc_I\Dc_j\Dc_k\tau\bigr\}
\end{equation*}
and the corresponding $\epv$-dependent Fuchsian variables determined from the components of $u$. 
From  \eqref{p-fields}-\eqref{d-fields}, \eqref{gi00}, \eqref{kt-def}-\eqref{psit-def} and \eqref{k-def}-\eqref{tauac-def}, it straightforward to check that
  \begin{align*}    
    t\betat\Dc_0 g_{00}=&-\delta^{JK} k_{JK},\\
    \Dc_I g_{00}=&2t^{-\ep_1}m_{I}-t^{-\ep_1}\delta^{JK}(2\ell_{JKI}-\ell_{IJK}),\\
    \Dc_0 g_{J0}=&t^{-\ep_1}m_{J},\\
    \Dc_I g_{J0}=&t^{-\ep_1}\ell_{I0J},\\
    t\betat\Dc_0 g_{JK}=&k_{JK}+ t^{1-\ep_1+\frac{1}{2}\kf_{J}{}^J}(t^{-\frac{1}{2}\kf_{J}{}^J}\betat)(\ell_{K0J}+\ell_{J0K}),\\
    \Dc_I g_{JK}=&t^{-\ep_1}\ell_{IJK},\\    
    t \betat\Dc_I\Dc_j g_{kl}=&t^{-\ep_1}\gac_{Ijkl},\\        
    \Dc_i\Dc_j\tau=&t^{\ep_0-\ep_1}\xi_{ij},\\    
    \Dc_I\Dc_j\Dc_k\tau=&t^{-\ep_0-2\ep_1}\tauac_{Ijk}.
  \end{align*}  
From these relations, it then follows via the definitions 
\eqref{U-def}, \eqref{Pbb-def}, \eqref{Pbb-perp-def}  as well as the 
estimates \eqref{eq:PbbuDecay_appl}-\eqref{eq:PbbuDecay2_appl} and \eqref{kfII}
that the esimates
  \begin{align}
    \label{eq:glostab-est.First.pre}
      \norm{t\betat\Dc_0 g_{00}+\delta^{JK}\kf_{JK}}_{H^{k-1}(\mathbb T^{n-1})}&\lesssim
                                                                                  t^{p}+ t^{2(\kappat-\sigma)}, \\
    \label{eq:glostab-est.2.pre}
      \norm{t\betat\Dc_0 g_{JK}-\kf_{JK}}_{H^{k-1}(\mathbb T^{n-1})}
      &\lesssim t^{p}+ t^{2(\kappat-\sigma)} +t^{1-\ep_1}(t^p+t^{\kappat-\sigma}),\\   
    \norm{\Dc_I g_{00}}_{H^{k-1}(\mathbb T^{n-1})}+\norm{\Dc_0 g_{J0}}_{H^{k-1}(\mathbb T^{n-1})} \quad &\notag\\
    \label{eq:glostab-est.3.pre}
                                                                                +\norm{\Dc_I g_{J0}}_{H^{k-1}(\mathbb T^{n-1})}+\norm{\Dc_I g_{JK}}_{H^{k-1}(\mathbb T^{n-1})}&\lesssim  t^{p-\ep_1}+t^{\kappat-\sigma-\ep_1},\\      
\label{eq:glostab-est.4.pre}
    \norm{t \betat\Dc_I\Dc_j g_{kl}}_{H^{k-1}(\mathbb T^{n-1})}&\lesssim t^{p-\ep_1}+t^{\kappat-\sigma-\ep_1},\\    
\label{eq:glostab-est.5.pre}
    \norm{\Dc_i\Dc_j\tau }_{H^{k-1}(\mathbb T^{n-1})}&\lesssim t^{\ep_0-\ep_1}(t^p+t^{\kappat-\sigma}),\\
\label{eq:glostab-est.Last.pre}
    \norm{\Dc_I\Dc_j\Dc_k\tau }_{H^{k-1}(\mathbb T^{n-1})}&\lesssim t^{-\ep_0-2\ep_1}(t^p+t^{\kappat-\sigma}),
  \end{align}
hold for $0<t\leq t_0$.

The right hand sides of the estimates \eqref{eq:glostab-est.First.pre}-\eqref{eq:glostab-est.Last.pre} are $\epv$-dependent, while the left hand sides are not. We
therefore proceed as before and optimise\footnote{``Optimising'' here means finding the largest exponents within the possible ranges given by \eqref{eq:glostab-est.First.pre}-\eqref{eq:glostab-est.Last.pre} and \eqref{eq:epscond2}. It is likely that with additional work these estimates could be further optimised by means of a more detailed analysis of the evolution equations. We leave this to future work.} the decay exponents in the estimates \eqref{eq:glostab-est.First.pre}-\eqref{eq:glostab-est.Last.pre} by making appropriate choices for $\ep_0$, $\ep_1$, $\ep_2$ subject to \eqref{eq:epscond2}. It is important to note here that multiple parameter sets  $\{\ep_0,\ep_1,\ep_2\}$ will be employed in order to optimise the decay rate in each of the estimates \eqref{eq:glostab-est.First.pre}-\eqref{eq:glostab-est.Last.pre}. Since $\delta$ and many of the implicit constants in \eqref{eq:glostab-est.First.pre}-\eqref{eq:glostab-est.Last.pre} depend on $\ep_0$, $\ep_1$, and $\ep_2$, once we have selected suitable parameter sets $\{\ep_0,\ep_1,\ep_2\}$ that optimise each of the estimates, we will pick the smallest of the corresponding $\delta$ constants (guaranteed to be positive) and the largest one of all the other constants (guaranteed to be finite).

We start with optimising the decay rate for the estimate \eqref{eq:glostab-est.First.pre}. Noticing that the exponent on the right-hand side is the same as in \eqref{eq:improvbetaestimate.pre}, it follows that the decay rate will be optimised for the same choice of parameters used to optimize \eqref{eq:improvbetaestimate.pre}, namely \eqref{eq:9dufsljdsf}-\eqref{eq:9dufsljdsf-a}. This yields the estimate \eqref{eq:glostab-est.First}. 
Considering next the estimate \eqref{eq:glostab-est.2.pre}, we observe that the exponent on the right hand side is
\begin{equation*}
  \xi_2=\min\{\xi_1, 1-\ep_1+\kappat-\sigma\}
\end{equation*}
where the function $\xi_1$ is defined in \eqref{eq:defxi1} and we have exploited the fact that $1-\ep_1\geq 0$ in $\bar{E}$, see \eqref{E-def}.
It follows that $\xi_2\le\xi_1$ everywhere on $\bar E$. A simple calculation shows that $\xi_2=\xi_1$ for $\ep_0$, $\ep_1$ and $\ep_2$ given by \eqref{eq:9dufsljdsf}-\eqref{eq:9dufsljdsf-a}, that is, exactly where $\xi_1$, defined above by \eqref{eq:defxit1}, attains its global maximum value. We conclude that $\xi_2$ has the same absolute maximum value \eqref{xi1-max}, which is attained in $E$. By \eqref{eq:glostab-est.2.pre}, it is then clear that the estimate \eqref{eq:glostab-est.2} follows.

Next, we consider \eqref{eq:glostab-est.3.pre} and \eqref{eq:glostab-est.4.pre} whose exponents are determined by
\begin{equation*}
  \xi_3:=\min\{p,\kappat-\sigma\}-\ep_1
  \oset{\eqref{eq:defkappat}}{=}\min\biggl\{1-3\ep_0-2\ep_1,1-\ep_0-\ep_2-\ep_1,
  \ep_0-\ep_1-\sigma,\ep_2-\ep_1-\sigma,-\frac{1}{1+\sqrt{3}}-\sigma\biggr\}.
\end{equation*}
By definition $\xi_3\le -\frac{1}{1+\sqrt{3}}-\sigma$, and because this value is achieved for the choice of parameters
\[\ep_0= \frac 15\biggl(1-\frac{1}{1+\sqrt{3}}+\sigma\biggr),\quad \ep_1=\ep_0+\frac{1}{1+\sqrt{3}},\quad \ep_2=\ep_0,\]
the absolute maximum of
$\xi_3$ on $\bar{E}$ is $-\frac{1}{1+\sqrt{3}}-\sigma$.
This implies by  \eqref{eq:glostab-est.3.pre} and \eqref{eq:glostab-est.4.pre} that the estimates \eqref{eq:glostab-est.3} and \eqref{eq:glostab-est.4} hold.

Turning out attention to the estimate \eqref{eq:glostab-est.5.pre}, we consider
\[\xi_4:=\min\{p,\kappat-\sigma\}-\ep_1+\ep_0 \oset{\eqref{eq:defkappat}}{=}\min\biggl\{1-2\ep_0-2\ep_1,1-\ep_2-\ep_1,
  2\ep_0-\ep_1-\sigma,\ep_0+\ep_2-\ep_1-\sigma,\ep_0-\frac{1}{1+\sqrt{3}}-\sigma\bigr\}.\]
On $\bar E$, $\xi_4$ satisfies $\xi_4\le\xit_4$ where
\[\xit_4:=\xi_4\bigl|_{\ep_1=\frac{1}{1+\sqrt{3}}}=\min\biggl\{1-2\ep_0-2 \frac{1}{1+\sqrt{3}},1-\ep_2-\frac{1}{1+\sqrt{3}},      
  \ep_0-\frac{1}{1+\sqrt{3}}-\sigma\biggr\}.\]
The absolute maximum value of $\xit_4$ over $\bar{E}$ is  
\begin{equation}\label{xi4-max}
    \frac13\biggl(1-4\frac{1}{1+\sqrt{3}}-2\sigma\biggr),
\end{equation} 
which is achieved for the parameter values
\begin{equation} \label{xi4-max-point}
    \ep_0=\frac 13(1-\frac{1}{1+\sqrt{3}}+\sigma),\quad \ep_2=1-\frac{1}{1+\sqrt{3}}-\ep_0,
\end{equation}
and any choice of $\ep_1$. We immediately conclude from this that \eqref{xi4-max} is also the absolute maximum value of $\xi_4$ over $\bar{E}$ and that $\xi_4$ achieves this maximum for the parameter values \eqref{xi4-max-point} and $\ep_1=\frac{1}{1+\sqrt{3}}$. Taking into account that
this choice of parameters lies on the boundary of $E$, we observe that estimate \eqref{eq:glostab-est.5} follows
from \eqref{eq:glostab-est.5.pre}.

Finally, considering the last estimate \eqref{eq:glostab-est.Last.pre}, we define
\[\xi_5:=\min\{p,\kappat-\sigma\}-2\ep_1-\ep_0\oset{\eqref{eq:defkappat}}{=}\min\biggl\{1-4\ep_0-3\ep_1,
1-\ep_2-2\ep_1-2\ep_0,      -2\ep_1-\sigma,-\ep_0+\ep_2-2\ep_1-\sigma,
-\ep_0-\ep_1-\frac{1}{1+\sqrt{3}}-\sigma\biggr\}.\]
From the definition \eqref{E-def} of $E$ and the fact that $\xi_5$ is a monotonically decreasing function of both $\ep_0$ and $\ep_1$, it is clear that $\xi_5\le -2\frac{1}{1+\sqrt{3}}-\sigma$ on $\bar E$. In fact, $-2\frac{1}{1+\sqrt{3}}-\sigma$ is the absolute maximum value of $\xi_5$ over $\bar E$ as is easily verified by evaluating $\xi_5$ at
\[\ep_0=0,\quad \ep_1=\frac{1}{1+\sqrt{3}},\]
and any choice of $\ep_2$. Taking again into account that these parameter values lie on the boundary $\partial E$ of $E$, it is clear that the estimate  \eqref{eq:glostab-est.Last} follows directly from \eqref{eq:glostab-est.Last.pre}.

We now in a position to derive the estimate \eqref{eq:improvframeestimate}. Indeed, taking the limit $\tilde t\searrow 0$ of \eqref{eq:frame_convest},
we see, with the help of \eqref{frame-lim-B} and \eqref{frame-lim-Last}, that, for $\delta$ and $\sigma$ chosen sufficiently small, the estimate 
\begin{equation}
  \label{eq:improvframeestimate.pre}
  \norm{e^{-\ln(t)\Ksc}e^\Lambda(t)-\ec^\Lambda(0)}_{H^{k-1}(\Tbb^{n-1})}    
  \lesssim \Bigl(t^{p}+ t^{2(\kappat-\sigma)}+t^{1-\ep_1}(t^p+t^{\kappat-\sigma})\Bigr) t^{-C \norm{\kf_{JM}}_{H^{k-1}(\Tbb^{n-1})}},
\end{equation}
holds for all $0<t\leq t_0$. Given that the exponent on the right hand side is, up to the $\epv$-independent term $-C \norm{\kf_{JM}}_{H^{k-1}(\Tbb^{n-1})}$, the same as that on the right hand side of \eqref{eq:glostab-est.2.pre}, the bound \eqref{eq:kijsmallness} along with the same choices of parameter  $\ep_0$, $\ep_1$ and $\ep_2$ used to derive \eqref{eq:glostab-est.2} from \eqref{eq:glostab-est.2.pre} imply that the estimate \eqref{eq:improvframeestimate} is a consequence of
\eqref{eq:improvframeestimate.pre}.

\bigskip

\noindent\underline{AVTD property:} The solution $\{g_{\mu\nu},\tau\}$ to the conformal Einstein-scalar field equations on $M_{0,t_0}$ constructed above 
satisfies the AVTD property in the sense of Section~\ref{sec:AVTDAPK}. To see why, we observe that the VTD equation corresponding to the Fuchsian equation \eqref{eq:givp1} is given by
\[A^0\del{t}u =
  \frac{1}{t}\Ac\Pbb u + \frac{1}{t} G(u)+\frac{1}{t^{\tilde\ep}}H(t,u),\]
and hence, any solution $u$ of \eqref{eq:givp1} is a solution of this VTD equation up to the error term ${t^{-\ep_0-\ep_2}}A^\Lambda(t,u) \del{\Lambda} u$. Given the definition \eqref{ALambda-def}, it then follows immediately from \eqref{eq:symhypreg_appl} and $\ep_0+\ep_2<1$, see \eqref{eq:epscond2}, that
\begin{equation*}
\int^{t_0}_0\bnorm{{s^{-\ep_0-\ep_2}}A^\Lambda(s,u(s)) \del{\Lambda} u(s)}_{H^{k-1}(\Tbb^{n-1})}\,ds <\infty,
\end{equation*}
which establishes the AVTD property. 

\bigskip

\noindent \underline{$C^2$-inextendibility of the physical metric:} Next, we establish the $C^2$-inextendibility of the physical metric by  verifying that the scalar curvature $\Rb=\gb^{\mu\nu}\Rb_{\mu\nu}$ of the physical spacetime metric $\gb_{\mu\nu}$ blows up as $t\searrow 0$. To this end, we observe by \eqref{ESF.1}, \eqref{tau=t}, \eqref{eq:conf2phys-a} and the properties \eqref{e0i-fix} of the orthonormal frame $e_i^\mu$ that the scalar curvature can be expressed as
\begin{align*}
 \Rb&=2\gb^{ij}\nablab_i\phi\nablab_j \phi 
                    =2t^{\frac {-2}{n -2}}\eta^{ij}\nabla_i\phi\nabla_j \phi
                    =-2 t^{\frac {-2}{n -2}}\betat^{-2}(\del{t}\phi)^2
         =-\frac{n-1}{n-2}(t^{-\frac{1}{2}\kf_{J}{}^J}\betat)^{-2}t^{\frac {-2}{n -2}-2-\kf_{J}{}^J}.
\end{align*}
Recalling that the function $\mathfrak{b}$ in
\eqref{eq:improvbetaestimate-glob}
is stricly positive on $\Tbb^{n-1}$, it then not difficult to verify from the above formula and the calculus inequalities from Appendix \ref{calc}
that estimate \eqref{eq:SptRicciEstimat-glob} is a direct consequence of \eqref{eq:improvbetaestimate-glob}. 

\bigskip

\noindent \underline{Past timelike geodesic incompleteness}
To complete the proof, recall from \eqref{eq:spatmetrrel} and \eqref{eq:phys2ndff} that the physical mean curvature $\bar\Ktt_\Lambda{}^\Lambda$ is related to the conformal mean curvature $\Ktt_\Lambda{}^\Lambda$
by the formula \eqref{eq:physM}.
In the case that the initial data  fields $\gr_{\mu\nu}$, 
$\ggr_{\mu\nu}$, $\taur=t_0$ and
$\taugr$ agree exactly with those of the FLRW solution (Section~\ref{sec:Kasnerscalarfield}), the value of $\Ktt_\Lambda{}^\Lambda$ is zero on the inital hypersurface $\Sigma_{t_0}=\{t_0\}\times \Tbb^{n-1}$ and hence the value of $\bar\Ktt_\Lambda{}^\Lambda$ is $\frac {n-1}{n -2} t_0^{-\frac {1}{n -2}-1}$ there.  By choosing $\delta>0$ sufficiently small, we can ensure that $\bar\Ktt_\Lambda{}^\Lambda$  as close, in a pointwise sense,
as we like to  $\frac {n-1}{n -2} t_0^{-\frac {1}{n -2}-1}$ everywhere on $\Sigma_{t_0}$ by \eqref{glob-stab-thm-idata-A} and the Sobolev's inequality. In particular, for $\delta>0$ small enough, we have that $\Ktt_\Lambda{}^\Lambda>{n-1}{2(n -2)} t_0^{-\frac {1}{n -2}-1}$ on $\Sigma_{t_0}$.
Past timelike geodesic incompleteness is then a consequence of Hawking's singularity theorem \cite[Chapter~14, Theorem~55A]{oneill1983a}, i.e., all past directed timelike geodesics starting on the $\Sigma_{t_0}$ reach $\{0\}\times\Tbb^{n-1}$ after finite proper time.

This concludes the proof of  Theorem~\ref{glob-stab-thm}.

\section{Localized stability\label{local-sec}}

We turn now to establishing a version of Theorem \ref{glob-stab-thm} that is localized to a truncated cone domain\footnote{See \eqref{Omega-def} for a definition of the truncated cone domain $\Omega_{t_0,\rho_0,\rho_1,\bar{\ep}}$.}
$\Omega_{t_0,\rho_0,\rho_1,\bar{\ep}}$ for appropriately chosen parameters $t_0>0$, $0<\rho_0 < L$, $0<\vartheta<1$ and $0<\bar{\ep}<1$, where 
\begin{equation}\label{rho1-def}
\rho_1 = t_0^{\bar{\ep}-1}(1-\bar{\ep})(1-\vartheta)\rho_0.
\end{equation}
The precise statement of the local stability result is given in Theorem \ref{loc-stab-thm} below. In order to simplify the presentation of the local stability result and its proof, we have not stated the most general result possible in terms of the choice of the parameters $t_0$, $\rho_0$, $\vartheta$, and $\bar{\ep}$ that determine $\Omega_{t_0,\rho_0,\rho_1,\bar{\ep}}$. 
We instead content ourselves with fixing $\rho_0 \in (0,L)$, which sets the size of the centred ball $\{t_0\}\times \mathbb{B}_{\rho_0}$ that caps $\Omega_{t_0,\rho_0,\rho_1,\bar{\ep}}$ from above, and the size of the centred ball $\{0\}\times \mathbb{B}_{\vartheta \rho_0}$, here $\vartheta\in (0,1)$ can be chosen arbitrarily, that caps $\Omega_{t_0,\rho_0,\rho_1,\bar{\ep}}$ from below. These choices along with the size of the deviation from the FLRW initial data, characterized by the parameter $\delta$, implicitly determine the size of $t_0>0$. We do not determine an explicit value for $t_0$ in our proof; we only note there exist a $T_0>0$ and $\bar{\ep}\in (0,1)$ such that for each $t_0 \in (0,T_0]$, it is possible to establish the stability of nonlinear perturbations of FLRW solutions on
$\Omega_{t_0,\rho_0,\rho_1,\bar{\ep}}$ provided that $\delta$ is chosen sufficiently small. With more work, it would be possible to establish a quantitative relationship between the parameters  $T_0$, $\rho_0$, $\vartheta$, $\bar{\ep}$, and $\delta$. However, without an application in mind that would require such a result, we will leave this extension to future work.

The proof of Theorem \ref{loc-stab-thm} follows from a straightforward adaptation of the proof of Theorem \ref{glob-stab-thm}. The only significant difference is that the proof of Theorem \ref{loc-stab-thm} exploits the finite propagation speed of the reduced conformal Einstein-scalar field equations. Specifically, the proof uses the finite propagation speed of the Fuchsian equation \eqref{Fuch-ev-A2} to obtain uniform estimates on $\Omega_{t_0,\rho_0,\rho_1,\bar{\ep}}$, the finite propagation speed of the  system \eqref{tconf-ford-C.1}-\eqref{tconf-ford-C.8} to obtain the local-in-time existence of solutions inside  $\Omega_{t_0,\rho_0,\rho_1,\bar{\ep}}$ to the reduced conformal Einstein-scalar field equations, i.e.\ a localized version of Proposition \ref{lag-exist-prop}, and the finite propagation speed of the wave gauge propagation equation \eqref{wave-gauge-prop-eqn} to establish that the constraints propagate on $\Omega_{t_0,\rho_0,\rho_1,\bar{\ep}}$, i.e.\ a localized version of Proposition \ref{prop-ES-constr}.

\begin{thm}[Localized past stability of the FLRW solution of the Einstein-scalar field system] \label{loc-stab-thm}
  Suppose $n\in\Zbb_{\ge 3}$, $k \in \Zbb_{>(n+3)/2}$, $\sigma>0$, $0<\rho_0<L$ and $0<\vartheta<1$. There exists constants $T_0>0$ and $\bar{\ep}\in (0,1)$ 
  such that for every $t_0\in (0,T_0]$ there exists a $\delta_0>0$ such that for every $\delta \in (0,\delta_0]$ and  $\gr_{\mu\nu}\in H^{k+2}(\mathbb{B}_{\rho_0},\mathbb{S}_n)$, 
$\ggr_{\mu\nu}\in H^{k+1}(\mathbb{B}_{\rho_0},\mathbb{S}_n)$, $\taur=t_0$ and
$\taugr\in H^{k+2}(\mathbb{B}_{\rho_0})$
satisfying
\begin{equation} \label{loc-stab-thm-idata-A}
  \norm{\gr_{\mu\nu}-\eta_{\mu\nu}}_{H^{k+2}(\mathbb{B}_{\rho_0})}
  +\norm{\ggr_{\mu\nu}}_{H^{k+2}(\mathbb{B}_{\rho_0})}
  +\norm{\taugr-1}_{H^{k+2}(\mathbb{B}_{\rho_0})}
  <\delta,
\end{equation}
the gravitational and wave gauge constraints \eqref{grav-constr}-\eqref{wave-constr} on $\{t_0\}\times \mathbb{B}_{\rho_0}$, and the inequalities $\det(\gr_{\mu\nu})<0$ and $|\vr|_{\gr}^2 <0$, there exists a unique classical solution $W\in C^1(\Omega_{t_0,\rho_0,\rho_1,\bar{\ep}})$, see \eqref{eq:Wdef}, of the system of evolution equations \eqref{tconf-ford-C.1}-\eqref{tconf-ford-C.8}
on the truncated cone domain $\Omega_{t_0,\rho_0,\rho_1,\bar{\ep}}$, where
$\rho_1$ is given by \eqref{rho1-def}, that satisfies the initial
conditions \eqref{l-idata}-\eqref{hhu-idata} on $\{t_0\}\times\mathbb{B}_{\rho_0}$, which are uniquely determined by the initial data $\{\gr_{\mu\nu},\ggr_{\mu\nu},\taur=t_0,\taugr\}$, and
the constraints \eqref{eq:Lag-constraints}.
Letting
\begin{equation*}
\rho(t) = \rho_0 + (1-\vartheta) \rho_0 \Bigl( \Bigl(\frac{t}{t_0}\Bigr)^{1-\bar{\ep}}-1\Bigr),
\end{equation*}
the regularity of the solution $W$ is such that $\del{t}^jW(t) \in H^{k-j}(\mathbb{B}_{\rho(t)})$ for every $0\leq j\leq k$ and $t\in (0,t_0]$.
\bigskip

\noindent Moreover, the pair $\{g_{\mu\nu}=\del{\mu}l^\alpha\ghu_{\alpha\beta}\del{\nu}l^\beta,\tau=t\}$, which is uniquely determined by $W$, defines a solution of the conformal Einstein-scalar field equations \eqref{lag-confeqns} on $\Omega_{t_0,\rho_0,\rho_1,\bar{\ep}}$ that satisfies the wave gauge constraint \eqref{lag-wave-gauge}
and the following
properties:
\begin{enumerate}[(a)]
\item Let $e_0^\mu=\betat^{-1}\delta_0^\mu$ with $\betat= (-g(dx^0,dx^0))^{-\frac{1}{2}}$, and $e^\mu_I$ be the unique solution of the Fermi-Walker transport equation \eqref{Fermi-A} with initial conditions $e^\mu_I |_{\mathbb{B}_{\rho_0}}=\delta^\mu_\Lambda\er^\Lambda_I$ where the functions $\er^\Lambda_I \in H^k(\mathbb{B}_{\rho_0})$ are chosen to satisfy
$\norm{\er^\Lambda_I-\delta^\Lambda_I}_{H^{k}(\mathbb{B}_{\rho_0})} < \delta$
and make the frame $e^\mu_i$ orthonormal on $\{t_0\}\times \mathbb{B}_{\rho_0}$. 
Then $e^\mu_i$ is a well defined frame in $\Omega_{t_0,\rho_0,\rho_1,\bar{\ep}}$ that
satisfies
$e^0_I=0$ and $g_{ij}=\eta_{ij}$,
where $g_{ij}=e_i^\mu g_{\mu\nu}e_j^\nu$ is the frame metric.

\item There exists a tensor field $\kf_{IJ}\in 
H^{k-1}(\mathbb{B}_{\rho_0},\Sbb{n-1})$ satisfying $\norm{\kf_{IJ}}_{H^{k-1}(\mathbb{B}_{\rho_0})}\lesssim\delta$ and a constant $C>0$ such that 
  \begin{align*}
\norm{t\betat\Dc_0 g_{00}+\delta^{JK}\kf_{JK}}_{H^{k-1}(\mathbb{B}_{\rho(t)})}&\lesssim              t^{\frac19(3-\sqrt{3}-2\sigma)},\\
\norm{t\betat\Dc_0 g_{JK}-\kf_{JK} }_{H^{k-1}(\mathbb{B}_{\rho(t)})}
     &\lesssim t^{\frac19(3-\sqrt{3}-2\sigma)-C\delta},\\
    \norm{\Dc_I g_{00}}_{H^{k-1}(\mathbb{B}_{\rho(t)})}+\norm{\Dc_0 g_{J0}}_{H^{k-1}(\mathbb{B}_{\rho(t)})}
    \hspace{0.8cm} &\\
      +\norm{\Dc_I g_{J0}}_{H^{k-1}(\mathbb{B}_{\rho(t)})}+\norm{\Dc_I g_{JK}}_{H^{k-1}(\mathbb{B}_{\rho(t)})}&\lesssim t^{-\frac 12(\sqrt{3}-1)-\sigma},\\
\norm{t\betat\Dc_I\Dc_j g_{lm}}_{H^{k-1}(\mathbb{B}_{\rho(t)})}&\lesssim t^{-\frac 12(\sqrt{3}-1)-\sigma},\\
\norm{\Dc_i\Dc_j\tau }_{H^{k-1}(\mathbb{B}_{\rho(t)})}&\lesssim t^{-(\frac 23\sqrt{3}-1)-\sigma},\\
\norm{\Dc_I\Dc_j\Dc_l\tau}_{H^{k-1}(\mathbb{B}_{\rho(t)})} &\lesssim t^{-(\sqrt{3}-1)-2\sigma},
\intertext{and}
\norm{t\betat\Dc_0 g_{JK}-\kf_{JK} }_{H^{k-1}(\mathbb{B}_{\vartheta\rho_0})}
     &\lesssim t^{\frac19(3-\sqrt{3}-2\sigma)}
  \end{align*}
for all $t\in (0,t_0]$,
where all the
fields in these estimates are expressed in terms of the frame $e_i^\mu$ and the Levi-Civita connection $\Dc$ of the flat background metric
$\gc_{\mu\nu}=\del{\mu}l^\alpha \eta_{\alpha\beta} \del{\nu}l^\beta$. 
In addition, there exist a strictly positive function $\mathfrak{b}\in H^{k-1}(\mathbb{B}_{\rho_0})$ and a matrix $\ef^\Lambda_J\in H^{k-1}(\mathbb{B}_{\rho_0},\Mbb{n-1})$ such that
  \begin{align*}
       \Bnorm{t^{-\kf_{J}{}^J/2}\betat
    -\mathfrak{b}}_{H^{k-1}(\mathbb{B}_{\rho(t)})}
 & \lesssim t^{\frac 19(3-\sqrt{3}-2\sigma)}
\intertext{and} 
  \Bnorm{\exp\Bigl(\frac 12\ln(t)\kf_{J}{}^I\Bigr) e_I^\Lambda-\ef_J^\Lambda}_{H^{k-1}(\mathbb{B}_{\rho(t)}))}    
  &\lesssim t^{\frac19(3-\sqrt{3}-2\sigma)-C\delta}
\end{align*}
 for all $t\in (0,t_0]$, where $\kf_{L}{}^J=\kf_{LM}\delta^{MJ}$  and $\exp\Bigl(\frac 12\ln(t)\kf_{J}{}^I\Bigr)$ is the exponential of the matrix
 $\frac 12\ln(t)(\kf_{J}{}^I)$.
\item The second fundamental form $\Ktt_{\Lambda\Omega}$ induced on the constant time surface $\{t\}\times \mathbb{B}_{\rho(t)}$  by $g_{\mu\nu}$ satisfies
\begin{equation*}
\Bnorm{ 2t\betat \Ktt_{LJ}-\kf_{LJ}}_{H^{k-1}(
\mathbb{B}_{\vartheta\rho_0})}
\lesssim t^{\frac19(3-\sqrt{3}-2\sigma)},
\end{equation*}
for all $t\in (0,t_0]$, while the lapse, shift and the spatial metric on $\{t\}\times \mathbb{B}_{\rho(t)}$ are determined by $\Ntt=\betat$,
$\btt_\Lambda=0$, and $\gtt_{\Lambda\Omega}=g_{\Lambda\Omega}$, respectively.

\item The pair $\Bigl\{\gb_{\mu\nu}=t^{\frac {2}{n -2}}g_{\mu\nu},\phi=\sqrt{\frac{n-1}{2(n-2)}}\ln(t)\Bigr\}$ defines a solution of the physical Einstein-scalar field equations \eqref{ESF.1}-\eqref{ESF.2} on $\Omega_{t_0,\rho_0,\rho_1,\bar{\ep}}$ that exhibits AVTD behavior and is asymptotically pointwise Kasner on $\mathbb{B}_{\vartheta\rho_0}$ with Kasner exponents $r_1(x),\ldots,r_{n-1}(x)$ determined by the eigenvalues of $\kf_{L}{}^J(x)$ for each $x\in \mathbb{B}_{\vartheta\rho_0}$. In particular, $\kf_{L}{}^L(x)\ge 0$ for all $x\in\mathbb{B}_{\vartheta\rho_0}$, and $\kf_{L}{}^L(x)=0$ for some $x\in \mathbb{B}_{\vartheta\rho_0}$ if and only if $r_1(x)=\ldots=r_{n-1}(x)=0$. The time $t=0$ represents a crushing singularity. Furthermore, the function
  \begin{equation*}
    \Ptt
  =\frac{\sqrt{{2(n-1)}(n-2)}}{{2(n-1)} +(n-2) \kf_{L}{}^L}
\end{equation*}
can be interpreted as the asymptotic scalar field strength in the sense of Section~\ref{sec:AVTDAPK}.
\item The physical solution $\bigl\{\gb_{\mu\nu},\phi\bigr\}$ is past $C^2$ inextendible at $t=0$. The scalar curvature  $\Rb=\Rb_{\mu\nu}\gb^{\mu\nu}$ of the physical metric $\gb_{\mu\nu}$ 
satisfies
  \begin{equation*}
    \Bnorm{t^{2\frac {n-1}{n -2}+\kf_{J}{}^J}\Rb+\frac{n-1}{n-2}\mathfrak{b}^{-2}}_{H^{k-1}(\mathbb{B}_{\vartheta\rho_0})}\lesssim t^{\frac 19(3-\sqrt{3}-2\sigma)}
  \end{equation*}
for all $t\in (0,t_0]$, and consequently, it becomes unbounded in the limit $t\searrow 0$.
\end{enumerate}

\medskip

\noindent The implicit and explicit constants in the above estimates are independent of the choice of $\delta\in (0,\delta_0]$.

\end{thm}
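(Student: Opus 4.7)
My plan is to follow the authors' own blueprint: the localized theorem will be deduced from the global one via a finite propagation speed argument, with the only new technical input being a careful matching between the cone aperture parameter $\bar{\ep}$ and the characteristic cone of the various hyperbolic systems involved. First I would extend the initial data $\{\gr_{\mu\nu},\ggr_{\mu\nu},\taur=t_0,\taugr,\er^\Lambda_I\}$ from $\mathbb{B}_{\rho_0}$ to all of $\Tbb^{n-1}$ using the total extension operator $\Ebb_{\rho_0}$ from \eqref{Ebb-def}, obtaining extended initial data that agrees with the original on $\mathbb{B}_{\rho_0}$ and has $H^{k+2}$, $H^{k+1}$ etc.\ norms on $\Tbb^{n-1}$ bounded by a constant multiple of $\delta$ by \eqref{Ebb-prop}. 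Note however that the extension will generally not satisfy the gravitational or wave-gauge constraints outside of $\mathbb{B}_{\rho_0}$, so Theorem~\ref{glob-stab-thm} cannot be invoked as a black box; instead I apply Proposition~\ref{prop:globalstability} directly to the Fuchsian GIVP \eqref{eq:givp1}--\eqref{eq:givp2} with initial data built from the extended fields, producing a solution $u$ on $(0,t_0]\times\Tbb^{n-1}$ together with the decay estimates \eqref{eq:PbbuDecay_appl}--\eqref{eq:PbbuDecay2_appl}.

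The crux of the argument is then a finite-propagation-speed statement for \eqref{eq:givp1}: because the principal part has the form $A^0\del{t}u + t^{-(\ep_0+\ep_2)}A^\Lambda\del{\Lambda}u$, the characteristic speeds of the symmetric hyperbolic system grow like $t^{-(\ep_0+\ep_2)}$, so the domain of dependence of $\{t\}\times\mathbb{B}_{\rho(t)}$ (for $t<t_0$) lies within $\{t_0\}\times\mathbb{B}_{\rho_0}$ provided the cone profile satisfies $\rho(t) \le \rho_0 - C\int_t^{t_0} s^{-(\ep_0+\ep_2)}\,ds$ for a suitable $C$ controlling the operator norm of $A^\Lambda$. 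Integrating gives the $t^{1-(\ep_0+\ep_2)}$ profile appearing in \eqref{Omega-def}, which matches the definition of $\Omega_{t_0,\rho_0,\rho_1,\bar{\ep}}$ as long as $\bar{\ep}\ge \ep_0+\ep_2$. I would fix parameters once and for all by taking the optimizing choices from the proof of Theorem~\ref{glob-stab-thm} (at worst $\bar{\ep}\in (0,1)$ slightly larger than the largest $\ep_0+\ep_2$ arising from those choices), selecting $T_0$ small and then using \eqref{rho1-def} so that the cone $\Omega_{t_0,\rho_0,\rho_1,\bar{\ep}}$ is contained in the domain of dependence of $\{t_0\}\times\mathbb{B}_{\rho_0}$ and that $\rho(0)=\vartheta\rho_0$. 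The finite propagation speed then guarantees that $u|_{\Omega_{t_0,\rho_0,\rho_1,\bar{\ep}}}$ is determined only by the \emph{original} (unextended) initial data on $\{t_0\}\times\mathbb{B}_{\rho_0}$; in particular it is independent of how the extension was made.

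Next I would establish localized analogues of Proposition~\ref{lag-exist-prop} and Proposition~\ref{prop-ES-constr} on $\Omega_{t_0,\rho_0,\rho_1,\bar{\ep}}$. Both propositions are consequences of the local well-posedness and finite propagation speed theory for first-order symmetric hyperbolic systems, and the key observation is that the same extension-and-restriction argument works: extend the data, solve globally (up to the continuation time), and restrict to the cone. For the wave-gauge propagation, one uses the linear wave equation \eqref{wave-gauge-prop-eqn} for $\Xh^\mu$, whose characteristic cone is governed by $g^{\mu\nu}$ and is wider than the Fuchsian cone; since $\Xh^\mu$ and $\del{0}\Xh^\mu$ vanish on the part of $\{t_0\}\times\mathbb{B}_{\rho_0}$ where the original constraints were assumed, uniqueness propagates $\Xh^\mu=0$ throughout a suitable sub-cone containing $\Omega_{t_0,\rho_0,\rho_1,\bar{\ep}}$, provided $\bar{\ep}$ is chosen large enough to also dominate the metric cone. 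This is what dictates the final choice of $\bar{\ep}$.

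The asymptotic estimates in parts (b)--(e) are then inherited from the corresponding global estimates by restriction: every bound stated in Theorem~\ref{glob-stab-thm} holds on $\Tbb^{n-1}$ for the extended solution, and so its restriction gives the same bound on $\mathbb{B}_{\rho(t)}$. The strict containment $\mathbb{B}_{\vartheta\rho_0}\subset\mathbb{B}_{\rho(t)}$ for all $t\in (0,t_0]$ is what lets us replace $\mathbb{B}_{\rho(t)}$ by $\mathbb{B}_{\vartheta\rho_0}$ in the sharpest estimates (those involving $\kf$ defined as an endpoint limit), and also what gives a pointwise asymptotically Kasner conclusion on $\mathbb{B}_{\vartheta\rho_0}$. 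The $C^2$-inextendibility and curvature blow-up are pointwise properties that transfer immediately. The main obstacle, and essentially the only place where some genuine care is required, is the consistency of the various finite-propagation-speed cones: the Fuchsian cone (from \eqref{eq:givp1}), the Lagrangian cone (from the symmetric hyperbolic system \eqref{tconf-ford-C.1}--\eqref{tconf-ford-C.8}), and the wave-gauge propagation cone (from \eqref{wave-gauge-prop-eqn}) must all be simultaneously accommodated by a single choice of $\bar{\ep}\in (0,1)$, uniformly over $t_0\in (0,T_0]$ and for sufficiently small perturbations. Making this bookkeeping precise, while keeping the operator norms of the coefficient matrices controlled via the a priori bounds coming from \eqref{eq:resenergy_appl}, is the technical heart of the proof.
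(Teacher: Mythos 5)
Your proposal tracks the paper's actual proof very closely: extend the data with $\Ebb_{\rho_0}$, run Proposition~\ref{prop:globalstability} globally (bypassing the unextended constraint equations), show the lateral boundary of the cone is weakly spacelike for the Fuchsian system so that $u$ restricted to $\Omega_{t_0,\rho_0,\rho_1,\bar{\ep}}$ depends only on the original data, localize Propositions~\ref{lag-exist-prop} and \ref{prop-ES-constr} by the same argument, and restrict the global estimates. One small inaccuracy in your reasoning is the claim that the characteristic speeds of \eqref{eq:givp1} ``grow like $t^{-(\ep_0+\ep_2)}$'': the matrix $A^\Lambda$ itself contains $\beta f^\Lambda_K = t^{\ep_0+\ep_2}\betat e^\Lambda_K$, so the prefactor cancels and the coordinate characteristic speed is in fact $O(1)$; the reason the cone profile scales like $t^{1-\bar{\ep}}$ with $\bar{\ep}=\ep_0+\ep_2$ is that this choice makes the co-normal $n_\mu$ pair with $\At^\mu = \delta^\mu_0 A^0 + t^{-\bar{\ep}}\delta^\mu_\Lambda A^\Lambda$ in a clean way (the paper's computation \eqref{Wspacelike-1}--\eqref{Wspacelike-3}), reducing the spacelikeness requirement to $T_0<(1-\bar{\ep})(1-\vartheta)\rho_0$ plus smallness of $\delta$. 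This misreading doesn't change the outcome, and your emphasis on matching the Fuchsian cone, the Lagrangian cone, and the wave-gauge propagation cone is precisely the bookkeeping the paper addresses by taking a single $\bar{\ep}$ equal to the largest $\ep_0+\ep_2$ among all parameter choices used to optimize the decay exponents.
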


We remark that one can also adapt the arguments in the proof of Theorem~\ref{glob-stab-thm} to show a version of past timelike geodesic incompleteness for truncated code domains, namely that all past directed timelike geodesics starting from the initial hypersurface $t=t_0$ either hit the singularity after finite proper time or leave through the non-singular boundary of the truncated cone domain $\Omega_{t_0,\rho_0,\rho_1,\bar\ep}$.

\begin{proof}
Fix $T_0>0$, and for given $\rho_0\in (0,L]$ and $t_0\in (0,T_0]$, suppose that the initial data
$\gr_{\mu\nu}\in H^{k+2}(\mathbb{B}_{\rho_0},\mathbb{S}_n)$, and
$\ggr_{\mu\nu}\in H^{k+1}(\mathbb{B}_{\rho_0},\mathbb{S}_n)$, $\taur=t_0$,
and $\taugr\in H^{k+2}(\mathbb{B}_{\rho_0})$
for the conformal Einstein-scalar field equations on the
initial hypersurface $\{t_0\}\times \mathbb{B}_{\rho_0}$ satisfies the gravitational and wave gauge constraints  \eqref{grav-constr}-\eqref{wave-constr} there\footnote{That is,  \eqref{gt-idata}-\eqref{dt-taut-idata} with $\Sigma_{t_0}$ replaced by $\{t_0\}\times \mathbb{B}_{\rho_0}$.}. We
further assume that this initial data satisfies
\eqref{loc-stab-thm-idata-A},
which we note by Remark \ref{FLRW-idata-rem-A} implies that it is a small perturbation of the FLRW initial data on $\{t_0\}\times \mathbb{B}_{\rho_0}$.
Then we can extend the initial data $\{\gr_{\mu\nu},\ggr_{\mu\nu},\taur,\taugr\}$ using
the extension operator \eqref{Ebb-def} to obtain initial
data\footnote{It is worth noting that the extended initial data will, in general, no longer solve the constraint equations \eqref{grav-constr}-\eqref{wave-constr} outside of $\{t_0\}\times \mathbb{B}_{\rho_0}$.} $\{\gr^*_{\mu\nu},\ggr^*_{\mu\nu},\taur_*,\taugr_*\}$
for the reduced conformal Einstein-scalar field equations on $\Sigma_{t_0}=\{t_0\}\times \Tbb^{n-1}$
by setting
\begin{equation*}
    \gr^*_{\mu\nu}= \eta_{\mu\nu}+ \Ebb_{\rho_0}(\gr_{\mu\nu}-\eta_{\mu\nu}),
    \quad \ggr^*_{\mu\nu}= \eta_{\mu\nu}+ \Ebb_{\rho_0}(\gr_{\mu\nu}), \quad \taur_*=t_0
    \quad \AND \quad \taugr_*=1+\Ebb_{\rho_0}(\taugr-1).
\end{equation*}
We recall that this initial data determines initial data $\{g_{\mu\nu}|_{\Sigma_{t_0}},\del{0}g_{\mu\nu}|_{\Sigma_{t_0}},\tau=t_0,\del{0}\tau =1\}$ for the metric $g_{\mu\nu}$ and scalar field $\tau$ in Lagrangian coordinates on $\Sigma_{t_0}$ via \eqref{dt-tau-idata}-\eqref{dt-g-idata}. We also set $e_0^\mu=(-|\chi|_g^2)^{-\frac{1}{2}}\chi^\mu$ and recall that it can be computed from the Lagrangian initial data on $\Sigma_{t_0}$ by \eqref{chi-idata}. We then 
specify the spatial frame initial data $e^\mu_I|_{\Sigma_{t_0}}=\delta^\mu_\Lambda\er^\Lambda_\mu$ where
the functions $\er^\Lambda_I \in H^k(\Tbb^{n-1})$ are chosen to satisfy
\begin{equation} \label{loc-stab-thm-idata-B.1}
\norm{\er^\Lambda_I-\delta^\Lambda_I}_{H^{k}(\Tbb^{n-1})} < \delta
\end{equation}
and make the frame $e^\mu_i$ orthonormal on $\Sigma_{t_0}$ with
respect to the metric $g_{\mu\nu}$ given there, see \eqref{g-idata}. 

By \eqref{Ebb-prop}, \eqref{loc-stab-thm-idata-A} and \eqref{loc-stab-thm-idata-B.1}, the extended initial data is bounded by
\begin{equation} \label{loc-stab-thm-idata-C}
    \norm{\gr^*_{\mu\nu}-\eta_{\mu\nu}}_{H^{k+2}(\Tbb^{n-1})}+\norm{\ggr^*_{\mu\nu}}_{H^{k+2}(\Tbb^{n-1})}+\norm{\taugr_*-1}_{H^{k+2}(\Tbb^{n-1})} + \norm{\er^\Lambda_I-\delta^\Lambda_I}_{H^{k}(\Sigma_{t_0})}
    \leq C \delta
    \end{equation}
for some constant $C>0$ independent of
the initial data $\{\gr_{\mu\nu},\ggr_{\mu\nu},\taur,\taugr\}$, and in particular, of the constant $\delta$. 
Now, suppose $\ep_0$, $\ep_1$, and $\ep_2$ are constants satisfying \eqref{eq:epscond2}. Then by the discussion in Section \ref{frame-idata}, the variable definitions \eqref{k-def}-\eqref{tauac-def}, \eqref{U-def}, and \eqref{eq:Ubgdef}-\eqref{u-def},  it is not difficult, with the help of the Sobolev and Moser inequalities (see Propositions 2.4., 3.7.~and 3.9.~from   \cite[Ch.~13]{TaylorIII:1996}),
to verify from \eqref{loc-stab-thm-idata-C} that the extended initial data $\{\gr^*_{\mu\nu},\ggr^*_{\mu\nu},\taur_*,\taugr_*,\er^\Lambda_I\}$ determines initial data $u^*_0\in H^k(\Tbb^{n-1})$ for the Fuchsian equation \eqref{eq:givp1} satisfying
\begin{equation} \label{loc-stab-thm-idata-D}
    \norm{u^*_0}_{H^k(\Tbb^{n-1})} \leq C_0\delta,
\end{equation}
where the constant is of the form \eqref{C0-def}.
Then by Proposition \ref{prop:globalstability} and the bound \eqref{glob-stab-thm-idata-D}, it follows that there exists a $\delta_0>0$, such that if we choose $\delta>0$ small enough so that
\begin{equation} \label{C0delta<delta0-loc}
C_0\delta <\delta_0,
\end{equation}
which we can always do for any given $t_0\in (0,T_0]$,
then there exists a solution    
\begin{equation*}
u^* \in C^0_b\bigl((0,t_0],H^k(\Tbb^{n-1})\bigr)\cap C^1\bigl((0,t_0],H^{k-1}(\Tbb^{n-1})\bigr)
\end{equation*}
of the IVP \eqref{eq:givp1}-\eqref{eq:givp2} that 
extends continuously in $H^{k-1}$ to $t=0$ and
satisfies the energy
\begin{equation} \label{loc-stab-thm-energy-A}
\norm{u^*(t)}_{H^k(\mathbb T^{n-1})}^2 + \int^{t_0}_t \frac{1}{s} \norm{\Pbb u^*(s)}_{H^k(\mathbb T^{n-1})}^2\, ds  \lesssim \norm{u_0^*}^2
\end{equation}
and decay estimates
\begin{align}
\label{loc-stab-thm-decay-A}
  \norm{\Pbb u^*(t)}_{H^{k-1}(\mathbb T^{n-1})} &\lesssim t^p+t^{\kappat-\sigma},\\
\label{loc-stab-thm-decay-B}
\norm{\Pbb^\perp u^*(t) - \Pbb^\perp u^*(0)}_{H^{k-1}(\mathbb T^{n-1})} &\lesssim
 t^{p}+ t^{2(\kappat-\sigma)},
\end{align}
for $t\in (0,t_0]$, where
$p$ and $\kappat$ are defined by \eqref{eq:defkappat}, respectively, where the implied constant in the above estimates are independent of $\delta$ provided it is chosen small enough to satisfy \eqref{C0delta<delta0-loc}.

As in the proof of Theorem \ref{glob-stab-thm}, see \eqref{Pbb-perp-u(0)}, we
write the limit $\Pbb^\perp u^*(0)$ as
\begin{equation*}
    \Pbb^\perp u^*(0)=\bigl(\kf_{IJ},0,0,0,0,0,0,0,0,0\bigr)
\end{equation*}
where $\kf_{IJ}\in H^{n-1}(\Tbb^{n-1},\Sbb{n-1})$ and
we observe from \eqref{loc-stab-thm-idata-D} and \eqref{loc-stab-thm-energy-A} that
\begin{equation*}
    \norm{\kf_{IJ}}_{H^{k}(\Tbb^{n-1})} \leq C_0\delta
\end{equation*}
for a constant $C_0$ of the form \eqref{C0-def}. 
With the help of Sobolev's inequality, we also observe
from the energy estimate
\eqref{loc-stab-thm-energy-A} and the inequalities \eqref{loc-stab-thm-idata-D}
that $u^*(t)$ is bounded in the $L^\infty$-norm by
\begin{equation} \label{loc-stab-thm-sol-A} 
\norm{u^*}_{L^\infty((0,t_0)\times \Tbb^{n-1})}\leq C_0\delta 
\end{equation}
where, as above, the constant $C_0$ is a constant of the form \eqref{C0-def}. By \eqref{eq:Ubgdef} and \eqref{u-def}, this bound implies that
\begin{equation} \label{loc-stab-thm-sup-bnd}
    \norm{\beta}_{L^\infty((0,t_0)\times \Tbb^{n-1})}= t_{0}^{\ep_0}+C_0\delta  \AND
    \norm{f^\Lambda_I}_{L^\infty((0,t_0)\times \Tbb^{n-1})} \leq t_{0}^{\ep_2}+C_0\delta.
\end{equation}

Next, we want to show that for $T_0>0$ chosen sufficiently small and each $t\in (0,T_0]$, there exists a $\delta$ such that the solutions \eqref{loc-stab-thm-sol-A} to the IVP \eqref{eq:givp1}-\eqref{eq:givp2} can be localized to a truncated cone domain $\Omega_{t_0,\rho_0,\rho_1,\bar{\ep}}$ where
\begin{equation} \label{ep-rho1-def}
    \bar{\ep}=\ep_0+\ep_2,\quad \rho_1 = \frac{(1-\vartheta)\rho_0(1-\bar{\ep})}{t_0^{1-\bar{\ep}}},
\end{equation}
and $\vartheta \in (0,1)$. Here, by localized, we mean that the restriction of $u^*$ to  $\Omega_{t_0,\rho_0,\rho_1,\bar{\ep}}$ is uniquely determined as a solution of \eqref{eq:givp1} by the initial data $u_0|_{\mathbb{B}_{\rho_0}}$, which we note is, in turn, uniquely determined by the conformal Einstein-scalar field initial data $\{\gr_{\mu\nu},\ggr_{\mu\nu},\taur,\taugr\}$ on $\{t_0\}\times \mathbb{B}_{\rho_0}$. Setting
\begin{equation*}
    \At^\mu = \delta^\mu_0A^0 + \frac{1}{t^{\bar{\ep}}}\delta_\Lambda^\mu A^\Lambda
\end{equation*}
and recalling from \eqref{n-def} that $n_0<0$, 
we find from \eqref{A0-def}, \eqref{ALambda-def} and \eqref{ep-rho1-def} that
\begin{align}
(u^*)^{\tr}n_\mu \At^\mu u^* \leq& -|n_0|(|\tauac|^2+|\gac|^2)
- \frac{2}{t^{\bar{\ep}}} \beta n_\Lambda f^\Lambda_K
\delta^{\Qt Q}\delta^{\lt l}\delta^{JK}\bigl(\tauac_{\Qt 0 \lt}\tauac_{QJl}+\delta^{\mt m}\gac_{\Qt 0 \lt \mt}\gac_{QJl m}\bigr), \label{Wspacelike-1}
\end{align}
where
\begin{equation*}
    |\tauac|^2= \delta^{\Qt Q}\delta^{\jt h}\delta^{\lt l}\tauac_{\Qt \jt \lt}\tauac_{Qjl} \AND  |\gac|^2= \delta^{\Qt Q}\delta^{\jt h}\delta^{\lt l}\delta^{\mt m}\gac_{\Qt \jt \lt \mt}\gac_{Qjlm},
\end{equation*}
and we note from an application of the Cauchy-Schwartz and Young's inequalities that
\begin{align}
     2n_\Lambda f^\Lambda_K
\delta^{\Qt Q}&\delta^{\lt l}\delta^{JK}\bigl(\tauac_{\Qt 0 \lt}\tauac_{QJl}+\delta^{\mt m}\gac_{\Qt 0 \lt \mt}\gac_{QJl m}\bigr)
\leq 2\bigl(\delta^{\Qt Q}\delta^{\lt l}\tauac_{\Qt 0 \lt}\tauac_{Q 0 l}\bigr)^{\frac{1}{2}}\bigl(\delta^{\Qt Q}\delta^{\lt l}\delta^{\Kt \Jt}n_{\Lambdat} f^{\Lambdat}_{\Kt}\tauac_{\Qt \Jt \lt} \delta^{KJ}n_\Lambda f^\Lambda_K\tauac_{Q J l}\bigr)^{\frac{1}{2}}\notag\\
& + 2\bigl(\delta^{\Qt Q}\delta^{\lt l}\delta^{\mt m}\gac_{\Qt 0 \lt\mt}\gac_{Q 0 l m}\bigr)^{\frac{1}{2}}\bigl(\delta^{\Qt Q}\delta^{\lt l}\delta^{\mt m}\delta^{\Kt \Jt}n_{\Lambdat} f^{\Lambdat}_{\Kt}\gac_{\Qt \Jt \lt \mt} \delta^{KJ}n_\Lambda f^\Lambda_K\gac_{Q J l m}\bigr)^{\frac{1}{2}} \notag \\
\leq& \frac{1}{\zeta}\bigl(\delta^{\Qt Q}\delta^{\lt l}\tauac_{\Qt 0 \lt}\tauac_{Q 0 l}+\delta^{\Qt Q}\delta^{\lt l}\delta^{\mt m}\gac_{\Qt 0 \lt\mt}\gac_{Q 0 l m}\bigr)+
\zeta\bigl(\delta^{\Qt Q}\delta^{\lt l}\delta^{\Kt \Jt}n_{\Lambdat} f^{\Lambdat}_{\Kt}\tauac_{\Qt \Jt \lt} \delta^{KJ}n_\Lambda f^\Lambda_K\tauac_{Q J l}\notag \\
&+\delta^{\Qt Q}\delta^{\lt l}\delta^{\mt m}\delta^{\Kt \Jt}n_{\Lambdat} f^{\Lambdat}_{\Kt}\gac_{\Qt \Jt \lt \mt} \delta^{KJ}n_\Lambda f^\Lambda_K\gac_{Q J l m}\bigr)
 \label{Wspacelike-2}
\end{align}
for any positive function $\zeta>0$. From another application of the Cauchy-Schwartz inequality, we note also that
\begin{align*}
\delta^{\Qt Q}\delta^{\lt l}\delta^{\Kt \Jt}n_{\Lambdat}& f^{\Lambdat}_{\Kt}\tauac_{\Qt \Jt \lt} \delta^{KJ}n_\Lambda f^\Lambda_K\tauac_{Q J l}\notag \\
\leq&\delta^{\Qt Q}\delta^{\lt l} \bigl(\delta^{K \Kt}n_{\Lambdat} f^{\Lambdat}_{\Kt}n_{\Lambda} f^{\Lambda}_{K}\bigr)^{\frac{1}{2}}\bigl(\delta^{\Jt J}\tauac_{\Qt \Jt \lt}\tauac_{\Qt J \lt}
\bigr)^{\frac{1}{2}}\bigl(\delta^{\Lt L}n_\Omega f^\Omega_{\Lt} n_{\tilde{\Omega}} f^{\tilde{\Omega}}_L\bigr)^{\frac{1}{2}}\bigl(\delta^{\Mt M}\tauac_{Q \Mt l}\tauac_{Q M l}\bigr)^{\frac{1}{2}} \\
=& |n_\Lambda f^\Lambda|^2 \delta^{\Qt Q}\delta^{\Jt J}\delta^{\lt l}\tauac_{\Qt \Jt \lt}\tauac_{Q J l}
\end{align*}
where 
\begin{equation*}
    |n_\Lambda f^\Lambda|^2= \delta^{K \Kt}n_{\Lambdat} f^{\Lambdat}_{\Kt}n_{\Lambda} f^{\Lambda}_{K},
\end{equation*}
and by similar arguments that
\begin{equation*}
    \delta^{\Qt Q}\delta^{\lt l}\delta^{\Kt \Jt}n_{\Lambdat} f^{\Lambdat}_{\Kt}\tauac_{\Qt \Jt \lt} \delta^{KJ}n_\Lambda f^\Lambda_K\tauac_{Q J l}
    \leq |n_\Lambda f^\Lambda|^2 \delta^{\Qt Q}\delta^{\Jt J}\delta^{\lt l}\delta^{\mt m}\gac_{\Qt \Jt \lt\mt}\gac_{Q J l m}.
\end{equation*}
From these inequalities, \eqref{Wspacelike-1} and \eqref{Wspacelike-2}, we deduce that
\begin{align}
(u^*)^{\tr}n_\mu \At^\mu u^* \leq& |n_0|\biggl( -(|\tauac|^2+|\gac|^2)
+ \biggl[\bigl(\delta^{\Qt Q}\delta^{\lt l}\tauac_{\Qt 0 \lt}\tauac_{Q 0 l}+\delta^{\Qt Q}\delta^{\lt l}\delta^{\mt m}\gac_{\Qt 0 \lt\mt}\gac_{Q 0 l m}\bigr) \notag \\
&+\biggl(\frac{|\beta|\,|f|_{\op}||\mathbf{n}|}{t^{\bar{\ep}}|n_0|}\biggr)^2 \bigl(\delta^{\Qt Q}\delta^{\Jt J}\delta^{\lt l}\tauac_{\Qt \Jt \lt}\tauac_{Q J l}+\delta^{\Qt Q}\delta^{\Jt J}\delta^{\lt l}\delta^{\mt m}\gac_{\Qt \Jt \lt\mt}\gac_{Q J l m}\bigr)\biggr]\biggr),
 \label{Wspacelike-3}
\end{align}
where in deriving this we have set $\zeta=|\beta|/(|n_0|t^{\ep_0+\ep_1})$ and employed the inequality
\begin{equation*}
    |n_\Lambda f^\Lambda| \leq |f|_{\op}|\mathbf{n}|, \quad |\mathbf{n}| = (\delta^{\Lambda\Omega}n_\Lambda n_\Omega)^{\frac{1}{2}}.
\end{equation*}
But by \eqref{n-def}, \eqref{loc-stab-thm-sup-bnd} and  \eqref{ep-rho1-def}, we have that
\begin{equation*}
    \frac{|\beta||f|_{\op}|\mathbf{n}|}{t^{\bar{\ep}}|n_0|}=
    \frac{|\beta||f|_{\op}t_0^{1-\bar{\ep}}}{\vartheta \rho_0 (1-\bar{\ep})} \leq  \frac{1}{(1-\vartheta)\rho_0 (1-\bar{\ep})}\bigl(t_0 +t_0^{1-\bar{\ep}}C_0\delta\bigr)\leq \frac{T_0}{(1-\vartheta)\rho_0 (1-\bar{\ep})}\bigl(1 +t_0^{-\bar{\ep}}C_0\delta\bigr) ,    
\end{equation*}
where again the constant $C_0$ is of the form \eqref{C0-def}.
By first choosing $T_0=(1-\vartheta) \rho_0 (1-\bar{\ep})/2$ and then choosing small enough $\delta$ to ensure that $C_0\delta < t_0^{\bar{\ep}}$, which is always possible for fixed $t_0\in (0,T_0]$, it follows that we can arrange that
$\frac{|\beta||f|_{\op}|\mathbf{n}|}{t^{\bar{\ep}}|n_0|} < 1$.
But this implies by \eqref{Wspacelike-3} that $n_\mu \At^\mu \leq 0$,
and hence by definition, see \cite[\S 4.3]{Lax:2006}, that the boundary hypersurface $\Upsilon_{t_0,\rho_0,\rho_1,\bar{\ep}}$
of $\Omega_{t_0,\rho_0,\rho_1,\bar{\ep}}$ is weakly spacelike
for the symmetric hyperbolic Fuchsian system \eqref{eq:givp1}. The importance of this fact is that it guarantees that 
\begin{equation*}
    u:= u^*\bigl|_{\Omega_{t_0,\rho_0,\rho_1,\bar{\ep}}} 
\end{equation*}
is the unique solution of \eqref{eq:givp1} on $\Omega_{t_0,\rho_0,\rho_1,\bar{\ep}}$ satisfying the intial condition
\begin{equation*}
 u_0:=u^*_0|_{\{t_0\}\times \mathbb{B}_{\rho_0}},
 \end{equation*}
which, we note, by the way $u_0^*$ was constructed, satisfies
$\norm{u_0^*}_{H^k(\Tbb^{n-1})} \lesssim \norm{u_0}_{H^k(\mathbb{B}_{\rho_0})}$.
Setting 
\begin{equation*}
    \rho(t) = \rho_0 + \frac{\rho_1(t^{1-\bar{\ep}}-t_0^{1-\bar{\ep}})}{1-\bar{\ep}} = \rho_0 + (1-\vartheta) \rho_0 \Bigl( \Bigl(\frac{t}{t_0}\Bigr)^{1-\bar{\ep}}-1\Bigr),
\end{equation*}
we then have that $(\{t\}\times\Tbb^{n-1}) \cap \Omega_{t_0,\rho_0,\rho_1,\bar{\ep}} =
    \{t\} \times \mathbb{B}_{\rho(t)}$
and so it follows from \eqref{loc-stab-thm-energy-A}-\eqref{loc-stab-thm-decay-B} that the solution $u$ to the Fuchsian system \eqref{eq:givp1} on $\Omega_{t_0,\rho_0,\rho_1,\bar{\ep}}$ satisfies  
\begin{equation} \label{loc-stab-thm-energy-B}
\norm{u(t)}_{H^k(\mathbb{B}_{\rho(t)})}^2 + \int^{t_0}_t \frac{1}{s} \norm{\Pbb u(s)}_{H^k(\mathbb{B}_{\rho(s)})}^2\, ds  \lesssim \norm{u_0}_{H^k(\mathbb{B}_{\rho_0})}^2
\end{equation}
and
\begin{align}
\label{loc-stab-thm-decay-C}
  \norm{\Pbb u(t)}_{H^{k-1}(\mathbb{B}_{\rho(t)})} &\lesssim t^p+t^{\kappat-\sigma},\\
\label{loc-stab-thm-decay-D}
\norm{\Pbb^\perp u(t) - \Pbb^\perp u^*(0)}_{H^{k-1}(\mathbb{B}_{\rho(t)})} &\lesssim
 t^{p}+ t^{2(\kappat-\sigma)},
\end{align}
for all $t\in (0,t_0]$, where the implied constants in the above estimates are independent of $\delta$ provided it is chosen small enough to satisfy \eqref{C0delta<delta0-loc}.

To complete the proof, we note that it is not difficult to verify that the wave gauge constraint propagation result contained in Proposition \ref{prop-ES-constr} and the local existence and uniqueness results contain in Proposition \ref{lag-exist-prop} can be localized to the truncated cone domain $\Omega_{t_0,\rho_0,\rho_1,\bar{\ep}}$. These results rely on the finite propagation speed of the wave gauge propagation equation \eqref{wave-gauge-prop-eqn} and the first order system \eqref{tconf-ford-C.1}-\eqref{tconf-ford-C.8} using the same argument as above to deduce that boundary hypersurface $\Upsilon_{t_0,\rho_0,\rho_1,\bar{\ep}}$
of $\Omega_{t_0,\rho_0,\rho_1,\bar{\ep}}$ is weakly spacelike for \eqref{eq:givp1}. With these localized results in hand as well as the global estimates \eqref{loc-stab-thm-energy-B}-\eqref{loc-stab-thm-decay-D}
for the solution $u$ of the Fuchsian system \eqref{eq:givp1} on  $\Omega_{t_0,\rho_0,\rho_1,\bar{\ep}}$, the remainder of the proof follows from the same sequence of arguments used in the proof of Theorem \ref{glob-stab-thm}, where, of course, the obvious modification to take account the restriction to 
$\Omega_{t_0,\rho_0,\rho_1,\bar{\ep}}$ have been made.
It is important to note that because the arguments employed in the proof of Theorem  \ref{glob-stab-thm} involve optimising the decay exponents by selecting a finite number of distinct parameter sets $\{\ep_0,\ep_1,\ep_2\}$, the smallest truncated cone domain $\Omega_{t_0,\rho_0,\rho_1,\bar{\ep}}$, which is the one with the maximal value\footnote{Note that 
 \begin{equation*}\rho_0 + (1-\vartheta) \rho_0 \Bigl( \Bigl(\frac{t}{t_0}\Bigr)^{1-\bar{\ep}}-1\Bigr)\leq  \rho_0 + (1-\vartheta) \rho_0 \Bigl( \Bigl(\frac{t}{t_0}\Bigr)^{1-\bar{\ep}'}-1\Bigr), \quad 0<t\leq t_0,
 \end{equation*}
 for $0<\bar{\ep}'\leq \bar{\ep}<1$ since $\rho_0>0$ and $0<\vartheta <1$ by assumption.
 } of $\bar{\ep}=\ep_0+\ep_2$, must be used in all the arguments.   
\end{proof}

\subsection*{Acknowledgements}
We thank the referee for their detailed and insightful comments and
criticisms, which led to a significant improvement in the content and exposition of this article.

\appendix

\section{Matrix inequalities}

\begin{lem} \label{matrix-lem}
Suppose $\Ac=(\Ac_{ij})$,  $1\leq i,j\leq N$, is a square matrix that is partitioned into $N\times N$-block matrices $\Ac_{ij}$ and satisfies the following:
\begin{enumerate}[(a)]
\item It is block upper-triangular, that is, $\Ac_{ij}=0$ for $1\leq j < i \leq N$.
\item The diagonal blocks $\Ac_{ii}$ are square and there exist constants $\kappa_i>0$ such that
\begin{equation*}
\Ac_{ii} \geq \kappa_i \id
\end{equation*}
for $1\leq i\leq N$.
\end{enumerate}
Then for any $\eta>0$, there exists positive constants $\sigma_i=\sigma_i(\eta)>0$ such that square block diagonal matrix
\begin{equation} \label{matrix-lem-1}
\Asc = \emph{diag}(\sigma_1\id\!,\sigma_2\id\!,\ldots,\sigma_N \id\!),
\end{equation}
whose
diagonal blocks $\sigma_i\id$ are of the same size as the diagonal block $\Ac_{ii}$ of $\Ac$, satisfies
\begin{equation}
  \label{eq:matrixlemmares}
    \Asc\Ac \geq (\kappat-\bc \eta)\Asc
\end{equation}
where $\kappat = \min\{\,\kappa_i\,|\, 1\leq i \leq N\,\}$ and $\bc = \max\{\,|\Ac_{ij}|_{\op}\,|\, 1\leq i \leq N, \; i+1\leq j\leq N\,\}$.
\end{lem}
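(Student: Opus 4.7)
Writing an arbitrary vector $\xi\in\Rbb^N$ in block form $\xi=(\xi_1,\ldots,\xi_N)^{\tr}$ conformally with the partition of $\Ac$, the desired inequality \eqref{eq:matrixlemmares} is equivalent to showing that
\begin{equation*}
\xi^{\tr}\Asc\Ac\xi \;=\; \sum_{i=1}^N \sigma_i\,\xi_i^{\tr}\Ac_{ii}\xi_i + \sum_{1\le i<j\le N}\sigma_i\,\xi_i^{\tr}\Ac_{ij}\xi_j \;\geq\; (\kappat-\bc\eta)\sum_{i=1}^N\sigma_i|\xi_i|^2
\end{equation*}
for every $\xi\in\Rbb^N$, where we have used the block upper-triangularity of $\Ac$. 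The plan is to choose the scaling constants $\sigma_i$ hierarchically so that each off-diagonal contribution is absorbed into a small fraction of the diagonal terms.

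First, by hypothesis (b), the diagonal contribution is immediately bounded below by $\sum_i\sigma_i\kappa_i|\xi_i|^2\geq \kappat\sum_i\sigma_i|\xi_i|^2$. For the cross terms, I would apply Cauchy--Schwarz together with the operator-norm bound $|\Ac_{ij}|_{\op}\leq\bc$ and Young's inequality in the weighted form
\begin{equation*}
\bigl|\sigma_i\,\xi_i^{\tr}\Ac_{ij}\xi_j\bigr|\;\leq\;\sigma_i\bc\,|\xi_i||\xi_j|\;\leq\;\tfrac{\bc}{2}\bigl(\lambda\,\sigma_i|\xi_i|^2+\lambda^{-1}\sigma_i|\xi_j|^2\bigr)
\end{equation*}
for a small parameter $\lambda>0$ to be fixed. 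Summing over all pairs $i<j$ and collecting the coefficients of each $|\xi_i|^2$ yields a total off-diagonal contribution bounded by
\begin{equation*}
\sum_{i=1}^N \Biggl(\tfrac{\bc\lambda(N-i)}{2}\,\sigma_i+\tfrac{\bc}{2\lambda}\sum_{j<i}\sigma_j\Biggr)|\xi_i|^2.
\end{equation*}

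Choosing $\lambda=\eta/N$ forces the first piece to be at most $\tfrac{\eta\bc}{2}\sigma_i$, so it suffices to arrange
\begin{equation*}
\tfrac{N}{2\eta}\sum_{j<i}\sigma_j\;\leq\;\tfrac{\eta}{2}\,\sigma_i,\qquad i=2,\ldots,N.
\end{equation*}
This recursive condition is readily satisfied by the explicit choice $\sigma_i=C^{i}$ with $C=C(\eta,N)$ taken sufficiently large, since then $\sum_{j<i}\sigma_j=(C^i-C)/(C-1)\leq C^i/(C-1)$ and the inequality reduces to $C-1\geq N/\eta^2$. With these $\sigma_i$'s fixed, the cross-term bound becomes $\eta\bc\sum_i\sigma_i|\xi_i|^2$, which combined with the diagonal estimate gives \eqref{eq:matrixlemmares}.

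I do not anticipate serious obstacles: the main task is purely the bookkeeping for the weighted Young inequality and the verification that the recursive bound on $\sigma_i$ can be met. The only mildly delicate point is ensuring that the same $\lambda$ works uniformly for every pair $(i,j)$, which is handled by allowing the budget $\eta/N$ rather than $\eta$ per pair; any inhomogeneous choice $\lambda_{ij}$ would equally do but would obscure the simple geometric construction $\sigma_i=C^i$.
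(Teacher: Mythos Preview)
Your proof is correct and follows the same overall strategy as the paper: bound the diagonal contribution by $\kappat\sum_i\sigma_i|\xi_i|^2$, control each off-diagonal term via a weighted Young inequality, and then choose the $\sigma_i$ hierarchically so that the accumulated cross-term coefficients are dominated by $\bc\eta\sigma_i$.

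The difference lies in the implementation. The paper applies Young's inequality with \emph{pair-dependent} weights $\gamma_i\gamma_j$, which leads to two coupled recursive systems for the auxiliary constants $\gamma_i$ and the $\sigma_i$; these are then solved by induction. You instead take a single uniform weight $\lambda=\eta/N$ and the explicit geometric choice $\sigma_i=C^i$ with $C-1\ge N/\eta^2$, which makes the argument considerably shorter and more transparent. Your approach buys simplicity and an explicit formula for the $\sigma_i$; the paper's approach, while more involved, keeps the bookkeeping so that the factor in front of $\bc\eta$ is exactly $1$ rather than requiring an absorption via the $\eta/N$ budget per pair (though for this lemma that distinction is irrelevant, since $\eta>0$ is arbitrary). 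Either construction suffices for all uses of the lemma in the paper.
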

\begin{proof}
Let $v=(v_j)$, $1\leq j\leq N$, be a N-block column vector that is partitioned so that the matrix multiplications $\Ac_{ij}v_j$ are well defined. Then
\begin{align}
 v^{\tr}\Asc\Ac v =& \sum_{i=1}^N \sigma_i v_i^{\tr}\Ac_{ii}v_i + \sum_{i=1}^{N-1}\sum_{j=i+1}^N\sigma_i v_i^{\tr}\Ac_{ij}v_j \notag \\
 \geq& \sum_{i=1}^N \sigma_i\kappa_i |v_i|^2
 + \sum_{i=1}^{N-1}\sum_{j=i+1}^N\sigma_i v_i^{\tr}\Ac_{ij}v_j && \text{(since $\Ac_{ii}\geq \kappa_i \id\!$)}\notag \\
 \geq& \sum_{i=1}^N \sigma_i\kappa_i |v_i|^2
 - \sum_{i=1}^{N-1}\sum_{j=i+1}^N \sigma_i|v_i| |\Ac_{ij}v_j|
 && \text{(by the Cauchy-Schwartz inequality)} \notag \\
  \geq& \sum_{i=1}^N \sigma_i\kappa_i |v_i|^2
 - \sum_{i=1}^{N-1}\sum_{j=i+1}^N \sigma_i |\Ac_{ij}|_{\op}|v_i||v_j| \notag \\
 \geq& \kappat \sum_{i=1}^N \sigma_i |v_i|^2
 -\bc \sum_{i=1}^{N-1}\sum_{j=i+1}^N \sigma_i |v_i||v_j| \notag
\end{align}
where $\kappat$ and $\bc$ are as defined in the statement of the lemma.
Then with the help of Young's inequality, we find from the above inequality that
\begin{align}
 v^{\tr}\Asc\Ac v\geq& \kappat \sum_{i=1}^N \sigma_i |v_i|^2
 -\frac{\bc}{2} \sum_{i=1}^{N-1}\sum_{j=i+1}^N \sigma_i (\gamma_i\gamma_j|v_i|^2 + \gamma_i^{-1}\gamma_j^{-1} |v_j|^2) \notag \\
 = & \sum_{i=1}^N \sigma_i\Bigl(\kappat -\frac{\bc}{2}\Gamma_i\gamma_i\Bigr)|v_i|^2 -
 \frac{\bc}{2} \sum_{i=1}^{N-1}\sum_{j=i+1}^N \sigma_i \gamma_i^{-1}\gamma_j^{-1} |v_j|^2\notag \\
 = & \sum_{i=1}^N \sigma_i\Bigl(\kappat -\frac{\bc}{2}\Gamma_i \gamma_i\Bigr)|v_i|^2 -
 \frac{\bc}{2} \sum_{j=2}^N\sum_{i=1}^{j-1} \sigma_i \gamma_i^{-1}\gamma_j^{-1} |v_j|^2 \notag \\
 =& \sum_{i=1}^N \Bigl[\sigma_i\Bigl(\kappat -\frac{\bc}{2}\Gamma_i \gamma_i\Bigr) -\frac{\bc}{2}\Gammat_i\gamma_i^{-1}\Bigr)\Bigr]|v_i|^2
  \label{matrix-lem-2}
\end{align}
for any choice of constants $\gamma_i>0$, $1\leq i\leq N$, where
\begin{equation*}
\Gamma_i=\begin{cases}\dsp \sum_{j={i+1}}^N \gamma_j &
\text{ if  $1 \leq i \leq N-1$}\\
0 & \text{ if $i=N$}
\end{cases}
\AND
\Gammat_i= \begin{cases} 0 & \text{if $i=1$}\\
\dsp  \sum_{j=1}^{i-1}\sigma_j\gamma_j^{-1}  & \text{if  $2 \leq i \leq N$}
\end{cases}.
\end{equation*}

Now, for a given $\eta>0$, we want to solve the equations
\begin{align}
    \Gamma_i\gamma_i& =(\delta^1_i+1)\eta,\quad 1\leq i \leq N-1,  \label{matrix-lem-3.a}\\
    \Gammat_i &= (\delta^N_i+1)\eta\sigma_i\gamma_i, \quad 2\leq i\leq N, \label{matrix-lem-3.b}
\end{align}
for the constants $\gamma_i$ and $\sigma_i$. Noting that \eqref{matrix-lem-3.a} only involves the constants $\gamma_i$, we first solve these equations. To see that it can be done, we proceed by induction. First, we observe for $i=N-1$ that \eqref{matrix-lem-3.a} reduces to $\gamma_N \gamma_{N-1}=(\delta^1_{N-1}+1)\eta$, and we can solve this by setting $\{\gamma_N,\gamma_{N-1}\}=\{1,\eta (\delta^1_{N-1}+1)\}$. Next, we assume that $\{\gamma_N,\gamma_{N-1},\ldots,\gamma_i\}\subset \Rbb_{>0}$ solves $\Gamma_j\gamma_j=(\delta^1_j+1)\eta=\eta$, $i\leq j\leq N-1$, for $1<M\leq i\leq N-1$. Then noting that 
\begin{equation*}
    \Gamma_{i-1}= \gamma_i + \Gamma_{i} = \gamma_i +\frac{(\delta^1_i+1)\eta}{\gamma_i},
\end{equation*}
where the second equality is a consequence of the induction hypothesis, we observe that the equation $\Gamma_{i-1}\gamma_{i-1}=(\delta^1_{i-1}+1)\eta$ can be written as
\begin{equation*}
\gamma_{i-1}\Bigl(\gamma_i + \frac{(\delta^1_i+1)\eta}{\gamma_i}\Bigr) = (\delta^1_{i-1}+1)\eta, 
\end{equation*}
which we can solve to get
\begin{equation*}
    \gamma_{i-1} = \frac{(\delta^1_{i-1}+1)\eta}{\gamma_i+\frac{(\delta^1_i+1)\eta}{\gamma_i}} > 0.
\end{equation*}
This complete the induction step, and we conclude that there exists a solutions $\{\gamma_1,\gamma_2,\ldots,\gamma_N\}\subset \Rbb_{>0}$
of \eqref{matrix-lem-3.a} for $1\leq i\leq N-1$.

Next, we turn to solving the equations \eqref{matrix-lem-3.b} for the constants $\sigma_i$. We first note that for $i=2$ that
\eqref{matrix-lem-3.b} reduces to 
$\sigma_1 \gamma_1^{-1} = (\delta_2^N+1)\eta \sigma_2 \gamma_2$, which we can solve by setting $\{\sigma_1,\sigma_2\}=\{1,((\delta_2^N+1)\gamma_1 \gamma_2\eta)^{-1}\}$. We proceed by induction and assume now that $\{\sigma_1,\sigma_2,\ldots, \sigma_i\}\subset \Rbb_{>0}$ solves $\Gammat_j=(\delta_j^N+1)\eta\sigma_j \gamma_j$, $2\leq j\leq i$, for $2\leq i \leq M<N$. Observing that
\begin{equation*}
\Gammat_{i+1}= \Gammat_{i}+ \sigma_{i}\gamma_i^{-1} =
(\delta^N_i+1)\eta\sigma_i \gamma_i+\sigma_i \gamma_i^{-1}=\sigma_i\bigl((\delta^N_i+1)\eta \gamma_i+\gamma_i^{-1}\bigr),
\end{equation*}
where the second equality follows from the induction hypothesis, we can express $\Gammat_{i+1}=(\delta^N_{i+1}+1)\eta\sigma_{i+1}\gamma_{i+1}$ as
\begin{equation*}
\sigma_i\bigl((\delta^N_i+1)\eta \gamma_i+\gamma_i^{-1}\bigr)=(\delta^1_{i+1}+1)\eta\sigma_{i+1}\gamma_{i+1}.
\end{equation*}
We solve this for $\sigma_{i+1}$ to get
\begin{equation*}
\sigma_{i+1}=\frac{\sigma_i\bigl((\delta^N_i+1)\eta \gamma_i+\gamma_i^{-1}\bigr)}{(\delta^1_{i+1}+1)\eta\gamma_{i+1}}>0.
\end{equation*}
This completes the induction step, and hence, there exists a solution $\{\sigma_1,\sigma_2,\ldots,\sigma_N\}\subset \Rbb_{>0}$
of \eqref{matrix-lem-3.b} for $2\leq i\leq N$.

Having solved \eqref{matrix-lem-3.a}-\eqref{matrix-lem-3.b}, it then follows from \eqref{matrix-lem-2} that we
can bound $v^{\tr}\Asc \Ac v$ below by
\begin{equation*}
    v^{\tr}\Asc\Ac v \geq (\kappat-\bc\eta)\sum_{i=1}^N \sigma_i |v_i|^2 \oset{\eqref{matrix-lem-1}}{=}  (\kappat-\bc\eta) v^{\tr}\Asc v.
\end{equation*}
Since the vector $v$ was chosen arbitrarily, we conclude that $\Asc\Ac \geq(\kappat-\bc\eta)\Asc$ and the proof is complete.
\end{proof}

\begin{lem} \label{diff-matrix-lem}
Suppose $t_1 < t_0$, $N\in \Zbb_{\geq 1}$, $B \in C^0((t_1,t_0],\Mbb{N})$, $A_0\in \Mbb{N}$, $\det(A_0)>0$, and
$A \in C^1((t_1,t_0],\Mbb{N})$ solves the initial value problem
\begin{equation*}
    \frac{dA}{dt}=BA, \quad A(t_0)=A_0.
\end{equation*}
Then 
\begin{equation*}
    \det(A(t)) = e^{-\int_{t}^{t_0} \mathrm{Tr}(B(s))\,ds} \det(A_0) >0
\end{equation*}
for all $t\in (t_1,t_0]$.
\end{lem}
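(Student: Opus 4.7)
\smallskip

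\noindent\textbf{Proof plan for Lemma \ref{diff-matrix-lem}.} The plan is to reduce the matrix ODE to a scalar ODE for $\phi(t) := \det(A(t))$ using Jacobi's formula, and then to integrate explicitly. The identity I would invoke is
\begin{equation*}
\frac{d}{dt}\det(A(t)) \;=\; \mathrm{Tr}\bigl(\mathrm{adj}(A(t))\,A'(t)\bigr),
\end{equation*}
which is valid for any $C^1$ matrix-valued function $A$, with no invertibility hypothesis on $A(t)$ (here $\mathrm{adj}(A)$ denotes the classical adjugate, so that $A\cdot\mathrm{adj}(A)=\mathrm{adj}(A)\cdot A=\det(A)\,I$). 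Using the ODE $A'=BA$ and the cyclicity of the trace, I would compute
\begin{equation*}
\mathrm{Tr}\bigl(\mathrm{adj}(A)\,BA\bigr)=\mathrm{Tr}\bigl(A\cdot\mathrm{adj}(A)\,B\bigr)=\det(A)\,\mathrm{Tr}(B),
\end{equation*}
so that $\phi$ satisfies the linear scalar ODE
\begin{equation*}
\phi'(t)=\mathrm{Tr}\bigl(B(t)\bigr)\,\phi(t),\qquad \phi(t_0)=\det(A_0),
\end{equation*}
on the interval $(t_1,t_0]$.

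Since $B\in C^0((t_1,t_0],\Mbb{N})$, the map $t\mapsto\mathrm{Tr}(B(t))$ is continuous on $(t_1,t_0]$ and hence locally integrable there, so the unique solution of this scalar ODE with the prescribed terminal condition at $t_0$ is
\begin{equation*}
\phi(t)=\det(A_0)\,\exp\!\Bigl(\int_{t_0}^{t}\mathrm{Tr}\bigl(B(s)\bigr)\,ds\Bigr)=e^{-\int_{t}^{t_0}\mathrm{Tr}(B(s))\,ds}\det(A_0),
\end{equation*}
which is exactly the claimed formula. Because $\det(A_0)>0$ by hypothesis and the exponential factor is strictly positive on $(t_1,t_0]$, strict positivity $\det(A(t))>0$ follows immediately for all $t\in(t_1,t_0]$.

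There is no real obstacle here: the only point one might worry about is whether $A(t)$ could become singular at some intermediate time and cause Jacobi's formula to fail, but the adjugate formulation bypasses this by giving a derivative identity for $\det(A(t))$ valid globally (no division by $\det(A)$). The $C^1$-regularity of $A$ is enough to apply this identity, and uniqueness for the resulting scalar linear ODE does the rest.
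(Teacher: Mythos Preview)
Your proof is correct and follows essentially the same approach as the paper: both reduce the matrix ODE to the scalar ODE $\phi'=\mathrm{Tr}(B)\phi$ via Jacobi's formula and then integrate. Your use of the adjugate form of Jacobi's formula is a minor refinement over the paper's version (which invokes $\mathrm{Tr}(A^{-1}A')\det(A)$ ``for as long as $\det(A)$ remains positive''), since it sidesteps any need to track invertibility of $A(t)$ along the way.
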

\begin{proof}
Since the determinant of $A$ is initially positive at time $t=t_0$, it follows from Jacobi's formula $\frac{d\;}{dt}\det(A)=\mathrm{Tr}(A^{-1}\frac{d\;}{dt}A)\det(A)$, the cyclic propery of the trace, i.e, $\mathrm{Tr}(ABC)=\mathrm{Tr}(BCA)=\mathrm{Tr}(CAB)$, and the differential equation that
\begin{equation*}
    \frac{d\;}{dt}\det(A)=\mathrm{Tr}(A^{-1}BA)\det(A)=\mathrm{Tr}(BAA^{-1})\det(A)=\mathrm{Tr}(B)\det(A),
\end{equation*}
which holds for as long as $\det(A)$ remains positive. 
Writing this as
\begin{equation*}
    \frac{d\;}{dt}\ln(\det(A)) = \mathrm{Tr}(B),
\end{equation*}
we find, after integrating in time, that 
\begin{equation*}
    \ln\biggl(\frac{\det(A_0)}{\det(A(t))}\biggr)=
    \int_t^{t_0}\mathrm{Tr}(B(s))\, ds.
\end{equation*}
Exponentiating and rearranging yields
\begin{equation*}
    \det(A(t))=
    e^{\int_t^{t_0}\mathrm{Tr}(B(s))\, ds}\det(A_0).
\end{equation*}
Since $\det(A_0)>0$ by assumption, we conclude that $\det(A(t))>0$ for all 
$t\in (t_1,t_0]$.
\end{proof}

\section{\label{calc}Calculus inequalities}
Here, we collect, for the convenience of the reader, a number of calculus inequalities that we employ throughout this article. The proofs of the following inequalities are well known and may be found, for example, in 
the books \cite[Ch.~4]{AdamsFournier:2003}, \cite[Ch.~13, \S 1-3]{TaylorIII:1996} and \cite[Ch.~IV, \S 3]{Choquet_et_al:2000}. 

\begin{thm}{\emph{[Sobolev's inequality]}} \label{Sobolev} Suppose 
 $k\in \Zbb_{\geq 0}$ and $0<\alpha \leq k-n/2\leq 1$. Then
$H^{k}(\Tbb^{n})\subset C^{0,\alpha}(\Tbb^n)$ and 
\begin{equation*}
\norm{u}_{L^\infty}\lesssim \norm{u}_{C^{0,\alpha}}  \lesssim  \norm{u}_{H^{k}}
\end{equation*}
for all $u\in H^{k}(\Tbb^{n})$.
\end{thm}

\begin{thm}{\emph{[Product and commutator inequalities]}} \label{Product} $\;$

\begin{enumerate}[(i)]
\item
Suppose $k\in \Zbb_{\geq 1}$ and $|\alpha|=k$.
Then
\begin{align*}
\norm{D^\alpha (uv)}_{L^2} \lesssim \norm{u}_{H^{k}}\norm{v}_{L^{\infty}} + \norm{u}_{L^{\infty}}\norm{v}_{H^{k}} \label{clacpropB.2.1}
\intertext{and}
\norm{[D^\alpha,u]v}_{L^2} \lesssim \norm{D u}_{L^{\infty}}\norm{v}_{H^{k-1}} + \norm{D u}_{
H^{k-1}}\norm{v}_{L^{\infty}}
\end{align*}
for all $u,v \in C^\infty(\Tbb^{n})$.
\item[(ii)]  Suppose $k_1,k_2,k_3\in \Zbb_{\geq 0}$, $\;k_1,k_2\geq k_3$, and $k_1+k_2-k_3 > n/2$. Then
\begin{equation*}
\norm{uv}_{H^{k_3}} \lesssim \norm{u}_{H^{k_1}}\norm{v}_{H^{k_2}}
\end{equation*}
for all $u\in H^{k_1}(\Tbb^{n})$ and $v\in H^{k_2}(\Tbb^{n})$.
\end{enumerate}
\end{thm}

\begin{thm}{\emph{[Moser's inequality]}}  \label{Moser}
Suppose   $k\in \Zbb_{\geq 1}$, $0\leq s\leq k$, $|\alpha|=s$, $f\in C^k(U)$, where
$U$ is open and bounded in $\Rbb$ and contains $0$, and $f(0)=0$. Then
\begin{equation*}
\norm{D^\alpha f(u)}_{L^{2}} \leq C\bigl(\norm{f}_{C^k(\overline{U})}\bigr)(1+\norm{u}^{k-1}_{L^\infty})\norm{u}_{H^{k}}
\end{equation*}
for all $u \in C^0(\Tbb^{n})\cap L^\infty(\Tbb^{n})\cap H^{k,p}(\Tbb^{n})$ with
$u(x) \in U$ for all $x\in \Tbb^{n}$.
\end{thm}

\bibliographystyle{amsplain}
\bibliography{refs}

\providecommand{\bysame}{\leavevmode\hbox to3em{\hrulefill}\thinspace}
\providecommand{\MR}{\relax\ifhmode\unskip\space\fi MR }
\providecommand{\MRhref}[2]{%
  \href{http://www.ams.org/mathscinet-getitem?mr=#1}{#2}
}
\providecommand{\href}[2]{#2}
\begin{thebibliography}{10}

\bibitem{AdamsFournier:2003}
R.A. Adams and J.~Fournier, \emph{Sobolev spaces}, $2^{\text{nd}}$ ed.,
  Academic Press, 2003.

\bibitem{ames2013a}
E.~Ames, F.~Beyer, J.~Isenberg, and P.~G LeFloch, \emph{Quasilinear
  {{Hyperbolic Fuchsian Systems}} and {{AVTD Behavior}} in {$T^2$}-{{Symmetric
  Vacuum Spacetimes}}}, Ann. Henri Poincar{\'e} \textbf{14} (2013), no.~6,
  1445--1523,
  DOI:~\href{https://doi.org/10.1007/s00023-012-0228-2}{10.1007/s00023-012-0228-2}.

\bibitem{ABIO:2021_Royal_Soc}
E.~Ames, F.~Beyer, J.~Isenberg, and T.A. Oliynyk, \emph{Stability of asymptotic
  behavior within polarised {$\mathbb{T}^2$}-symmetric vacuum solutions with
  cosmological constant}, Phil.~Trans.~R.~Soc.~A (to appear) (2021), preprint
  [arXiv:2108.02886].

\bibitem{ABIO:2021}
\bysame, \emph{Stability of {AVTD} behavior within the polarized {$\mathbb{T}^
  2$}-symmetric vacuum spacetimes}, Ann. Inst. Henri Poincar\'{e} (to appear)
  (2021), preprint [arXiv:2101.03167].

\bibitem{andersson2001}
L.~Andersson and A.~D Rendall, \emph{Quiescent {{Cosmological Singularities}}},
  Commun. Math. Phys. \textbf{218} (2001), no.~3, 479--511,
  DOI:~\href{https://doi.org/10.1007/s002200100406}{10.1007/s002200100406}.

\bibitem{Andersson:2005}
L.~Andersson, H.~van Elst, W.~C. Lim, and C.~Uggla, \emph{{Asymptotic Silence
  of Generic Cosmological Singularities}}, Phys.\ Rev.\ Lett. \textbf{94}
  (2005), no.~5, 051101,
  DOI:~\href{https://doi.org/10.1103/PhysRevLett.94.051101}{10.1103/PhysRevLett.94.051101}.

\bibitem{barrow1978}
John~D Barrow, \emph{Quiescent cosmology}, Nature \textbf{272} (1978),
  no.~5650, 211--215,
  DOI:~\href{https://doi.org/10.1038/272211a0}{10.1038/272211a0}.

\bibitem{belinskii1970}
V.~A. Belinskii, I.~M. Khalatnikov, and E.~M. Lifshitz, \emph{Oscillatory
  approach to a singular point in the relativistic cosmology}, Adv. Phys.
  \textbf{19} (1970), no.~80, 525--573,
  DOI:~\href{https://doi.org/10.1080/00018737000101171}{10.1080/00018737000101171}.

\bibitem{BenzoniSerre:2007}
S.~Benzoni-Gavage and D.~Serre, \emph{Multi-dimensional hyperbolic partial
  differential equations: first-order systems and applications}, Oxford
  University Press, Oxford, 2007.

\bibitem{berger1999}
Beverly~K. Berger, \emph{Influence of scalar fields on the approach to a
  cosmological singularity}, Physical Review D \textbf{61} (1999), no.~2,
  023508,
  DOI:~\href{https://doi.org/10.1103/PhysRevD.61.023508}{10.1103/PhysRevD.61.023508}.

\bibitem{berger1993}
Beverly~K Berger and Vincent Moncrief, \emph{Numerical investigation of
  cosmological singularities}, Physical Review D \textbf{48} (1993), no.~10,
  4676--4687,
  DOI:~\href{https://doi.org/10.1103/PhysRevD.48.4676}{10.1103/PhysRevD.48.4676}.

\bibitem{beyer2017}
F.~Beyer and P.~G LeFloch, \emph{Self\textendash{}gravitating fluid flows with
  {{Gowdy}} symmetry near cosmological singularities}, Commun. Part. Diff. Eq.
  \textbf{42} (2017), no.~8, 1199--1248,
  DOI:~\href{https://doi.org/10.1080/03605302.2017.1345938}{10.1080/03605302.2017.1345938}.

\bibitem{BeyerOliynyk:2024}
F.~Beyer and T.A. Oliynyk, \emph{Relativistic perfect fluids near {K}asner
  singularitie}, Comm. Anal. Geom. (to appear) \textbf{32} (2024), preprint
  [arXiv:2012.03435].

\bibitem{BOOS:2021}
F.~Beyer, T.A. Oliynyk, and J.A Olvera-SantaMar\'{i}a, \emph{The {F}uchsian
  approach to global existence for hyperbolic equations}, Comm. Part. Diff.
  Eqn. \textbf{46} (2021), 864--934.

\bibitem{beguin:2010}
F.~Béguin, \emph{Aperiodic oscillatory asymptotic behavior for some {Bianchi}
  spacetimes}, Class. Quantum Grav. \textbf{27} (2010), 185005.

\bibitem{choquet-bruhat2006}
Y.~{Choquet-Bruhat} and J.~Isenberg, \emph{Half polarized {$U(1)$}-symmetric
  vacuum spacetimes with {{AVTD}} behavior}, J. Geom. Phys. \textbf{56} (2006),
  no.~8, 1199--1214,
  DOI:~\href{https://doi.org/10.1016/j.geomphys.2005.06.011}{10.1016/j.geomphys.2005.06.011}.

\bibitem{choquet-bruhat2004}
Y.~{Choquet-Bruhat}, J.~Isenberg, and V.~Moncrief, \emph{Topologically general
  {$U(1)$} symmetric vacuum space-times with {{AVTD}} behavior}, Nuovo Cim. B
  \textbf{119} (2004), no.~7-9, 625--638,
  DOI:~\href{https://doi.org/10.1393/ncb/i2004-10174-x}{10.1393/ncb/i2004-10174-x}.

\bibitem{Choquet_et_al:2000}
Y.~Choquet-Bruhat and C.~De Witt-Morette, \emph{Analysis, {M}anifolds and
  {P}hysics {P}art {II}}, revised and enlarged ed., North-Holland, Amsterdam,
  2000.

\bibitem{CIM1990}
P.~T. Chru{\'s}ciel, J.~Isenberg, and V.~Moncrief, \emph{{Strong cosmic
  censorship in polarised Gowdy spacetimes}}, Class. Quantum Grav. \textbf{7}
  (1990), no.~10, 1671--1680.

\bibitem{ChruscielKlinger:2015}
P.T. Chr\'{u}sciel and P.~Klinger, \emph{Vacuum spacetimes with controlled
  singularities and without symmetries}, Phys. Rev. D \textbf{92} (2015),
  041501.

\bibitem{Clausen2007}
A.~Clausen and J.~Isenberg, \emph{{Areal foliation and asymptotically
  velocity-term dominated behavior in T2 symmetric space-times with positive
  cosmological constant}}, J. Math. Phys. \textbf{48} (2007), no.~8, 082501.

\bibitem{coley2015}
A~A Coley and W~C Lim, \emph{Spikes and matter inhomogeneities in massless
  scalar field models}, Classical and Quantum Gravity \textbf{33} (2016),
  no.~1, 015009,
  DOI:~\href{https://doi.org/10.1088/0264-9381/33/1/015009}{10.1088/0264-9381/33/1/015009}.

\bibitem{curtis2005}
Joshua Curtis and David Garfinkle, \emph{Numerical simulations of stiff fluid
  gravitational singularities}, Physical Review D \textbf{72} (2005), no.~6,
  064003,
  DOI:~\href{https://doi.org/10.1103/PhysRevD.72.064003}{10.1103/PhysRevD.72.064003}.

\bibitem{dafermos2017}
Mihalis Dafermos and Jonathan Luk, \emph{The interior of dynamical vacuum black
  holes {{I}}: The {$C^0$}-stability of the {{Kerr Cauchy}} horizon},  (2017),
  Preprint. \href{https://arxiv.org/abs/1710.01722}{arXiv:1710.01722}.

\bibitem{damour2002}
T.~Damour, M.~Henneaux, A.~D. Rendall, and M.~Weaver, \emph{Kasner-{{Like
  Behaviour}} for {{Subcritical Einstein}}-{{Matter Systems}}}, Ann. Henri
  Poincar{\'e} \textbf{3} (2002), no.~6, 1049--1111,
  DOI:~\href{https://doi.org/10.1007/s000230200000}{10.1007/s000230200000}.

\bibitem{Eardley:1972}
D.~M. Eardley, E.~Liang, and R.~K. Sachs, \emph{{Velocity-Dominated
  Singularities in Irrotational Dust Cosmologies}}, J. Math. Phys. \textbf{13}
  (1972), no.~1, 99.

\bibitem{eardley1979}
Douglas~M. Eardley and Larry Smarr, \emph{Time functions in numerical
  relativity: {{Marginally}} bound dust collapse}, Physical Review D
  \textbf{19} (1979), no.~8, 2239--2259,
  DOI:~\href{https://doi.org/10.1103/PhysRevD.19.2239}{10.1103/PhysRevD.19.2239}.

\bibitem{EbinMarsden:1970}
D.G. Ebin and J.E. Marsden, \emph{Groups of {Diffeomorphisms} and the {Motion}
  of an {Incompressible} {Fluid}}, Ann. Math. \textbf{92} (1970), 102--163,
  Publisher: Annals of Mathematics.

\bibitem{Fajman_et_al:2023}
D.~Fajman, M.~Ofner, T.A. Oliynyk, and Z.~Wyatt, \emph{The stability of
  relativistic fluids in linearly expanding cosmologies}, Int. Math. Res. Not.
  (accepted) (2023), preprint [arXiv:2301.11191].

\bibitem{FOW:2021}
D.~Fajman, T.A. Oliynyk, and Zoe Wyatt, \emph{Stabilizing relativistic fluids
  on spacetimes with non-accelerated expansion}, Commun. Math. Phys.
  \textbf{383} (2021), 401--426.

\bibitem{Fournodavlos:2020}
G.~Fournodavlos and J.~Luk, \emph{{Asymptotically Kasner-like singularities}},
  (2020).

\bibitem{Fournodavlos_et_al:2023}
Grigorios Fournodavlos, Igor Rodnianski, and Jared Speck, \emph{Stable {Big}
  {Bang} formation for {Einstein}’s equations: {The} complete sub-critical
  regime}, J. Amer. Math. Soc. \textbf{36} (2023), 827--916.

\bibitem{garfinkle2002}
David Garfinkle, \emph{Harmonic coordinate method for simulating generic
  singularities}, Physical Review D \textbf{65} (2002), no.~4, 044029,
  DOI:~\href{https://doi.org/10.1103/PhysRevD.65.044029}{10.1103/PhysRevD.65.044029}.

\bibitem{garfinkle2002a}
\bysame, \emph{Simulations of {{Generic Singularities}} in {{Harmonic
  Coordinates}}}, The Conformal Structure of Space-Time: Geometry,
  {{Analysis}}, {{Numerics}}, {Springer}, 2002, pp.~349--358.

\bibitem{garfinkle2004}
\bysame, \emph{Numerical {{Simulations}} of {{Generic Singularities}}},
  Physical Review Letters \textbf{93} (2004), no.~16, 6,
  DOI:~\href{https://doi.org/10.1103/PhysRevLett.93.161101}{10.1103/PhysRevLett.93.161101}.

\bibitem{garfinkle2007}
\bysame, \emph{Numerical simulations of general gravitational singularities},
  Classical and Quantum Gravity \textbf{24} (2007), no.~12, S295--S306,
  DOI:~\href{https://doi.org/10.1088/0264-9381/24/12/S19}{10.1088/0264-9381/24/12/S19}.

\bibitem{hawkingLargeScaleStructure1973}
S.~W. Hawking and G.~F.~R. Ellis, \emph{The large scale structure of
  space-time}, first ed., {Cambridge University Press}, 1973,
  DOI:~\href{https://doi.org/10.1017/CBO9780511524646}{10.1017/CBO9780511524646}.

\bibitem{heinzle2012}
J~M. Heinzle and P.~Sandin, \emph{The {{Initial Singularity}} of {{Ultrastiff
  Perfect Fluid Spacetimes Without Symmetries}}}, Commun. Math. Phys.
  \textbf{313} (2012), no.~2, 385--403,
  DOI:~\href{https://doi.org/10.1007/s00220-012-1496-x}{10.1007/s00220-012-1496-x}.

\bibitem{isenberg1999}
J.~Isenberg and S.~Kichenassamy, \emph{Asymptotic behavior in polarized
  {$T^2$}-symmetric vacuum space\textendash{}times}, J. Math. Phys. \textbf{40}
  (1999), no.~1, 340--352,
  DOI:~\href{https://doi.org/10.1063/1.532775}{10.1063/1.532775}.

\bibitem{Isenberg:1990}
J.~Isenberg and V.~Moncrief, \emph{{Asymptotic behavior of the gravitational
  field and the nature of singularities in Gowdy spacetimes}}, Ann.\ Phys.
  \textbf{199} (1990), no.~1, 84--122.

\bibitem{isenberg2002}
\bysame, \emph{Asymptotic behaviour in polarized and half-polarized {$U(1)$}
  symmetric vacuum spacetimes}, Class. Quantum Grav. \textbf{19} (2002),
  no.~21, 5361--5386,
  DOI:~\href{https://doi.org/10.1088/0264-9381/19/21/305}{10.1088/0264-9381/19/21/305}.

\bibitem{Kato:1975}
T.~Kato, \emph{The {C}auchy problem for quasi-linear symmetric hyperbolic
  systems}, Arch.~Rational.~Mech.~Anal. \textbf{58} (1975), 181--205.

\bibitem{kichenassamy1998}
S.~Kichenassamy and A.~D Rendall, \emph{Analytic description of singularities
  in {{Gowdy}} spacetimes}, Class. Quantum Grav. \textbf{15} (1998), no.~5,
  1339--1355,
  DOI:~\href{https://doi.org/10.1088/0264-9381/15/5/016}{10.1088/0264-9381/15/5/016}.

\bibitem{Lax:2006}
P.D. Lax, \emph{Hyperbolic partial differential equations}, AMS/CIMS, 2006.

\bibitem{LeFlochWei:2021}
P.G. LeFloch and Changhua Wei, \emph{The nonlinear stability of
  self-gravitating irrotational {C}haplygin fluids in a {FLRW} geometry},
  Annales de l'Institut Henri Poincar\'e C, Analyse non lin\'eaire \textbf{38}
  (2021), 757--814.

\bibitem{liebscher_et_al:2011}
S.~Liebscher, J.~Härterich, K.~Webster, and M.~Georgi, \emph{Ancient
  {Dynamics} in {Bianchi} {Models}: {Approach} to {Periodic} {Cycles}}, Commun.
  Math. Phys. \textbf{305} (2011), 59--83.

\bibitem{lifshitz1963}
E.~M. Lifshitz and I.~M. Khalatnikov, \emph{Investigations in relativistic
  cosmology}, Adv. Phys. \textbf{12} (1963), no.~46, 185--249,
  DOI:~\href{https://doi.org/10.1080/00018736300101283}{10.1080/00018736300101283}.

\bibitem{lim2008}
Woei~Chet Lim, \emph{New explicit spike solutions\textemdash non-local
  component of the generalized {{Mixmaster}} attractor}, Classical and Quantum
  Gravity \textbf{25} (2008), no.~4, 045014,
  DOI:~\href{https://doi.org/10.1088/0264-9381/25/4/045014}{10.1088/0264-9381/25/4/045014}.

\bibitem{lim2009}
Woei~Chet Lim, Lars Andersson, David Garfinkle, and Frans Pretorius,
  \emph{Spikes in the mixmaster regime of {$G_2$} cosmologies}, Physical Review
  D \textbf{79} (2009), no.~12, 123526,
  DOI:~\href{https://doi.org/10.1103/PhysRevD.79.123526}{10.1103/PhysRevD.79.123526}.

\bibitem{LiuOliynyk:2018b}
C.~Liu and T.A. Oliynyk, \emph{Cosmological {N}ewtonian limits on large
  spacetime scales}, Commun. Math. Phys. \textbf{364} (2018), 1195--1304.

\bibitem{LiuOliynyk:2018a}
\bysame, \emph{Newtonian limits of isolated cosmological systems on long time
  scales}, Annales Henri Poincar{\'e} \textbf{19} (2018), 2157--2243.

\bibitem{LiuWei:2021}
C.~Liu and C.~Wei, \emph{Future stability of the {FLRW} spacetime for a large
  class of perfect fluids}, Ann. Henri Poincar{\'e} \textbf{22} (2021),
  715--779.

\bibitem{luk2017}
Jonathan Luk, \emph{Weak null singularities in general relativity}, Journal of
  the American Mathematical Society \textbf{31} (2018), 1--63,
  DOI:~\href{https://doi.org/10.1090/jams/888}{10.1090/jams/888}.

\bibitem{MarshallOliynyk:2023}
E.~Marshall and T.A. Oliynyk, \emph{On the stability of relativistic perfect
  fluids with linear equations of state $ p= {K}\rho $ where $1/3< {K}< 1$},
  Lett. Math. Phys. \textbf{113} (2023), 102.

\bibitem{Oliynyk:CMP_2016}
T.~A. Oliynyk, \emph{Future stability of the {FLRW} fluid solutions in the
  presence of a positive cosmological constant}, Commun. Math. Phys.
  \textbf{346} (2016), 293--312; see the preprint [arXiv:1505.00857] for a
  corrected version.

\bibitem{Oliynyk:PRD_2012}
T.A. Oliynyk, \emph{Lagrange coordinates for the {E}instein-{E}uler equations},
  Phys. Rev. D \textbf{85} (2012), 044019.

\bibitem{Oliynyk:2021}
\bysame, \emph{Future global stability for relativistic perfect fluids with
  linear equations of state $p={K}\rho$ where $1/3<{K}<1/2$}, SIAM J. Math.
  Anal. \textbf{53} (2021), 4118--4141.

\bibitem{OliynykOlvera:2021}
T.A. Oliynyk and J.A. Olvera-Santamar\'{i}a, \emph{A {F}uchsian viewpoint on
  the weak-null condition}, J. Differential Eqns. \textbf{296} (2021),
  107--141.

\bibitem{oneill1983a}
Barrett O'Neill, \emph{Semi-{{Riemannian}} geometry: With applications to
  relativity}, Pure and Applied Mathematics, no. 103, {Academic Press}, {New
  York}, 1983.

\bibitem{rendall2001}
Alan~D Rendall and Marsha Weaver, \emph{Manufacture of {{Gowdy}} spacetimes
  with spikes}, Classical and Quantum Gravity \textbf{18} (2001), no.~15,
  2959--2975,
  DOI:~\href{https://doi.org/10.1088/0264-9381/18/15/310}{10.1088/0264-9381/18/15/310}.

\bibitem{Ringstrom:2001}
H.~Ringstr\"{o}m, \emph{The {B}ianchi {IX} attractor}, Ann. Henri Poincar\'{e}
  \textbf{2} (2001), 405--500.

\bibitem{ringstrom2009a}
H.~Ringstr{\"o}m, \emph{Strong cosmic censorship in ${T}^3$-{Gowdy}
  spacetimes}, Ann. Math. \textbf{170} (2009), no.~3, 1181--1240.

\bibitem{RodnianskiSpeck:2018b}
I.~Rodnianski and J.~Speck, \emph{A regime of linear stability for the
  {{Einstein}}-scalar field system with applications to nonlinear {{Big Bang}}
  formation}, Ann. Math. \textbf{187} (2018), no.~1, 65--156,
  DOI:~\href{https://doi.org/10.4007/annals.2018.187.1.2}{10.4007/annals.2018.187.1.2}.

\bibitem{RodnianskiSpeck:2018c}
\bysame, \emph{Stable {{Big Bang}} formation in near-{{FLRW}} solutions to the
  {{Einstein}}-scalar field and {{Einstein}}-stiff fluid systems}, Sel. Math.
  New Ser. \textbf{24} (2018), no.~5, 4293--4459,
  DOI:~\href{https://doi.org/10.1007/s00029-018-0437-8}{10.1007/s00029-018-0437-8}.

\bibitem{RodnianskiSpeck:2021}
Igor Rodnianski and Jared Speck, \emph{On the nature of {Hawking}’s
  incompleteness for the {Einstein}-vacuum equations: {The} regime of
  moderately spatially anisotropic initial data}, J. Eur. Math. Soc.
  \textbf{24} (2021), 167--263.

\bibitem{Speck:2018}
Jared Speck, \emph{The {Maximal} {Development} of {Near}-{FLRW} {Data} for the
  {Einstein}-{Scalar} {Field} {System} with {Spatial} {Topology}
  $\mathbb{S}^3$}, Commun. Math. Phys. \textbf{364} (2018), 879--979.

\bibitem{stahl2002}
F.~St{\aa}hl, \emph{Fuchsian analysis of {$S^2\times S^1$} and {$S^3$}
  {{Gowdy}} spacetimes}, Class. Quantum Grav. \textbf{19} (2002), no.~17,
  4483--4504,
  DOI:~\href{https://doi.org/10.1088/0264-9381/19/17/301}{10.1088/0264-9381/19/17/301}.

\bibitem{TaylorIII:1996}
M.E. Taylor, \emph{Partial differential equations {III}: {N}onlinear
  equations}, Springer, 1996.

\bibitem{Wald:1994}
R.M. Wald, \emph{General {R}elativity}, University of Chicago Press, 1984.

\bibitem{Weaver:2001}
M.~{Weaver}, B.~K. {Berger}, and J.~{Isenberg}, \emph{{Oscillatory Approach to
  the Singularity in Vacuum T$^{2}$ Symmetric Spacetimes}}, The Ninth Marcel
  Grossmann Meeting (Vahe~G. {Gurzadyan}, Robert~T. {Jantzen}, and Remo
  {Ruffini}, eds.), December 2002, pp.~1011--1012.

\bibitem{Wei:2018}
C.~Wei, \emph{Stabilizing effect of the power law inflation on isentropic
  relativistic fluids}, Journal of Differential Equations \textbf{265} (2018),
  3441 -- 3463.

\end{thebibliography}

\end{document}